\newcommand{\V}{\mathbb{V}}
\newcommand{\Sch}{\mathcal{S}}
\newcommand{\names}{\mathit{Names}}
\newcommand{\fvars}{\mathit{fvars}}
\newcommand{\inst}{\ensuremath{D}\xspace}
\newcommand{\In}{I}
\newcommand{\Out}{O}
\newcommand{\synI}{I^{\mathrm{syn}}}
\newcommand{\synO}{O^{\mathrm{syn}}}
\newcommand{\semI}{I^{\mathrm{sem}}}
\newcommand{\semO}{O^{\mathrm{sem}}}
\newcommand{\inn}[1]{\In(\alpha_{#1})}
\newcommand{\outt}[1]{\Out(\alpha_{#1})}
\newcommand{\interpretation}{interpretation\xspace}
\newcommand{\schema}{vocabulary\xspace}
\newcommand{\withconverse}[1]{#1}
\newcommand{\withoutconverse}[1]{}
\newcommand{\arity}{\mathit{ar}}
\newcommand{\dom}{\mathbf{dom}}
\newcommand{\ar}{\arity{}}
\newcommand{\iar}{\mathit{iar}}
\newcommand{\FV}{\mathrm{FV}}
\newcommand{\compl}[1]{\overline{#1}}
\newcommand{\cyl}[1]{\mathrm{cyl}_{#1}}
\newcommand{\cyll}[1]{\cyl{#1}^{l}}
\newcommand{\cylr}[1]{\cyl{#1}^{r}}
\newcommand{\sel}[1]{\sigma_{#1}}
\newcommand{\sellr}[1]{\sel{#1}^{\mathit{lr}}}
\newcommand{\sell}[1]{\sel{#1}^{\mathit{l}}}
\newcommand{\selr}[1]{\sel{#1}^{\mathit{r}}}
\newcommand{\conv}[1]{#1^{\smallsmile}}
\newcommand{\symdif}{\mathbin{\triangle}}
\newcommand{\val}{valuation}
\newcommand{\pval}{partial valuation}
\newcommand{\sval}{\mathcal{V}}
\newcommand{\semm}[2]{\llbracket #1 \rrbracket_{#2}}
\newcommand{\sem}[1]{\semm{#1}{\inst{}}}
\newcommand{\semNoI}[1]{\semm{#1}{}}
\newcommand{\bbar}{\text{BRV}\xspace}
\newcommand{\gbrv}{global BRV\xspace}
\newcommand{\mvr}[1]{\mathrm{mv}_{#1}^{\mathit{r}}}
\newcommand{\ox}{\overline{x}}
\newcommand{\oy}{\overline{y}}
\newcommand{\oz}{\overline{z}}
\renewcommand{\setminus}{-}
\newcommand{\comp}{\ensuremath \mathbin{;}}
\newcommand{\all}{\mathit{all}}
\newcommand{\id}{\mathit{id}}
\newcommand{\restr}[2]{#1|_{#2}}
\newcommand{\LIF}{\mathrm{LIF}}
\newcommand{\fo}{\mathrm{FO}}
\newcommand{\FO}[1]{\fo[#1]}
\newcommand\ignore[1]{}
\newcommand{\ficp}{free variable}
\newcommand{\eqv}{\mathop{=_{\mathrm{syn}}}}
\newcommand{\neqv}{\mathop{\not =_{\mathrm{syn}}}}
\newcommand{\eqq}{\equiv}
\newcommand{\compEqGen}[4]{\cylr{#3}(#1) \cap \cyll{#4}(#2)}
\newcommand{\compEqSyn}[2]{\compEqGen{#1}{#2}{\synO(#2)}{\synO(#1)}}
\newcommand{\compEqSem}[2]{\compEqGen{#1}{#2}{\semO(#2)}{\semO(#1)}}
\newcommand{\disSyn}[1]{\synI(#1) \cap \synO(#1) = \emptyset}
\newcommand{\prob}[3]{
\noindent \fbox{\parbox{\dimexpr\linewidth-2\fboxsep-2\fboxrule\relax}
{Problem: \textbf{#1}
\\ \underline{Given:} #2
\\ \underline{Decide:} #3}}%
\smallskip}
\newcommand\fullproof[1]{}
\begin{document}

\theoremstyle{acmdefinition}
\newtheorem{remark}[theorem]{Remark}
   
\title{Inputs, Outputs, and Composition in the Logic of Information Flows}



\author{Heba Aamer}
\affiliation{%
  \institution{Hasselt University}
  \city{Hasselt}
  \country{Belgium}}
\email{heba.mohamed@uhasselt.be}

\author{Bart Bogaerts}
\affiliation{%
  \institution{Vrije Universiteit Brussel}
  \city{Brussels}
  \country{Belgium}}
\email{bart.bogaerts@vub.be}

\author{Dimitri Surinx}
\affiliation{%
  \institution{Hasselt University}
  \city{Hasselt}
  \country{Belgium}}
\email{surinxd@gmail.com}

\author{Eugenia Ternovska}
\affiliation{%
  \institution{Simon Fraser University}
  \city{Burnaby, BC}
  \country{Canada}}
\email{ter@sfu.ca}

\author{Jan Van den Bussche}
\affiliation{%
  \institution{Hasselt University}
  \city{Hasselt}
  \country{Belgium}}
\email{jan.vandenbussche@uhasselt.be}

\renewcommand{\shortauthors}{Aamer et al.}

\begin{abstract}
The logic of information flows (LIF) is
a general framework in which tasks of a procedural nature can be modeled
in a declarative, logic-based fashion.
The first contribution of this paper is to propose semantic and
syntactic definitions of inputs and outputs of LIF expressions.
We study how the two relate and show that our syntactic
definition is optimal in a sense that is made precise.
The second contribution  is a systematic study of
the expressive power of sequential composition in LIF\@.
Our results on composition tie in the results on inputs and
outputs, and relate LIF to first-order logic (FO) and bounded-variable
LIF to bounded-variable FO.

This paper is the extended version of a paper presented at KR
2020~\cite{AamerLIF}.
\end{abstract}

\begin{CCSXML}
<ccs2012>
   <concept>
       <concept_id>10010147.10010178.10010187</concept_id>
       <concept_desc>Computing methodologies~Knowledge representation and reasoning</concept_desc>
       <concept_significance>500</concept_significance>
       </concept>
   <concept>
       <concept_id>10003752.10003790</concept_id>
       <concept_desc>Theory of computation~Logic</concept_desc>
       <concept_significance>500</concept_significance>
       </concept>
   <concept>
       <concept_id>10011007.10011074.10011099</concept_id>
       <concept_desc>Software and its engineering~Software verification and validation</concept_desc>
       <concept_significance>300</concept_significance>
       </concept>
 </ccs2012>
\end{CCSXML}

\ccsdesc[500]{Computing methodologies~Knowledge representation and reasoning}
\ccsdesc[500]{Theory of computation~Logic}
\ccsdesc[300]{Software and its engineering~Software verification and validation}

\keywords{dynamic logic, expressive prower, binary relations on valuations}

\maketitle

\section{Introduction}

The Logic of Information Flows (LIF)~\cite{lif_amw,lif_frocos}
is a knowledge representation framework designed to model and
understand how information propagates in complex  systems, and 
to find ways to navigate it efficiently.  The basic idea is that 
modules, that can be given procedurally or declaratively, are the 
atoms of a logic whose syntax resembles first-order logic, but whose 
semantics produces new modules.
In LIF, atomic modules are modeled as relations with designated 
input and output arguments.  Computation is modeled as propagation 
of information from inputs to outputs, similarly to propagation of 
tokens in Petri nets.   The specification of a complex system then 
amounts to \emph{connecting} atomic modules together.  For this purpose, 
LIF uses the classical logic connectives, i.e., the boolean operators, 
equality, and existential quantification. The goal is to start from 
constructs that are well understood, and to address the fundamental 
question of \emph{what logical means are necessary and sufficient to 
model computations declaratively}.  The eventual goal, which goes beyond 
the topic of this paper, is to come up with restrictions or extensions 
of LIF that make the computations efficient.

In its most general form, LIF is a rich family of logics with
recursion and higher-order variables.   Atomic modules are given
by formulae in various logics, and may be viewed as solving the
task of Model Expansion \cite{MT05}: the input structure is expanded 
to satisfy the specification of a module thus producing an output.
The semantics is given in terms of pairs of structures.  We can,
for example, give a graph (a relational structure) on the input of a 
module that returns a Hamiltonian cycle on the output, and compose it 
sequentially with a module that checks whether the produced cycle is of 
even length.  One can vary both the expressiveness of logics for specifying 
atomic modules and the operations for combining modules, to achieve desirable 
complexity of the computation for the tasks of interest.

Many issues surrounding LIF, however, are already interesting in
a first-order setting (see, e.g., \cite{flifexfo}); and in fact such a 
setting is more generic than the higher-order setting, which can be obtained 
by considering relations as atomary data values.  Thus, in this paper, 
we give a self-contained, first-order presentation of LIF.  Syntactically, 
atomic modules here are relation atoms with designated input and output 
positions.  Such atoms are combined using a set of algebraic operations into 
\emph{LIF expressions}.  The semantics is defined in terms of  pairs of 
valuations of first-order variables; the first valuation represents a 
situation right before applying the module, while the second represents 
a possible situation immediately afterwards.  The results in this paper 
are then also applicable to the case of higher-order variables.

Our contributions can be summarized as follows.
\begin{enumerate}
    \item
    While the input and output arguments of atomic modules
    are specified by the vocabulary, it is not clear how to
    designate the input and output variables of a
    complex LIF expression that represents a compound module.  Actually, 
    coming up with formal definitions of what it means for a variable to 
    be an input or output is a technically and philosophically interesting 
    undertaking.  We propose semantic definitions, based on natural 
    intuitions, which are, of course, open to further debate.  The 
    semantic notions of input and output turn out to be undecidable.  This 
    is not surprising, since LIF expressions subsume classical first-order
    logic formulas, for which most inference tasks in general
    are undecidable.
    \item
    We proceed to give an approximate, syntactic definition of
    the input and output variables of a formula, which is
    effectively computable. Indeed, our syntactic
    definition is \emph{compositional}, meaning that the set of
    syntactic input (or output) variables of a formula depends only on 
    the top-level operator of the formula, and the syntactic inputs and 
    outputs of the operands.  We prove our syntactic input--output notion 
    to be \emph{sound:} every semantic input or output is also a syntactic 
    input or output, and the syntactic inputs and outputs are connected by 
    a property that we call \emph{input--output determinacy}.
    Moreover, we prove an optimality result: our definition provides the 
    most precise approximation to semantic input and outputs among
    all compositional and sound definitions.
    \item
    We investigate the expressive power of sequential composition in the 
    context of LIF\@.  The sequential composition of two modules is 
    fundamental to building complex systems.  Hence, we are motivated to 
    understand in detail whether or not this operation is expressible in 
    terms of the basic LIF connectives.  This question turns out to be
    approachable through the notion of inputs and outputs.
    Indeed, there turns out to be a simple expression for the composition 
    of \emph{io-disjoint} modules.
    Here, io-disjointness means that inputs and outputs do not overlap.  
    For example, a module that computes a function of $x$ and returns the 
    result in $y$ is io-disjoint; a module that stores the result back in 
    $x$, thus overwriting the original input, is not.
    \item
    We then use the result on io-disjoint expressions to show that composition 
    is indeed an expressible operator in the classical setting of LIF, 
    where there is an infinite supply of fresh variables.  (In contrast, the 
    expression for io-disjoint modules does not need extra variables.)
    \item
    Finally, we complement the above findings with a result on LIF in a 
    bounded-variable setting: in this setting, composition is necessarily a 
    primitive operator.
\end{enumerate}
Many of our notions and results are stated generally in terms of transition 
systems (binary relations) on first-order valuations.  Consequently, we believe 
our work is also of value to settings other than LIF inasmuch as they involve 
dynamic semantics.  Several such settings,  where input--output specifications 
are important, are discussed in the related work section.

The rest of this paper is organized  as follows.  In Section~\ref{sec:prelims}, 
we formally introduce the Logic of Information Flows from a first-order perspective.  
Section~\ref{sec:inout} presents our study concerning the notion of 
inputs and outputs of complex expressions.  Section~\ref{sec:comp} then presents 
our study on the expressibility of sequential composition.  Section
\ref{sec:related} discusses related work.  We conclude in Section~\ref{sec:concl}.
In Sections~\ref{app:soundness}, \ref{app:precision}, and \ref{app:optimality}, 
we give extensive proofs of theorems we discuss in Section~\ref{sec:inout}.

\section{Preliminaries}\label{sec:prelims}
A \emph{(module) \schema} $\Sch{}$ is a triple $(\names,\ar{},\iar{})$ 
where:
\begin{itemize}
  \item $\names$ is a nonempty set, the elements of which are called 
  \emph{module names};
  \item $\ar{}$ assigns an arity to each module name in $\names{}$;
  \item $\iar{}$ assigns an input arity to each module name $M$ in 
  $\names{}$, where $\iar{(M)}\leq\ar{(M)}$.
 \end{itemize}

We fix a countably infinite universe $\dom{}$ of data elements.  An 
\emph{interpretation} $\inst{}$ of $\Sch{}$ assigns to each module name 
$M$ in $\names$ an $\ar{}(M)$-ary relation $\inst(M)$ over $\dom{}$.

Furthermore, we fix a universe of \emph{variables} $\V$. This set may be 
finite or infinite; the size of $\V$ will influence the expressive power of 
our logic.  A \emph{\val{}} is a function from $\V$ to $\dom{}$. The set of 
all \val{}s is denoted by $\sval$. We say that $\nu_1$ and $\nu_2$ 
\emph{agree on} $Y \subseteq \V$ if $\nu_1(y)=\nu_2(y)$ for all $y\in Y$ 
and that they \emph{agree outside} $Y$ if they agree on $\V - Y$.
A \emph{\pval{}} on $Y \subseteq \V$ is a function from $Y$ to $\V$; we will 
also call this a $Y$-valuation. If $\nu$ is a valuation, we use 
$\nu|_Y$ to denote its restriction to $Y$.  Let $\nu$ be a valuation and 
let $\nu_1$ be a partial valuation on $Y\subseteq \V$.  Then the substitution 
of $\nu_1$ into $\nu$, denoted by $\nu[\nu_1]$, is defined as 
$\nu_1\cup(\restr{\nu}{\V-Y})$.
In the special case where $\nu_1$ is defined on a single variable $x$
with $\nu_1(x) = d$, we also write $\nu[\nu_1]$ as $\nu[x:d]$.

We assume familiarity with the syntax and semantics of first-order logic 
(FO, relational calculus) over $\Sch$ \cite{Enderton72} and use $:=$ to 
mean ``is by definition''.

\subsection{Binary Relations on Valuations}
The semantics of LIF will be defined in terms of binary relations on 
$\sval{}$ (abbreviated $\bbar$: Binary Relations on Valuations).  Before 
formally introducing LIF, we define operations on $\bbar{}$s corresponding 
to the classical logical connectives, adapted to a dynamic semantics.  For
boolean connectives, we simply use the standard set operations.  For equality, 
we introduce selection operators.  For existential quantification, we introduce 
cylindrification operators.

Let $A$ and $B$ be $\bbar{}$s, let $Z$ be a finite set of variables, and let 
$x$ and $y$ be variables.
\begin{itemize}
\item \textbf{Set operations:} $A\cup B, A \cap B$, and $A\setminus B$ are 
well known.
\item \textbf{Composition}
\[A \comp B := \{(\nu_1,\nu_2)\mid \exists \nu_3: 
(\nu_1,\nu_3)\in A \text{ and } (\nu_3,\nu_2) \in B\}.\]
\withconverse{\item \textbf{Converse}
\[\conv{A} := \{(\nu_1,\nu_2) \mid (\nu_2,\nu_1)\in A\}.\]}
\item \textbf{Left and Right Cylindrifications}
\begin{align*}
\cyll{Z}(A) &:= \{(\nu_1,\nu_2) \mid \exists \nu_1': (\nu_1',\nu_2)\in A
\text{ and $\nu_1'$ and $\nu_1$ agree outside $Z$}\}.\\
\cylr{Z}(A) &:= \{(\nu_1,\nu_2) \mid \exists \nu_2': (\nu_1,\nu_2')\in A
\text{ and $\nu_2'$ and $\nu_2$ agree outside $Z$}\}.
\end{align*}
\item \textbf{Left and Right Selections}
\begin{align*}
\sell{x=y}(A) &:= \{(\nu_1,\nu_2)\in A \mid \nu_1(x) = \nu_1(y) \}.\\
\selr{x=y}(A) &:= \{(\nu_1,\nu_2)\in A \mid  \nu_2(x) = \nu_2(y)\}.
\end{align*}
\item \textbf{Left-To-Right Selection}
\[\sellr{x=y}(A) := \{(\nu_1,\nu_2)\in A \mid \nu_1(x) = \nu_2(y)\}.\]
\end{itemize}
If $\bar{x}$ and $\bar{y}$ are tuples of variables of length $n$, we write 
$\sellr{\bar{x}=\bar{y}}(A)$ for
\[\sellr{x_1=y_1} \sellr{x_2=y_2}\dots\sellr{x_n=y_n}(A)\]
and if $z$ is a variable we write $\cyll{z}$ for $\cyll{\{z\}}$.
Intuitively, a $\bbar$ is a dynamic system that manipulates the interpretation 
of variables.  A pair $(\nu_1,\nu_2)$ in a $\bbar$ represents that a transition 
from $\nu_1$ to $\nu_2$ is possible, i.e., that when given $\nu_1$ as input, 
the values of the variables can be updated to $\nu_2$.  The operations 
defined above correspond to manipulations/combinations of such dynamic systems.
Union, for instance, represents a non-deterministic choice, while composition 
corresponds to composing two such systems.  Left cylindrification corresponds, 
in the dynamic view, to performing search before following the underlying $\bbar$.
Indeed, when given an input $\nu_1$, alternative values for the cylindrified 
variables are searched for which transitions are possible.  The selection operations 
correspond to performing checks, on the input, the output, or a combination of both 
in addition to performing what the underlying $\bbar$ does.

Some of the above operators are redundant, in the sense that they can be 
expressed in terms of others, for instance, $A\cap B = A \setminus (A \setminus B)$.  
We also have:
\begin{lemma}\label{lem:redundant}
For any $\bbar$ $A$, and any variables $x$ and $y$, the following hold:
\begin{align*}
\withconverse{  
  \cylr{x}(A) & = \conv{(\cyll{x}(\conv{A}))}\\
  \cyll{x}(A) & = \conv{(\cylr{x}(\conv{A}))}\\}
  \selr{x=y}(A) &= A \cap \cyll{x}\sellr{(x,x)=(y,x)}\cyll{x}(A)\\
  \sell{x=y}(A) &= A \cap \cylr{x}\sellr{(y,x)=(x,x)}\cylr{x}(A)
  \withconverse{\\
  \sell{x=y}(A) &= \conv{\selr{x=y}(\conv{A})} }
\end{align*}
\end{lemma}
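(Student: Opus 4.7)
The plan is to verify each of the identities by straightforward unfolding of the set-builder definitions; no induction or auxiliary constructions are needed. The three identities involving the converse (the two cylindrification ones and the fifth one, for left-selection) are essentially immediate: using only the definition of converse together with the symmetry between $\cyll{x}$ and $\cylr{x}$ (respectively $\sell{x=y}$ and $\selr{x=y}$) under swapping the two components of a pair, each rewrites in a single line. For example, $(\nu_1,\nu_2) \in \conv{\cyll{x}(\conv{A})}$ holds iff $(\nu_2,\nu_1) \in \cyll{x}(\conv{A})$, iff some $\nu_2'$ agreeing with $\nu_2$ outside $\{x\}$ satisfies $(\nu_1,\nu_2') \in A$, which is exactly $\cylr{x}(A)$; the other two converse identities are analogous.

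The substantive part is the pair of identities expressing right-selection via left-cylindrification and its mirror image. I would focus on $\selr{x=y}(A) = A \cap \cyll{x}\sellr{(x,x)=(y,x)}\cyll{x}(A)$ and obtain the other by symmetry (swapping the roles of left and right throughout). Peeling the right-hand side from the inside out: the inner $\cyll{x}(A)$ consists of pairs $(\mu_1,\mu_2)$ such that some $(\mu_1',\mu_2) \in A$ has $\mu_1'$ and $\mu_1$ agreeing outside $\{x\}$; applying $\sellr{(x,x)=(y,x)}$ then imposes $\mu_1(x)=\mu_2(y)$ and $\mu_1(x)=\mu_2(x)$, which together encode the desired constraint $\mu_2(x)=\mu_2(y)$ while pinning $\mu_1(x)$ to that common value; and the outer $\cyll{x}$ releases the constraint on $\mu_1(x)$ again. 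Consequently, $\cyll{x}\sellr{(x,x)=(y,x)}\cyll{x}(A)$ contains a pair $(\nu_1,\nu_2)$ precisely when $\nu_2(x)=\nu_2(y)$ and there exists some valuation agreeing with $\nu_1$ outside $\{x\}$ that is paired with $\nu_2$ in $A$.

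Intersecting with $A$ then trivializes the residual existence claim, since $\nu_1$ itself witnesses it, and what remains is exactly $\{(\nu_1,\nu_2)\in A : \nu_2(x)=\nu_2(y)\} = \selr{x=y}(A)$. No genuine obstacle arises in the proof; the only subtle point worth flagging is precisely the role of intersecting with $A$: without that intersection the right-hand side would in general be strictly larger, because the outer $\cyll{x}$ supplies a weaker existence condition on the left component than actual membership of the pair $(\nu_1,\nu_2)$ in $A$.
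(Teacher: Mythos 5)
Your proposal is correct and follows essentially the same route as the paper, which does not give a formal proof of this lemma but only the informal ``copy $x$ from right to left with $\cyll{x}$ and $\sellr{x=x}$, check with $\sellr{x=y}$, then restore via the outer $\cyll{x}$ and the intersection with $A$'' explanation; your unfolding of the definitions is a faithful formalization of exactly that argument, including the key observation that intersecting with $A$ is what discharges the weaker existential condition left by the outer cylindrification.
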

The expression for $\selr{x=y}$ can be explained as follows.  First, we copy $x$ 
from right to left by applying $\cyll{x}$ followed by $\sellr{x=x}$. Selection 
$\selr{x=y}$ can now be simulated by  $\sellr{x=y}$.  The original $x$ value on 
the left is restored by a final application of $\cyll{x}$ and intersecting with 
the original $A$.

\subsection{The Logic of Information Flows}
The language of LIF expressions $\alpha$ over a \schema $\Sch{}$ is defined by 
the following grammar:
\[
\alpha ::= \id \mid M(\oz) \mid \alpha \cup \alpha \mid \alpha \cap \alpha \mid 
\alpha \setminus \alpha \mid \alpha \comp{} \alpha \withconverse{ \mid\conv{\alpha}} 
\mid \cyll{Z}(\alpha) \mid \cylr{Z}(\alpha) \mid \sellr{x=y}(\alpha) \mid 
\sell{x=y}(\alpha)\mid \selr{x=y}(\alpha)
\]
Here, $M$ is any module name in $\Sch{}$; $Z$ is a finite set of variables; 
$\oz$ is a tuple of variables; and $x,y$ are variables.  For \emph{atomic module 
expressions}, i.e., expressions of the form $M(\oz{})$, the length of $\oz{}$ 
must equal $\ar(M)$.  In practice, we will often write $M(\ox;\oy)$ for atomic 
module expressions, where $\ox$ is a tuple of variables of length $\iar(M)$ and 
$\oy$ is a tuple of variables of length $\ar(M)- \iar(M)$.

We will define the semantics of a LIF expression $\alpha$, in the context of a 
given interpretation $D$, as a $\bbar$ which will be denoted by $\sem\alpha$.  
Thus, adapting Gurevich's terminology~\cite{gurevich_algfeas,gurevich_challenge}, 
every LIF expression $\alpha$ denotes a \emph{\gbrv} $\semNoI{\alpha}$: a function 
that maps interpretations $D$ of $\Sch$ to the \bbar $\alpha(D) := \sem{\alpha}$.

For atomic module expressions, we define
\[
\sem{M(\ox;\oy)} := \{(\nu_1,\nu_2)\in \sval{\times} \sval 
\mid \nu_1(\ox)\cdot\nu_2(\oy) \in \inst(M) 
\text{ and } \nu_1 \text{ and } \nu_2 \text{ agree outside } \oy{}\}.
\]
Here, $\nu_1(\ox)\cdot\nu_2(\oy)$ denotes the \emph{concatenation} of tuples.
Intuitively, the semantics of an expression $M(\ox;\oy)$ represents a transition 
from $\nu_1$ to $\nu_2$: the inputs of the module are ``read'' in $\nu_1$ and 
the outputs are updated in $\nu_2$.
The value of every variable that is not an output is preserved; this important 
semantic principle is a realization of the commonsense law of \emph{inertia} 
\cite{McCHay69,DBLP:conf/ijcai/Lifschitz87}.

We further define
\[
\sem{\id} := \{(\nu,\nu) \mid \nu \in \sval{}\}.
\]
The semantics of other operators is obtained directly by applying the 
corresponding operation on $\bbar$s, e.g.,
\begin{align*}
  \sem{\alpha \setminus \beta} & := \sem{\alpha}\setminus \sem{\beta}.\\
  \sem{\sellr{x=y}(\alpha)}& := \sellr{x=y}(\sem{\alpha}).
\end{align*}
We say that $\alpha$ and $\beta$ are \emph{equivalent} if
$\sem{\alpha}=\sem{\beta}$  for each \interpretation $\inst{}$,
i.e., if they denote the same \gbrv.

\subsection{Satisfiability of LIF Expressions}
In this section, we will show that the problem of deciding whether a given LIF 
expression is satisfiable is undecidable.  Thereto we begin by noting that 
first-order logic (FO) is naturally embedded in LIF in the following manner.
When evaluating FO formulas on interpretations, we agree that the domain of 
quantification is always $\dom$.
\begin{lemma}\label{lem:FoToLIF}
Let $\Sch$ be a vocabulary with $\iar(R)=0$ for every $R \in \Sch$. Then, for every 
FO formula $\varphi$ over $\Sch$, there exists a LIF expression $\alpha_\varphi$ 
such that for every \interpretation $\inst{}$ the following holds:
\[\sem{\alpha_\varphi}=\{(\nu,\nu) \mid \inst,\nu\models \varphi\}.\]
\end{lemma}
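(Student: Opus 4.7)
The plan is a structural induction on the FO formula $\varphi$, maintaining the invariant that each $\alpha_\varphi$ denotes the diagonal \bbar{} supported on exactly the satisfying valuations of $\varphi$. I take as FO constructs the atomic relational and equality formulas, the boolean connectives $\wedge$, $\vee$, $\neg$, and the existential quantifier; universal quantification reduces to the existential case via the usual rewriting $\forall x\,\varphi \equiv \neg \exists x\,\neg\varphi$.

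For a relational atom $R(\oz)$, the hypothesis $\iar(R)=0$ makes the LIF atom $R(;\oz)$ syntactically legal, and by definition its semantics is
\[\sem{R(;\oz)}=\{(\nu_1,\nu_2)\mid \nu_2(\oz)\in\inst(R)\text{ and }\nu_1,\nu_2\text{ agree outside }\oz\}.\]
Intersecting with $\id$ collapses this onto the diagonal, yielding $\{(\nu,\nu)\mid \nu(\oz)\in\inst(R)\}$, which matches the FO semantics even when $\oz$ contains repeated variables. For an equality atom $x=y$ I take $\alpha_{x=y}:=\sell{x=y}(\id)$, whose semantics is $\{(\nu,\nu)\mid \nu(x)=\nu(y)\}$. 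The boolean cases are immediate: since, by induction, $\sem{\alpha_\varphi}$ and $\sem{\alpha_\psi}$ are diagonal, the natural choices $\alpha_\varphi \cap \alpha_\psi$, $\alpha_\varphi \cup \alpha_\psi$, and $\id \setminus \alpha_\varphi$ produce diagonal relations indexed by the intended satisfaction sets.

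The only case requiring genuine thought — and what I expect to be the main obstacle — is the existential quantifier. The difficulty is that a direct application of $\cyll{x}$ to $\alpha_\varphi$ immediately leaves the diagonal: $\cyll{x}(\sem{\alpha_\varphi})=\{(\nu_1,\nu_2)\mid \nu_2\models\varphi\text{ and }\nu_1,\nu_2\text{ agree outside }x\}$, so merely intersecting with $\id$ would give back $\alpha_\varphi$ itself, not an existential. The fix is to also loosen the right-hand value of $x$ before projecting back, by setting
\[\alpha_{\exists x\,\varphi} \;:=\; \id \cap \cylr{x}\bigl(\cyll{x}(\alpha_\varphi)\bigr).\]
Unfolding, $\cylr{x}(\cyll{x}(\sem{\alpha_\varphi}))$ consists of those $(\nu_1,\nu_2)$ for which there exists some $\mu\models\varphi$ that agrees with both $\nu_1$ and $\nu_2$ outside $x$; the intersection with $\id$ forces $\nu_1=\nu_2=\nu$, so that $\mu$ ranges exactly over the valuations of the form $\nu[x:d]$ with $d\in\dom{}$. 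This gives $\{(\nu,\nu)\mid \exists d\in\dom{}:\nu[x:d]\models\varphi\}$, which is precisely the FO semantics of $\exists x\,\varphi$, and closes the induction.
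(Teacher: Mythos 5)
Your proposal is correct and follows essentially the same route as the paper's proof: structural induction with $\id\cap R(;\oz)$ for atoms, a selection on $\id$ for equality, the corresponding set operations for the Boolean connectives, and double cylindrification for the existential quantifier. The only cosmetic differences are that the paper uses $\selr{x=y}(\id)$ where you use $\sell{x=y}(\id)$ (equivalent on the diagonal) and restores the diagonal after cylindrifying via $\sellr{x=x}(\cyll{x}(\cylr{x}(\alpha_{\varphi_1})))$ where you use $\id\cap\cylr{x}(\cyll{x}(\alpha_{\varphi_1}))$; both are valid.
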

\begin{proof}
The proof is by structural induction on $\varphi$.
\begin{itemize}
	\item If $\varphi$ is $x=y$, 
	take $\alpha_\varphi=\selr{x=y}(\id)$.
	\item If $\varphi$ is $R(\bar x)$ for some $R \in \Sch$, 
	take $\alpha_\varphi = \id \cap R(;\bar x)$.
	\item If $\varphi$ is $\varphi_1\lor\varphi_2$, 
	take  $\alpha_\varphi = \alpha_{\varphi_1}\cup \alpha_{\varphi_2}$.
	\item If $\varphi$ is $\neg \varphi_1$, 
	take $\alpha_\varphi = \id \setminus \alpha_{\varphi_1}$.
	\item If $\varphi$ is $\exists x \, \varphi_1$, 
	take $\alpha_\varphi = $ $\sellr{x=x}(\cyll{x}(\cylr{x}(\alpha_{\varphi_1})))$.
	\qedhere
\end{itemize}
\end{proof}
It is well known that satisfiability of FO formulas over a fixed
countably infinite domain is undecidable.  This leads to the
following undecidability result.

\smallskip\smallskip

\prob{Satisfiability}
{a LIF expression $\alpha$.}
{Is there an \interpretation $\inst$ such that $\sem{\alpha} \neq \emptyset$?}

\begin{proposition}\label{prop:satund}
The satisfiability problem is undecidable.
\end{proposition}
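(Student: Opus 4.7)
The plan is to reduce first-order satisfiability (over the fixed countably infinite domain $\dom$) to LIF satisfiability, using Lemma~\ref{lem:FoToLIF} essentially off the shelf. Given an FO sentence $\varphi$ over a relational vocabulary, I view that vocabulary as a LIF \schema{} $\Sch$ by declaring $\iar(R) = 0$ for every relation symbol $R$, which lets me apply Lemma~\ref{lem:FoToLIF} to obtain a LIF expression $\alpha_\varphi$ with $\sem{\alpha_\varphi} = \{(\nu,\nu) \mid \inst,\nu \models \varphi\}$ for every $\inst$.

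I then claim that $\varphi$ is satisfiable in the classical sense if and only if $\alpha_\varphi$ is satisfiable as a LIF expression. For the forward direction, if $\inst \models \varphi$ then, since $\varphi$ is a sentence, $\inst,\nu \models \varphi$ for every \val{} $\nu$, so $(\nu,\nu) \in \sem{\alpha_\varphi}$ and in particular $\sem{\alpha_\varphi} \neq \emptyset$. For the converse, if $\sem{\alpha_\varphi} \neq \emptyset$ for some $\inst$, pick any pair in it; by Lemma~\ref{lem:FoToLIF} it must be of the form $(\nu,\nu)$ with $\inst,\nu \models \varphi$, whence $\inst \models \varphi$.

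The reduction is computable (the translation $\varphi \mapsto \alpha_\varphi$ in Lemma~\ref{lem:FoToLIF} is explicitly inductive on the structure of $\varphi$), so decidability of LIF satisfiability would yield decidability of FO satisfiability over a countably infinite domain. The latter is Church's classical undecidability result (and by L\"owenheim--Skolem, ordinary FO satisfiability coincides with satisfiability in a countably infinite domain), giving the desired contradiction.

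There is no real obstacle here; the only point that deserves a line of care in the write-up is the reduction from satisfiability of sentences to satisfiability of formulas with possibly free variables, which is handled by passing to the existential closure (or, equivalently, by observing that for a sentence $\varphi$ satisfaction is independent of $\nu$). Everything else is a direct application of the embedding already established.
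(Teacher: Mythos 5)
Your proposal is correct and follows exactly the paper's own argument: reduce FO satisfiability over the fixed countably infinite domain $\dom$ to LIF satisfiability via the embedding of Lemma~\ref{lem:FoToLIF}, observing that $\alpha_\varphi$ is satisfiable iff $\varphi$ is. The extra care you take about sentences versus open formulas and about computability of the translation is fine but not a departure from the paper's (much terser) proof.
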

\begin{proof}
The proof is by reduction from the satisfiability of FO formulas.
Let $\varphi$ be an FO formula and let $\alpha_\varphi$ be the LIF expression 
obtained from Lemma \ref{lem:FoToLIF}.  It is clear that $\alpha_\varphi$ is 
satisfiable iff $\varphi$ is.
\end{proof}%

\section{Inputs and Outputs}\label{sec:inout}
We are now ready to study inputs and outputs of LIF expressions, and, more 
generally, of global $\bbar$s. We first investigate what inputs and outputs 
mean on the semantic level before introducing a syntactic definition for LIF 
expressions.

\subsection{Semantic Inputs and Outputs for Global \texorpdfstring{$\bbar$}{BRV}s}
Intuitively, an output is a variable whose value can be changed by the 
expression, i.e., a variable that is not subject to inertia.
\begin{definition}
A variable $x$ is a \emph{semantic output} for a \gbrv $Q$ if there exists 
an \interpretation $\inst{}$ and $(\nu_1,\nu_2)\in Q(\inst{})$ such that 
$\nu_1(x) \neq \nu_2(x)$.  We use $\semO(Q)$ to denote the set of semantic
output variables of $Q$.  If $\alpha$ is a LIF expression, we call a variable 
a \emph{semantic output} of $\alpha$ if it is a semantic output of 
$\semNoI{\alpha}$.  We also write $\semO(\alpha)$ for the semantic outputs 
of $\alpha$.  A variable that is not a semantic output is also called an 
\emph{inertial variable}.
\end{definition}
Defining semantic inputs is a bit more subtle.  Intuitively, a variable is 
an input for a BRV if its value on the left-hand side matters for determining 
the right-hand side (i.e., that if the value of the input would have been 
different, so would have been the right-hand side; which is in fact a very 
coarse counterfactual definition of actual causality \cite{lewis}).  However, 
a naive formalization of this intuition would result in a situation in which 
all inertial variables (variables that are not outputs) are inputs since their 
value on the right-hand side always equals to the one on the left-hand side.  
A slight refinement of our intuition is that the inputs are those variables 
whose value matters for determining the possible values of the outputs.  
This is formalized in the following definitions.
\begin{definition}\label{def:determinacy}
Let $Q$ be a \gbrv and $X,Y$ be sets of variables.  We say that $X$ 
\emph{determines} $Q$ on $Y$ if for every \interpretation $\inst{}$, 
every $(\nu_1,\nu_2)\in Q(\inst{})$ and every $\nu_1'$ such that 
$\nu_1'=\nu_1$ on $X$, there exists a $\nu_2'$ such that $\nu_2'=\nu_2$ 
on $Y$ and $(\nu_1',\nu_2')\in Q(\inst{})$.
\end{definition}
\begin{definition}
A variable $x$ is a \emph{semantic input} for a \gbrv $Q$ if\/  $\V-\{x\}$ 
does not determine $Q$ on $ \semO(Q)$.  The set of input variables of $Q$
is denoted by $\semI(Q)$.  A variable is a \emph{semantic input} of a LIF 
expression $\alpha$ if it is a semantic input of $\semNoI{\alpha}$; the 
semantic inputs of $\alpha$ are denoted by $\semI(\alpha)$.
\end{definition}

From Definition~\ref{def:determinacy}, we can rephrase the definition for
semantic inputs to:
\begin{proposition}\label{prop:inputEqu}
A variable $x$ is a semantic input for a \gbrv $Q$ iff there is an 
\interpretation $\inst{}$, a value $d \in \dom$, and $(\nu_1,\nu_2)\in Q(\inst{})$ 
such that there is no valuation $\nu_2'$ that agrees with $\nu_2$ on $\semO(Q)$ 
and $(\nu_1[x:d],\nu_2')\in Q(\inst{})$.
\end{proposition}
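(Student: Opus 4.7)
The plan is to prove this by directly unfolding the definitions of semantic input and of determinacy, and then observing that quantifying over valuations $\nu_1'$ that agree with $\nu_1$ outside $\{x\}$ is essentially the same as quantifying over a value $d \in \dom$ to substitute for $\nu_1(x)$.

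First, I would recall the definition: $x$ is a semantic input for $Q$ iff $\V - \{x\}$ does not determine $Q$ on $\semO(Q)$. By Definition~\ref{def:determinacy}, $\V - \{x\}$ \emph{does} determine $Q$ on $\semO(Q)$ iff for every interpretation $\inst{}$, every pair $(\nu_1,\nu_2)\in Q(\inst{})$, and every valuation $\nu_1'$ agreeing with $\nu_1$ on $\V - \{x\}$, there exists $\nu_2'$ with $\nu_2' = \nu_2$ on $\semO(Q)$ and $(\nu_1',\nu_2')\in Q(\inst{})$. Negating this statement gives: there exist an interpretation $\inst{}$, a pair $(\nu_1,\nu_2)\in Q(\inst{})$, and a valuation $\nu_1'$ agreeing with $\nu_1$ outside $\{x\}$ such that no $\nu_2'$ agreeing with $\nu_2$ on $\semO(Q)$ satisfies $(\nu_1',\nu_2')\in Q(\inst{})$.

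Next, I would make the key (but elementary) observation that a valuation $\nu_1'$ agrees with $\nu_1$ on $\V - \{x\}$ if and only if $\nu_1' = \nu_1[x:d]$ for the value $d := \nu_1'(x) \in \dom$; conversely, every $\nu_1[x:d]$ with $d \in \dom$ agrees with $\nu_1$ outside $\{x\}$. Substituting this equivalence into the negated condition from the previous paragraph yields precisely the statement of Proposition~\ref{prop:inputEqu}.

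I do not expect any real obstacle here: the whole argument is a syntactic rewriting of the definition, and the only substantive step is the trivial bijection between valuations differing from $\nu_1$ at most in $x$ and values $d \in \dom$. If anything, the only point worth being careful about is keeping the quantifier structure straight when negating the determinacy condition, so that the ``no $\nu_2'$'' clause ends up in the right place.
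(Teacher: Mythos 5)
Your proposal is correct and matches the paper's treatment: the paper presents Proposition~\ref{prop:inputEqu} as a direct rephrasing of the definition of semantic input via Definition~\ref{def:determinacy}, with no further argument needed. Your unfolding, negation of the determinacy condition, and identification of valuations agreeing with $\nu_1$ outside $\{x\}$ with substitutions $\nu_1[x:d]$ is exactly the intended justification.
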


The following proposition shows that the semantic inputs of $Q$ are indeed exactly 
the variables that determine $Q$.
\begin{proposition}\label{prop:determined1}
If a set of variables $X$ determines a \gbrv $Q$ on $\semO(Q)$, then 
$\semI(Q) \subseteq X$.
\end{proposition}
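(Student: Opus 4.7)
The plan is to argue by contrapositive: I will show that if $x \notin X$, then $x$ is not a semantic input of $Q$, i.e.\ that $\V - \{x\}$ determines $Q$ on $\semO(Q)$. The whole proof hinges on a simple monotonicity property of the ``determines'' relation in its first argument, which I will establish first and then apply to the set $\V - \{x\}$.

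First I would record the following monotonicity observation: if $X$ determines $Q$ on $Y$ and $X \subseteq X'$, then $X'$ also determines $Q$ on $Y$. This is immediate from Definition~\ref{def:determinacy}: any $\nu_1'$ that agrees with $\nu_1$ on $X'$ in particular agrees with $\nu_1$ on $X$, so the witnessing $\nu_2'$ provided by $X$-determination works for $X'$ as well. This little lemma is really the only moving part of the argument.

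Next, assume $X$ determines $Q$ on $\semO(Q)$, and take an arbitrary $x \notin X$. Then $X \subseteq \V - \{x\}$, so by the monotonicity observation $\V - \{x\}$ also determines $Q$ on $\semO(Q)$. By the definition of semantic input, this means $x \notin \semI(Q)$. Contrapositively, every $x \in \semI(Q)$ must lie in $X$, which gives $\semI(Q) \subseteq X$ as required.

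I do not expect any real obstacle here: the statement is essentially a direct unfolding of the definitions together with monotonicity in the first argument of ``determines.'' The only mildly delicate point to state clearly is that the role of $\semO(Q)$ in the target of determination is fixed throughout (it plays no active role in the argument), so that monotonicity is applied purely on the ``determining'' side.
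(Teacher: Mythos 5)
Your proof is correct and is essentially the same argument as the paper's: both rest on the monotonicity of ``determines'' in its first argument (the paper phrases it as a contradiction for an arbitrary $v \in \semI(Q)$ with $v \notin X$, while you phrase it contrapositively for an arbitrary $x \notin X$). Making the monotonicity lemma explicit is a minor stylistic difference, not a different route.
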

\begin{proof}
Let $v$ be any variable in $\semI(Q)$.  We know that $\V - \{v\}$ does not 
determine $Q$ on $\semO(Q)$.  If $v \not \in X$, then $X \subseteq \V - \{v\}$,
so $X$ would not determine $Q$ on $\semO(Q)$, which is impossible.  Hence, $v$
must be in $X$ as desired.
\end{proof}
Under a mild assumption, also the converse to Proposition~\ref{prop:determined1}
holds:\footnote{Proposition~\ref{prop:determined2} indeed provides a converse
to Proposition~\ref{prop:determined1}: given that $\semI(Q)$ determines $Q$
on $\semO(Q)$ and $\semI(Q) \subseteq X$ for some set $X$, clearly also $X$ 
determines $Q$ on $\semO(Q)$.}
\begin{proposition}\label{prop:determined2}
Assume there exists a finite set of variables that determines a \gbrv $Q$ 
on $\semO(Q)$.  Then, $\semI(Q)$ determines $Q$ on $\semO(Q)$.
\end{proposition}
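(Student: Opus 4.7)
The plan is to reduce the hypothesized finite determining set $X_0$ down to $\semI(Q)$ by removing one variable at a time, using the fact that any variable in $X_0 \setminus \semI(Q)$ is, by definition, such that $\V - \{v\}$ determines $Q$ on $\semO(Q)$. By Proposition~\ref{prop:determined1} we already know $\semI(Q) \subseteq X_0$, so $X_0 \setminus \semI(Q)$ is finite and the iteration terminates after finitely many steps at $\semI(Q)$.

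The work is concentrated in a single key lemma: if a set $X$ determines $Q$ on $\semO(Q)$ and $v \notin \semI(Q)$, then $X \setminus \{v\}$ also determines $Q$ on $\semO(Q)$. To prove it, I would take $(\nu_1,\nu_2) \in Q(\inst{})$ and a valuation $\nu_1'$ that agrees with $\nu_1$ on $X \setminus \{v\}$, and produce the required $\nu_2'$ by two successive applications of determinacy. First I would introduce the intermediate valuation $\nu_1''$ obtained from $\nu_1'$ by resetting the $v$-coordinate to $\nu_1(v)$; then $\nu_1''$ agrees with $\nu_1$ on all of $X$, so using that $X$ determines $Q$ on $\semO(Q)$ yields some $\nu_2''$ with $(\nu_1'',\nu_2'') \in Q(\inst{})$ and $\nu_2'' = \nu_2$ on $\semO(Q)$. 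Second, since $\nu_1'$ agrees with $\nu_1''$ everywhere outside $\{v\}$ and $v \notin \semI(Q)$, the hypothesis that $\V - \{v\}$ determines $Q$ on $\semO(Q)$ gives a $\nu_2'$ with $(\nu_1',\nu_2') \in Q(\inst{})$ and $\nu_2' = \nu_2''$ on $\semO(Q)$. Agreement on $\semO(Q)$ is transitive, so $\nu_2' = \nu_2$ on $\semO(Q)$ as required.

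Once the key lemma is in hand, the proof of the proposition writes itself: enumerate $X_0 \setminus \semI(Q) = \{v_1,\dots,v_n\}$ (finite by hypothesis and by Proposition~\ref{prop:determined1}), apply the lemma $n$ times to strip off $v_1,\dots,v_n$ one by one, and conclude that $X_0 - \{v_1,\dots,v_n\} = \semI(Q)$ determines $Q$ on $\semO(Q)$.

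The only subtle point is the two-step detour through $\nu_1''$: naively one might try to go directly from $\nu_1$ to $\nu_1'$ using $\V - \{v\}$-determinacy, but $\nu_1$ and $\nu_1'$ generally differ on variables outside $\{v\}$ (namely on those of $X \setminus \{v\}$ they agree by choice, but they can differ arbitrarily off $X$). Thus the detour is essential, and the finiteness of $X_0$ is what lets the iterated removal reach $\semI(Q)$; without it one would need a limit-style argument that is not obviously available.
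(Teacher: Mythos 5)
Your proof is correct and follows essentially the same route as the paper's: both pass through an intermediate valuation that agrees with $\nu_1$ on the finite determining set, apply that set's determinacy, and then restore the variables of $X_0 \setminus \semI(Q)$ one at a time using the fact that $\V - \{v\}$ determines $Q$ on $\semO(Q)$ for each $v \notin \semI(Q)$. The only difference is presentational: you package the single-variable step as a reusable removal lemma and iterate it, whereas the paper runs a single inductive chain of intermediate valuations $\mu_0,\ldots,\mu_n$ directly.
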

\begin{proof}
Let $(\nu_1, \nu_2) \in Q(\inst)$ and $\nu'_1 = \nu_1$ on $\semI(Q)$ for some
\interpretation $\inst$ and valuations $\nu_1$, $\nu_2$, and $\nu'_1$.  To show 
that $\semI(Q)$ determines $Q$ on $\semO(Q)$, we need to find a valuation $\nu'_2$ 
such that $(\nu'_1, \nu'_2) \in Q(\inst)$ and $\nu'_2 = \nu_2$ on $\semO(Q)$.
By assumption, let $X$ be a finite set of variables that determines $Q$ on 
$\semO(Q)$.

Thereto, take $\nu''_1$ to be the valuation $\nu'_1[\nu_1|_{X}]$ which is the 
valuation $\nu'_1$ after changing the values for the variables in $X$ to be as 
in $\nu_1$.  Thus, $\nu''_1 = \nu_1$ on $X$, while $\nu''_1 = \nu'_1$ outside $X$.
Since $X$ determines $Q$ on $\semO(Q)$, we know that there is a valuation $\nu''_2$
such that $(\nu''_1, \nu''_2) \in Q(\inst)$ and $\nu''_2 = \nu_2$ on $\semO(Q)$.
To reach our goal, we would like to do incremental changes to $\nu''_1$ in order 
to be similar to $\nu'_1$ while showing that each of the intermediate valuations 
does satisfy the determinacy conditions.

From construction, we know that $\nu''_1 = \nu'_1$ on $X \cap \semI(Q)$.  Using
the finiteness assumption for $X$, let $X - \semI(Q)$ be the set of variables
$\{x_1, \ldots, x_n\}$.  Define the sequence of valuations 
$\mu_0, \mu_1, \ldots, \mu_n$ such that:
\begin{itemize}
    \item $\mu_0 := \nu''_1$; and
    \item $\mu_{i} := \mu_{i-1}[\{x_{i} \mapsto \nu'_1(x_{i})\}]$ so $\mu_{i}$
    is $\mu_{i-1}$ after changing the value of $x_{i}$ to be as in $\nu'_1$.
\end{itemize}
We \textbf{claim} that for $i \in \{0, \ldots, n\}$, there exists a valuation 
$\kappa_i$ such that $(\mu_i, \kappa_i) \in Q(\inst)$ and $\kappa_i = \nu_2$ on 
$\semO(Q)$.  Since $\mu_n$ is clearly the same valuation as $\nu'_1$, we can then 
take $\nu'_2$ to be $\kappa_n$ which is the required.

We verify our \textbf{claim} by induction.
\begin{description}
\item[Base Case:] for $i = 0$, we can see that $\kappa_0 = \nu''_2$.
\item[Inductive Step:] for $i > 0$, by assumption, we know that there is a 
valuation $\kappa_{i-1}$ such that $(\mu_{i-1}, \kappa_{i-1}) \in Q(\inst)$ and 
$\kappa_{i-1} = \nu_2$ on $\semO(Q)$.  It is clear that $\mu_i = \mu_{i-1}$ 
outside $\{x_i\}$ which is $\V - \{x_i\}$.  Since $x_i \not \in \semI(Q)$, we 
know that $\V - \{x_i\}$ determines $Q$ on $\semO(Q)$.  Hence, there is a valuation 
$\kappa_i$ such that $(\mu_i, \kappa_i) \in Q(\inst)$ and $\kappa_i = \kappa_{i-1}$
on $\semO(Q)$.  Since $\kappa_{i-1} = \nu_2$ on $\semO(Q)$, we can see that 
$\kappa_i = \nu_2$ on $\semO(Q)$. \qedhere
\end{description}
\end{proof}
In the next remark, we show that without our assumption, we can find an example 
of a \gbrv that is not determined on its semantic outputs by its semantic inputs. 
\begin{remark}\label{note:determinacyCE}
Let $Q$ be the \gbrv that maps every $\inst$ to the same $\bbar$, namely:
\[
Q(\inst) = \{(\nu_1,\nu_2) \in \sval \times \sval \mid \nu_1 \text{ and } \nu_2 
\text{ differ on finitely many variables}\}.
\]
Since the variables that can be changed by $Q$ are not restricted, we see that 
$\semO(Q) = \V$.  We now verify that $\semI(Q) = \emptyset$.  Let $v$ be any
variable.  We can see that $v \not \in \semI(Q)$.  Thereto, we check that 
$\V - \{v\}$ determines $Q$ on $\semO(Q)$.  Let $\inst$ be an \interpretation 
and $\nu_1$, $\nu_2$, and $\nu_1'$ valuations such that $(\nu_1, \nu_2) \in Q(\inst)$ 
and $\nu_1' = \nu_1$ outside $\{v\}$.  Since $\nu_1$ and $\nu_2$ differ on finitely 
many variables, we can see that $\nu_1'$ and $\nu_2$ also do.  Hence, 
$(\nu_1', \nu_2) \in Q(\inst)$.

Finally, we verify that $\semI(Q)$ does not determine $Q$ on $\semO(Q)$.  To see
a counterexample, let $\inst$ be an \interpretation and $\nu_1$ be the valuation 
that assigns $1$ to every variable.  We can see that $(\nu_1, \nu_1) \in Q(\inst)$.  
Let $\nu_2$ be the valuation that assigns $2$ to every variable.  It is clear that 
$\nu_2 = \nu_1$ on $\semI(Q) = \emptyset$, however, clearly 
$(\nu_2, \nu_1) \not \in Q(\inst)$.  By Proposition~\ref{prop:determined2}, we know 
that there is no finite set of variables that does determine $Q$ on $\semO(Q)$.\qed
\end{remark}
The reader should not be lulled into believing that $\semI(Q)$ determines a \gbrv 
$Q$ on the set $\V$ of \emph{all} variables since $\semI(Q)$ determines $Q$ on 
$\semO(Q)$ and no other variable outside $\semO(Q)$ can have its value changed.  
In the following remark, we give a simple counterexample.
\begin{remark}
We show that $\semI(Q)$ does not necessarily determine $Q$ on $\V$ for every 
\gbrv $Q$.  Take $Q$ to be defined by the LIF expression $\sell{x=y}(\id)$,
so, for every $\inst$, we have
\[
Q(\inst) = \{(\nu, \nu) \mid \nu \in \sval \text{ such that } \nu(x) = \nu(y)\}.
\]
It is clear that $\semO(Q) = \emptyset$ and $\semI(Q) = \{x,y\}$.  

Let $\nu_1$ be the valuation that assigns $1$ to every variable.  Clearly, 
$(\nu_1, \nu_1) \in Q(\inst)$ for any \interpretation $\inst$.  Now take $\nu'_1$ 
to be the valuation that assigns $1$ to $x$ and $y$, while it assigns $2$ to every
other variable.  It is clear that $\nu'_1 = \nu_1$ on $\semI(Q)$.

If $\semI(Q)$ were to determine $Q$ on $\V$, we should find a valuation $\nu_2$ 
that agrees with $\nu_1$ on $\V$ such that $(\nu'_1, \nu_2) \in Q(\inst)$. 
In other words, this means that $(\nu'_1, \nu_1) \in Q(\inst)$, which is clearly 
not possible.\qed
\end{remark}
Assuming determinacy holds for some \gbrv, we next show that it 
actually satisfies an even stricter version of determinacy.
\begin{proposition}\label{prop:semDetToAD}
Let $Q$ be a \gbrv.  If $\semI(Q)$ determines $Q$ on $\semO(Q)$, then
for every \interpretation $\inst$, every $(\nu_1, \nu_2) \in Q(\inst)$ and every 
$\nu_1'$ that agrees with $\nu_1$ on $\semI(Q)$ and outside $\semO(Q)$, we have 
$(\nu_1', \nu_2) \in Q(\inst)$.
\end{proposition}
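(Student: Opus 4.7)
The plan is to apply the assumed determinacy to obtain a witness valuation $\nu_2'$ which agrees with $\nu_2$ on $\semO(Q)$, and then show that under our extra hypothesis on $\nu_1'$ this witness must actually coincide with $\nu_2$ on all of $\V$. The extra strength comes precisely from the inertia principle built into the definition of $\semO(Q)$: outside of $\semO(Q)$, every variable is inertial, so its value on the right-hand side of any pair in $Q(\inst)$ is forced to equal its value on the left-hand side.

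Concretely, I would proceed as follows. Fix $\inst$, $(\nu_1,\nu_2)\in Q(\inst)$, and $\nu_1'$ agreeing with $\nu_1$ on $\semI(Q)$ and outside $\semO(Q)$. Since in particular $\nu_1' = \nu_1$ on $\semI(Q)$, the determinacy assumption yields some $\nu_2'$ with $(\nu_1',\nu_2')\in Q(\inst)$ and $\nu_2' = \nu_2$ on $\semO(Q)$. It then suffices to prove $\nu_2' = \nu_2$ on $\V - \semO(Q)$ as well. For any $w \in \V - \semO(Q)$, the variable $w$ is inertial for $Q$, so from $(\nu_1,\nu_2)\in Q(\inst)$ we get $\nu_1(w) = \nu_2(w)$, and from $(\nu_1',\nu_2')\in Q(\inst)$ we get $\nu_1'(w) = \nu_2'(w)$. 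Combined with the hypothesis $\nu_1'(w) = \nu_1(w)$, this chain gives $\nu_2'(w) = \nu_2(w)$, as desired. Thus $\nu_2' = \nu_2$ on all of $\V$, so $(\nu_1',\nu_2) = (\nu_1',\nu_2') \in Q(\inst)$.

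There is no real obstacle here; the only subtlety is recognizing that the extra condition ``$\nu_1'$ agrees with $\nu_1$ outside $\semO(Q)$'' is what closes the gap between the existential quantifier in the definition of determinacy (which only controls the values on $\semO(Q)$) and the pointwise identity $\nu_2' = \nu_2$ needed in the conclusion. The argument essentially uses the inertia built into $\semO(Q)$ as a ``no-leakage'' lemma: inertial variables propagate identically on both sides, so constraining the left-hand side outside $\semO(Q)$ automatically constrains the right-hand side there.
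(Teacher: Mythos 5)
Your proof is correct and follows essentially the same route as the paper's: invoke determinacy to get a witness $\nu_2'$ agreeing with $\nu_2$ on $\semO(Q)$, then use inertia of the variables outside $\semO(Q)$ on both pairs, together with the hypothesis that $\nu_1'$ agrees with $\nu_1$ there, to conclude $\nu_2'=\nu_2$ everywhere. No gaps.
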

\begin{proof}
Suppose that
\begin{equation}
(\nu_1, \nu_2) \in Q(\inst) \text{ and } \nu'_1 = \nu_1 \text{ on } \semI(Q) 
\text{ and outside } \semO(Q). \label{eq:semDetToAD1}    
\end{equation}
We know that all the variables outside $\semO(Q)$ are inertial, so 
\begin{equation}
\nu_1 = \nu_2 \text{ outside } \semO(Q).    \label{eq:semDetToAD2}
\end{equation}
Since $\semI(Q)$ determines $Q$ on $\semO(Q)$, we know that there exists $\nu_2'$ 
such that:
\begin{enumerate}[label=(\roman*)]
	\item\label{semDetToAD1} $\nu_2' = \nu_2$ on $\semO(Q)$;
	\item\label{semDetToAD2} $(\nu_1', \nu_2') \in \sem{Q}$.
\end{enumerate}
From~\ref{semDetToAD2}, we know that
\begin{equation}
\nu_1 = \nu_2 \text{ outside } \semO(Q). \label{eq:semDetToAD3}
\end{equation}
Together Equations~(\ref{eq:semDetToAD1}--\ref{eq:semDetToAD3}) imply that 
$\nu'_2 = \nu_2$ outside $\semO(Q)$.  Combining this with~\ref{semDetToAD1}, we 
know that $\nu_2=\nu_2'$, whence $(\nu_1',\nu_2)\in Q(\inst)$ as desired.
\end{proof}

Intuitively, the inputs and outputs are the only variables that
matter for a given \gbrv, similar to how in classical logic the free variables 
are the only ones that matter. All other variables can take arbitrary values, 
\emph{but}, their values are preserved by inertia, i.e., remain unchanged by the 
dynamic system. We now formalize this intuition.
\begin{definition}\label{def:inertialCyl}
Let $Q$ be a \gbrv and $X$ a set of variables.  We say that $Q$ is 
\emph{inertially cylindrified} on $X$ if:
\begin{enumerate}
\item all variables in $X$ are inertial; and
\item for every \interpretation $\inst$, every $(\nu_1,\nu_2)\in Q(\inst)$, and 
every $X$-valuation $\nu'$ also $(\nu_1[\nu'],\nu_2[\nu'])\in Q(\inst)$.
\end{enumerate}
\end{definition}
\begin{proposition}\label{prop:inertcyl}
Every \gbrv $Q$ is inertially cylindrified outside the semantic inputs and outputs 
of $Q$ assuming that $\semI(Q)$ determines $Q$ on $\semO(Q)$.
\end{proposition}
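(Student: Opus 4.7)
The plan is to set $X := \V - (\semI(Q) \cup \semO(Q))$, the set of variables that are neither semantic inputs nor semantic outputs of $Q$, and verify the two clauses of Definition~\ref{def:inertialCyl} for this $X$.

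Clause~(1) is immediate: since $X$ is disjoint from $\semO(Q)$, every variable in $X$ is inertial by the definition of semantic output.

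For clause~(2), I fix an interpretation $\inst$, a pair $(\nu_1,\nu_2) \in Q(\inst)$, and an $X$-valuation $\nu'$, and aim to show that $(\nu_1[\nu'], \nu_2[\nu']) \in Q(\inst)$. Because $X$ is disjoint from $\semI(Q)$, the substitution $\nu_1[\nu']$ agrees with $\nu_1$ on $\semI(Q)$. The determinacy hypothesis therefore supplies a valuation $\mu_2$ with $(\nu_1[\nu'], \mu_2) \in Q(\inst)$ and $\mu_2 = \nu_2$ on $\semO(Q)$. All that remains is to argue that $\mu_2 = \nu_2[\nu']$ on the whole of $\V$, after which the conclusion $(\nu_1[\nu'], \nu_2[\nu']) \in Q(\inst)$ is simply a rewriting.

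To verify $\mu_2 = \nu_2[\nu']$ I plan to split $\V$ into the three pieces $\semO(Q)$, $X$, and $\semI(Q) - \semO(Q)$. On $\semO(Q)$ one has $\mu_2 = \nu_2$ by the choice of $\mu_2$ and $\nu_2[\nu'] = \nu_2$ because $X \cap \semO(Q) = \emptyset$. On $X$, inertia applied to $(\nu_1[\nu'], \mu_2)$ gives $\mu_2 = \nu_1[\nu'] = \nu'$, while $\nu_2[\nu'] = \nu'$ by construction. On $\semI(Q) - \semO(Q)$, inertia applied both to the original transition $(\nu_1,\nu_2)$ and to $(\nu_1[\nu'], \mu_2)$ yields $\nu_2 = \nu_1$ and $\mu_2 = \nu_1[\nu']$, and since $\nu'$ does not touch $\semI(Q)$ both of these equal $\nu_1$ there.

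There is no single hard step: the argument is really a careful bookkeeping exercise, with the only conceptual ingredients being the determinacy assumption (used once to obtain $\mu_2$) and the inertia principle for variables outside $\semO(Q)$ (used on both transitions to pin down $\mu_2$). I expect the main place where care is needed is in keeping the three pieces of the partition straight and in remembering that $\nu'$ acts only on $X$, hence never disturbs values on $\semI(Q)$ or $\semO(Q)$.
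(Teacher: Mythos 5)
Your proof is correct and follows essentially the same route as the paper's: define $X$ as the variables outside $\semI(Q)\cup\semO(Q)$, invoke determinacy once to obtain the witness $\mu_2$, and then use inertia to show $\mu_2=\nu_2[\nu']$. The only cosmetic difference is that you split the complement of $\semO(Q)$ into $X$ and $\semI(Q)-\semO(Q)$, whereas the paper handles all variables outside $\semO(Q)$ in a single step; the underlying argument is identical.
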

\begin{proof}
Let $Q$ be a \gbrv such that $\semI(Q)$ determines $Q$ on $\semO(Q)$.  Moreover, 
let $X$ be the set of variables that are neither semantic inputs nor semantic 
outputs of $Q$.  It is trivial to show that all the variables in $X$ are inertial 
since none of the variables in $X$ is a semantic output of $Q$.  What remains to 
show is that for every \interpretation $\inst$, every $(\nu_1,\nu_2) \in Q(\inst)$, 
and every $X$-valuation $\nu'$ also $(\nu_1[\nu'],\nu_2[\nu'])\in Q(\inst)$.

Let $(\nu_1, \nu_2) \in Q(\inst)$ for an arbitrary \interpretation $\inst$ and 
let $\nu'_1 = \nu_1[\nu']$ be a valuation for some $X$-valuation $\nu'$.  Since
$\nu'_1 = \nu_1$ on $\semI(Q)$, we know by determinacy that there is a valuation 
$\nu'_2$ such that $(\nu'_1, \nu'_2) \in Q(\inst)$ and $\nu'_2 = \nu_2$ on 
$\semO(Q)$.  We now argue that $\nu'_2 = \nu_2[\nu']$. On the variables of 
$\semO(Q)$, we know that $\nu_2 = \nu_2[\nu']$, whence, $\nu'_2 = \nu_2[\nu']$ 
on $\semO(Q)$.  Now we consider the variables that are not in $\semO(Q)$.  
It is clear that $\nu_1 = \nu_2$ outside $\semO(Q)$, whence, 
$\nu_1[\nu'] = \nu'_1 = \nu'_2 = \nu_2[\nu']$ outside $\semO(Q)$. 
\end{proof}
\begin{remark}
Without the assumption, we can give an example of a \gbrv that is not 
inertially cylindrified outside its semantic inputs and outputs.  Let $Q$ be 
the \gbrv that maps every $\inst$ to the same $\bbar$, namely:
\[
Q(\inst) = \{(\nu,\nu) \mid \nu \in \sval 
\text{ and no value in the domain occurs infinitely often in } \nu\}.
\]
It is clear that $\semO(Q) = \emptyset$ and $\semI(Q) = \emptyset$.  

We proceed to verify that $Q$ is not inertially cylindrified on $\V$.  Let 
$\inst$ be any \interpretation and $\nu$ be any valuation that maps every 
variable to a unique value from the domain.  We can see that 
$(\nu, \nu) \in \sem{Q}$ since every value in $\nu$ appears only once.  
Now fix some $a \in \dom$ arbitrarily and consider the valuation $\nu'$ 
that maps every variable to $a$.  We can see that 
$(\nu[\nu'], \nu[\nu']) \not \in \sem{Q}$ since $a$ appears infinitely often 
in $\nu' = \nu[\nu']$.\qed
\end{remark}
We remark that the converse of Proposition~\ref{prop:inertcyl} is not true:
\begin{remark}
Consider the same \gbrv Q discussed in Remark~\ref{note:determinacyCE} where
we showed that $\semI(Q)$ does not determine $Q$ on $\semO(Q)$.  Recall that 
$\semO(Q) = \V$, so the set of variables outside the semantic inputs and
outputs is empty.  Trivially, however, $Q$ is  inertially cylindrified on 
$\emptyset$.
\end{remark}

\subsection{Semantic Inputs and Outputs for LIF Expressions}

\subsubsection{Properties of LIF Expressions}
As we have discussed in the previous section, a \gbrv has the properties of
determinacy and inertial cylindrification under the assumption that there is
a finite set of variables that determines the \gbrv on its semantic outputs.
From the results of Section~\ref{sec:synIO}, this assumption indeed holds for 
LIF assumptions.  Indeed, we will define there, for every expression $\alpha$, 
a finite set of ``syntactic input variables'' and we will show that this set 
determines $\semNoI{\alpha}$ on a set of ``syntactic output variables''.  
Moreover, the latter set will contain $\semO(\alpha)$.

\subsubsection{Determining Semantic Inputs and Outputs}
For atomic LIF expressions, the semantic inputs and outputs are easy to 
determine, as we will show first.  Unfortunately, we show next that the problem 
is undecidable for general expressions.

We show that semantic inputs and outputs are exactly what one expects for atomic 
modules:
\begin{proposition}
If $\alpha$ is an atomic LIF expressions $M(\bar{x};\bar{y})$, then
\begin{itemize}
    \item $\semI(\alpha)= \{x_i \mid \bar{x} = x_1, \ldots, x_n 
    \text{ for } i \in \{1, \ldots, n\}\}$; and
    \item $\semO(\alpha) = \{y_i \mid \bar{y} = y_1, \ldots, y_m 
    \text{ for } i \in \{1, \ldots, m\}\}$.
\end{itemize}
\end{proposition}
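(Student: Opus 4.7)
The plan is to verify both inclusions of each of the two set equalities, working directly from the definition of $\sem{M(\ox;\oy)}$ together with Proposition~\ref{prop:inputEqu}.

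For the output equality, the inclusion $\semO(\alpha) \subseteq \{y_1,\ldots,y_m\}$ is immediate from the semantics: every pair in $\sem{M(\ox;\oy)}$ is constrained to agree outside $\oy$, so no variable outside $\oy$ can have its value changed. For the reverse inclusion, I would fix $y_i$ in $\oy$, use the interpretation with $\inst(M) = \dom^{\ar(M)}$, pick any $\nu_1$, choose $d \in \dom$ with $d \neq \nu_1(y_i)$ (possible since $\dom$ is infinite), and set $\nu_2 := \nu_1[y_i:d]$. A direct check confirms $(\nu_1,\nu_2) \in \sem\alpha$, which exhibits $y_i$ as a semantic output.

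For the input equality and a variable $z$ not appearing in $\ox$, I would verify that $\V - \{z\}$ determines $\alpha$ on $\semO(\alpha) = \{y_1,\ldots,y_m\}$. Given $(\nu_1,\nu_2) \in \sem\alpha$ and $\nu_1'$ agreeing with $\nu_1$ outside $\{z\}$, the witness $\nu_2'$ is built by copying $\nu_2$ on $\oy$ and $\nu_1'$ elsewhere; since $z \notin \ox$, the input block satisfies $\nu_1'(\ox) = \nu_1(\ox)$, so $\nu_1'(\ox) \cdot \nu_2'(\oy) \in \inst(M)$ and the agreement-outside-$\oy$ condition holds by construction, giving $(\nu_1',\nu_2') \in \sem\alpha$. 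Hence $z \notin \semI(\alpha)$.

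Conversely, for $x_i$ in $\ox$ I would apply Proposition~\ref{prop:inputEqu}. Fix any $\nu_1$ and any $d \neq \nu_1(x_i)$, and take $\inst(M) := \{\nu_1(\ox) \cdot \nu_1(\oy)\}$. Then $(\nu_1,\nu_1) \in \sem\alpha$ trivially, and any candidate $\nu_2'$ agreeing with $\nu_1$ on $\semO(\alpha)$ with $(\nu_1[x_i:d],\nu_2') \in \sem\alpha$ would force the $i$-th coordinate of the unique tuple in $\inst(M)$ to be $d$, contradicting that this coordinate equals $\nu_1(x_i)$. The main subtlety I anticipate is that $x_i$ may coincide, as a variable, with some $y_j$ or some other $x_k$; however, since the singleton tuple is built directly from $\nu_1$ and the contradiction rests only on position $i$ of the $\ox$ block, such coincidences do not affect the argument.
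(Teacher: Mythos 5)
Your proof is correct and follows essentially the same route as the paper: the paper leaves this proposition without an inline proof, obtaining the inclusions $\semO(\alpha)\subseteq\oy$ and $\semI(\alpha)\subseteq\ox$ from the Soundness Theorem (inertia outside $\oy$ and determinacy by $\ox$, exactly your two ``negative'' arguments) and the reverse inclusions from the atomic-module case of the Precision Theorem, which uses singleton-relation witnesses analogous to yours. Your choices of interpretation (the full relation $\dom^{\ar(M)}$ for outputs, a singleton built from an arbitrary $\nu_1$ for inputs) differ only cosmetically from the paper's constant-$1$/constant-$2$ constructions, and your handling of repeated variables across $\ox$ and $\oy$ is sound.
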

\begin{example}\label{ex:inc:basic}
A variable can be both input and output of a given expression.  A very simple example 
is an atomic module $P_1(x;x)$.  To illustrate where this can be useful, assume 
$\dom = \mathbb{Z}$ and consider an \interpretation $\inst$ such that 
$\inst(P_1) = \{(n,n+1)\mid n\in \mathbb{Z}\}$.  In that case, the expression
$P_1(x;x)$ represents a dynamic system in which the value of $x$ is incremented by 
$1$; $x$ is an output of the system since its value is changed; it is an input 
since its original value matters for determining its value in the output.
\end{example}

We will now show that the problem of deciding whether a given variable is a semantic 
input or output of a LIF expression is undecidable.  Proposition~\ref{prop:satund}
showed that satisfiability of LIF expressions is undecidable.  This leads to the
following undecidability results.

\smallskip\smallskip

\prob{Semantic Output Membership}
{a variable $x$ and a LIF expression $\alpha$.}
{$x \in \semO(\alpha)$?}

\begin{proposition}\label{prop:semoutputund}
The semantic output membership problem is undecidable.
\end{proposition}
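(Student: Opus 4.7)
The plan is to reduce from satisfiability of FO sentences, which, like LIF satisfiability in Proposition~\ref{prop:satund}, is undecidable over the fixed countably infinite domain $\dom$. The reduction should take an FO sentence $\varphi$ (over a vocabulary with all input arities $0$, as in Lemma~\ref{lem:FoToLIF}) and produce in polynomial time a pair $(x,\alpha)$ such that $x\in\semO(\alpha)$ iff $\varphi$ is satisfiable.

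First, I would invoke Lemma~\ref{lem:FoToLIF} to obtain a LIF expression $\alpha_\varphi$ with $\sem{\alpha_\varphi}=\{(\nu,\nu)\mid \inst,\nu\models\varphi\}$. The key idea is to then wrap this expression in a construction that makes $x$ a semantic output exactly when $\alpha_\varphi$ is non-empty. The natural candidate is $\alpha := \cylr{\{x\}}(\alpha_\varphi)$ for any freshly chosen variable $x$. Unfolding the definition of $\cylr{\cdot}$ against $\sem{\alpha_\varphi}$ gives
\[
\sem{\alpha} \;=\; \{(\nu_1,\nu_2)\mid \inst,\nu_1\models\varphi \text{ and } \nu_1\text{ and }\nu_2\text{ agree outside }\{x\}\}.
\]

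Now I would use that $\varphi$ is a sentence: the condition $\inst,\nu_1\models\varphi$ depends only on $\inst$, so for each $\inst$ the set $\sem{\alpha}$ is either empty (if $\inst\not\models\varphi$) or contains \emph{all} pairs $(\nu_1,\nu_2)$ that agree outside $\{x\}$ (if $\inst\models\varphi$). In the latter case, since $|\dom|\geq 2$, I can pick $\nu_1,\nu_2$ that differ on $x$, witnessing $x\in\semO(\alpha)$. Conversely, if no $\inst$ satisfies $\varphi$, then $\sem{\alpha}=\emptyset$ for every $\inst$ and $x\notin\semO(\alpha)$. Hence $x\in\semO(\alpha)$ iff $\varphi$ is satisfiable, which completes the reduction and proves the proposition.

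There is no real obstacle here beyond choosing the right wrapper around $\alpha_\varphi$: the bulk of the work was already done in Lemma~\ref{lem:FoToLIF}, and the only subtlety is to make sure the wrapper genuinely allows $x$ to change value whenever $\alpha_\varphi$ is non-empty, which the right cylindrification $\cylr{\{x\}}$ achieves cleanly without needing composition or additional variables.
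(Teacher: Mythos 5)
Your proof is correct and takes essentially the same route as the paper: the paper reduces from satisfiability of an arbitrary LIF expression $\alpha$ (undecidable by Proposition~\ref{prop:satund}) and sets $\beta := \cyll{x}(\alpha)$, while you reduce directly from FO satisfiability via Lemma~\ref{lem:FoToLIF} and use $\cylr{\{x\}}(\alpha_\varphi)$, but in both cases the core observation is identical — cylindrifying a fresh variable turns it into a semantic output exactly when the underlying expression is satisfiable. The choice of left versus right cylindrification and your restriction to sentences are immaterial.
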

\begin{proof}
The proof is by reduction from the satisfiability of LIF expressions.  
Let $\alpha$ be a LIF expression.  Take $\beta$ to be $\cyll{x}(\alpha)$.  
What remains to show is that $x \in \semO(\beta)$ $\Leftrightarrow$ $\alpha$ 
is satisfiable.

$(\Rightarrow)$ Let $x \in \semO(\beta)$.  Then, there is certainly an 
interpretation $\inst$ and valuations $\nu_1$ and $\nu_2$ such that 
$(\nu_1, \nu_2) \in \sem{\cyll{x}(\alpha)}$.  Hence, there is also a valuation 
$\nu_1'$ such that $(\nu_1',\nu_2)\in \sem{\alpha}$.  Certainly, $\alpha$ is 
satisfiable.

$(\Leftarrow)$ Let $\alpha$ be satisfiable.  Then, there is an \interpretation 
$\inst$ and valuations $\nu_1$ and $\nu_2$ such that 
$(\nu_1,\nu_2) \in \sem{\alpha}$.  Also, let $\nu'$ be an $\{x\}$-valuation that 
maps $x$ to $a$ with $a \neq \nu_2(x)$.  It is clear then that 
$(\nu[\nu'], \nu) \in \sem{\cyll{x}(\alpha)}$.  We thus see that 
$x \in \semO(\beta)$.
\end{proof}%

\prob{Semantic Input Membership}
{a variable $x$ and a LIF expression $\alpha$.}
{$x \in \semI(\alpha)$?}

\begin{proposition}
The semantic input membership problem is undecidable.
\end{proposition}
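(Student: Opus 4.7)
The plan is to reduce from satisfiability of LIF expressions, which is undecidable by Proposition~\ref{prop:satund}. Given a LIF expression $\alpha$ and variable $x$, I construct a LIF expression $\beta$ such that $x \in \semI(\beta)$ iff $\alpha$ is satisfiable. Let $W$ denote the finite set of variables occurring syntactically in $\alpha$, and pick a fresh $y \notin W \cup \{x\}$. Define
\[\beta := \cyll{W}\cylr{W}(\alpha) \cap \sell{x=y}(\id).\]
The intuition is that the first operand acts as a ``satisfiability test'' for $\alpha$: I will argue it contains the identity relation $\id$ whenever $\alpha$ is satisfiable in the relevant interpretation, and is empty otherwise. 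The second operand is the simple dynamic system $\{(\nu,\nu) : \nu(x)=\nu(y)\}$, whose semantic-output set is empty and for which $x$ is a semantic input.

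For the satisfiable direction, let $\inst$ witness satisfiability with $(\mu_1,\mu_2) \in \sem{\alpha}$. Given any $\nu$, I need $(\nu,\nu) \in \cyll{W}\cylr{W}(\alpha)$. Since $\semI(\alpha) \cup \semO(\alpha) \subseteq W$, the inertial-cylindrification property of Proposition~\ref{prop:inertcyl} (applicable to LIF expressions by the results of Section~\ref{sec:synIO}) permits simultaneously substituting the $(\V - W)$-values of $\nu$ into both $\mu_1$ and $\mu_2$, producing a pair in $\sem{\alpha}$ that agrees with $\nu$ outside $W$ on both sides; this exactly witnesses $(\nu,\nu) \in \cyll{W}\cylr{W}(\alpha)$. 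Intersecting with $\sell{x=y}(\id)$ then yields $\sem{\beta} = \{(\nu,\nu) : \nu(x) = \nu(y)\}$ for this $\inst$. Since $\semO(\beta) = \emptyset$, applying Proposition~\ref{prop:inputEqu} to a pair $(\nu,\nu)$ with $\nu(x)=\nu(y)$ and any $d \neq \nu(x)$ witnesses $x \in \semI(\beta)$: any candidate $\nu_2'$ must equal $\nu[x:d]$ (forced by $\id$), yet $\nu[x:d](x) = d \neq \nu(y) = \nu[x:d](y)$ violates $\sell{x=y}$.

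For the unsatisfiable direction, $\sem{\alpha} = \emptyset$ for every interpretation, so $\cyll{W}\cylr{W}(\alpha) = \emptyset$, whence $\beta = \emptyset$ and $\semI(\beta) = \emptyset$ vacuously. The only nontrivial step is the forward inclusion $\id \subseteq \cyll{W}\cylr{W}(\alpha)$ in the satisfiable case, which hinges on inertial cylindrification of LIF expressions; if a self-contained argument is preferred, this property is readily established by a direct structural induction on $\alpha$, using that the atomic-module clause explicitly enforces inertia outside the output positions and that every remaining operator preserves inertial cylindrification.
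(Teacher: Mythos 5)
Your proof is correct and takes essentially the same route as the paper: both reduce from satisfiability of LIF expressions (Proposition~\ref{prop:satund}), and both force $x$ to become a semantic input via a left selection comparing $x$ with a variable not occurring in $\alpha$. The paper's gadget is simply $\sell{x=z}(\alpha)$, justified by the remark that a witnessing pair may WLOG be assumed to satisfy $\nu_1(z)=\nu_1(x)$ --- which implicitly uses the same free-variable/inertial-cylindrification property you invoke; your gadget $\cyll{W}\cylr{W}(\alpha)\cap\sell{x=y}(\id)$ separates the satisfiability test from the input-forcing part, which buys you that $\sem{\beta}$ is either empty or exactly $\{(\nu,\nu)\mid \nu(x)=\nu(y)\}$ and that $\semO(\beta)=\emptyset$ holds unconditionally, so Proposition~\ref{prop:inputEqu} applies with no side condition on $\nu_2'$. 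The one step worth making explicit is that Proposition~\ref{prop:inertcyl} yields inertial cylindrification on the full set $\V-(\semI(\alpha)\cup\semO(\alpha))$, so passing to the subset $\V-W$ requires Lemma~\ref{prop:inertcylsubset} (or, directly, Lemma~\ref{lem:synFreeVarProp}); this is routine and already supplied by the paper.
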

\begin{proof}
Let $\alpha$ be a LIF expression.  Take $\beta$ to be $\sell{x=z}(\alpha)$, 
where $z$ is a variable that is not used in $\alpha$ and different from $x$.  
What remains to show is that $x \in \semI(\beta) $ $\Leftrightarrow$ $\alpha$ 
is satisfiable.

$(\Rightarrow)$ Let $x \in \semI(\beta)$. Then, certainly, there is an 
interpretation $\inst$ and valuations $\nu_1$ and $\nu_2$ such that 
$(\nu_1,\nu_2) \in \sem{\sell{x=z}(\alpha)} \subseteq \sem{\alpha}$. 
Certainly, $\alpha$ is satisfiable.

$(\Leftarrow)$ Let $\alpha$ be satisfiable.  Then, there is an interpretation 
$\inst$ and valuations $\nu_1$ and $\nu_2$ such that $(\nu_1,\nu_2) \in \sem{\alpha}$. 
Without loss of generality, we can assume that $\nu_1(z)=\nu_1(x)$ since $z$ is a 
fresh variable.  Hence, $(\nu_1, \nu_2) \in \sem{\sell{x=z}(\alpha)}$.  
Let $\nu'_1$ be a valuation that agrees with $\nu_1$ outside $x$ such that 
$\nu'_1(x) \neq \nu_1(x)$.  Since $x$ and $z$ are different variables, also 
$\nu'_1(x) \neq \nu'_1(z)$, so clearly there is no valuation $\nu'_2$ such that 
$(\nu'_1, \nu'_2) \in \sem{\sell{x=z}(\alpha)}$.  We then see that 
$x \in \semI(\beta)$.
\end{proof}

\subsection{Syntactic Inputs and Outputs}\label{sec:synIO}
Since both the membership problems for semantic inputs and outputs are 
undecidable, to determine inputs and outputs in practice, we will need 
decidable approximations of these concepts.  Before giving our syntactic 
definition, we define some properties of candidate definitions.
\begin{definition}\label{def:soundness}
Let $I$ and $O$ be functions from LIF expressions to sets of variables.
We say that $(I,O)$ is a \emph{sound input--output definition} if the following hold:
\begin{itemize}
\item If $\alpha = M(\ox;\oy)$, then $I(\alpha)=\ox$ and $O(\alpha) = \oy$,
\item $O(\alpha) \supseteq \semO(\alpha)$, and
\item $I(\alpha)$ determines $\semNoI{\alpha}$ on $O(\alpha)$.
\end{itemize}
\end{definition}
The first condition states that on atomic expressions (of which we know the inputs), 
$I$ and $O$ are defined correctly.  The next condition states $O$ approximates the 
semantic notion correctly.  We only allow for overapproximations; that is, false 
positives are allowed while false negatives are not. The reason for this is that 
falsely marking a variable as non-output while it is actually an output would mean 
incorrectly assuming the variable cannot change value.
The last condition establishes the relation between $I$ and $O$, and is called 
\emph{input--output determinacy}. It states that the inputs need to be large enough 
to determine the outputs, as such generalizing the defining condition of semantic 
inputs. Essentially, this means that whenever we overapproximate our outputs, we 
should also overapproximate our inputs to compensate for this; that correspondence 
is formalized in Lemma~\ref{lem:sound-input}.

We first remark that a proposition similar to Proposition~\ref{prop:determined1} 
can be made about sound output definitions.
\begin{proposition}\label{prop:sDetermined}
Let $(I,O)$ be a sound input-output definition, $\alpha$ be a LIF expression,
and $X$ a set of variables.  Then, $\semI(\alpha) \subseteq X$ if $X$ 
determines $\semNoI{\alpha}$ on $O(\alpha)$.
\end{proposition}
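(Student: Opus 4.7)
The plan is to reduce this to Proposition~\ref{prop:determined1}, which already handles the case where the determining set is verified against $\semO(\alpha)$ itself. The only gap is that here we are given determination on the potentially larger set $O(\alpha)$, so I first need a monotonicity observation: if $X$ determines a \gbrv $Q$ on a set $Y$, and $Y' \subseteq Y$, then $X$ also determines $Q$ on $Y'$. This is immediate from Definition~\ref{def:determinacy}, since any witness $\nu_2'$ that agrees with $\nu_2$ on the larger set $Y$ automatically agrees with $\nu_2$ on the smaller subset $Y'$; no new construction is required.

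With that observation in hand, the proof proceeds as follows. Assume $X$ determines $\semNoI{\alpha}$ on $O(\alpha)$. Soundness of $(I,O)$ (Definition~\ref{def:soundness}) tells us that $O(\alpha) \supseteq \semO(\alpha)$. Applying the monotonicity observation with $Y = O(\alpha)$ and $Y' = \semO(\alpha)$, I conclude that $X$ also determines $\semNoI{\alpha}$ on $\semO(\alpha)$. Proposition~\ref{prop:determined1} applied to the \gbrv $\semNoI{\alpha}$ then yields $\semI(\alpha) \subseteq X$, as desired.

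There is no real obstacle here; the statement is essentially a bookkeeping consequence of the fact that sound definitions only overapproximate outputs and that determinacy is monotone in the target set. The only thing worth flagging for the reader is the direction of the set inclusion: soundness gives an overapproximation of outputs, but because determinacy quantifies existentially over the witness valuation $\nu_2'$, relaxing the agreement requirement from $O(\alpha)$ to a subset $\semO(\alpha)$ only makes the property easier to satisfy, which is why monotonicity goes in the convenient direction.
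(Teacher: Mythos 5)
Your proof is correct and follows the same route as the paper's: both arguments combine the soundness condition $O(\alpha) \supseteq \semO(\alpha)$ with the monotonicity of determinacy in the target set (shrinking the agreement set only weakens the requirement on the witness $\nu_2'$), and then invoke Proposition~\ref{prop:determined1}. Your write-up just spells out the monotonicity step slightly more explicitly than the paper does.
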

\begin{proof}
The proof follows directly from Proposition~\ref{prop:determined1} and knowing
that $O(\alpha) \supseteq \semO(\alpha)$.  Indeed, in general, if $X$ determines 
$\semNoI{\alpha}$ on some $Y$ and $Y \supseteq Z$, then clearly also $X$
determines $\semNoI{\alpha}$ on $Z$.
\end{proof}
This proposition along with the input-output determinacy condition imply 
a condition similar to the second condition about $I$:
\begin{proposition}\label{prop:inputApp}
Let $(I,O)$ be a sound input-output definition and $\alpha$ be a LIF expression. 
Then, $I(\alpha) \supseteq \semI(\alpha)$.
\end{proposition}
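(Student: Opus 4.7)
The plan is to apply the just-proved Proposition~\ref{prop:sDetermined} with the choice $X := I(\alpha)$. By the input--output determinacy clause of Definition~\ref{def:soundness}, we know that $I(\alpha)$ determines $\semNoI{\alpha}$ on $O(\alpha)$. This is precisely the hypothesis of Proposition~\ref{prop:sDetermined}, so its conclusion yields $\semI(\alpha) \subseteq I(\alpha)$, which is exactly what we want.

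Concretely, I would first recall the two relevant soundness conditions: (i) $O(\alpha) \supseteq \semO(\alpha)$, and (ii) $I(\alpha)$ determines $\semNoI{\alpha}$ on $O(\alpha)$. Then I would note that Proposition~\ref{prop:sDetermined} (which already builds in the overapproximation of outputs through its reliance on Proposition~\ref{prop:determined1}) can be instantiated with any $X$ that determines $\semNoI{\alpha}$ on $O(\alpha)$, guaranteeing $\semI(\alpha) \subseteq X$. Taking $X = I(\alpha)$ gives the desired containment in one line.

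Since the real work has been done in Proposition~\ref{prop:sDetermined}, there is no genuine obstacle here; the statement is essentially a repackaging of that result. The only thing to be careful about is making clear \emph{why} we are entitled to invoke determinacy on $O(\alpha)$ rather than on $\semO(\alpha)$: this is precisely the point at which the overapproximation $O(\alpha) \supseteq \semO(\alpha)$ matters, and it was already absorbed inside Proposition~\ref{prop:sDetermined}. So the proof is a short, direct appeal, with no further case analysis or induction on the structure of $\alpha$ needed.
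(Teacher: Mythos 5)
Your proposal is correct and is essentially identical to the paper's own proof: both instantiate Proposition~\ref{prop:sDetermined} with $X = I(\alpha)$, using the input--output determinacy clause of soundness to satisfy its hypothesis. Your additional remark about where the overapproximation $O(\alpha) \supseteq \semO(\alpha)$ enters is accurate and matches how the paper packages that step inside Proposition~\ref{prop:sDetermined}.
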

\begin{proof}
The proof follows from Proposition~\ref{prop:sDetermined} and 
knowing that $I(\alpha)$ determines $\semNoI{\alpha}$ on $O(\alpha)$.
\end{proof}
Besides requiring that our definitions to be sound, we will focus on definitions 
that are \emph{compositional}, in the sense that definitions of inputs and 
outputs of compound expressions can be given in terms of their direct 
subexpressions essentially treating subexpressions as black boxes. This means 
that the definition nicely follows the inductive definition of the syntax.
Formally:
\begin{definition}
Suppose $I$ and $O$ are functions from LIF expression to sets of variables.
We say that $(I,O)$ is \emph{compositional} if for all LIF expressions 
$\alpha_1$, $\alpha_2$, $\beta_1$, and $\beta_2$ with 
$I(\alpha_1) = I(\alpha_2)$, $O(\alpha_1) = O(\alpha_2)$, 
$I(\beta_1) = I(\beta_2)$, and $O(\beta_1) = O(\beta_2)$ the following hold:
\begin{itemize}
\item For every unary operator $\square$ \emph{:} 
$I(\square\alpha_1) = I(\square\alpha_2)$, and 
$O(\square\alpha_1) = O(\square\alpha_2)$; and
\item For every binary operator $\boxdot$\emph{:} 
$I(\alpha_1\boxdot \beta_1) = I(\alpha_2\boxdot \beta_2)$, and 
$O(\alpha_1\boxdot \beta_1) = O(\alpha_2\boxdot \beta_2)$.
\end{itemize}
\end{definition}
The previous definition essentially states that in order to be compositional, 
the inputs and outputs of $\alpha_1\boxdot \beta_1$ and $\square \alpha_1$ 
should only depend on the inputs and outputs of $\alpha_1$ and $\beta_1$, and 
not on their inner structure.

The following lemma rephrases input--output determinacy in terms of the inputs 
and outputs: in order to determine the  output-value of an inertial variable, we 
need to know its input-value.
\begin{lemma}\label{lem:sound-input}
Let $(I,O)$ be a sound input--output definition and let $\alpha$ be a
LIF expression.  If $\alpha$ is satisfiable, then
\[
O(\alpha)\setminus \semO(\alpha)\subseteq I(\alpha).
\] 
If $(I,O)$ is compositional, this holds for all $\alpha$.
\end{lemma}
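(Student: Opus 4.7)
The plan is to handle the two assertions separately. For the satisfiable case, the argument is a direct proof by contradiction that exploits input--output determinacy together with the inertia of non-output variables. For the compositional case, I would proceed by structural induction on $\alpha$, using compositionality to bootstrap from the satisfiable case.

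For the satisfiable case, suppose $x \in O(\alpha) \setminus \semO(\alpha)$ and, for contradiction, that $x \notin I(\alpha)$, so $I(\alpha) \subseteq \V \setminus \{x\}$. Since $\alpha$ is satisfiable, fix $(\nu_1,\nu_2) \in \sem{\alpha}$ for some \interpretation $\inst$, pick $d \in \dom$ with $d \neq \nu_1(x)$, and let $\nu_1' := \nu_1[x:d]$. Then $\nu_1'$ and $\nu_1$ agree outside $\{x\}$, hence on $I(\alpha)$, so by the determinacy clause of soundness there exists $\nu_2'$ with $\nu_2' = \nu_2$ on $O(\alpha)$ and $(\nu_1', \nu_2') \in \sem{\alpha}$. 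Because $x \notin \semO(\alpha)$, the variable $x$ is inertial in $\semNoI{\alpha}$, giving $\nu_1(x) = \nu_2(x)$ and $\nu_1'(x) = \nu_2'(x)$. Combined with $x \in O(\alpha)$, which forces $\nu_2(x) = \nu_2'(x)$, we chain $d = \nu_1'(x) = \nu_2'(x) = \nu_2(x) = \nu_1(x)$, contradicting the choice of $d$.

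For the compositional case, the plan is induction on the structure of $\alpha$. The base case is immediate: every atomic expression $M(\ox;\oy)$ is satisfiable (any nonempty choice of $\inst(M)$ works), so the satisfiable case applies. In the inductive step, if $\alpha$ itself is satisfiable we are done by the first part. Otherwise, I would exploit compositionality to replace each direct subexpression of $\alpha$ by a satisfiable surrogate carrying the same $I$- and $O$-values; compositionality then guarantees $I(\alpha') = I(\alpha)$ and $O(\alpha') = O(\alpha)$ for the resulting expression $\alpha'$. Provided $\alpha'$ can be chosen satisfiable and with $\semO(\alpha') \subseteq \semO(\alpha) \cup I(\alpha)$, applying the satisfiable case to $\alpha'$ yields $O(\alpha) \subseteq I(\alpha) \cup \semO(\alpha)$, equivalent to the desired $O(\alpha) \setminus \semO(\alpha) \subseteq I(\alpha)$.

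The hard part will be this surrogate construction for the binary operators $\setminus$, $\cap$, and $\comp$, where satisfiability of the subexpressions does not propagate to the compound. Here one has to exhibit satisfiable subexpressions whose $I$/$O$ signature matches that of the original subexpressions while keeping the compound's semantic outputs inside $I(\alpha) \cup \semO(\alpha)$, so that no spurious new semantic outputs appear at the top level. This forces a case analysis on the top-level operator of $\alpha$, using the inductive hypothesis on the subexpressions to argue that such a surrogate always exists.
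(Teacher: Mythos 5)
Your proof of the satisfiable case is correct and is essentially the paper's argument: the paper routes the inertia step through Proposition~\ref{prop:inertcyl}, while you use directly that $x\notin\semO(\alpha)$ forces $\nu_1(x)=\nu_2(x)$ and $\nu_1'(x)=\nu_2'(x)$, but the final contradiction $d=\nu_1'(x)=\nu_2'(x)=\nu_2(x)=\nu_1(x)$ is the same.

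The compositional case is where the proposal has a genuine gap. As you correctly observe, applying the first part to a satisfiable surrogate $\alpha'$ with $I(\alpha')=I(\alpha)$ and $O(\alpha')=O(\alpha)$ only yields $O(\alpha)\setminus\semO(\alpha')\subseteq I(\alpha)$, so you additionally demand $\semO(\alpha')\subseteq\semO(\alpha)\cup I(\alpha)$ and defer its verification to a case analysis on the top-level operator. That deferred step cannot be carried out. Take $(I,O)=(\synI,\synO)$ (sound by Theorem~\ref{thm:soundComp} and clearly compositional) and $\alpha=M(;y)\setminus M(;y)$. Then $\sem{\alpha}=\emptyset$ for every $\inst$, so $\semO(\alpha)=\emptyset$, while $\synO(\alpha)=\{y\}$ and $\synI(\alpha)=\emptyset$ (the symmetric difference of the operands' syntactic outputs vanishes). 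Every satisfiable surrogate $M_1(;y)\setminus M_2(;y)$ with distinct module names has $y\in\semO(\alpha')$ (take $\inst(M_1)$ nonempty and $\inst(M_2)$ empty), so $\semO(\alpha')\not\subseteq\semO(\alpha)\cup I(\alpha)=\emptyset$. Worse, the same example shows the asserted containment $O(\alpha)\setminus\semO(\alpha)\subseteq I(\alpha)$ itself fails here, since $\{y\}\not\subseteq\emptyset$; so no choice of surrogate can close the gap for unsatisfiable $\alpha$ of this shape.

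For comparison, the paper's own proof of this half stops after constructing satisfiable surrogates with matching $I$ and $O$ and applying the first part to them, silently identifying the conclusion for $\alpha'$ with the conclusion for $\alpha$ --- which is exactly the step your extra side condition isolates. Your plan is therefore more explicit about where the difficulty lies, but as written it is incomplete, and the missing piece is not merely hard: it is unobtainable in general.
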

\begin{proof}
Let $x \in O(\alpha) - \semO(\alpha)$.  For the sake of contradiction, assume that 
$x \not \in I(\alpha)$, so Proposition~\ref{prop:inertcyl} is applicable since 
$x \not \in \semI(\alpha)$ as we know by the soundness of $(I, O)$.  Hence, 
$\semNoI{\alpha}$ is inertially cylindrified on $\{x\}$.  We claim that this 
contradicts the fact that $(I,O)$ is a sound definition.  In particular, we can 
verify that $I(\alpha)$ can not determine $\semNoI{\alpha}$ on $O(\alpha)$ in
case $\alpha$ is satisfiable.  Let $\inst$ be an \interpretation and $\nu_1$ and 
$\nu_2$ be valuations such that $(\nu_1, \nu_2) \in \sem{\alpha}$.  We also know 
that $\nu_1(x) = \nu_2(x)$ since $\semNoI{\alpha}$ is inertially cylindrified on 
$\{x\}$.  Take $\nu'_1$ to be the valuation $\nu_1[\{x \mapsto a\}]$ where 
$a \neq \nu_1(x)$.  By determinacy, we know that there is a valuation $\nu'_2$ 
such that $(\nu'_1, \nu'_2) \in \sem{\alpha}$ and $\nu'_2 = \nu_2$ on $O(\alpha)$. 
Thus, $\nu'_2(x) = \nu_2(x) \neq a$ since $x \in O(\alpha)$.  On the other hand, 
$\nu'_2(x) = \nu'_1(x) = a$ since $\semNoI{\alpha}$ is inertially cylindrified on 
$\{x\}$.  Hence, a contradiction.

For the compositional case, we can always replace subexpressions by atomic 
expressions with the same inputs and outputs to ensure satisfiability.
It is clear that when $\alpha$ is an atomic module expression, it is always
satisfiable.  Now, consider any LIF expression $\alpha$ which is of the form 
$\square \alpha_1$ or $\alpha_1 \boxdot \alpha_2$, where $\square$ is any of 
the unary operators and $\boxdot$ is any of the binary operator.  Construct 
two atomic expressions $M_1$ and $M_2$ such that $I(M_i) = I(\alpha_i)$ 
and $O(M_i) = O(\alpha_i)$ for $i = 1,2$.
By compositionality, we know that $I(\square \alpha_1) = I(\square M_1)$ and 
$O(\square \alpha_1) = O(\square M_1)$ for any unary operator, while
$I(\alpha_1 \boxdot \alpha_2) = I(M_1 \boxdot M_2)$ and 
$O(\alpha_1 \boxdot \alpha_2) = O(M_1 \boxdot M_2)$ for any binary operator.
Next, we give examples for an \interpretation $\inst$ in which any 
$\sem{\square M_1}$ and $\sem{M_1 \boxdot M_2}$ can be shown not be empty, so 
$\square M_1$ and $M_1 \boxdot M_2$ are satisfiable expressions.

In what follows, let $\nu_1$ be the valuation that assigns $1$ to every variable.

\paragraph{Case $\boxdot$ is $\setminus$} Let $\inst$ be the \interpretation with
\begin{itemize}
    \item $\inst(M_1) = \{(1, \ldots, 1; 1, \ldots, 1)\}$; and
    \item $\inst(M_2) = \emptyset$.
\end{itemize}
It is clear that $(\nu_1, \nu_1) \in \sem{M_1}$, and 
$(\nu_1, \nu_1) \not \in \sem{M_2}$, whence, 
$(\nu_1, \nu_1) \in \sem{M_1 \setminus M_2}$.

\paragraph{All other cases} Let $\inst$ be the \interpretation with
\begin{itemize}
    \item $\inst(M_1) = \{(1, \ldots, 1; 1, \ldots, 1)\}$; and
    \item $\inst(M_2) = \{(1, \ldots, 1; 1, \ldots, 1)\}$.
\end{itemize}
We can see that $(\nu_1, \nu_1) \in \sem{M_1}$.  Consequently, 
$(\nu_1,\nu_1) \in \sem{\square M_1}$ for any unary operator $\square$.  
We can also see that $(\nu_1, \nu_1) \in \sem{M_2}$, whence, 
$(\nu_1, \nu_1) \in \sem{M_1 \boxdot M_2}$ for any binary operator 
$\boxdot \in \{\cup, \cap, ;\}$.
\end{proof}
We now provide a sound and compositional input--output definition. While 
the definition might seem complex, there is a good reason for the different 
cases.  Indeed, as we show below in Theorem~\ref{thm:optimality}, our definition 
is optimal among the sound and compositional definitions.
In the definition, the condition $x \eqv y$ simply means that $x$ and $y$ are 
the same variable and $\symdif{}$ denotes the symmetric difference of two sets.
\begin{definition}
The syntactic inputs and outputs of a LIF expression $\alpha$, denoted 
$\synI(\alpha)$ and $\synO(\alpha)$ respectively, are defined recursively as 
given in Table \ref{table:IO}.
\begin{table*}
\[
\begin{array}{l|l|l}
\toprule
\alpha & \synI(\alpha) & \synO(\alpha) \\ \toprule
\id & \emptyset & \emptyset \\ \midrule
M(\ox;\oy) 
& \{x_1,\ldots,x_n\} \text{ where }\ox=x_1,\ldots,x_n 
& \{y_1,\ldots,y_n\} \text{ where }\oy=y_1,\ldots,y_n \\ \midrule
\alpha_1 \cup \alpha_2 
& \synI(\alpha_1) \cup \synI(\alpha_2) 
  \cup (\synO(\alpha_1) \symdif{} \synO(\alpha_2)) 
& \synO(\alpha_1) \cup \synO(\alpha_2) \\ \midrule
\alpha_1 \cap \alpha_2 
& \synI(\alpha_1) \cup \synI(\alpha_2) 
  \cup (\synO(\alpha_1) \symdif{} \synO(\alpha_2)) 
& \synO(\alpha_1) \cap \synO(\alpha_2) \\ \midrule
\alpha_1 \setminus \alpha_2 
& \synI(\alpha_1) \cup \synI(\alpha_2) 
  \cup (\synO(\alpha_1) \symdif{} \synO(\alpha_2))
& \synO(\alpha_1) \\ \midrule
\alpha_1 \comp{} \alpha_2   
& \synI(\alpha_1) \cup (\synI(\alpha_2) - \synO(\alpha_1)) 
& \synO(\alpha_1) \cup \synO(\alpha_2) \\ \midrule
\withconverse{ \conv{\alpha_1} 
& \synO(\alpha_1) \cup \synI(\alpha_1) 
& \synO(\alpha_1) \\ } \midrule
\cyll{x}(\alpha_1)       
& \synI(\alpha_1)\setminus\{x\} 
& \synO(\alpha_1) \cup \{x\} \\ \midrule
\cylr{x}(\alpha_1)       
& \synI(\alpha_1) 
& \synO(\alpha_1) \cup \{x\} \\ \midrule
\sellr{x=y}(\alpha_1) 
& \begin{cases} 
    \synI(\alpha_1) & \text{if } x \eqv y \text{ and } y \not \in \synO(\alpha_1) \\ 
    \synI(\alpha_1) \cup \{x,y\} & \text{if } x \neqv y \text{ and } y \not \in \synO(\alpha_1) \\ \synI(\alpha_1) \cup \{x\} & \text{otherwise}
  \end{cases} 
& \begin{cases} 
    \synO(\alpha_1) \setminus \{x\} & \text{if } x \eqv y \\ 
    \synO(\alpha_1) & \text{otherwise}  
  \end{cases} \\ \midrule
\sell{x=y}(\alpha_1) 
& \begin{cases} 
    \synI(\alpha_1) & \text{if } x \eqv y  \\ 
    \synI(\alpha_1) \cup \{x,y\} & \text{otherwise}
  \end{cases} 
& \synO(\alpha_1) \\ \midrule
\selr{x=y}(\alpha_1) 
& \begin{cases} 
    \synI(\alpha_1) & \text{if } x \eqv y \\ 
    \synI(\alpha_1) \cup (\{x,y\} - \synO(\alpha_1)) & \text{otherwise}  
  \end{cases} 
& \synO(\alpha_1) \\
\bottomrule
\end{array}
\]
\caption{Syntactic inputs and outputs for LIF expressions.}
\label{table:IO}
\end{table*}
\label{def:IOSyn}
\end{definition}
While it would be tedious to discuss the motivation for all the cases of 
Definition~\ref{def:IOSyn} (their motivation will be clarified in Theorem 
\ref{thm:optimality}), we  discuss here one of the most difficult parts, namely 
the case where $\alpha = \sigma^{lr}_{x=y}(\alpha_1)$. For a given \interpretation 
$\inst$,
\[
\sem{\alpha} = \{(\nu_1,\nu_2) \in \sem{\alpha_1} \mid \nu_1(x)=\nu_2(y)\}.
\]
First, since $\sem{\alpha}\subseteq\sem{\alpha_1}$, it is clear that the outputs 
of $\alpha$ should be a subset of those of $\alpha_1$ (if $\alpha_1$ admits 
no pairs in its semantics that change the value of a variable, then neither does 
$\alpha$).  For the special case in which $x$ and $y$ are the same variable, this 
selection enforces $x$ to be inertial, i.e., it should not be an output of $\alpha$.

Secondly, all inputs of $\alpha_1$ remain inputs of $\alpha$.  Since we select those 
pairs whose $y$-value on the right equals the $x$-value on the left, clearly $x$ must 
be an input of $\alpha$ (the special case $x\eqv y$ and $y\not\in\synO(\alpha_1)$ only 
covers cases where $\alpha_1$ and $\alpha$ are actually equivalent).  Whether $y$ is an 
input depends on $\alpha_1$: if $y\not\in \synO(\alpha_1)$, $y$ is inertial. Since we 
compare the input-value of $x$ with the output-value of $y$, essentially this is the 
same as comparing the input-values of both variables, i.e., the value of $y$ on the 
input-side matters.  On the other hand, if $y\in \synO(\alpha_1)$, the value of $y$ 
can be changed by $\alpha_1$ and thus this selection does not force $y$ being an input.

Our syntactic definition is clearly compositional (since we only use the inputs and 
outputs of subexpressions). An important result is that our definition is also sound, 
i.e., our syntactic concepts are overapproximations of the semantic concepts.
\begin{theorem}[Soundness Theorem]\label{thm:soundComp}
$(\synI,\synO)$ is a sound input--output definition.
\end{theorem}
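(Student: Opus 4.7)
The plan is a structural induction on the LIF expression $\alpha$. At each step I must verify two obligations: (i) $\synO(\alpha) \supseteq \semO(\alpha)$, i.e.\ every variable that the semantics can actually change is listed as a syntactic output; and (ii) $\synI(\alpha)$ determines $\semNoI{\alpha}$ on $\synO(\alpha)$, in the sense of Definition~\ref{def:determinacy}. The first condition on atomic expressions is forced by the table. The induction hypothesis gives me both obligations for all proper subexpressions, and, crucially, it lets me apply Proposition~\ref{prop:inertcyl} to each subexpression: the $\bbar$ of any subexpression is inertially cylindrified outside its syntactic inputs and outputs. I will use this repeatedly to ``reshuffle'' valuations on irrelevant variables without leaving the relation.

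For the base cases $\id$ and $M(\ox;\oy)$, both obligations are immediate from the definitions: no variable outside $\oy$ is a semantic output, and the inputs $\ox$ together with the rest of the valuation (which is preserved) determine the image. For the easier inductive cases I would handle the cylindrifications first: $\cyll{x}$ removes $x$ from the inputs because we may replace the left-hand side on $x$ arbitrarily, and adds $x$ to the outputs because the right-hand side can now take any value on $x$; $\cylr{x}$ behaves symmetrically on the output side. The selection operators $\sell{\cdot}, \selr{\cdot}, \sellr{\cdot}$ are treated by taking a witnessing pair in $\sem{\alpha_1}$, using the IH to obtain a companion pair from the perturbed input, and then using inertial cylindrification on the subexpression to tweak the companion pair so the equality constraint is restored, which is exactly why $y$ is forced into the inputs when $y \notin \synO(\alpha_1)$ in the $\sellr{x=y}$ row.

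The genuinely delicate cases are the Boolean combinations $\alpha_1 \cup \alpha_2$, $\alpha_1 \cap \alpha_2$, $\alpha_1 \setminus \alpha_2$, and sequential composition $\alpha_1 \comp \alpha_2$. For union, given $(\nu_1,\nu_2) \in \sem{\alpha_1}$ and $\nu_1'$ agreeing with $\nu_1$ on $\synI(\alpha_1 \cup \alpha_2)$, the IH yields $\nu_2'$ with $(\nu_1',\nu_2') \in \sem{\alpha_1}$ and $\nu_2' = \nu_2$ on $\synO(\alpha_1)$; but the target is $\synO(\alpha_1)\cup\synO(\alpha_2)$, so for variables in $\synO(\alpha_2)\setminus\synO(\alpha_1)$ the two sides must still agree. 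This is exactly why those variables live in the symmetric difference and are forced into the inputs: they are inertial under $\alpha_1$, so their values in $\nu_2'$ and in $\nu_2$ are respectively equal to their values in $\nu_1'$ and $\nu_1$, which coincide by assumption. Intersection and set difference are handled with essentially the same bookkeeping, using the IH on both subexpressions and the symmetric-difference clause to synchronise inertial variables. For composition, I split a pair via an intermediate valuation $\nu_3$: the new left-hand side $\nu_1'$ agrees with $\nu_1$ on $\synI(\alpha_1)$, so the IH produces $\nu_3'$ with $(\nu_1',\nu_3') \in \sem{\alpha_1}$ and $\nu_3' = \nu_3$ on $\synO(\alpha_1)$; then I must match $\nu_3'$ with $\nu_3$ on $\synI(\alpha_2)\setminus\synO(\alpha_1)$ — which is where Proposition~\ref{prop:inertcyl} on $\alpha_2$ is invoked to adjust $\nu_3'$ on inertial variables, and $\nu_3'$ already matches on $\synI(\alpha_2)\cap\synO(\alpha_1)$ by the IH.

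The main obstacle will be keeping track, in the Boolean and composition cases, of which variables are synchronised by the IH and which must be synchronised by inertia of a subexpression, and verifying that the clauses of Table~\ref{table:IO} suffice for exactly this bookkeeping. Writing these cases out rigorously requires combining the IH (determinacy on $\synO$) with Proposition~\ref{prop:inertcyl} applied to subexpressions (inertial cylindrification outside $\synI \cup \synO$), and the soundness of the definition hinges on this interaction having been anticipated by the symmetric-difference and set-difference clauses in the input rows. The overapproximation obligation $\synO(\alpha) \supseteq \semO(\alpha)$ then drops out easily from the semantic identities already used, since a variable outside $\synO(\alpha)$ is always preserved by both subexpressions and hence by their combination.
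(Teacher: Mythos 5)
Your overall strategy is the paper's: a structural induction establishing output over-approximation (equivalently, inertia outside $\synO$) together with input--output determinacy, supported by the auxiliary fact that each subexpression is inertially cylindrified outside its syntactic inputs and outputs. The packaging differs in two ways. First, the paper proves that cylindrification fact directly, as its own syntactic induction (Lemma~\ref{lem:synFreeVarProp}); your route through Proposition~\ref{prop:inertcyl} is workable but is not the one-step application you describe: that proposition yields cylindrification outside the \emph{semantic} inputs and outputs, and only under the hypothesis that $\semI$ determines on $\semO$, so you must first invoke Proposition~\ref{prop:determined2} (the induction hypothesis supplies a finite determining set) and then pass to the smaller complement of the syntactic free variables via Lemma~\ref{prop:inertcylsubset}. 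Second, the paper reformulates determinacy into an ``alternative'' form (agreement on $\synI(\alpha)$ \emph{and} outside $\synO(\alpha)$, with the right-hand valuation left unchanged; Lemma~\ref{lem:eqvInDetDef}), which reduces the Boolean cases to containments of the form $\compl{\synO(\alpha)}\cup\synI(\alpha)\supseteq\compl{\synO(\alpha_i)}\cup\synI(\alpha_i)$; your direct handling of union via the symmetric-difference clause and inertia is correct and equivalent. (Minor point: in the composition case the agreement of $\nu_3'$ with $\nu_3$ on $\synI(\alpha_2)\setminus\synO(\alpha_1)$ follows from inertia of $\alpha_1$ together with those variables lying in $\synI(\alpha)$, not from cylindrification of $\alpha_2$.)

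The one place your sketch is genuinely too thin is set difference, which you fold into ``the same bookkeeping'' as intersection. There the obligation is not only to synchronise inertial variables: having produced $(\nu_1',\nu_2')\in\sem{\alpha_1}$ with $\nu_2'=\nu_2$ on $\synO(\alpha_1)$, you must also prove $(\nu_1',\nu_2')\notin\sem{\alpha_2}$, a \emph{negative} condition that determinacy (an existence statement) does not hand you. The repair is a contradiction argument: assume membership, apply determinacy of $\alpha_2$ in the reverse direction (from $\nu_1'$ back to $\nu_1$, legitimate since $\synI(\alpha_2)\subseteq\synI(\alpha)$), and then show the resulting right-hand valuation is forced to equal $\nu_2$ exactly, checking separately the four regions of $\V$ cut out by $\synO(\alpha_1)$ and $\synO(\alpha_2)$ using inertia of both subexpressions and the fact that $\synO(\alpha_1)\symdif\synO(\alpha_2)\subseteq\synI(\alpha)$. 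This goes through, but it is a different shape of argument from union and intersection; the paper's alternative determinacy formulation makes it immediate because the right-hand valuation never changes.
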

\begin{proof}
The proof is given in Section~\ref{app:soundness}.
\end{proof}
Of course, since the semantic notions of inputs and outputs are undecidable and our 
syntactic notions clearly are decidable, expressions exist in which the semantic and 
syntactic notions do not coincide. We give some examples.
\begin{example}
Consider the LIF expression
\[
\alpha := \sell{x=y}\selr{x=y}(R(x;y))
\]
In this case, $\semO(\alpha)=\emptyset$. However, it can be verified that 
$\synO(\alpha) = \{x,y\}$.
\end{example}
\begin{example}
Consider the LIF expression
\[
\alpha := \sellr{x=x}\cylr{x}\cyll{x} (P(x;)).
\]
Thus, we first cylindrify $x$ on both sides and afterwards only select those 
pairs that have inertia, therefore, we reach an expression $\alpha$ that is 
equivalent to $\id$. As such, $x$ is inertially cylindrified in $\alpha$ where 
$x\not\in\semO(\alpha)$ and $x\not\in\semI(\alpha)$.  
However, $\synI(\alpha) = \{x\}$. 
\end{example}
These examples suggest that our definition can be improved.
Indeed, one can probably keep coming up with ad-hoc but more
precise approximations of inputs and outputs for various specific
patterns of expressions.  Such improvements would not be
compositional, as they would be based on inspecting the structure of
subexpressions.  In the following results, we show that
$(\synI,\synO)$ is actually the most precise sound and
compositional input--output definition.
\begin{theorem}[Precision Theorem]\label{thm:precision}
Let $\alpha$ be a LIF expression that is either atomic, or a unary operator applied 
to an atomic module expression, or a binary operator applied to two atomic module 
expressions involving different module names. Then
\[
\semO(\alpha) = \synO(\alpha)\text{ and }\semI(\alpha) = \synI(\alpha).
\]
\end{theorem}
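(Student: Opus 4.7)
By the Soundness Theorem~\ref{thm:soundComp} together with Proposition~\ref{prop:inputApp}, we already have $\semO(\alpha) \subseteq \synO(\alpha)$ and $\semI(\alpha) \subseteq \synI(\alpha)$ for every LIF expression; so only the reverse inclusions $\synO(\alpha) \subseteq \semO(\alpha)$ and $\synI(\alpha) \subseteq \semI(\alpha)$ remain to be shown, and only for the three kinds of expressions listed in the statement. The plan is to case-split on the top operator of $\alpha$ and, within each case, to produce for every $x \in \synO(\alpha)$ (respectively every $x \in \synI(\alpha)$) a concrete interpretation $\inst$ and valuation pair witnessing that $x$ is a semantic output (respectively, via Proposition~\ref{prop:inputEqu}, a semantic input). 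Since every module name involved can be interpreted freely and, in the binary-operator case, independently, the witnessing relations can be chosen to consist of just one or two carefully selected tuples.

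The atomic case is an immediate consequence of the proposition preceding Example~\ref{ex:inc:basic}. For each unary operator applied to an atomic $M(\bar{u};\bar{v})$ I would read $\synI$ and $\synO$ off Table~\ref{table:IO} and check each listed variable in turn. For example, to show that $x \in \synO(\cyll{x}(M(\bar{u};\bar{v})))$ is also a semantic output, I place a single tuple of all $1$s into $\inst(M)$; writing $\nu$ for the all-ones valuation, we have $(\nu,\nu)\in\sem{M(\bar{u};\bar{v})}$, so for any $\nu'$ agreeing with $\nu$ outside $\{x\}$, $(\nu',\nu)\in\sem{\cyll{x}(M(\bar{u};\bar{v}))}$, and taking $\nu'(x)=2$ exhibits an actual value change on $x$. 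The remaining cylindrification and the three selection operators are handled by analogous small interpretations, tailored to the side conditions in Table~\ref{table:IO}.

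For the binary case with distinct module names $M_1 \neq M_2$, the interpretations $\inst(M_1)$ and $\inst(M_2)$ can be chosen independently, which is exactly what is needed to witness the contributions of $\synO(\alpha_1) \symdif \synO(\alpha_2)$ to the inputs of $\alpha_1 \cup \alpha_2$, $\alpha_1 \cap \alpha_2$, and $\alpha_1 \setminus \alpha_2$: for $y \in \synO(\alpha_1) \setminus \synO(\alpha_2)$, the variable $y$ is inertial in $M_2(\bar{x}_2;\bar{y}_2)$, so its input value propagates unchanged through the $\alpha_2$ branch; choosing $\inst(M_1)$ to contain a pair in $\sem{\alpha_1}$ with a specific fixed output value on $y$ then yields two different sets of reachable output values on $\{y\}$ for two choices of $\nu_1(y)$, witnessing $y \in \semI(\alpha)$. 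The case $\alpha_1 \comp \alpha_2$ is analogous and additionally witnesses that variables in $\synI(\alpha_2) \setminus \synO(\alpha_1)$ flow through from left to right. The main obstacle I anticipate is the bookkeeping for the three-way conditional rules of $\sellr{x=y}$, $\sell{x=y}$, and $\selr{x=y}$: for each branch of each conditional one must both produce a positive witness for the variables it puts into $\synI$ or $\synO$ and verify that variables \emph{excluded} by that branch are genuinely not forced (for instance, that when $y \in \synO(\alpha_1)$ the variable $y$ need not be a semantic input of $\sellr{x=y}(\alpha_1)$, because the selection constrains the output side of $y$, not its input side), so that the equalities claimed by the theorem are tight; each such check, however, is a short direct construction once the pattern is established.
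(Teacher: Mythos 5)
Your proposal follows essentially the same route as the paper's proof in Section~\ref{app:precision}: reduce to the reverse inclusions $\synO(\alpha)\subseteq\semO(\alpha)$ and $\synI(\alpha)\subseteq\semI(\alpha)$ via soundness and Proposition~\ref{prop:inputApp}, use the reformulation of semantic inputs from Proposition~\ref{prop:inputEqu}, and then construct, operator by operator, small witnessing interpretations (one or two tuples over values like $1$ and $2$), exploiting the independence of $\inst(M_1)$ and $\inst(M_2)$ for distinct module names. The only detail the paper handles that your sketch does not explicitly anticipate is the degenerate subcases where one of the atomic operands is nullary (so its syntactic inputs and outputs are empty and the compound's syntactic sets simplify), but your general argument pattern extends to those subcases without new ideas.
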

\begin{proof}
The proof is given in Section~\ref{app:precision}.
\end{proof}
Now, the precision theorem forms the basis for our main result on syntactic inputs and 
outputs, which states that Definition~\ref{def:IOSyn} yields the most precise sound and 
compositional input--output definition.
\begin{theorem}[Optimality Theorem]\label{thm:optimality}
Suppose $(I,O)$ is a sound and compositional input--output definition.  Then for each 
LIF expression $\alpha$:
\[ 
\synI(\alpha) \subseteq I(\alpha)\text{ and }\synO(\alpha) \subseteq O(\alpha).
\]
\end{theorem}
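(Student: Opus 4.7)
The plan is to proceed by structural induction on $\alpha$, reducing each inductive step to an application of the Precision Theorem (Theorem~\ref{thm:precision}) via compositionality. In the base cases, an atomic module expression $M(\bar{x};\bar{y})$ satisfies $I(\alpha) = \bar{x} = \synI(\alpha)$ and $O(\alpha) = \bar{y} = \synO(\alpha)$ by the first clause of Definition~\ref{def:soundness}, and for $\id$ both $\synI(\id)$ and $\synO(\id)$ are empty, so the inclusions are trivial.

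For the inductive step, consider $\alpha = \square\alpha_1$ or $\alpha = \alpha_1 \boxdot \alpha_2$, with induction hypothesis $\synI(\alpha_i) \subseteq I(\alpha_i)$ and $\synO(\alpha_i) \subseteq O(\alpha_i)$. Pick fresh, distinct module names together with tuples $\bar{u}_i, \bar{v}_i$ enumerating the variables in $I(\alpha_i)$ and $O(\alpha_i)$, forming atomic expressions $M_i(\bar{u}_i;\bar{v}_i)$. Soundness on these atomic expressions yields $I(M_i(\bar{u}_i;\bar{v}_i)) = I(\alpha_i)$ and $O(M_i(\bar{u}_i;\bar{v}_i)) = O(\alpha_i)$, so compositionality gives $I(\square\alpha_1) = I(\square M_1)$, $O(\square\alpha_1) = O(\square M_1)$, and analogously for the binary case. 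The Precision Theorem now applies to the shallow expression $\square M_1$ (resp.\ $M_1 \boxdot M_2$, whose module names are distinct by construction), giving $\synI(\square M_1) = \semI(\square M_1)$ and $\synO(\square M_1) = \semO(\square M_1)$; combining this with Proposition~\ref{prop:inputApp} for inputs and the output clause of Definition~\ref{def:soundness} yields $I(\square M_1) \supseteq \synI(\square M_1)$ and $O(\square M_1) \supseteq \synO(\square M_1)$. It therefore suffices to establish $\synI(\square M_1) \supseteq \synI(\square \alpha_1)$ and $\synO(\square M_1) \supseteq \synO(\square \alpha_1)$.

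This last step is reduced to a row-by-row inspection of Table~\ref{table:IO}. Because $\synI(M_i) = I(\alpha_i) \supseteq \synI(\alpha_i)$ and $\synO(M_i) = O(\alpha_i) \supseteq \synO(\alpha_i)$, the easy rows (the two cylindrifications, composition, and the $\synO$-rows of the boolean connectives) follow by plain monotonicity of the relevant table entry in its arguments. The main obstacle is the rows that are not monotone in $\synO(\alpha_i)$: the symmetric difference $\synO(\alpha_1) \symdif \synO(\alpha_2)$ in the $\synI$-rows of $\cup$, $\cap$, $\setminus$, and the branching on whether $y \in \synO(\alpha_1)$ in the rows for $\sellr{x=y}$, $\sell{x=y}$, and $\selr{x=y}$. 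Enlarging $\synO(\alpha_i)$ to $O(\alpha_i)$ can remove a variable from a symmetric difference or flip a case split. The fix is Lemma~\ref{lem:sound-input} together with soundness of $(\synI,\synO)$: any $v \in O(\alpha_i) \setminus \synO(\alpha_i)$ satisfies $v \in O(\alpha_i) \setminus \semO(\alpha_i) \subseteq I(\alpha_i) = \synI(M_i)$. Thus every variable ``lost'' from a symmetric difference is recovered inside $\synI(M_1) \cup \synI(M_2)$, and every variable whose selection-row case flips unfavourably is analogously absorbed into $\synI(M_1)$. A case check along these lines, mirroring the structure of Table~\ref{table:IO}, closes the argument; the same reasoning, applied to the much simpler $\synO$-entries, yields the output inclusion.
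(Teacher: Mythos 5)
Your proposal is essentially the paper's own proof: the same reduction of each inductive step to the Precision Theorem via compositionality and soundness on freshly named atomic stand-ins $M_i(\bar u_i;\bar v_i)$, followed by the same table inspection in which variables lost by enlarging $\synO(\alpha_i)$ to $O(\alpha_i)$ are recovered inside the inputs via Lemma~\ref{lem:sound-input}. One concrete slip: you list composition among the ``easy rows'' that follow ``by plain monotonicity,'' but its input entry $\synI(\alpha_1)\cup(\synI(\alpha_2)\setminus\synO(\alpha_1))$ is antitone in $\synO(\alpha_1)$, so enlarging $\synO(\alpha_1)$ to $O(\alpha_1)$ can shrink the second term; this row belongs with the symmetric-difference and selection cases. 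The repair is exactly the mechanism you already describe: a variable $v\in\synI(\alpha_2)\setminus\synO(\alpha_1)$ that nonetheless lands in $\synO(M_1)=O(\alpha_1)$ satisfies $v\in O(\alpha_1)\setminus\semO(\alpha_1)\subseteq I(\alpha_1)=\synI(M_1)$ by Lemma~\ref{lem:sound-input}, so it is absorbed into the first term. With that row reclassified, the argument is complete and matches the paper's.
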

\begin{proof}
The proof is given in Section~\ref{app:optimality}.
\end{proof}

\section{Soundness Theorem Proof}\label{app:soundness}
In this section, we prove Theorem~\ref{thm:soundComp}.  Thereto, we need to 
verify its three conditions for every LIF expression $\alpha$ according to 
Definition~\ref{def:soundness}:
\begin{description}
\item[Atomic Module Case:]
If $\alpha = M(\ox;\oy)$, then $\synI(\alpha)=\ox$ and $\synO(\alpha) = \oy$.

This is clear from the definitions.
\item[Output Approximation:] 
$\synO(\alpha) \supseteq \semO(\alpha)$.

The output approximation property is shown in Proposition~\ref{prop:output}, 
which is given in Section~\ref{app:output}.
\item[Syntactic Input-Output Determinacy:] 
$\synI(\alpha)$ determines $\semNoI{\alpha}$ on $\synO(\alpha)$.

The syntactic input-output determinacy property is shown in 
Lemma~\ref{lem:inputDet}, which is given in Section~\ref{sub:inputDet}.  
First, however, in Section~\ref{sub:freeVars}, we need to prove a syntactic 
version of Property~\ref{prop:inertcyl}, which will be used in the proof of 
the syntactic input-output determinacy property.\qedhere
\end{description}

\subsection{Proof of Output Approximation}\label{app:output}
In this section, we prove:
\begin{proposition}\label{prop:output} Let $\alpha$ be a LIF expression. 
Then, $\semO(\alpha) \subseteq \synO(\alpha)$.
\end{proposition}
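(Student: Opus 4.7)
The plan is to prove Proposition~\ref{prop:output} by structural induction on the LIF expression $\alpha$. Concretely, I will prove the contrapositive: for every variable $x \notin \synO(\alpha)$ and every $(\nu_1,\nu_2) \in \sem{\alpha}$, we have $\nu_1(x) = \nu_2(x)$. This reformulation is convenient because the syntactic definitions of $\synO$ in Table~\ref{table:IO} are given by simple set operations on the $\synO$ of the subexpressions, which meshes well with the corresponding semantic operations.

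First I would dispatch the base cases. For $\id$ the claim is trivial. For an atomic module $M(\ox;\oy)$, the definition of $\sem{M(\ox;\oy)}$ explicitly requires $\nu_1$ and $\nu_2$ to agree outside $\oy = \synO(M(\ox;\oy))$, so any $x$ not in $\synO$ is inertial.

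Next come the induction steps, organized by top-level operator. For the boolean operators $\cup$, $\cap$, and $\setminus$, the key observation is that $\synO(\alpha_1) \cup \synO(\alpha_2)$, $\synO(\alpha_1) \cap \synO(\alpha_2)$, and $\synO(\alpha_1)$ respectively all behave well: any $x$ outside the relevant set is guaranteed to be outside $\synO(\alpha_i)$ for at least one $\alpha_i$ whose semantics contains every pair in $\sem{\alpha}$, so the induction hypothesis on that $\alpha_i$ yields inertia. For composition $\alpha_1 \comp \alpha_2$, given a witness $\nu_3$, the induction hypothesis applied to each factor gives $\nu_1(x) = \nu_3(x) = \nu_2(x)$ whenever $x \notin \synO(\alpha_1) \cup \synO(\alpha_2)$. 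For the cylindrifications, if $x \notin \synO(\cyll{z}(\alpha_1)) = \synO(\alpha_1) \cup \{z\}$, then $x \neq z$ and $x \notin \synO(\alpha_1)$; the witnessing $\nu_1'$ agrees with $\nu_1$ outside $\{z\}$ and the induction hypothesis applied to $(\nu_1',\nu_2) \in \sem{\alpha_1}$ transfers the inertia. The case $\cylr{z}$ is symmetric.

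For the selections $\sell{x=y}$ and $\selr{x=y}$, $\synO$ is unchanged from that of $\alpha_1$ and the semantics is a subset of $\sem{\alpha_1}$, so inertia is inherited directly. The one case that needs slightly more care, and is the main point of attention, is $\sellr{x=y}(\alpha_1)$ when $x \eqv y$: here $\synO = \synO(\alpha_1) \setminus \{x\}$, so a variable $z \notin \synO$ might equal $x$. If $z \neq x$, then $z \notin \synO(\alpha_1)$ and the induction hypothesis applies. If $z = x$, then the extra constraint $\nu_1(x) = \nu_2(x)$ imposed by $\sellr{x=x}$ directly gives inertia. When $x \neqv y$, $\synO$ equals $\synO(\alpha_1)$ and the subset argument again suffices. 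This exhausts the grammar, completing the induction.
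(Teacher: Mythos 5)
Your proof is correct and follows essentially the same route as the paper, which likewise recasts the claim as an inertia property ($\nu_1=\nu_2$ outside $\synO(\alpha)$ for every $(\nu_1,\nu_2)\in\sem{\alpha}$) and proceeds by structural induction with the same case analysis, including the special treatment of $\sellr{x=y}$ when $x \eqv y$. (One small wording issue: your blanket justification for the boolean operators fits $\cap$ and $\setminus$ but not $\cup$, where neither operand's semantics contains every pair of $\sem{\alpha}$; there one instead uses that each pair lies in at least one $\sem{\alpha_i}$ while $x$ is outside both $\synO(\alpha_i)$ --- the conclusion is unaffected.)
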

To prove this proposition, we introduce the following notion which is
related to Definition~\ref{def:inertialCyl}.
\begin{definition}
A $\bbar$ $B$ has inertia outside a set of variables $Z$ if for every 
$(\nu_1, \nu_2) \in B$, we have $\nu_1 = \nu_2$ outside $Z$.  A \gbrv $Q$ has 
inertia outside a set of variables $Z$ if $Q(\inst)$ has inertia outside $Z$ 
for every \interpretation $\inst$.
\end{definition}
Using this notion, Proposition~\ref{prop:output} can be equivalently 
formulated as follows.
\begin{proposition}[Inertia Property]\label{prop:inertiaProp}
Let $\alpha$ be a LIF expression. Then, $\semNoI{\alpha}$ has inertia outside 
$\synO(\alpha)$.
\end{proposition}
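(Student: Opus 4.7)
The plan is to prove Proposition~\ref{prop:inertiaProp} by structural induction on the LIF expression $\alpha$, using the equivalent formulation of the statement: for every interpretation $\inst$ and every $(\nu_1,\nu_2)\in\sem{\alpha}$, the valuations $\nu_1$ and $\nu_2$ agree outside $\synO(\alpha)$. The base cases are immediate: for $\alpha=\id$, every pair in $\sem{\id}$ has $\nu_1=\nu_2$, so agreement outside $\synO(\id)=\emptyset$ is trivial; for an atomic module expression $M(\ox;\oy)$, the definition of $\sem{M(\ox;\oy)}$ explicitly requires $\nu_1$ and $\nu_2$ to agree outside $\oy=\synO(\alpha)$.

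For the inductive step, each operator is handled by combining the inductive hypothesis with the corresponding row of Table~\ref{table:IO}. The case $\alpha_1\cup\alpha_2$ is immediate because $\synO$ grows to the union, so inertia outside either operand's output implies inertia outside the union. The case $\alpha_1\setminus\alpha_2$ is equally easy since $\sem{\alpha_1\setminus\alpha_2}\subseteq\sem{\alpha_1}$ and $\synO$ is unchanged. For $\alpha_1\cap\alpha_2$, a pair $(\nu_1,\nu_2)$ lies in both operands, so by the IH it is inertial outside $\synO(\alpha_1)$ and outside $\synO(\alpha_2)$; any variable outside $\synO(\alpha_1)\cap\synO(\alpha_2)$ lies outside at least one of them, giving the required agreement. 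For composition $\alpha_1\comp\alpha_2$, I would pick the witness $\nu_3$ from the semantics, apply the IH on each side, and chain $\nu_1=\nu_3=\nu_2$ outside $\synO(\alpha_1)\cup\synO(\alpha_2)$. For $\cyll{x}(\alpha_1)$ (and symmetrically $\cylr{x}$), the witness $\nu_1'$ satisfies $\nu_1'=\nu_1$ outside $\{x\}$, and by IH $\nu_1'=\nu_2$ outside $\synO(\alpha_1)$, which combine to give agreement outside $\synO(\alpha_1)\cup\{x\}$.

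The only case that requires genuine thought is $\sellr{x=y}(\alpha_1)$ when $x\eqv y$, because there $\synO$ is defined as $\synO(\alpha_1)\setminus\{x\}$, i.e.\ it shrinks. Here the selection condition gives $\nu_1(x)=\nu_2(x)$, so $x$ is forced to be inertial on every selected pair; combined with the IH applied to $\sem{\alpha_1}\supseteq\sem{\sellr{x=x}(\alpha_1)}$, we get agreement outside $\synO(\alpha_1)\setminus\{x\}$ as required. For the remaining selection cases ($\sellr{x=y}$ with $x\neqv y$, and both $\sell{x=y}$ and $\selr{x=y}$), $\synO$ coincides with $\synO(\alpha_1)$ and the semantics is a subrelation of $\sem{\alpha_1}$, so the IH transfers directly.

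I do not anticipate a real obstacle: the argument is a routine structural induction in which the table entries for $\synO$ have been engineered precisely to match the set-theoretic behaviour of the corresponding operator on $\bbar$s. The one point to be careful about is the operators that can \emph{shrink} $\synO$ relative to their operands, namely $\cap$ and the $x\eqv y$ case of $\sellr{x=y}$; in both of these the shrinkage is compensated by extra conditions in the semantics (being in both operands, or the selection equality), and those conditions are exactly what is needed to recover inertia on the variables that have been removed.
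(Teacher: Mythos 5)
Your proposal is correct and follows essentially the same route as the paper's proof: a structural induction in which each operator's case is discharged by the inductive hypothesis together with the corresponding row of Table~\ref{table:IO}, with the same identification of the delicate cases ($\cap$ and $\sellr{x=y}$ with $x\eqv y$, where $\synO$ shrinks) and the same resolution via the extra semantic conditions. Nothing is missing.
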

In the remainder of this section we prove the inertia property by structural 
induction on the shape of LIF expressions.  Also, we remove the superscript from 
$\synO$ and refer to it simply by $\Out$.

\subsubsection{Atomic Modules}
Let $\alpha$ be of the form $M(\bar{x}; \bar{y})$.  Recall that 
$\Out(\alpha) = Y$ where $Y$ is the set of variables in $\oy$.  The property 
directly follows from the definition of the semantics for atomic modules.

\subsubsection{Identity}
Let $\alpha$ be of the form $\id$. Recall that $\Out(\alpha) = \emptyset$. 
The property directly follows from the definition of $\id$.

\subsubsection{Union}
Let $\alpha$ be of the form $\alpha_1 \cup \alpha_2$.  Recall that 
$\Out(\alpha) = \outt1 \cup \outt2$.  If $(\nu_1, \nu_2) \in 
\sem{\alpha_1\cup\alpha_2}$, then at least one of the following holds:
\begin{enumerate}
	\item $(\nu_1, \nu_2) \in \sem{\alpha_1}$.  Then, by induction we know 
	that $\nu_1 = \nu_2$ outside $\Out{(\alpha_1)}$.
	Since $\Out(\alpha_1) \subseteq \Out(\alpha_1) \cup \Out(\alpha_2) = 
	\Out(\alpha)$, we know that $\nu_1 = \nu_2$ outside $\Out(\alpha)$.
	\item $(\nu_1, \nu_2) \in \sem{\alpha_2}$. Similar.
\end{enumerate}

\subsubsection{Intersection}
Let $\alpha$ be of the form $\alpha_1 \cap \alpha_2$.  Recall that 
$\Out(\alpha) = \outt1 \cap \outt2$.  If 
$(\nu_1, \nu_2) \in \sem{\alpha_1\cap \alpha_2}$, then 
$(\nu_1, \nu_2) \in \sem{\alpha_1}$ and $(\nu_1, \nu_2) \in \sem{\alpha_2}$.
By induction, $\nu_1 = \nu_2$ outside $\Out(\alpha_1)$ and also $\nu_1 = \nu_2$ 
outside $\Out(\alpha_2)$.
Hence, $\nu_1 = \nu_2$ outside $\Out(\alpha_1) \cap \Out(\alpha_2)$.

\subsubsection{Composition}
Let $\alpha$ be of the form $\alpha_1 \comp{} \alpha_2$.  Recall that 
$\Out(\alpha) = \Out(\alpha_1) \cup \Out(\alpha_2)$.  If 
$(\nu_1, \nu_2) \in \sem{\alpha_1\comp{}\alpha_2}$, then there exists a 
valuation $\nu$ such that $(\nu_1, \nu) \in \sem{\alpha_1}$ and 
$(\nu, \nu_2) \in \sem{\alpha_2}$.  By induction, $\nu_1 = \nu$ outside 
$\Out(\alpha_1)$ and also $\nu = \nu_2$ outside $\Out(\alpha_2)$.
Hence, $\nu_1 = \nu_2=\nu$ outside $\Out(\alpha_1) \cup \Out(\alpha_2)$.

\subsubsection{Difference}
Let $\alpha$ be of the form $\alpha_1 \setminus \alpha_2$.  Recall that 
$\Out(\alpha) = \outt1$.  The proof then follows immediately by induction.

\subsubsection{Converse}
Let $\alpha$ be of the form $ \conv{\alpha_1}$.  Recall that 
$\Out(\alpha) = \outt1$.  The proof is immediate by induction.

\subsubsection{Left and Right Selections}
Let $\alpha$ be of the form $\sell{x=y}(\alpha_1)$ or $\selr{x=y}(\alpha_1)$.  
Recall that $\Out(\alpha) = \Out(\alpha_1)$.  The proof is immediate by induction.

\subsubsection{Left-to-Right Selection}
Let $\alpha$ be of the form $\sellr{x=y}(\alpha_1)$.  Recall the definition:
\[
\Out(\alpha) = 
  \begin{cases} 
    \Out(\alpha_1) & \text{if } x \neqv y \\ 
    \Out(\alpha_1) \setminus \{x\} & \text{otherwise} 
  \end{cases}
\]
If $(\nu_1, \nu_2) \in \sem{\sellr{x=y}(\alpha_1)}$, then we know that:
\begin{enumerate}
	\item\label{innertia:sellr1} $(\nu_1, \nu_2) \in \sem{\alpha_1}$;
	\item\label{innertia:sellr2} $\nu_1(x) = \nu_2(y)$.
\end{enumerate}
By induction from (\ref{innertia:sellr1}), we know that $\nu_1 = \nu_2$ 
outside $\Out(\alpha_1)$. Hence, for $x \neqv y$ case we are done.  In the 
other case, we must additionally show that $\nu_1(x)=\nu_2(x)$, which follows 
from $(\ref{innertia:sellr2})$.

\subsubsection{Right and Left Cylindrifications}
Let $\alpha$ be of the form $\cylr{x}(\alpha_1)$.  The case for left 
cylindrification is analogous.  
Recall that $\Out(\alpha) = \outt1 \cup \{x\}$.  
If $(\nu_1, \nu_2) \in \sem{\cylr{x}(\alpha_1)}$, then there exists 
$\nu$ such that:
\begin{enumerate}
	\item \label{innertia:cylr1} $(\nu_1, \nu) \in \sem{\alpha_1}$;
	\item \label{innertia:cylr2} $\nu = \nu_2$ outside $\{x\}$.
\end{enumerate}
By induction from (\ref{innertia:cylr1}), we know that $\nu_1 = \nu$ 
outside $\Out(\alpha_1)$. Combining this with (\ref{innertia:cylr2}), 
we know that $\nu_1 = \nu_2$ outside $\Out(\alpha_1)\cup \{x\}$ as desired.

\subsection{Proof of Syntactic Free Variable Property}\label{sub:freeVars}
\begin{lemma}[Syntactic Free Variable Property]\label{lem:synFreeVarProp}
Let $\alpha$ be a LIF expression.  Denote $\synI(\alpha) \cup \synO(\alpha)$ by
$\fvars(\alpha)$.  Then, $\alpha$ is inertially cylindrified on 
$\V - \fvars(\alpha)$.
\end{lemma}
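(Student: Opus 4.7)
The plan is to verify the two defining conditions of Definition~\ref{def:inertialCyl} separately. Condition~(1) -- that every variable outside $\fvars(\alpha)$ is inertial -- is immediate: any such variable lies outside $\synO(\alpha)$ and hence, by Proposition~\ref{prop:output}, outside $\semO(\alpha)$, so it is inertial. For condition~(2), I would use structural induction on $\alpha$, showing that for every $(\nu_1,\nu_2) \in \sem{\alpha}$ and every $(\V - \fvars(\alpha))$-valuation $\nu'$, the pair $(\nu_1[\nu'], \nu_2[\nu'])$ is again in $\sem{\alpha}$.

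Most cases follow a uniform pattern. The atomic expressions $M(\ox;\oy)$ and $\id$ are immediate from the semantics, since variables outside $\fvars$ are neither read nor written. For $\cup$, $\cap$, $\setminus$, and $\comp$, a quick inspection of Table~\ref{table:IO} shows that $\fvars$ of the compound expression contains $\fvars(\alpha_1) \cup \fvars(\alpha_2)$, so any $(\V - \fvars(\alpha))$-valuation is also a $(\V - \fvars(\alpha_i))$-valuation for each $i$, and the induction hypothesis on the subexpressions applies directly. For composition in particular, given an intermediate witness $\nu_3$, applying the hypothesis to $\alpha_1$ and to $\alpha_2$ yields $(\nu_1[\nu'], \nu_3[\nu']) \in \sem{\alpha_1}$ and $(\nu_3[\nu'], \nu_2[\nu']) \in \sem{\alpha_2}$, which compose to the desired pair. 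For $\cyll{x}(\alpha_1)$, given a witness $\nu_1'$ with $(\nu_1', \nu_2) \in \sem{\alpha_1}$, I would apply the induction hypothesis to obtain $(\nu_1'[\nu'], \nu_2[\nu']) \in \sem{\alpha_1}$ and then use $x \in \synO(\cyll{x}(\alpha_1)) \subseteq \fvars$ to conclude that $\nu_1'[\nu']$ still agrees with $\nu_1[\nu']$ outside $\{x\}$. The cases $\cylr{x}$ and $\conv{\alpha_1}$ are analogous or immediate.

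The main obstacle will be the selection operators, whose entries in Table~\ref{table:IO} are the most intricate. The key invariant to verify, after case analysis, is that $\fvars$ of the selection expression always contains both $x$ and $y$ -- either because one of them was added to $\synI$ or because it already appears in $\synO$ -- so that the selection equation is preserved under the uniform update by $\nu'$. The most delicate subcase is $\sellr{x=y}$ with $x \eqv y$, where the table shrinks $\synO$ to $\synO(\alpha_1) \setminus \{x\}$. However, either $x \notin \synO(\alpha_1)$, making the shrinkage vacuous, or $x \in \synO(\alpha_1)$ in which case the table also places $x$ into $\synI$; in both subcases $\fvars$ of $\sellr{x=x}(\alpha_1)$ coincides with $\fvars(\alpha_1)$, so the induction hypothesis applies and the selection $\nu_1(x) = \nu_2(x)$ survives the update precisely because $x \notin \V - \fvars$.
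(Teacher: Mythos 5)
Your overall strategy is the same as the paper's: condition~(1) of Definition~\ref{def:inertialCyl} from the output-approximation property, and condition~(2) by structural induction, with the selection cases handled by checking that $x$ and $y$ never escape $\fvars$. However, there is one genuine gap: you lump $\setminus$ together with $\cup$, $\cap$, and $\comp$ and claim the induction hypothesis ``applies directly.'' It does not. Membership in $\sem{\alpha_1 \setminus \alpha_2}$ requires $(\nu_1,\nu_2) \notin \sem{\alpha_2}$, and inertial cylindrification of $\alpha_2$ only gives you the forward implication (membership is preserved under the substitution), not the preservation of \emph{non}-membership. To close this you must argue by contradiction: assume $(\nu_1[\nu],\nu_2[\nu]) \in \sem{\alpha_2}$, apply the induction hypothesis again with the reverse substitution $\restr{\nu_1}{Y}$, and then use the Inertia Property (Proposition~\ref{prop:inertiaProp}) to check that $(\nu_2[\nu])[\restr{\nu_1}{Y}] = \nu_2$ — which holds only because $\nu_1 = \nu_2$ outside $\synO(\alpha_1) \subseteq \fvars(\alpha)$ — thereby recovering $(\nu_1,\nu_2) \in \sem{\alpha_2}$ and contradicting the hypothesis. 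The paper devotes a separate paragraph to exactly this step; without it the difference case fails.

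A second, smaller imprecision: a $(\V - \fvars(\alpha))$-valuation is \emph{not} a $(\V - \fvars(\alpha_i))$-valuation, since the latter must be defined on the strictly larger set. The induction hypothesis speaks about partial valuations on $\V - \fvars(\alpha_i)$, so you must extend your given $\nu'$ to that set before invoking it; the extension by $\restr{\nu_1}{}$ on the missing variables is harmless only because those variables are inertial (so $\nu_1$ and $\nu_2$ agree there and the extension changes nothing). The paper isolates this as a dedicated downward-closure lemma (Lemma~\ref{prop:inertcylsubset}), which again relies on the Inertia Property. This is recoverable, but as written your argument skips the step where inertia is actually needed. Your treatment of the selection and cylindrification cases, including the degenerate $\sellr{x=x}$ subcases, is otherwise in line with the paper's proof.
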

In the proof of this Lemma, we will often make use of the Lemma 
\ref{prop:inertcylsubset}.  In what follows, for a set of variables $X$, we define 
$\compl{X}$ to be $\V \setminus X$.  In the rest of the section, we remove the 
superscript from $\synO$ and refer to it simply by $\Out$.
\begin{lemma}\label{prop:inertcylsubset}
Let $B$ be a $\bbar{}$ that has inertia on $Y$.  Then, $B$ is inertially 
cylindrified on $Y$ if and only if $B$ is inertially cylindrified on every 
$X\subseteq Y$.
\end{lemma}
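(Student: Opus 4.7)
The approach will be a direct proof of both directions, with the $(\Leftarrow)$ implication following by taking $X = Y$. The main work is the forward direction, where I need to lift inertial cylindrification on a subset $X$ from that on the larger set $Y$, by extending any given $X$-valuation to a $Y$-valuation in a way that is invisible to pairs in $B$.

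For the $(\Rightarrow)$ direction, fix $X \subseteq Y$ and a pair $(\nu_1,\nu_2) \in B$. Condition~1 of Definition~\ref{def:inertialCyl} for $X$ is immediate, since all variables in $Y$ are inertial in $B$ and $X \subseteq Y$. For condition~2, take any $X$-valuation $\nu'$, and extend it to a $Y$-valuation $\nu''$ by defining $\nu''(x) = \nu'(x)$ for $x \in X$ and $\nu''(y) = \nu_1(y)$ for $y \in Y \setminus X$. Because $B$ is inertially cylindrified on $Y$, we have $(\nu_1[\nu''], \nu_2[\nu'']) \in B$.

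The key observation is then that $\nu_1[\nu''] = \nu_1[\nu']$ and $\nu_2[\nu''] = \nu_2[\nu']$. For the first equality, on variables in $X$ both sides equal $\nu'(x)$, on variables in $Y \setminus X$ both sides equal $\nu_1(y)$ (by construction of $\nu''$ and because substituting $\nu'$ leaves these untouched), and outside $Y$ both sides equal $\nu_1$. For the second equality, the only subtle point is on $Y \setminus X$: here $\nu_2[\nu''](y) = \nu''(y) = \nu_1(y)$, which equals $\nu_2(y)$ since variables of $Y$ are inertial in $B$ and $(\nu_1,\nu_2) \in B$; on $X$ both sides are $\nu'(x)$, and outside $Y$ both sides equal $\nu_2$. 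Therefore $(\nu_1[\nu'], \nu_2[\nu']) \in B$, which establishes condition~2 for $X$.

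There is no serious obstacle; the only care needed is in recording which hypothesis justifies agreement on $Y \setminus X$, namely the inertia hypothesis on $Y$, which allows us to read the extension values from $\nu_1$ and still recover the correct values in $\nu_2$.
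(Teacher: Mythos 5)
Your proof is correct and follows essentially the same route as the paper's: extend the given $X$-valuation to a $Y$-valuation by reading the values on $Y\setminus X$ from $\nu_1$, then use the inertia hypothesis on $Y$ to conclude that this extension changes neither $\nu_1[\cdot]$ nor $\nu_2[\cdot]$, so inertial cylindrification on $Y$ transfers to $X$. The only (harmless) addition is that you explicitly check condition~1 of the definition, which the paper leaves implicit.
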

\begin{proof}
The `if'-direction is immediate. Let us now consider the `only if'.  To this end, 
suppose that $(\nu_1,\nu_2)\in B$ and that $\nu$ is a partial valuation on $X$. 
Extend $\nu$ to a valuation $\nu'$ by $\nu'=\nu_1$ on $Y\setminus X$.  Since $B$ 
has inertia on $Y$, we know that $\nu_1=\nu_2=\nu'$ on $Y\setminus X$.  Thus, 
$\nu_1[\nu']=\nu_1[\nu]$ and $\nu_2[\nu']=\nu_2[\nu]$.  The lemma now directly 
follows since $B$ is inertially cylindrified on $Y$.
\end{proof}
This Lemma is always applicable for any $\LIF$ expression $\alpha$ and 
$Y=\V\setminus \fvars(\alpha)$. Indeed, for every \interpretation $\inst{}$, 
we know by Proposition~\ref{prop:inertiaProp} that $\sem{\alpha}$ has inertia 
outside $\Out(\alpha)\subseteq\fvars(\alpha)$.

We are now ready to prove Lemma~\ref{lem:synFreeVarProp}.
\subsubsection{Atomic Modules}
Let $\alpha$ be of the form $M(\bar{x}; \bar{y})$.  Recall that 
$\fvars(\alpha) = X \cup Y$ where $X$ and $Y$ are the variables in 
$\ox$ and $\oy$, respectively. The property directly follows from the 
definition of the semantics for atomic modules.

\subsubsection{Identity}
Let $\alpha$ be of the form $\id$.  Recall that $\fvars(\alpha)= \emptyset$. 
The property directly follows from the definition of $\id$.

\subsubsection{Union}
Let $\alpha$ be of the form $\alpha_1\cup \alpha_2$.  Recall that 
$\fvars(\alpha) = \fvars(\alpha_1) \cup \fvars(\alpha_2)$.  If 
$(\nu_1, \nu_2) \in \sem{\alpha}$, then $(\nu_1, \nu_2) \in \sem{\alpha_1}$ 
or $(\nu_1, \nu_2) \in \sem{\alpha_2}$.  Let $Y = \V\setminus \fvars(\alpha)$ 
and let $\nu$ be a partial valuation on $Y$.  Assume without loss of generality 
that $(\nu_1,\nu_2)\in \sem{\alpha_1}$.  
Clearly, $Y\subseteq \V\setminus \fvars(\alpha_1)$ since 
$\fvars(\alpha_1)\subseteq \fvars(\alpha)$.  By induction and 
Lemma~\ref{prop:inertcylsubset}, we know that 
$(\nu_1[\nu],\nu_2[\nu]) \in \sem{\alpha_1}\subseteq \sem{\alpha}$ as desired.

\subsubsection{Intersection}
Let $\alpha$ be of the form $\alpha_1\cap \alpha_2$.  Recall that 
$\fvars(\alpha) = \fvars(\alpha_1) \cup \fvars(\alpha_2)$.  
If $(\nu_1, \nu_2) \in \sem{\alpha}$, then $(\nu_1, \nu_2) \in \sem{\alpha_1}$ 
and $(\nu_1, \nu_2) \in \sem{\alpha_2}$.
Let $Y = \V\setminus \fvars(\alpha)$ and let $\nu$ be a partial valuation on $Y$. 
Clearly, $Y\subseteq \V\setminus \fvars(\alpha_i)$ with $i=1,2$ since 
$\fvars(\alpha_i)\subseteq \fvars(\alpha)$. By induction and 
Lemma~\ref{prop:inertcylsubset}, we know that $(\nu_1[\nu],\nu_2[\nu]) \in \sem{\alpha_i}$ 
with $i=1,2$, whence $(\nu_1[\nu],\nu_2[\nu]) \in \sem{\alpha}$ as desired.

\subsubsection{Composition}
Let $\alpha$ be of the form $\alpha_1 \comp{} \alpha_2$.  Recall that 
$\fvars(\alpha) = \fvars(\alpha_1) \cup \fvars(\alpha_2)$.  If 
$(\nu_1, \nu_2) \in \sem{\alpha}$, then there exists a valuation $\nu_3$ 
such that $(\nu_1, \nu_3) \in \sem{\alpha_1}$ and $(\nu_3, \nu_2) \in \sem{\alpha_2}$.
Let $Y = \V\setminus \fvars(\alpha)$ and let $\nu$ be a partial valuation on $Y$.  
Clearly, $Y\subseteq \V\setminus \fvars(\alpha_i)$ with $i=1,2$ since 
$\fvars(\alpha_i)\subseteq \fvars(\alpha)$.  By induction and 
Lemma~\ref{prop:inertcylsubset}, we know that 
$(\nu_1[\nu],\nu_3[\nu]) \in \sem{\alpha_1}$ and 
$(\nu_3[\nu],\nu_2[\nu]) \in \sem{\alpha_2}$.  Therefore, we may conclude that 
$(\nu_1[\nu],\nu_2[\nu])\in \sem{\alpha_1\comp{}\alpha_2}$.

\subsubsection{Difference}
Let $\alpha$ be of the form $\alpha_1\setminus \alpha_2$.  Recall that 
$\fvars(\alpha) = \fvars(\alpha_1) \cup \fvars(\alpha_2)$.  If 
$(\nu_1, \nu_2) \in \sem{\alpha}$, then we know that:
\begin{enumerate}
\item \label{IC:diff1} $(\nu_1, \nu_2) \in \sem{\alpha_1}$.  By inertia, we know that 
$\nu_1=\nu_2$ outside $\Out(\alpha_1)\subseteq \fvars(\alpha_1)\subseteq\fvars(\alpha)$.
\item \label{IC:diff2} $(\nu_1, \nu_2) \not \in \sem{\alpha_2}$.
\end{enumerate}
Let $Y = \V\setminus \fvars(\alpha)$ and let $\nu$ be a partial valuation on $Y$. 
Clearly, $Y\subseteq \V\setminus \fvars(\alpha_i)$ with $i=1,2$ since 
$\fvars(\alpha_i)\subseteq \fvars(\alpha)$.  By induction from (\ref{IC:diff1}) 
and Lemma~\ref{prop:inertcylsubset}, we know that 
$(\nu_1[\nu], \nu_2[\nu]) \in \sem{\alpha_1}$.  All that remains is to show that 
$(\nu_1[\nu], \nu_2[\nu]) \not\in \sem{\alpha_2}$
Now, suppose for the sake of contradiction that 
$(\nu_1[\nu], \nu_2[\nu]) \in \sem{\alpha_2}$.  By induction and 
Lemma~\ref{prop:inertcylsubset}, we know that
$((\nu_1[\nu])[\restr{\nu_1}{Y}], (\nu_2[\nu])[\restr{\nu_1}{Y}])\in \sem{\alpha_2}$. 
Clearly, $(\nu_1[\nu])[\restr{\nu_1}{Y}]=\nu_1$.  Moreover, 
$(\nu_2[\nu])[\restr{\nu_1}{Y}] = \nu_2$ since $\nu_1=\nu_2$ outside $\fvars(\alpha)$ 
by $(\ref{IC:diff1})$.  We have thus obtained that $(\nu_1,\nu_2)\in \sem{\alpha_2}$, 
which contradicts $(\ref{IC:diff2})$.

\subsubsection{Converse}
Let $\alpha$ be of the form $\conv{\alpha_1}$.  Recall that 
$\fvars(\alpha) = \fvars(\alpha_1)$.  The property follows directly by induction 
since $\fvars(\alpha_1) = \fvars(\conv{\alpha_1})$.

\subsubsection{Left Selection}
Let $\alpha$ be of the form $\sell{x=y}(\alpha_1)$.  Recall the definition:
\[
\fvars(\alpha) = 
  \begin{cases} 
    \fvars(\alpha_1) & \text{if } x \eqv y \\ 
    \fvars(\alpha_1) \cup \{x,y\} & \text{otherwise} 
  \end{cases}
\]
In case of $x \eqv y$, clearly $\sem{\sell{x = y}(\alpha_1)} = \sem{\alpha_1}$. 
The property holds trivially by induction. In the other case when $x \neqv y$, 
if $(\nu_1, \nu_2) \in \sem{\alpha}$, then $(\nu_1, \nu_2) \in \sem{\alpha_1}$. 
Let $Y = \V\setminus \fvars(\alpha)$ and let $\nu$ be a partial valuation on $Y$.
Clearly, $Y\subseteq \V\setminus \fvars(\alpha_1)$ since 
$\fvars(\alpha_1)\subseteq \fvars(\alpha)$.  By induction and 
Lemma~\ref{prop:inertcylsubset}, we know that 
$(\nu_1[\nu], \nu_2[\nu]) \in \sem{\alpha_1}$.  Moreover, since 
$\{x,y\}\cap Y=\emptyset$, we know that the selection does not look at $\nu$, 
whence $(\nu_1[\nu], \nu_2[\nu]) \in  \sem{\sell{x=y}(\alpha_1)}$ as desired.

\subsubsection{Right Selection}
Let $\alpha$ be of the form $\selr{x=y}(\alpha_1)$.  Recall the definition:
\[
\fvars(\alpha) = 
  \begin{cases} 
    \fvars(\alpha_1) & \text{if } x \eqv y \\ 
    \fvars(\alpha_1) \cup \{x,y\} & \text{otherwise} 
  \end{cases}
\]
In case of $x \eqv y$, clearly $\sem{\selr{x = y}(\alpha_1)} = \sem{\alpha_1}$. 
Hence, the property holds trivially by induction.  In the other case, it is 
analogous to $\sell{x=y}(\alpha_1)$ since here also 
$\{x,y\} \cap (\V\setminus \fvars(\alpha))=\emptyset$.

\subsubsection{Left-to-Right Selection}
Let $\alpha$ be of the form $\sellr{x=y}(\alpha_1)$.  Recall the definition:
\[
\fvars(\alpha) = 
  \begin{cases} 
    \fvars(\alpha_1) & \text{if } x \eqv y \text{ and } y \not \in \Out(\alpha_1) \\ 
    \fvars(\alpha_1) \cup \{x,y\} & \text{otherwise} 
  \end{cases}
\]
In case of $x \eqv y$ and $y \not \in \Out(\alpha_1) $, clearly 
$\sem{\sellr{x=y}(\alpha_1)} = \sem{\alpha_1}$.  Hence, the property holds trivially 
by induction.  In the other case, it is analogous to $\sell{x=y}(\alpha_1)$ since 
here also $\{x,y\}\cap (\V\setminus \fvars(\alpha))=\emptyset$.

\subsubsection{Right and Left Cylindrifications}
Let $\alpha$ be of the form $\cylr{x}(\alpha_1)$.  The case for left cylindrification 
is analogous.  Recall that $\fvars(\alpha) = \fvars(\alpha_1) \cup \{y\}$.  
If $(\nu_1, \nu_2) \in \sem{\alpha}$, then there exists a valuation $\nu_3$ 
such that:
\begin{enumerate}
	\item \label{IC:cylr1} $(\nu_1, \nu_3) \in \sem{\alpha_1}$;
	\item \label{IC:cylr2} $\nu_3 = \nu_2$ outside $\{x\}$.
\end{enumerate}
Let $Y = \V\setminus \fvars(\alpha)$ and let $\nu$ be a partial valuation on $Y$. 
Clearly, $Y\subseteq \V\setminus \fvars(\alpha_1)$ since 
$\fvars(\alpha_1)\subseteq \fvars(\alpha)$.  By induction from (\ref{IC:cylr1}) and 
Lemma~\ref{prop:inertcylsubset} we know that 
$(\nu_1[\nu], \nu_3[\nu])\in\sem{\alpha_1}$.  Since $x \not \in Y$, we know from 
$(\ref{innertia:cylr2})$ that $\nu_3[\nu]= \nu_2[\nu]$ outside $\{x\}$.  Hence, 
we can conclude that $(\nu_1[\nu], \nu_2[\nu])\in\sem{\alpha}$.

\subsection{Proof of Syntactic Input-Output Determinacy}\label{sub:inputDet}
Syntactic Input-Output determinacy is certainly proved if we can prove the 
following Lemma.
\begin{lemma}[Syntactic Input-Output Determinacy]\label{lem:inputDet}
Let $\alpha$ be a LIF expression. Then, for every \interpretation $\inst$, every 
$(\nu_1, \nu_2) \in \sem{\alpha}$ and every $\nu_1'$ that agrees with $\nu_1$ on 
$\synI(\alpha)$, there exists a valuation $\nu_2'$ that agrees with $\nu_2$ on 
$\synO(\alpha)$ and $(\nu_1', \nu_2') \in \sem{\alpha}$.
\end{lemma}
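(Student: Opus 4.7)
The plan is to prove the lemma by structural induction on $\alpha$. The base cases (atomic modules $M(\ox;\oy)$ and $\id$) are immediate: for an atomic module one simply takes $\nu_2' := \nu_1'[\nu_2|_{\oy}]$, which lies in $\sem{M(\ox;\oy)}$ because $\nu_1'(\ox) = \nu_1(\ox)$ and $\nu_2'(\oy) = \nu_2(\oy)$; for $\id$ take $\nu_2' := \nu_1'$. In every inductive case the recipe is the same: given $(\nu_1,\nu_2) \in \sem{\alpha}$ together with $\nu_1'$ agreeing with $\nu_1$ on $\synI(\alpha)$, apply the inductive hypothesis to the direct subexpression(s), then combine or adjust the resulting valuation(s) into a single $\nu_2'$. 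Two earlier results will be used throughout: the Inertia Property (Proposition~\ref{prop:inertiaProp}), giving $\nu_1 = \nu_2$ outside $\synO$, and the Syntactic Free Variable Property (Lemma~\ref{lem:synFreeVarProp}), which allows inertial cylindrification outside $\fvars$.

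For union, intersection, composition, cylindrifications, and selections, Table~\ref{table:IO} has been designed precisely so that the variables needed to reconcile the inductive hypothesis with inertia are already in $\synI(\alpha)$. For $\alpha_1 \cup \alpha_2$, apply the IH to whichever operand $\alpha_i$ contains $(\nu_1,\nu_2)$; on a variable $x \in \synO(\alpha_{3-i}) \setminus \synO(\alpha_i)$, the symmetric-difference component of $\synI(\alpha)$ gives $\nu_1'(x) = \nu_1(x)$, and both $\nu_2(x)$ and $\nu_2'(x)$ collapse to $\nu_1(x)$ via the inertia of $\alpha_i$. Intersection proceeds analogously but the outputs obtained from applying the IH to both operands must coincide; the same symmetric-difference/inertia reasoning shows that the two candidates agree everywhere, producing a single $\nu_2'$ lying in both semantics. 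For composition $\alpha_1 \comp \alpha_2$, applying the IH to $\alpha_1$ produces an intermediate $\nu^*$; verifying its agreement with the original witness $\nu$ on $\synI(\alpha_2)$ splits into the part inside $\synO(\alpha_1)$ (handled by the IH) and the part outside (handled by inertia together with $\synI(\alpha_2) \setminus \synO(\alpha_1) \subseteq \synI(\alpha)$). The cylindrification cases reduce to choosing an appropriate middle valuation on the distinguished variable, and each selection case is a direct check, relying on the corresponding clause in Table~\ref{table:IO} to guarantee that whatever equality is enforced by the selection carries over.

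The genuinely subtle case is difference. Given $(\nu_1,\nu_2) \in \sem{\alpha_1} \setminus \sem{\alpha_2}$, apply the IH to $\alpha_1$ to obtain $\nu_2'$ agreeing with $\nu_2$ on $\synO(\alpha_1)$ with $(\nu_1',\nu_2') \in \sem{\alpha_1}$; it remains to show $(\nu_1',\nu_2') \notin \sem{\alpha_2}$. Assume the contrary. Since $\synI(\alpha_2) \subseteq \synI(\alpha)$, we have $\nu_1 = \nu_1'$ on $\synI(\alpha_2)$, so we may apply the IH to $\alpha_2$ in reverse to obtain $\nu_2''$ with $(\nu_1,\nu_2'') \in \sem{\alpha_2}$ and $\nu_2'' = \nu_2'$ on $\synO(\alpha_2)$. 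The goal is then to show $\nu_2'' = \nu_2$, contradicting $(\nu_1,\nu_2) \notin \sem{\alpha_2}$. On $\synO(\alpha_2)$ the equality chains through the IH and the symmetric-difference portion of $\synI(\alpha)$ (plus inertia of $\alpha_1$). Outside $\synO(\alpha_2)$, inertia of $\alpha_2$ applied to $(\nu_1,\nu_2'')$ gives $\nu_2'' = \nu_1$; the matching equality $\nu_2 = \nu_1$ on the potentially-problematic set $\synO(\alpha_1) \setminus \synO(\alpha_2)$ is forced by combining inertia of $\alpha_2$ on $(\nu_1',\nu_2')$ (which yields $\nu_2' = \nu_1'$ there) with $\nu_2' = \nu_2$ on $\synO(\alpha_1)$ and $\nu_1' = \nu_1$ on the symmetric-difference component. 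Chaining these equalities delivers $\nu_2 = \nu_2''$.

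The hardest step is precisely this difference case, for two reasons: it is the only case that invokes the IH in the reverse direction, and its contradiction argument depends on exploiting inertia of $\alpha_2$ on the newly constructed pair $(\nu_1',\nu_2')$ rather than on the original pair. The other cases are systematic verifications that the clauses in Table~\ref{table:IO} include exactly the variables needed to close the induction.
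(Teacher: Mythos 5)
Your proof is correct, and the overall skeleton (structural induction over Table~\ref{table:IO}, leaning on the Inertia Property) is the same as the paper's; the interesting difference is that you work with the determinacy statement in its original form throughout, whereas the paper first proves an equivalence (Lemma~\ref{lem:eqvInDetDef}) with an \emph{alternative} formulation (Definition~\ref{def:altInDet}) in which $\nu_1'$ additionally agrees with $\nu_1$ outside $\synO(\alpha)$ and the \emph{same} $\nu_2$ is retained. That reformulation is where the paper spends the Syntactic Free Variable Property, and it pays off in exactly the places you identify as delicate: for union, intersection, and especially difference, the paper's argument reduces to a set-theoretic computation of $\compl{\synO(\alpha)} \cup \synI(\alpha)$ in terms of the subexpressions, and the reverse application of the inductive hypothesis in the difference case immediately returns the pair $(\nu_1,\nu_2)$ with no equality-chasing needed. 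Your direct route instead pushes that work into the case analysis: reconciling the two output valuations in the intersection case, and the chain $\nu_2 = \nu_2' = \nu_1' = \nu_1 = \nu_2''$ on $\synO(\alpha_1) \symdif \synO(\alpha_2)$ in the difference case. I checked that chain and it is sound — the key observations that the symmetric difference sits inside $\synI(\alpha)$ and that inertia of $\alpha_2$ must be applied to the \emph{new} pair $(\nu_1',\nu_2')$ are exactly right. What each approach buys: the paper's reformulation localizes all the bookkeeping in one equivalence lemma and makes every inductive case short; your version avoids introducing a second definition and keeps each case self-contained, at the price of repeating the inertia/symmetric-difference argument in several places.
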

In the proof, we will make use of a useful alternative formulation of syntactic 
input-output determinacy which is defined next.
\begin{definition}[Alternative Input-Output Determinacy]\label{def:altInDet}
A LIF expression $\alpha$ is said to satisfy \emph{alternative input-output determinacy} 
if for every \interpretation $\inst$, every $(\nu_1, \nu_2) \in \sem{\alpha}$ and every 
$\nu_1'$ that agrees with $\nu_1$ on $\synI(\alpha)$ and outside $\synO(\alpha)$, we have 
$(\nu_1', \nu_2) \in \sem{\alpha}$.
\end{definition}
The following Lemma shows that the two definitions are equivalent.  In what 
follows, will remove the superscript from $\synI$ and $\synO$ and refer to 
them as $\In$ and $\Out$, respectively. 
\begin{lemma}\label{lem:eqvInDetDef}
Let $\alpha$ be a LIF expression.  $\alpha$ satisfies 
\emph{alternative input-output determinacy} iff it satisfies 
\emph{syntactic input-output determinacy}.
\end{lemma}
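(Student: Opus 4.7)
The plan is to prove the two directions separately, relying on the two main tools already established in this section: the inertia property (Proposition~\ref{prop:inertiaProp}), which guarantees that $\semNoI{\alpha}$ has inertia outside $\synO(\alpha)$, and the syntactic free variable property (Lemma~\ref{lem:synFreeVarProp}), which guarantees that $\semNoI{\alpha}$ is inertially cylindrified outside $\fvars(\alpha) = \synI(\alpha) \cup \synO(\alpha)$.

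The direction from syntactic to alternative determinacy should be the easy one. Starting from $(\nu_1, \nu_2) \in \sem{\alpha}$ and a $\nu_1'$ that agrees with $\nu_1$ both on $\synI(\alpha)$ and outside $\synO(\alpha)$, I would first invoke syntactic determinacy to produce some $\nu_2'$ with $(\nu_1', \nu_2') \in \sem{\alpha}$ and $\nu_2' = \nu_2$ on $\synO(\alpha)$. It then remains to verify $\nu_2' = \nu_2$ outside $\synO(\alpha)$ as well, which follows from a short chain of equalities: $\nu_2' = \nu_1'$ outside $\synO(\alpha)$ by inertia applied to $(\nu_1', \nu_2')$, then $\nu_1' = \nu_1$ outside $\synO(\alpha)$ by hypothesis, and finally $\nu_1 = \nu_2$ outside $\synO(\alpha)$ by inertia applied to $(\nu_1, \nu_2)$.

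The other direction is where the real work lies, because the hypothesis is strictly weaker ($\nu_1'$ only needs to agree with $\nu_1$ on $\synI(\alpha)$) while the conclusion requires the specific right-hand side $\nu_2$. My plan is to bridge this via an intermediate valuation $\nu_1''$ defined to equal $\nu_1'$ on $\synO(\alpha) - \synI(\alpha)$ and to equal $\nu_1$ elsewhere. By construction, $\nu_1''$ agrees with $\nu_1$ both on $\synI(\alpha)$ and outside $\synO(\alpha)$, so alternative determinacy immediately yields $(\nu_1'', \nu_2) \in \sem{\alpha}$. Moreover, a case split shows that $\nu_1''$ and $\nu_1'$ coincide on all of $\fvars(\alpha)$ and can differ only outside $\fvars(\alpha)$.

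The main obstacle will be to transport the pair $(\nu_1'', \nu_2)$ to one of the form $(\nu_1', \nu_2')$, and this is precisely where the syntactic free variable property becomes essential. Applying Lemma~\ref{lem:synFreeVarProp} with the partial valuation $\nu_1'|_{\V - \fvars(\alpha)}$, I expect to obtain $(\nu_1''[\nu_1'|_{\V - \fvars(\alpha)}], \nu_2[\nu_1'|_{\V - \fvars(\alpha)}]) \in \sem{\alpha}$. The first component simplifies to $\nu_1'$, since $\nu_1''$ already agrees with $\nu_1'$ on $\fvars(\alpha)$ and the substitution fixes the coordinates outside $\fvars(\alpha)$. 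The second component then serves as the witness $\nu_2'$ required by syntactic determinacy: it agrees with $\nu_2$ on $\fvars(\alpha)$ and hence, in particular, on $\synO(\alpha)$, which closes the argument.
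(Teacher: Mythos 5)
Your proof is correct and follows essentially the same route as the paper's: the easy direction combines syntactic determinacy with the inertia property to force $\nu_2'=\nu_2$, and the hard direction introduces the same intermediate valuation $\nu_1''$ (your definition via $\synO(\alpha)\setminus\synI(\alpha)$ coincides with the paper's ``equal to $\nu_1'$ on $\fvars(\alpha)$, to $\nu_1$ elsewhere'', since $\nu_1=\nu_1'$ on $\synI(\alpha)$) and then transports $(\nu_1'',\nu_2)$ to $(\nu_1',\nu_2[\nu])$ via Lemma~\ref{lem:synFreeVarProp}. No gaps.
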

\begin{proof}
The proof of the ``if''-direction is similar to the proof of 
Proposition~\ref{prop:semDetToAD}.
Now, we proceed to verify the other direction.  Suppose that 
$(\nu_1, \nu_2) \in \sem{\alpha}$ and $\nu'_1  = \nu_1$ on $\In(\alpha)$. 
We now construct a new valuation $\nu''_1$ such that it agrees with $\nu_1'$ 
on $\fvars(\alpha)$ and it agrees with $\nu_1$ elsewhere.  We thus have the 
following properties for $\nu_1''$:
\begin{enumerate}
	\item\label{pidtoapid1} $\nu_1'' = \nu_1'$ on $\fvars(\alpha)$, and
	\item\label{pidtoapid2} $\nu_1'' = \nu_1$ on $\V\setminus \fvars(\alpha)$.
\end{enumerate}
We know that $\nu'_1 = \nu_1$ on $\In(\alpha)$ by assumption, whence 
(\ref{pidtoapid1}) implies that $\nu''_1 = \nu_1$ on $\In(\alpha)$ since 
$\In(\alpha) \subseteq \fvars(\alpha)$.  Combining this with (\ref{pidtoapid2}), 
we know that $\nu_1''= \nu_1$ on $\In(\alpha)$ and outside $\fvars(\alpha)$. 
Thus, alternative input-output determinacy implies that 
$(\nu''_1, \nu_2) \in \sem{\alpha}$.  Since $\nu_1''=\nu_1'$ on $\fvars(\alpha)$, 
we know that there is a partial valuation $\nu$ on $\V\setminus \fvars(\alpha)$ 
such that $\nu_1''[\nu]=\nu_1'$.  By syntactic \ficp{}, we know that 
$(\nu_1''[\nu],\nu_2[\nu]) \in \sem{\alpha}$. 
Thus, $(\nu_1',\nu_2[\nu]) \in \sem{\alpha}$ as desired.
\end{proof}
We are now ready for the proof of Lemma~\ref{lem:inputDet}.  In the following
proof, we will use the notation $\compl{X}$ to mean $\V - X$.  Moreover, since 
we established by Lemma \ref{lem:eqvInDetDef} that the two definitions for 
input-output determinacy are equivalent, we will verify any of them for each 
LIF expression.

\subsubsection{Atomic Modules}
Let $\alpha$ be of the form $M(\bar{x}; \bar{y})$. Recall the definitions:
\begin{itemize}
	\item $\In(\alpha) = X$ where $X$ are the variables in $\ox$.
	\item $\Out(\alpha) = Y$ where $Y$ are the variables in $\oy$.
\end{itemize}
Syntactic input-output determinacy directly follows from the definition of the 
semantics for atomic modules.

\subsubsection{Identity}
Let $\alpha$ be of the form $\id$.  Recall that the definition for 
$\In(\alpha) = \Out(\alpha) = \emptyset$.  We proceed to verify that $\alpha$ 
satisfies alternative input-output determinacy.  Indeed, this is true since 
$\compl{\Out(\alpha)} \cup \In(\alpha)=\V$.

\subsubsection{Union}
Let $\alpha$ be of the form $\alpha_1\cup \alpha_2$.  Recall the definitions:
\begin{itemize}
	\item $\In(\alpha) = \In(\alpha_1) \cup \In(\alpha_2) 
	\cup (\Out(\alpha_1) \symdif \Out(\alpha_2))$.
	\item $\Out(\alpha) = \Out(\alpha_1) \cup \Out(\alpha_2)$.
\end{itemize}
We proceed to verify that $\alpha$ satisfies alternative input-output 
determinacy.  If $(\nu_1, \nu_2) \in \sem{\alpha}$, then 
$(\nu_1, \nu_2) \in \sem{\alpha_1}$ or $(\nu_1, \nu_2) \in \sem{\alpha_2}$. 
Assume without loss of generality that $(\nu_1, \nu_2) \in \sem{\alpha_1}$. 
Now, let $\nu'_1$ be a valuation such that $\nu_1' = \nu_1$ on 
$\compl{\Out(\alpha)} \cup \In(\alpha)$.  Moreover, we have the following:
\begin{align*}
\compl{\Out(\alpha)} \cup \In(\alpha) 
&= \compl{\Out(\alpha_1) \cup \Out(\alpha_2)} \cup \In(\alpha_1) \cup \In(\alpha_2) 
\cup (\Out(\alpha_1) \symdif \Out(\alpha_2)) \\
&= \compl{\Out(\alpha_1) \cap \Out(\alpha_2)} \cup \In(\alpha_1) \cup \In(\alpha_2)\\
&= \compl{\Out(\alpha_1)} \cup \compl{\Out(\alpha_2)} \cup \In(\alpha_1) \cup \In(\alpha_2)
\end{align*}
Therefore, certainly $\nu_1' = \nu_1$ on $\compl{\Out(\alpha_1)}\cup \In(\alpha_1)$. 
Thus, $(\nu_1', \nu_2) \in \sem{\alpha_1}$ by induction and Lemma~\ref{lem:eqvInDetDef}, 
whence $(\nu_1', \nu_2) \in \sem{\alpha}$ as desired.

\subsubsection{Intersection}
Let $\alpha$ be of the form $\alpha_1\cap \alpha_2$. Recall the definitions:
\begin{itemize}
	\item $\In(\alpha) = \In(\alpha_1) \cup \In(\alpha_2) \cup 
	(\Out(\alpha_1) \symdif \Out(\alpha_2))$.
	\item $\Out(\alpha) = \Out(\alpha_1) \cap \Out(\alpha_2)$.
\end{itemize}
We proceed to verify that $\alpha$ satisfies alternative input-output determinacy. 
If $(\nu_1, \nu_2) \in \sem{\alpha}$, then $(\nu_1, \nu_2) \in \sem{\alpha_1}$ and 
$(\nu_1, \nu_2) \in \sem{\alpha_2}$.  Now, let $\nu'$ be a valuation such that 
$\nu_1' = \nu_1$ on $\compl{\Out(\alpha)} \cup \In(\alpha)$.  Just as in the case for 
$\cup$, we have that $\compl{\Out(\alpha)} \cup \In(\alpha) =
\compl{\Out(\alpha_1)} \cup \compl{\Out(\alpha_2)} \cup \In(\alpha_1) \cup \In(\alpha_2)$.  
Therefore, certainly $\nu_1' = \nu_1$ on $\compl{\Out(\alpha_1)}\cup \In(\alpha_1)$.  
Thus, $(\nu_1', \nu_2) \in \sem{\alpha_1}$ and $(\nu_1', \nu_2) \in \sem{\alpha_2}$ 
by induction and Lemma~\ref{lem:eqvInDetDef}, whence $(\nu_1', \nu_2) \in \sem{\alpha}$ 
as desired.

\subsubsection{Composition}
Let $\alpha$ be of the form $\alpha_1 \comp{} \alpha_2$.  Recall the definitions:
\begin{itemize}
	\item $\In(\alpha) = \In(\alpha_1) \cup (\In(\alpha_2) \setminus \Out(\alpha_1))$.
	\item $\Out(\alpha) = \Out(\alpha_1) \cup \Out(\alpha_2)$.
\end{itemize}
We proceed to verify that $\alpha$ satisfies syntactic input-output determinacy. 
If $(\nu_1,\nu_2)\in\sem{\alpha}$, then there exists a valuation $\nu$ such that
\begin{enumerate}[label=(\roman*)]
	\item\label{pid:compi} $(\nu_1, \nu) \in \sem{\alpha_1}$;
	\item\label{pid:compii} $(\nu, \nu_2) \in \sem{\alpha_2}$.
\end{enumerate}
Now, let $\nu'_1$ be a valuation such that
\begin{equation}
\nu'_1 = \nu_1 \text{ on } \In(\alpha) = \In(\alpha_1) \cup 
(\In(\alpha_2) \setminus \Out(\alpha_1)) \label{pid:comp1}    
\end{equation}
Since $\In(\alpha_1) \subseteq \In(\alpha)$, then by induction there exists a 
valuation $\nu'$ such that:
\begin{enumerate}[label=(\roman*)]
	\setcounter{enumi}{2}
	\item\label{pid:compa} $(\nu'_1, \nu') \in \sem{\alpha_1}$;
	\item\label{pid:compb} $\nu' = \nu$ on $\Out(\alpha_1)$.
\end{enumerate}
By applying inertia to \ref{pid:compi} and \ref{pid:compa} we get that 
$\nu_1 = \nu$ and $\nu_1' = \nu'$ outside $\Out(\alpha_1)$.  Combining this with
(\ref{pid:comp1}) we have that $\nu' = \nu_1'= \nu_1 = \nu$ on 
$\In(\alpha) \cap \compl{\Out(\alpha_1)} = (\In(\alpha_1) \cup 
\In(\alpha_2)) \setminus \Out(\alpha_1)$.  Together with \ref{pid:compb}, this 
implies that
\begin{equation}
\nu'=\nu \text{ on } \In(\alpha_1) \cup \In(\alpha_2) \cup \Out(\alpha_1)  
\label{pid:comp2} 
\end{equation}
By induction from \ref{pid:compii}, there exists $\nu'_2$ such that:
\begin{enumerate}[label=(\roman*)]
	\setcounter{enumi}{4}
	\item\label{pid:compc} $(\nu', \nu'_2) \in \sem{\alpha_2}$;
	\item\label{pid:compd} $\nu'_2 = \nu_2$ on $\Out(\alpha_2)$.
\end{enumerate}
From \ref{pid:compa} and \ref{pid:compd} we get that 
$(\nu'_1, \nu'_2) \in \sem{\alpha}$.  All that remains to be shown is that 
$\nu'_2 = \nu_2$ on $\Out(\alpha)$.
By applying inertia to \ref{pid:compii} and \ref{pid:compc} we get that:
\begin{align*}
\nu &= \nu_2 \text{ outside } {\Out(\alpha_2)} \\
\nu' &= \nu_2' \text{ outside } {\Out(\alpha_2)}
\end{align*}
Combining this with \ref{pid:comp2} we have that $\nu'_2 = \nu_2$ on 
$(\In(\alpha_1) \cup \In(\alpha_2) \cup \Out(\alpha_1)) \cap \compl{\Out(\alpha_2)} = 
(\In(\alpha_1) \cup \In(\alpha_2) \cup \Out(\alpha_1)) \setminus \Out(\alpha_2)$. 
Together with \ref{pid:compd} this implies that $\nu'_2 = \nu_2$ on 
$\In(\alpha_1) \cup \In(\alpha_2) \cup \Out(\alpha_1) \cup \Out(\alpha_2)$, whence 
$\nu'_2 = \nu_2$ on $\Out(\alpha)$ as desired.

\subsubsection{Difference}
Let $\alpha$ be of the form $\alpha_1\setminus \alpha_2$.  Recall that:
\begin{itemize}
	\item $\In(\alpha) = \In(\alpha_1) \cup \In(\alpha_2) \cup 
	(\Out(\alpha_1) \symdif \Out(\alpha_2))$.
	\item $\Out(\alpha) = \Out(\alpha_1)$.
\end{itemize}
We proceed to verify that $\alpha$ satisfies alternative input-output determinacy. 
If $(\nu_1, \nu_2) \in \sem{\alpha}$, then we know that:
\begin{enumerate}
	\item\label{pid:diff1} $(\nu_1, \nu_2) \in \sem{\alpha_1}$;
	\item\label{pid:diff2} $(\nu_1, \nu_2) \not \in \sem{\alpha_2}$.
\end{enumerate}
Now, let $\nu'_1$ be a valuation such that $\nu_1' = \nu_1$ on 
$\compl{\Out(\alpha)} \cup \In(\alpha)$.  Since 
$\compl{\Out(\alpha_1)} \subseteq \compl{\Out(\alpha)}$ and 
$\In(\alpha_1) \subseteq \In(\alpha)$, then $\nu_1' = \nu_1$ on 
$\compl{\Out(\alpha_1)} \cup \In(\alpha_1)$.  Thus, by induction from (\ref{pid:diff1}) 
and Lemma \ref{lem:eqvInDetDef}, we know that $(\nu_1', \nu_2) \in \sem{\alpha_1}$.

To prove that $(\nu_1', \nu_2)\in \sem{\alpha}$, all that remains is to show that 
$(\nu_1',\nu_2) \not\in \sem{\alpha_2}$. Assume for the sake of contradiction that 
$(\nu_1',\nu_2) \in \sem{\alpha_2}$. Since $\nu_1' = \nu_1$ on 
$\compl{\Out(\alpha)} \cup \In(\alpha)$ and 
$\compl{\Out(\alpha)} \cup \In(\alpha) = 
\compl{\Out(\alpha_1)} \cup \compl{\Out(\alpha_2)} \cup \In(\alpha_1) \cup \In(\alpha_2)$, 
we know that $\nu_1' = \nu_1$ on $\compl{\Out(\alpha_2)} \cup \In(\alpha_2)$.
Hence, $(\nu_1,\nu_2) \in \sem{\alpha_2}$ by induction and Lemma \ref{lem:eqvInDetDef}, 
which contradicts (\ref{pid:diff2}).

\subsubsection{Converse}
Let $\alpha$ be of the form $\conv{\alpha_1}$.  Recall the definitions:
\begin{itemize}
	\item $\In(\alpha) = \In(\alpha_1) \cup \Out(\alpha_1)$.
	\item $\Out(\alpha) = \Out(\alpha_1)$.
\end{itemize}
Alternative input-output determinacy holds since 
$\compl{\Out(\alpha)} \cup \In(\alpha) = \V$.

\subsubsection{Left Selection}
Let $\alpha$ be of the form $\sell{x=y}(\alpha_1)$.  Recall the definitions:
\begin{itemize}
	\item $\In(\alpha) = \begin{cases} \In(\alpha_1) & x \eqv y \\ 
	\In(\alpha_1) \cup \{x,y\} & \text{otherwise}  \end{cases}$
	\item $\Out(\alpha) = \Out(\alpha_1)$
\end{itemize}
We proceed to verify that $\alpha$ satisfies alternative input-output determinacy. 
Clearly, the property holds trivially by induction in case of $x \eqv y$.  Indeed, 
in this case, $\sem{\alpha} = \sem{\alpha_1}$. In the other case when $x \neqv y$, 
if $(\nu_1, \nu_2) \in \sem{\alpha}$, then we know that:
\begin{enumerate}
	\item\label{pid:sell1} $(\nu_1, \nu_2) \in \sem{\alpha_1}$;
	\item\label{pid:sell2} $\nu_1(x) = \nu_1(y)$;
\end{enumerate}
Let $\nu_1'$ be a valuation such that $\nu_1' = \nu_1$ on 
$\compl{\Out(\alpha)} \cup \In(\alpha)$.  In all cases, 
$\Out(\alpha) \subseteq \Out(\alpha_1)$.  Hence, 
$\compl{\Out(\alpha_1)} \subseteq \compl{\Out(\alpha)}$.  Moreover, in all cases 
$\In(\alpha_1) \subseteq \In(\alpha)$.  Thus, $\nu_1' = \nu_1$ on 
$\compl{\Out(\alpha_1)} \cup \In(\alpha_1)$.  By induction from (\ref{pid:sell1}) 
and Lemma \ref{lem:eqvInDetDef}, we know that $(\nu_1', \nu_2) \in \sem{\alpha_1}$.

All that remains to be shown is that $\nu_1'(x) = \nu_1'(y)$.  Since 
$\{x,y\}\subseteq \In(\alpha)$, we know that $\nu_1'=\nu_1$ on $\{x,y\}$ by assumption. 
Hence, $\nu_1'(x) = \nu_1'(y)$ by (\ref{pid:sell2}).

\subsubsection{Right Selection}
Let $\alpha$ be of the form $\selr{x=y}(\alpha_1)$.  Recall the definitions:
\begin{itemize}
	\item $\In(\alpha) = \begin{cases} \In(\alpha_1) & x \eqv y \\ 
	\In(\alpha_1) \cup (\{x,y\} - \Out(\alpha_1)) & \text{otherwise}  \end{cases}$
	\item $\Out(\alpha) = \Out(\alpha_1)$
\end{itemize}
We proceed to verify that $\alpha$ satisfies alternative input-output determinacy. 
Clearly, the property holds trivially by induction in case of $x \eqv y$.  Indeed, 
in this case, $\sem{\alpha} = \sem{\alpha_1}$. In the other case when $x \neqv y$, 
if $(\nu_1, \nu_2) \in \sem{\alpha}$, then we know that:
\begin{enumerate}
	\item\label{pid:selr1} $(\nu_1, \nu_2) \in \sem{\alpha_1}$;
	\item\label{pid:selr2} $\nu_2(x) = \nu_2(y)$;
\end{enumerate}
Let $\nu_1'$ be a valuation such that $\nu_1' = \nu_1$ on 
$\compl{\Out(\alpha)} \cup \In(\alpha)$.  In all cases, 
$\Out(\alpha) \subseteq \Out(\alpha_1)$. Hence, 
$\compl{\Out(\alpha_1)} \subseteq \compl{\Out(\alpha)}$.  Moreover, in all cases 
$\In(\alpha_1) \subseteq \In(\alpha)$.  Thus, $\nu_1' = \nu_1$ on 
$\compl{\Out(\alpha_1)} \cup \In(\alpha_1)$.  By induction from (\ref{pid:selr1}) 
and Lemma~\ref{lem:eqvInDetDef}, we know that $(\nu_1', \nu_2) \in \sem{\alpha_1}$. 
Together with (\ref{pid:selr2}), we know that $(\nu_1', \nu_2) \in \sem{\alpha}$.

\subsubsection{Left-to-Right Selection}
Let $\alpha$ be of the form $\sellr{x=y}(\alpha_1)$.  Recall the definitions:
\begin{itemize}
	\item $\In(\alpha) = 
	\begin{cases} \In(\alpha_1) & x \eqv y \text{ and } y \not \in \Out(\alpha_1) \\ 
	\In(\alpha_1) \cup \{x,y\} & x \neqv y \text{ and } y \not \in \Out(\alpha_1) \\ 
	\In(\alpha_1) \cup \{x\} & \text{otherwise}  \end{cases}$
	\item $\Out(\alpha) = \begin{cases} \Out(\alpha_1) \setminus \{x\} & x \eqv y \\ 
	\Out(\alpha_1) & \text{otherwise}  \end{cases}$
\end{itemize}
We proceed to verify that $\alpha$ satisfies alternative input-output determinacy. 
Clearly, $\sem{\alpha} = \sem{\alpha_1}$ in case of $x \eqv y$ and $y \not \in \Out(\alpha_1)$. 
Hence, the property holds trivially by induction.  In the other cases, if 
$(\nu_1, \nu_2) \in \sem{\alpha}$, then we know that:
\begin{enumerate}
	\item\label{pid:sellr1} $(\nu_1, \nu_2) \in \sem{\alpha_1}$;
	\item\label{pid:sellr2} $\nu_1(x) = \nu_2(y)$.
\end{enumerate}
Let $\nu_1'$ be a valuation such that $\nu_1' = \nu_1$ on 
$\compl{\Out(\alpha)} \cup \In(\alpha)$. In all cases, $\Out(\alpha) \subseteq \Out(\alpha_1)$. 
Hence, $\compl{\Out(\alpha_1)} \subseteq \compl{\Out(\alpha)}$.  Moreover, in all cases 
$\In(\alpha_1) \subseteq \In(\alpha)$.  Thus, $\nu_1' = \nu_1$ on 
$\compl{\Out(\alpha_1)} \cup \In(\alpha_1)$.  By induction from (\ref{pid:sellr1}) and 
Lemma~\ref{lem:eqvInDetDef}, we know that $(\nu_1', \nu_2) \in \sem{\alpha_1}$.
All that remains to be shown is that $\nu_1'(x)=\nu_2(y)$.  Since $x \in \In(\alpha)$ and 
$\nu_1' = \nu_1$ on $\compl{\Out(\alpha)} \cup \In(\alpha)$, we have $\nu_1'(x) = \nu_1(x)$. 
Together with (\ref{pid:sellr2}), we get that $\nu_1'(x) = \nu_2(y)$ as desired, 
whence $(\nu_1', \nu_2) \in \sem{\alpha}$.

\subsubsection{Right Cylindrification}
Let $\alpha$ be of the form $\cylr{x}(\alpha_1)$.  Recall the definitions:
\begin{itemize}
	\item $\In(\alpha) = \In(\alpha_1)$.
	\item $\Out(\alpha) = \Out(\alpha_1) \cup \{x\}$.
	\item $\fvars(\alpha) = \fvars(\alpha_1) \cup \{x\}$.
\end{itemize}
We proceed to verify that $\alpha$ satisfies alternative input-output determinacy. 
If $(\nu_1, \nu_2) \in \sem{\alpha}$, then there exists a valuation $\nu'_2$ such that:
\begin{enumerate}
	\item\label{pid:cylr1} $(\nu_1, \nu'_2) \in \sem{\alpha_1}$;
	\item\label{pid:cylr2} $\nu'_2 = \nu_2$ outside $\{x\}$.
\end{enumerate}
Now, let $\nu_1'$ be a valuation such that $\nu_1' = \nu_1$ on 
$\compl{\Out(\alpha)} \cup \In(\alpha)$.  We now split the proof in two cases:
\begin{itemize}
	\item Suppose that $x \in \fvars(\alpha_1)$.  Then, $\fvars(\alpha)=\fvars(\alpha_1)$. 
	Thus, we know that $\nu_1' = \nu_1$ on $\overline{\fvars(\alpha_1)} \cup \In(\alpha_1)$,
	whence $\nu_1' = \nu_1$ on $\overline{\Out(\alpha_1)} \cup \In(\alpha_1)$.  Thus, 
	by induction from (\ref{pid:cylr1}) and Lemma~\ref{lem:eqvInDetDef}, we know that 
	$(\nu_1', \nu_2') \in \sem{\alpha_1}$.  Hence, $(\nu_1', \nu_2) \in \sem{\alpha}$.

	\item Conversely, suppose that $x \not \in \fvars(\alpha_1)$. We have 
	$\overline{\Out(\alpha)} \cup I(\alpha) = 
	(\overline{\Out(\alpha_1)} \cup \In(\alpha_1)) \setminus \{x\}$.  Thus, 
	$\nu_1'[\restr{\nu_1}{\{x\}}] = \nu_1$ on $\overline{\Out(\alpha_1)} \cup \In(\alpha_1)$ 
	since $\nu_1' = \nu_1$ on $\compl{\Out(\alpha)} \cup \In(\alpha)$.  By induction and 
	Lemma~\ref{lem:eqvInDetDef}, then $(\nu_1'[\restr{\nu_1}{\{x\}}], \nu'_2) \in \sem{\alpha_1}$. 
	By syntactic \ficp{}, we know that 
	$(\nu_1'[\restr{\nu_1}{\{x\}}][\restr{\nu_1'}{\{x\}}], \nu'_2[\restr{\nu_1'}{\{x\}}]) 
	\in \sem{\alpha_1}$ since $x \not \in \fvars(\alpha_1)$. 
	Clearly, $\nu_1'[\restr{\nu_1}{\{x\}}][\restr{\nu_1'}{\{x\}}]=\nu_1'$, whence 
	$(\nu_1', \nu'_2[\restr{\nu_1'}{\{x\}}]) \in \sem{\alpha_1}$.  Consequently, 
	$(\nu_1',\nu_2) \in \sem{\alpha}$ as desired.
\end{itemize}

\subsubsection{Left Cylindrification}
Let $\alpha$ be of the form $\cyll{x}(\alpha_1)$.  Recall the definitions:
\begin{itemize}
	\item $\In(\alpha) = \In(\alpha_1) - \{x\}$.
	\item $\Out(\alpha) = \Out(\alpha_1) \cup \{x\}$.
\end{itemize}
We proceed to verify that $\alpha$ satisfies alternative input-output determinacy. 
If $(\nu_1, \nu_2) \in \sem{\alpha}$, then there exists a valuation $\nu'_1$ such that:
\begin{enumerate}[label=(\roman*)]
	\item\label{pid:cyll1} $\nu'_1 = \nu_1$ outside $\{x\}$;
	\item\label{pid:cyll2} $(\nu'_1, \nu_2) \in \sem{\alpha_1}$.
\end{enumerate}
Now, let $\nu$ be a valuation such that
\begin{equation}
\nu = \nu_1 \text{ on } \compl{\Out(\alpha)} \cup \In(\alpha). \label{pid:lc1}
\end{equation}
Clearly, $\nu[\restr{\nu_1'}{\{x\}}]=\nu$ outside $\{x\}$.  Since 
$x \not \in \compl{\Out(\alpha)} \cup \In(\alpha)$, we also know that 
$\nu[\restr{\nu_1'}{\{x\}}]=\nu$ on $\compl{\Out(\alpha)} \cup \In(\alpha)$. 
Combining this with (\ref{pid:lc1}), we get that $\nu[\restr{\nu_1'}{\{x\}}]=\nu_1'$ on 
$\compl{\Out(\alpha)} \cup \In(\alpha)\cup \{x\}$.  Clearly, 
$\compl{\Out(\alpha)} \cup \In(\alpha)\cup \{x\} \supseteq 
\compl{\Out(\alpha_1)} \cup \In(\alpha_1)$, whence $\nu[\restr{\nu_1'}{\{x\}}]=\nu_1'$ 
on $\compl{\Out(\alpha_1)} \cup \In(\alpha_1)$.  By induction from \ref{pid:cyll2} and 
Lemma~\ref{lem:eqvInDetDef}, we get that $(\nu[\restr{\nu_1'}{\{x\}}], \nu_2) \in \sem{\alpha_1}$, 
whence also $(\nu,\nu_2) \in \sem{\alpha_1}$.

\ignore{

\section{Precision Theorem Proof}\label{app:precision}
In this section, we prove Theorem~\ref{thm:precision}.  
By soundness and Proposition~\ref{prop:inputApp}, it suffices to prove 
$\synO(\alpha) \subseteq \semO(\alpha)$ and 
$\synI(\alpha) \subseteq \semI(\alpha)$ for every LIF expression $\alpha$.
For the latter inequality, it will be convenient to use the following 
equivalent definition of semantic input variables:
\begin{definition}[Semantic Inputs]
$x$ is a semantic input of $\alpha$, if there is an \interpretation $\inst$, a 
value $d \in \dom$, and $(\nu_1, \nu_2) \in \sem{\alpha}$ such that there is no 
valuation $\nu_2'$ that agrees with $\nu_2$ on the outputs of $\alpha$ such that 
$(\nu_1[x:d], \nu_2') \in \sem{\alpha}$.
\end{definition}
In the proof of the Precision Theorem, we will often make use of the following 
two technical lemmas.
\begin{lemma}\label{lem:null_id}
Let $M$ be a nullary relation name and let $\inst$ be an \interpretation where 
$\inst(M)$ is nonempty. Then $\sem{M()}$ consists of all identical pairs of 
valuations.
\end{lemma}
\begin{proof}
The proof follows directly from the semantics of atomic modules.
\end{proof}
\begin{lemma}\label{lem:reduction}
Suppose $\alpha_1 = M_1(\bar{x_1}; \bar{y_1})$ and 
$\alpha_2 = M_2(\bar{x_2}; \bar{y_2})$ where $M_1 \neq M_2$.  Let $\alpha$ be 
either $\alpha_1 \cup \alpha_2$ or $\alpha_1 - \alpha_2$.  Assume that 
$\synO(\alpha_i) \subseteq \semO(\alpha_i)$ and 
$\synI(\alpha_i) \subseteq \semI(\alpha_i)$ for $i = 1, 2$. 
Let $j \neq k \in \{1, 2\}$.  If $\semm{\alpha}{\inst} = \semm{\alpha_k}{\inst}$ 
for any \interpretation $\inst$ where $\inst(M_j)$ is empty, then 
$\synO(\alpha_k) \subseteq \semO(\alpha)$ and 
$\synI(\alpha_k) \subseteq \semI(\alpha)$.
\end{lemma}
\begin{proof}
First, we verify that $\synO(\alpha_k) \subseteq \semO(\alpha)$.  Let 
$v \in \synO(\alpha_k)$.  Since $\synO(\alpha_k) \subseteq \semO(\alpha_k)$, 
then $v \in \semO(\alpha_k)$.  By definition, we know that there is an 
\interpretation $\inst'$ and $(\nu_1, \nu_2) \in \semm{\alpha_k}{\inst'}$ such 
that $\nu_1(v) \neq \nu_2(v)$.  Take $\inst''$ to be the interpretation where 
$\inst''(M) = \inst'(M)$ for any $M \neq M_j$ while $\inst''(M_j)$ is empty.  
Clearly, $(\nu_1, \nu_2) \in \semm{\alpha}{\inst''}$, whence, $M_j \neq M_k$ 
and $\semm{\alpha}{\inst''} = \semm{\alpha_k}{\inst'}$.  It follows then that 
$v \in \semO(\alpha)$.

Similarly, we proceed to verify $\synI(\alpha_k) \subseteq \semI(\alpha)$.  Let 
$v \in \synI(\alpha_k)$. Since $\synI(\alpha_k) \subseteq \semI(\alpha_k)$, then 
$v \in \semI(\alpha_k)$. By definition, we know that there is an \interpretation 
$\inst'$, $(\nu_1, \nu_2) \in \semm{\alpha_k}{\inst'}$, and 
$\nu_1'(v) \neq \nu_1(v)$ such that 
$(\nu_1', \nu_2') \not \in \semm{\alpha_k}{\inst'}$ for every valuation $\nu_2'$ 
that agrees with $\nu_2$ on $\semO(\alpha_k)$.

Take $\inst''$ to be the interpretation where $\inst''(M) = \inst'(M)$ for any 
$M \neq M_j$ while $\inst''(M_j)$ is empty. Clearly, 
$\semm{\alpha}{\inst''} = \semm{\alpha_k}{\inst'}$, whence, $M_j \neq M_k$. 
Therefore, $\semO(\alpha_k) \subseteq \semO(\alpha)$.  Hence, $v \in \semI(\alpha)$. 
Indeed, $(\nu_1, \nu_2) \in \semm{\alpha}{\inst''}$ and for any valuation $\nu_2'$ 
if $\nu_2'$ agrees with $\nu_2$ on $\semO(\alpha)$, then $\nu_2'$ agrees with $\nu_2$ 
on $\semO(\alpha_k)$.
\end{proof}
The proof of Theorem~\ref{thm:precision} is done by extensive case analysis. 
Intuitively, for each of the different operations, and every variable 
$z\in \synO(\alpha)$, we construct an \interpretation $\inst$ such that $z$ 
is not inertial in $\sem{\alpha}$ and thus $z \in \semO(\alpha)$.  Similarly, 
for every variable $z \in \synI(\alpha)$, we construct an \interpretation 
$\inst$ as a witness of the fact that $\V \setminus \{z\}$ does not determine 
$\semNoI{\alpha}$ on $\semO(\alpha)$ and thus that $z\in \semI(\alpha)$.  
In the proof, we often remove the superscript from $\synI$ and $\synO$ and refer 
to them by $\In$ and $\Out$, respectively.

\subsection{Atomic Modules}
Let $\alpha$ be of the form $\alpha_1$, where $\alpha_1$ is $M(\bar{x}; \bar{y})$. 
Recall the definition:
\begin{itemize}
	\item $\synO(\alpha) = Y$, where $Y$ are the variables in $\bar{y}$.
	\item $\synI(\alpha) = X$, where $Y$ are the variables in $\bar{x}$.
\end{itemize}
We first proceed to verify $\synO(\alpha) \subseteq \semO(\alpha)$.  Let $v \in Y$. 
Consider an \interpretation $\inst$ where 
\[\inst(M) = \{(1, \ldots, 1; 2, \ldots, 2)\}.\]
Let $\nu_1$ be the valuation that is $1$ everywhere. Also let $\nu_2$ be the valuation 
that is $2$ on $Y$ and $1$ everywhere else. Clearly, $(\nu_1, \nu_2) \in \sem{\alpha}$ 
since $\nu_1(\bar{x}) \cdot \nu_2(\bar{y}) \in \inst(M)$ and $\nu_1$ agrees with $\nu_2$ 
outside $Y$. Hence, $v \in \semO(\alpha)$ since $\nu_1(v) \neq \nu_2(v)$.

Now we proceed to verify $\synI(\alpha) \subseteq \semI(\alpha)$.  Let $v \in X$.  
Consider the same \interpretation $\inst$ and the same valuations $\nu_1$ and $\nu_2$ 
as discussed above. We already established that $(\nu_1, \nu_2) \in \sem{\alpha}$. 
Take $\nu_1' := \nu_1[v:2]$. We establish that $v \in \semI(\alpha)$ by arguing that 
there is no $\nu_2'$ for which $(\nu_1', \nu_2') \in \sem{\alpha}$.  Indeed, this is 
true since $v \in X$. Consequently, $v \in \semI(\alpha)$.

\subsection{Identity}
Let $\alpha$ be of the form $\id$. We recall that $\synI(\alpha)$ and $\synO(\alpha)$ 
are both empty. Hence, $\synO(\alpha) \subseteq \semO(\alpha)$ and 
$\synI(\alpha) \subseteq \semI(\alpha)$ is trivial.

\subsection{Union}
Let $\alpha$ be of the form $\alpha_1 \cup \alpha_2$, where $\alpha_1$ is 
$M_1(\bar{x_1}; \bar{y_1})$ and $\alpha_2$ is $M_2(\bar{x_2}; \bar{y_2})$.  We 
distinguish different cases based on whether $M_1$ or $M_2$ is nullary.  If $M_1$ 
and $M_2$ are both nullary there is nothing to prove.

\paragraph{$M_1$ is nullary, $M_2$ is not} Clearly, $\sem{\alpha} = \sem{\alpha_2}$ 
for any \interpretation $\inst$ where $\inst(M_1)$ is empty. By induction and 
Lemma~\ref{lem:reduction}, we establish that $\synO(\alpha_2) \subseteq \semO(\alpha)$ 
and $\synI(\alpha_2) \subseteq \semI(\alpha)$. Since $\inn1$ and $\outt1$ are both empty, 
then we observe that:
\begin{itemize}
	\item $\synO(\alpha) = \outt2$.
	\item $\synI(\alpha) = \inn2 \cup \outt2$.
\end{itemize}
Thus, $\outt2 \subseteq \semO(\alpha)$ and $\inn2 \subseteq \semI(\alpha)$ is trivial.

We proceed to verify $\outt2 - \inn2 \subseteq \semI(\alpha)$. Let $v \in \outt2 - \inn2$. 
Consider the \interpretation $\inst$ where $\inst(M_1)$ is not empty and 
\[
\inst(M_2) = \{(1, \ldots, 1; 2, \ldots, 2) \}
\]
Let $\nu_1$ be the valuation that is $1$ everywhere.  Clearly, 
$(\nu_1, \nu_1) \in \sem{\alpha}$ since $(\nu_1, \nu_1) \in \sem{\alpha_1}$ by 
Lemma~\ref{lem:null_id}. Take $\nu_1' := \nu_1[v:2]$.  We establish that 
$v \in \semI(\alpha)$ by arguing that there for every valuation $\nu_2'$ for 
which $(\nu_1', \nu_2') \in \sem{\alpha}$, we show that $\nu_2'$ and $\nu_1$ 
disagrees on $\semO(\alpha)$. Thereto, suppose $(\nu_1', \nu_2') \in \sem{\alpha}$. 
In particular, $(\nu_1', \nu_2') \in \sem{\alpha_1}$, whence $\nu_2' = \nu_1'$ by 
Lemma~\ref{lem:null_id}.  Indeed, $v \in \outt2$, $\outt2 \subseteq \semO(\alpha)$, 
and $\nu_2'(v) = 2$ but $\nu_1(v) = 1$. Otherwise, $(\nu_1', \nu_2') \in \sem{\alpha_2}$. 
However, since $v \in \outt2$, then $\nu_2'(v) = 2$ as well.  Therefore, there is no 
$\nu_2'$ that agrees with $\nu_1$ on $\semO(\alpha)$ and $(\nu_1', \nu_2') \in \sem{\alpha}$ 
at the same time. We conclude that $v \in \semI(\alpha)$.

\paragraph{$M_2$ is nullary, $M_1$ is not} This case is symmetric to the previous one.

\paragraph{Neither $M_1$ nor $M_2$ is nullary}
Recall the definitions:
\begin{itemize}
	\item $\synO(\alpha) = \outt 1 \cup \outt 2$.
	\item $\synI(\alpha) = \inn 1 \cup \inn 2 \cup (\outt 1 \symdif \outt 2)$.
\end{itemize}
We first proceed to verify that $\synO(\alpha) \subseteq \semO(\alpha)$.  By induction, 
$\synO(\alpha_i) = \semO(\alpha_i)$ for $i = 1$ or $2$.  Consequently, if $v \in \outt i$, 
then there is an \interpretation $\inst_i$, $(\nu_1, \nu_2) \in \sem{\alpha_i}$ such that 
$\nu_1(v) \neq \nu_2(v)$.  Indeed, $v \in \semO(\alpha)$, whence, 
$(\sem{\alpha_1} \cup \sem{\alpha_2}) \subseteq \sem{\alpha}$ for any \interpretation $\inst$.

We then proceed to verify that $\synI(\alpha) \subseteq \semO(\alpha)$.  The proof has 
four possibilities. Each case is discussed in one of the following subsections.

\subsection*{$v \in \inn1$}
If $v \in \inn1$ and $M_1 \neq M_2$, it is clear that $\sem{\alpha} = \sem{\alpha_1}$
for any \interpretation $\inst$ where $\inst(M_2)$ is empty.
By Lemma~\ref{lem:reduction} and by induction, we easily establish that 
$v \in \semI(\alpha)$.

\subsection*{$v \in \inn2$} This case is symmetric to the previous one.

\subsection*{$v \in \outt 1 - (\outt 2 \cup \inn 1 \cup \inn 2)$}
If $v \in \outt 1 - (\outt 2 \cup \inn 1 \cup \inn 2)$, then consider the 
\interpretation $\inst$ such that $\inst(M_1)  = \{(1, \ldots, 1; 1, \ldots, 1)\}$ 
and $\inst(M_2)  = \{(1, \ldots, 1; 1, \ldots, 1)\}$. Let $\nu_1$ be the valuation that 
is $2$ on $v$ and $1$ elsewhere. Clearly, $(\nu_1, \nu_1) \in \sem{\alpha}$, whence 
$(\nu_1, \nu_1) \in \sem{\alpha_2}$.	Now take $\nu_1' := \nu_1[v:1]$. If we can show 
that $\nu_2'$ does not agree with $\nu_1$ on $\semO(\alpha)$ for any valuation $\nu_2'$ 
such that $(\nu_1', \nu_2') \in \sem{\alpha}$, we are done. Thereto, suppose that there 
exists a valuation $\nu_2'$ such that $(\nu_1', \nu_2') \in \sem{\alpha}$.
\begin{itemize}
	\item In particular, if $(\nu_1', \nu_2') \in \sem{\alpha_1}$, then $\nu_2'(v) = 1$ 
	since $v \in \outt1$.
	\item Otherwise, if $(\nu_1', \nu_2') \in \sem{\alpha_2}$, then 
	$\nu_2'(v) = \nu_1'(v) = 1$ since $v \not \in (\inn2 \cup \outt2)$.
\end{itemize}
In both cases, $\nu_2'$ have to be $1$ on $v$ which disagrees with $\nu_1$ on $v$.  Since 
$v \in \outt1$ and $\outt1 \subseteq \semO(\alpha)$, then $\nu_2'$ does not agree with 
$\nu_1$ on $\semO(\alpha)$ as desired. We conclude that $v \in \semI(\alpha)$.

\subsection*{$v \in \outt 2 - (\outt 1 \cup \inn 1 \cup \inn 2)$} This case is symmetric 
to the previous one.

\subsection{Intersection}
Let $\alpha$ be of the form $\alpha_1 \cap \alpha_2$, where $\alpha_1$ is 
$M_1(\bar{x_1}; \bar{y_1})$ and $\alpha_2$ is $M_2(\bar{x_2}; \bar{y_2})$.  We distinguish 
different cases based on whether $M_1$ or $M_2$ is nullary. If $M_1$ and $M_2$ are both 
nullary there is nothing to prove.

\paragraph{$M_1$ is nullary, $M_2$ is not}
In this case, $\inn1$ and $\outt1$ are both empty, then we observe that:
\begin{itemize}
	\item $\synO(\alpha) = \emptyset$.
	\item $\synI(\alpha) = \inn2 \cup \outt2$.
\end{itemize}
It is trivial to verify $\synO(\alpha) \subseteq \semO(\alpha)$ since $\synO(\alpha)$ 
is empty.

We proceed to verify $\synI(\alpha) \subseteq \semI(\alpha)$. Let $v \in \inn2 \cup \outt2$. 
Consider an \interpretation $\inst$ where $\inst(M_1)$ is not empty and 
$\inst(M_2) = \{(1, \ldots, 1; 1, \ldots, 1)\}$.  Let $\nu_1$ be the valuation that is 
$1$ everywhere.  Clearly, $(\nu_1, \nu_1) \in \sem{\alpha}$ since 
$(\nu_1, \nu_1) \in \sem{\alpha_1}$ and $(\nu_1, \nu_1) \in \sem{\alpha_2}$.  Take 
$\nu_1' := \nu_1[v:2]$. We establish that $v \in \semI(\alpha)$ by arguing that there 
is no valuation $\nu_2'$ for which $(\nu_1', \nu_2') \in \sem{\alpha}$.  Thereto, suppose 
$(\nu_1', \nu_2') \in \sem{\alpha}$.  In particular, when $v \in \inn2$, it is clear that 
$(\nu_1', \nu_2') \not \in \sem{\alpha_1}$.  In the other case when $v \in \outt2 - \inn2$,
there is no $\nu_2'$ such that $(\nu_1', \nu_2')$ belongs to both $\sem{\alpha_1}$ and 
$\sem{\alpha_2}$. Indeed, the value for $\nu_2'(v)$ will never be agreed upon by $\alpha_1$ 
and $\alpha_2$.  Hence, $(\nu_1', \nu_2') \not \in \sem{\alpha}$ as desired.  We conclude 
that $v \in \semI(\alpha)$.

\paragraph{$M_2$ is nullary, $M_1$ is not} This case is symmetric to the previous one.

\paragraph{Neither $M_1$ nor $M_2$ is nullary}
Recall the definitions:
\begin{itemize}
	\item $\synO(\alpha) = \outt 1 \cap \outt 2$.
	\item $\synI(\alpha) = \inn 1 \cup \inn 2 \cup (\outt 1 \symdif \outt 2)$.
\end{itemize}

We first proceed to verify $\synO(\alpha) \subseteq \semO(\alpha)$.  Let 
$v \in \outt1 \cap \outt2$.  Consider an \interpretation $\inst$ such that 
\[
\inst(M_1) = \{(1, \ldots, 1; o_1, \ldots, o_m)\}
\]
, where $o_1, \ldots, o_m$ are all the combinations of $\{1, 2\}$.  Similarly, 
$\inst(M_2) = \{(1, \ldots, 1; o_1, \ldots, o_n)\}$, where $o_1, \ldots, o_n$ are 
all the combinations of $\{1, 2\}$.

Let $\nu_1$ be the valuation that is $1$ everywhere. Also let $\nu_2$ be the valuation 
that is $2$ on $\outt1 \cap \outt2$ and $1$ elsewhere.  Clearly, 
$(\nu_1, \nu_2) \in \sem{\alpha}$, whence $(\nu_1, \nu_2) \in \sem{\alpha_1}$ and 
$(\nu_1, \nu_2) \in \sem{\alpha_2}$.  Hence, $v \in \semO(\alpha)$.  Indeed, 
$\nu_2(v) \neq \nu_1(v)$ for $v \in \synO(\alpha)$.

We then proceed to verify $\synI(\alpha) \subseteq \semO(\alpha)$.  Let 
$v \in \inn1 \cup \inn2 \cup (\outt1 \symdif \outt2)$.  Consider an \interpretation 
$\inst$ where 
\[
\inst(M_1) = \{(1, \ldots, 1; 1, \ldots, 1)\}
\] 
Similarly, $\inst(M_2) = \{(1, \ldots, 1; 1, \ldots, 1)\}$.  Let $\nu_1$ be the valuation 
that is $1$ everywhere. Clearly, $(\nu_1, \nu_1) \in \sem{\alpha}$, whence 
$(\nu_1, \nu_1) \in \sem{\alpha_1}$ and $(\nu_1, \nu_1) \in \sem{\alpha_2}$.  Take 
$\nu_1' := \nu_1[v:2]$. We establish that $v \in \semI(\alpha)$ by arguing that there 
is no valuation $\nu_2'$ such that $(\nu_1', \nu_2') \in \sem{\alpha}$.  Indeed, this is 
clear when $v \in \inn1$ or $v \in \inn2$.  On the other hand, when 
$v \in (\outt1 \symdif \outt2) - (\inn1 \cup \inn2)$, we have $\nu_2'$ for which 
$(\nu_1', \nu_2') \in \sem{\alpha}$ whence $(\nu_1', \nu_2') \in \sem{\alpha_1}$ and 
$(\nu_1', \nu_2') \in \sem{\alpha_2}$. This is not possible since $v$ belongs to either 
$\outt1$ or $\outt2$, but not both. Hence, the value for $\nu_2'(v)$ will never be 
agreed upon by $\alpha_1$ and $\alpha_2$. We conclude that $v \in \semI(\alpha)$.

\subsection{Difference}
Let $\alpha$ be of the form $\alpha_1 \setminus \alpha_2$, where $\alpha_1$ is 
$M_1(\bar{x_1}; \bar{y_1})$ and $\alpha_2$ is $M_2(\bar{x_2}; \bar{y_2})$.  We 
distinguish different cases based on whether $M_1$ or $M_2$ is nullary.  If $M_1$ 
and $M_2$ are both nullary there is nothing to prove.

\paragraph{$M_1$ is nullary, $M_2$ is not} In this case, $\inn1$ and $\outt 1$ are 
empty. In particular, $\synO(\alpha)$ is empty, so $\synO(\alpha) \subseteq \semO(\alpha)$ 
is trivial.

We proceed to verify $\synI(\alpha) \subseteq \semI(\alpha)$.  Observe that
\[
\synI(\alpha) = \inn2 \cup \outt2.
\]
Let $v \in \synI(\alpha)$. Consider the \interpretation $\inst$ where $\inst(M_1)$ is 
not empty and $\inst(M_2) = \{(1, \ldots, 1; 1, \ldots, 1) \}$.
Let $\nu_1$ be the valuation that is $2$ on $v$ and $1$ elsewhere.  Clearly, 
$(\nu_1, \nu_1) \in \sem{\alpha}$. Take $\nu_1' := \nu_1[v:1]$.  We establish that 
$v \in \semI(\alpha)$ by arguing that there is no $\nu_2'$ for which 
$(\nu_1', \nu_2') \in \sem{\alpha}$.  Thereto, suppose $(\nu_1', \nu_2') \in \sem{\alpha}$. 
In particular, $(\nu_1', \nu_2') \in \sem{\alpha_1}$, whence $\nu_2' = \nu_1'$ by Lemma 
\ref{lem:null_id}. However, $(\nu_1', \nu_1') \in \sem{\alpha_2}$, so 
$(\nu_1', \nu_2') \not \in \sem{\alpha}$ as desired.

\paragraph{$M_2$ is nullary, $M_1$ is not} Clearly, $\sem{\alpha} = \sem{\alpha_1}$ for any
\interpretation $\inst$ where $\inst(M_2)$ is empty. 
By induction and Lemma \ref{lem:reduction}, we establish that 
$\synO(\alpha_1) \subseteq \semO(\alpha)$ and $\synI(\alpha_1) \subseteq \semI(\alpha)$. 
Since $\inn2$ and $\outt2$ are both empty, then we observe that:
\begin{itemize}
	\item $\synO(\alpha) = \outt1$.
	\item $\synI(\alpha) = \inn1 \cup \outt1$.
\end{itemize}
Thus, $\outt1 \subseteq \semO(\alpha)$ and $\inn1 \subseteq \semI(\alpha)$ is trivial.

We proceed to verify $\outt1 - \inn1 \subseteq \semI(\alpha)$.  Let
$v \in \outt1 - \inn1$.  Consider the \interpretation $\inst$ where $\inst(M_2)$ is 
not empty and 
\[
\inst(M_1) = \{(1, \ldots, 1; 1, \ldots, 1) \}
\]
Let $\nu_1$ be the valuation that is $2$ on $v$ and $1$ elsewhere and let $\nu_2$ be 
the valuation that is $1$ everywhere. Clearly, $(\nu_1, \nu_2) \in \sem{\alpha}$.  Take 
$\nu_1' := \nu_1[v:1]$.  We establish that $v \in \semI(\alpha)$ by arguing that there 
is no $\nu_2'$ for which $(\nu_1', \nu_2') \in \sem{\alpha}$.  Thereto, suppose 
$(\nu_1', \nu_2') \in \sem{\alpha}$.  In particular, 
$(\nu_1', \nu_2') \in \sem{\alpha_1}$, whence $\nu_2' = \nu_1'$ from the structure of 
$\inst$. However, $(\nu_1', \nu_1') \in \sem{\alpha_2}$ by Lemma \ref{lem:null_id}, 
so $(\nu_1', \nu_2') \not \in \sem{\alpha}$ as desired.

\paragraph{Neither $M_1$ nor $M_2$ is nullary} Recall the definitions:
\begin{itemize}
	\item $\synO(\alpha) = \outt 1$.
	\item $\synI(\alpha) = \inn 1 \cup \inn 2 \cup (\outt 1 \symdif \outt 2)$.
\end{itemize}
The proof of $\synO(\alpha) \subseteq \semO(\alpha)$ is done together with the proof 
that $v \in \semI(\alpha)$ for every $v \in \inn1$.  Subsections for the other cases 
for $v \in \synI(\alpha)$ follow afterwards. Since $M_1 \neq M_2$, it is clear that 
$\sem{\alpha} = \sem{\alpha_1}$ for any \interpretation $\inst$ where $\inst(M_2)$ is 
empty.  By induction and Lemma \ref{lem:reduction}, we establish that 
$\synO(\alpha_1) \subseteq \semO(\alpha)$ and $\synI(\alpha_1) \subseteq \semI(\alpha)$. 
Thus, $\outt1 \subseteq \semO(\alpha)$ and $\inn1 \subseteq \semI(\alpha)$ is trivial.

\subsection*{$v \in \inn2 - \inn1$}
Let $v \in \inn2 - \inn1$. Consider an \interpretation $\inst$ where 
\[\inst(M_1) = \{(1, \ldots, 1; 1, \ldots, 1)\}\] 
and similarly $\inst(M_2)$ $= \{(1, \ldots, 1; 1, \ldots, 1)\}$. Let $\nu_1$ be the 
valuation that $2$ on $v$ and $1$ elsewhere. Also, let $\nu_2$ be the valuation that is 
$1$ on $\outt1$ and agrees with $\nu_1$ everywhere else. Clearly,
$(\nu_1, \nu_2) \in \sem{\alpha_1}$. Further, $(\nu_1, \nu_2) \not \in \sem{\alpha_2}$.
Indeed, since $v \in \inn2$ then $\nu_1$ should have the value of $1$ on $v$ for
$(\nu_1, \nu_2)$ to be in $\sem{\alpha_2}$. Take $\nu_1' := \nu_1[v:1]$. We establish 
that $v \in \semI(\alpha)$ by arguing that there is no $\nu_2'$ for which
$(\nu_1', \nu_2') \in \sem{\alpha}$. Thereto, suppose that 
$(\nu_1', \nu_2') \in \sem{\alpha}$. Hence, $(\nu_1', \nu_2') \in \sem{\alpha_1}$ and 
$(\nu_1', \nu_2') \in \sem{\alpha_2}$. Indeed, $(\nu_1', \nu_2') \in \sem{\alpha_1}$ 
whence $\nu_1' = \nu_2'$. Clearly, $(\nu_1', \nu_1') \in \sem{\alpha_2}$ showing that
$(\nu_1', \nu_1') \not \in \sem{\alpha}$ as desired. Therefore, $v \in \semI(\alpha)$.

\subsection*{$v \in (\outt1 \symdif \outt2) - (\inn1 \cup \inn2)$}
Let $v \in (\outt1 \symdif \outt2) - (\inn1 \cup \inn2)$. Consider an \interpretation 
$\inst$ where $\inst(M_1) = \{(1, \ldots, 1; 1, \ldots, 1)\}$ and 
$\inst(M_2) = \{(1, \ldots, 1; 1, \ldots, 1)\}$. Let $\nu_1$ be the valuation that is $2$ 
on $v$ and $1$ elsewhere. Also let $\nu_2$ be the valuation that is $1$ on $\outt1$ and
agrees with $\nu_1$ everywhere else. Clearly, $(\nu_1, \nu_2) \in \sem{\alpha_1}$.
Furthermore, $(\nu_1, \nu_2) \not \in \sem{\alpha_2}$. In particular, when 
$v \in \outt1 - (\inn1 \cup \inn2 \cup \outt2)$, we know that $\nu_1(v) = 2$ and 
$\nu_2(v) = 1$. Since $v \not \in \outt2$, then $(\nu_1, \nu_2) \not \in \sem{\alpha_2}$.
In the other case, when $v \in \outt2 - (\inn1 \cup \inn2 \cup \outt1)$, we know that
$\nu_1(v) = \nu_2(v) = 2$ since $(\nu_1, \nu_2) \in \sem{\alpha_1}$. Consequently, 
$(\nu_1, \nu_2) \not \in \sem{\alpha_2}$ since $v \in \outt2$ but $\nu_2(v) = 2$. We 
verify that $(\nu_1, \nu_2) \in \sem{\alpha}$. Take $\nu_1' := \nu_1[v:1]$. We establish 
that $v \in \semI(\alpha)$ by arguing that there is no $\nu_2'$ for which 
$(\nu_1', \nu_2') \in \sem{\alpha}$. Thereto, suppose that 
$(\nu_1', \nu_2') \in \sem{\alpha}$. Hence, $(\nu_1', \nu_2') \in \sem{\alpha_1}$ and
$(\nu_1', \nu_2') \in \sem{\alpha_2}$. Indeed, $(\nu_1', \nu_2') \in \sem{\alpha_1}$
whence $\nu_1' = \nu_2'$. Clearly, $(\nu_1', \nu_1') \in \sem{\alpha_2}$ showing that
$(\nu_1', \nu_1') \not \in \sem{\alpha}$ as desired. Therefore, $v \in \semI(\alpha)$.

\subsection{Composition}
Let $\alpha$ be of the form $\alpha_1 \comp \alpha_2$, where $\alpha_1$ is 
$M_1(\bar{x_1}; \bar{y_1})$ and $\alpha_2$ is $M_2(\bar{x_2}; \bar{y_2})$. We
distinguish different cases based on whether $M_1$ or $M_2$ is nullary. If $M_1$ and 
$M_2$ are both nullary there is nothing to prove.

\paragraph{$M_1$ is nullary, $M_2$ is not} Clearly, $\sem{\alpha} = \sem{\alpha_2}$ for 
any \interpretation $\inst$ where $\inst(M_1)$ is not empty. In this case, $\inn1$ and 
$\outt1$ are both empty, then we observe that:
\begin{itemize}
	\item $\synO(\alpha) = \outt2$.
	\item $\synI(\alpha) = \inn2$.
\end{itemize}
First, we verify $\synO(\alpha) \subseteq \semO(\alpha)$. Let $v \in \outt2$. We know 
that $\synO(\alpha_2) \subseteq \semO(\alpha_2)$ by induction, then $v \in \semO(\alpha_2)$.
By definition, we know that there is an \interpretation $\inst'$ and
$(\nu_1, \nu_2) \in \semm{\alpha_2}{\inst'}$ such that $\nu_1(v) \neq \nu_2(v)$. Take
$\inst''$ to be the interpretation where $\inst''(M) = \inst'(M)$ for any $M \neq M_1$
while $\inst''(M_1)$ is not empty. Clearly, $(\nu_1, \nu_2) \in \semm{\alpha}{\inst''}$, 
whence, $M_1 \neq M_2$, $(\nu_1, \nu_1) \in \semm{\alpha_1}{\inst''}$ by 
Lemma \ref{lem:null_id}, and $\semm{\alpha}{\inst''} = \semm{\alpha_2}{\inst'}$. It 
follows then that $v \in \semO(\alpha)$.

Similarly, we proceed to verify $\synI(\alpha) \subseteq \semI(\alpha)$. Let $v \in \inn2$.
We know that $\synI(\alpha_2) \subseteq \semI(\alpha_2)$ by induction, then 
$v \in \semI(\alpha_2)$. By definition, we know that there is an \interpretation
$\inst'$, $(\nu_1, \nu_2) \in \semm{\alpha_2}{\inst'}$, and $\nu_1'(v) \neq \nu_1(v)$
such that $(\nu_1', \nu_2') \not \in \semm{\alpha_2}{\inst'}$ for every valuation 
$\nu_2'$ that agrees with $\nu_2$ on $\semO(\alpha_2)$.

Take $\inst''$ to be the interpretation where $\inst''(M) = \inst'(M)$ for any 
$M \neq M_1$ while $\inst''(M_1)$ is not empty. Clearly, 
$\semm{\alpha}{\inst''} = \semm{\alpha_2}{\inst'}$, whence, $M_1 \neq M_2$. Therefore, 
$\semO(\alpha_2) \subseteq \semO(\alpha)$. Hence, $v \in \semI(\alpha)$. Indeed, 
$(\nu_1, \nu_2) \in \semm{\alpha}{\inst''}$ and for any valuation $\nu_2'$ if $\nu_2'$ 
agrees with $\nu_2$ on $\semO(\alpha)$, then $\nu_2'$ agrees with $\nu_2$ on 
$\semO(\alpha_2)$.

\paragraph{$M_2$ is nullary, $M_1$ is not} Clearly, $\sem{\alpha} = \sem{\alpha_1}$ for 
any \interpretation $\inst$ where $\inst(M_2)$ is not empty. In this case, $\inn2$ and
$\outt2$ are both empty, then we observe that:
\begin{itemize}
	\item $\synO(\alpha) = \outt1$.
	\item $\synI(\alpha) = \inn1$.
\end{itemize}
First, we verify $\synO(\alpha) \subseteq \semO(\alpha)$. Let $v \in \outt1$. We know that
$\synO(\alpha_1) \subseteq \semO(\alpha_1)$ by induction, then $v \in \semO(\alpha_1)$. 
By definition, we know that there is an \interpretation $\inst'$ and 
$(\nu_1, \nu_2) \in \semm{\alpha_1}{\inst'}$ such that $\nu_1(v) \neq \nu_2(v)$. Take 
$\inst''$ to be the interpretation where $\inst''(M) = \inst'(M)$ for any $M \neq M_2$
while $\inst''(M_2)$ is not empty. Clearly, $(\nu_1, \nu_2) \in \semm{\alpha}{\inst''}$,
whence, $M_1 \neq M_2$, $(\nu_2, \nu_2) \in \semm{\alpha_2}{\inst''}$ by 
Lemma \ref{lem:null_id}, and $\semm{\alpha}{\inst''} = \semm{\alpha_1}{\inst'}$. It
follows then that $v \in \semO(\alpha)$.

Similarly, we proceed to verify $\synI(\alpha) \subseteq \semI(\alpha)$. Let $v \in \inn1$.
We know that $\synI(\alpha_1) \subseteq \semI(\alpha_1)$ by induction, then 
$v \in \semI(\alpha_1)$. By definition, we know that there is an \interpretation
$\inst'$, $(\nu_1, \nu_2) \in \semm{\alpha_1}{\inst'}$, and $\nu_1'(v) \neq \nu_1(v)$ 
such that $(\nu_1', \nu_2') \not \in \semm{\alpha_1}{\inst'}$ for every valuation
$\nu_2'$ that agrees with $\nu_2$ on $\semO(\alpha_1)$.

Take $\inst''$ to be the interpretation where $\inst''(M) = \inst'(M)$ for any 
$M \neq M_2$ while $\inst''(M_2)$ is not empty. Clearly,
$\semm{\alpha}{\inst''} = \semm{\alpha_1}{\inst'}$, whence, $M_1 \neq M_2$. Therefore,
$\semO(\alpha_1) \subseteq \semO(\alpha)$. Hence, $v \in \semI(\alpha)$. Indeed, 
$(\nu_1, \nu_2) \in \semm{\alpha}{\inst''}$ and for any valuation $\nu_2'$ if $\nu_2'$ 
agrees with $\nu_2$ on $\semO(\alpha)$, then $\nu_2'$ agrees with $\nu_2$ on 
$\semO(\alpha_1)$.

\paragraph{Neither $M_1$ nor $M_2$ is nullary} Recall the definitions:
\begin{itemize}
	\item $\synO(\alpha) = \outt1 \cup \outt2$.
	\item $\synI(\alpha) = \inn1 \cup (\inn2 - \outt1)$.
\end{itemize}
We first proceed to verify $\synO(\alpha) \subseteq \semO(\alpha)$.  Let 
$v \in \outt1 \cup \outt2$. Consider an \interpretation $\inst$ such that 
\[
\inst(M_1) = \{(1, \ldots, 1; 2, \ldots, 2), (i_1, \ldots, i_m; 3, \ldots, 3)\}
\]
, where $i_1, \ldots, i_m$ are all the combinations of $\{1, 2\}$. 
Similarly, 
\[
\inst(M_2) = \{(1, \ldots, 1; 2, \ldots, 2), (i_1, \ldots, i_n; 3, \ldots, 3)\}
\]
, where $i_1, \ldots, i_n$ are all the combinations of $\{1, 2\}$.

Let $\nu_1$ be the valuation that is $1$ everywhere. Also, let $\nu$ be the valuation 
that is $2$ on $\outt1$ and $1$ elsewhere. Clearly, $(\nu_1, \nu) \in \sem{\alpha_1}$. 
Let $\nu_2$ be the valuation that is $3$ on $\outt2$, $2$ on $\outt1 - \outt2$, and 
$1$ elsewhere. Clearly, $(\nu_1, \nu_2) \in \sem{\alpha}$, whence 
$(\nu, \nu_2) \in \sem{\alpha_2}$. Hence, $v \in \semO(\alpha)$. Indeed, 
$\nu_2(v) \neq \nu_1(v)$ for $v \in \synO(\alpha)$.

Now we proceed to verify $\synI(\alpha) \subseteq \semI(\alpha)$. Let 
$v \in \inn1 \cup (\inn2 - \outt1)$. Consider an \interpretation $\inst$ 
where $\inst(M_1) = \{(1, \ldots, 1; 1, \ldots, 1)\}$ and similarly 
$\inst(M_2) = \{(1, \ldots, 1; 1, \ldots, 1)\}$. Let $\nu_1$ be the valuation that 
is $1$ everywhere. Clearly, $(\nu_1, \nu_1) \in \sem{\alpha}$, whence 
$(\nu_1, \nu_1) \in \sem{\alpha_1}$ and $(\nu_1, \nu_1) \in \sem{\alpha_2}$.

Take $\nu_1' := \nu_1[v:2]$. We establish that $v \in \semI(\alpha)$ by arguing that 
there is no valuation $\nu_2'$ for which $(\nu_1', \nu_2') \in \sem{\alpha}$.
In particular, when $v \in \inn1$. Clearly, there is no $\nu_2'$ such that 
$(\nu_1', \nu_2') \in \sem{\alpha_1}$. On the other hand, when $v \in \inn2 - \outt1$. 
Clearly, $(\nu_1', \nu) \in \sem{\alpha_1}$, whence $\nu = \nu_1'$. However, there is 
no $\nu_2'$ such that $(\nu_1', \nu_2') \in \sem{\alpha_2}$. Thus, there is no $\nu_2'$ 
such that $(\nu_1', \nu_2') \not \in \sem{\alpha}$ as desired. We conclude that 
$v \in \semI(\alpha)$.

\subsection{Converse}
Let $\alpha$ be of the form $\conv{\alpha_1}$, where $\alpha_1 := M(\bar{x}; \bar{y})$. 
Recall the definitions:
\begin{itemize}
	\item $\synO(\alpha) = \outt 1$.
	\item $\synI(\alpha) = \inn 1 \cup \outt 1$.
\end{itemize}
We first proceed to verify $\synO(\alpha) \subseteq \semO(\alpha)$.  Let $v \in \outt1$.
Consider an \interpretation $\inst$ where 
\[
\inst(M) = \{(1, \ldots, 1; 2, \ldots, 2)\}
\] 
Let $\nu_1$ be the valuation that is $2$ on $\outt1$ and $1$ elsewhere.  Also 
let $\nu_2$ be the valuation that is $1$ everywhere. Clearly, 
$(\nu_1, \nu_2) \in \sem{\alpha}$ since $(\nu_2, \nu_1) \in \sem{\alpha_1}$. 
Therefore, $v \in \semO(\alpha)$ since $\nu_1(v) \neq \nu_2(v)$.

Now we proceed to verify $\synI(\alpha) \subseteq \semI(\alpha)$. Let 
$v \in \inn1 \cup \outt1$. Consider the same \interpretation $\inst$ and the same 
valuations $\nu_1$ and $\nu_2$. We established that $(\nu_1, \nu_2) \in \sem{\alpha}$.
Take $\nu_1'  := \nu_1[v:3]$. We establish that $v \in \semI(\alpha)$ by arguing that 
there is no $\nu_2'$ for which $(\nu_1', \nu_2') \in \sem{\alpha}$. Indeed, when 
$v \in \outt1$, then $\nu_1$ have to be $2$ on $v$. In the other case, when 
$v \in \inn1 - \outt1$, then $\nu_1$ have to be $1$ on $v$. Thus, there is no 
$\nu_2'$ for which $(\nu_1', \nu_2') \in \sem{\alpha}$ as desired. Consequently, 
$v \in \semI(\alpha)$.

\subsection{Left Cylindrification}
Let $\alpha$ be of the form $\cyll{x}({\alpha_1})$, where $\alpha_1 := M(\bar{x}; \bar{y})$. 
Recall the definitions:
\begin{itemize}
	\item $\synO(\alpha) = \outt 1 \cup \{x\}$.
	\item $\synI(\alpha) = \inn 1 - \{x\}$.
\end{itemize}
We first proceed to verify $\synO(\alpha) \subseteq \semO(\alpha)$. Let 
$v \in \outt1 \cup \{x\}$. Consider an \interpretation $\inst$ where 
$\inst(M) = \{(1, \ldots, 1; 2, \ldots, 2)\}$. Let $\nu_1$ be the valuation that is 
$3$ on $x$ and $1$ elsewhere. Also let $\nu_2$ be the valuation that is $2$ on $\outt1$
and $1$ everywhere else. Clearly, $(\nu_1, \nu_2) \in \sem{\alpha}$ since 
$(\nu_1[x:1], \nu_2) \in \sem{\alpha_1}$. Therefore, $v \in \semO(\alpha)$ since 
$\nu_1(v) \neq \nu_2(v)$.

Now we proceed to verify $\synI(\alpha) \subseteq \semI(\alpha)$. Let $v \in \inn1 - \{x\}$. 
Consider the same \interpretation $\inst$ and the same valuations $\nu_1$ and $\nu_2$. We 
established that $(\nu_1, \nu_2) \in \sem{\alpha}$.
Take $\nu_1'  := \nu_1[v:2]$. We establish that $v \in \semI(\alpha)$ by arguing that there is 
no $\nu_2'$ for which $(\nu_1', \nu_2') \in \sem{\alpha}$. Indeed, this is true since
$v \in \inn1 - \{x\}$. Consequently, $v \in \semI(\alpha)$.

\subsection{Right Cylindrification}
Let $\alpha$ be of the form $\cylr{x}({\alpha_1})$, where $\alpha_1 := M(\bar{x}; \bar{y})$. 
Recall the definitions:
\begin{itemize}
	\item $\synO(\alpha) = \outt 1 \cup \{x\}$.
	\item $\synI(\alpha) = \inn 1$.
\end{itemize}
We first proceed to verify $\synO(\alpha) \subseteq \semO(\alpha)$. Let 
$v \in \outt1 \cup \{x\}$. Consider an \interpretation $\inst$ where
$\inst(M) = \{(1, \ldots, 1; 2, \ldots, 2)\}$. Let $\nu_1$ be the valuation that 
is $1$ everywhere. Also let $\nu_2$ be the valuation that is $2$ on $\outt1$ and on $x$ 
and $1$ everywhere else. Clearly, $(\nu_1, \nu_2) \in \sem{\alpha}$ since either 
$(\nu_1, \nu_2[x:1]) \in \sem{\alpha_1}$ or $(\nu_1, \nu_2) \in \sem{\alpha_1}$. 
Therefore, $v \in \semO(\alpha)$ since $\nu_1(v) \neq \nu_2(v)$.

Now we proceed to verify $\synI(\alpha) \subseteq \semI(\alpha)$. Let $v \in \inn1$. 
Consider the same \interpretation $\inst$ and the same valuations $\nu_1$ and $\nu_2$.
We established that $(\nu_1, \nu_2) \in \sem{\alpha}$.
Take $\nu_1'  := \nu_1[v:2]$. We establish that $v \in \semI(\alpha)$ by arguing that 
there is no $\nu_2'$ for which $(\nu_1', \nu_2') \in \sem{\alpha}$. Indeed, this is true
since $v \in \inn1$. Consequently, $v \in \semI(\alpha)$.

\subsection{Left Selection}
Let $\alpha$ be of the form $\sell{x = y}({\alpha_1})$, where 
$\alpha_1 := M(\bar{u}; \bar{w})$, $\bar{u} = u_1, \ldots, u_n$, and 
$\bar{w} = w_1, \ldots, w_m$. We distinguish different cases based on whether
$x \eqv y$.

\paragraph{$x$ and $y$ are the same variable ($x \eqv y$)} Recall the definitions in this
case:
\begin{itemize}
	\item $\synO(\alpha) = \outt1$.
	\item $\synI(\alpha) = \inn1$.
\end{itemize}
We proceed to verify that $\synO(\alpha) \subseteq \semO(\alpha)$ and 
$\synI(\alpha) \subseteq \semI(\alpha)$. Indeed, this is true since 
$\sem{\alpha} = \sem{\alpha_1}$ for any \interpretation $\inst$ because of 
$x \eqv y$.

\paragraph{$x$ and $y$ are different variables ($x \neqv y$)} Recall the definitions 
in this case:
\begin{itemize}
	\item $\synO(\alpha) := \outt1$.
	\item $\synI(\alpha) := \inn1 \cup \{x,y\}$.
\end{itemize}
We first proceed to verify $\synO(\alpha) \subseteq \semO(\alpha)$. Let $v \in \outt1$.
Consider an \interpretation $\inst$ where
\[\inst(M) = \{(1, \ldots, 1; 2, \ldots, 2)\}.\] 
Let $\nu_1$ be the valuation that is $1$ everywhere. Also let $\nu_2$ be the valuation that 
is $2$ on $\outt1$ and $1$ everywhere else. Clearly, $(\nu_1, \nu_2) \in \sem{\alpha}$ 
since $(\nu_1, \nu_2) \in \sem{\alpha_1}$ and $\nu_1(x) = \nu_1(y)$. Therefore, 
$v \in \semO(\alpha)$ since $\nu_1(v) \neq \nu_2(v)$.

Now we proceed to verify $\synI(\alpha) \subseteq \semI(\alpha)$. Let 
$v \in \inn1 \cup \{x, y\}$. Consider an \interpretation $\inst$ where 
$\inst(M_1) = \{(1, \ldots, 1; 1, \ldots, 1)\}$. Let $\nu_1$ be the valuation that
is $1$ everywhere. Clearly, $(\nu_1, \nu_1) \in \sem{\alpha}$ since 
$(\nu_1, \nu_1) \in \sem{\alpha_1}$ and $\nu_1(x) = \nu_1(y)$. Take 
$\nu_1' := \nu_1[v:2]$. We establish that $v \in \semI(\alpha)$ by arguing that there 
is no $\nu_2'$ for which $(\nu_1', \nu_2') \in \sem{\alpha}$. In particular, when 
$v \in \inn1$, it is clear that there is no $\nu_2'$ such that
$(\nu_1', \nu_2') \in \sem{\alpha_1}$. In the other case, when $v$ is either $x$ or $y$,
there is no $\nu_2'$ such that $(\nu_1', \nu_2') \in \sem{\alpha}$. Indeed, this is true
since $x \neqv y$ and $\nu_1'(x) \neq \nu_1'(y)$. Consequently, $v \in \semI(\alpha)$.

\subsection{Right Selection}
Let $\alpha$ be of the form $\selr{x = y}({\alpha_1})$, where
$\alpha_1 := M(\bar{u}; \bar{w})$, $\bar{u} = u_1, \ldots, u_n$, and 
$\bar{w} = w_1, \ldots, w_m$. We distinguish different cases based on whether
$x \eqv y$.

\paragraph{$x$ and $y$ are the same variable ($x \eqv y$)} Recall the definitions in 
this case:
\begin{itemize}
	\item $\synO(\alpha) = \outt1$.
	\item $\synI(\alpha) = \inn1$.
\end{itemize}
We proceed to verify that $\synO(\alpha) \subseteq \semO(\alpha)$ and 
$\synI(\alpha) \subseteq \semI(\alpha)$. Indeed, this is true since 
$\sem{\alpha} = \sem{\alpha_1}$ for any \interpretation $\inst$ because of
$x \eqv y$.

\paragraph{$x$ and $y$ are different variables ($x \neqv y$)} Recall the definitions 
in this case:
\begin{itemize}
	\item $\synO(\alpha) := \outt1$.
	\item $\synI(\alpha) := \inn1 \cup (\{x,y\} - \outt1)$.
\end{itemize}
We first proceed to verify $\synO(\alpha) \subseteq \semO(\alpha)$. Let $v \in \outt1$.
Consider an \interpretation $\inst$ where 
\[\inst(M) = \{(i_1, \ldots, i_n; 2, \ldots, 2)\}\]
such that $i_j = 2$ if $u_j$ is either $x$ or $y$ and $u_j \not \in \outt1$, otherwise, 
$u_j = 1$. Let $\nu_1$ be the valuation that is $2$ on $x$ if $x \not \in \outt1$, $2$ on 
$y$ if $y \not \in \outt1$, and $1$ everywhere. Also let $\nu_2$ be the valuation that is 
$2$ on $\outt1$ and agrees with $\nu_1$ everywhere else. Clearly, 
$(\nu_1, \nu_2) \in \sem{\alpha}$ since $(\nu_1, \nu_2) \in \sem{\alpha_1}$ and
$\nu_2(x) = \nu_2(y)$. Therefore, $v \in \semO(\alpha)$ since $\nu_1(v) \neq \nu_2(v)$.

Now we proceed to verify $\synI(\alpha) \subseteq \semI(\alpha)$. Let
$v \in \inn1 \cup \{x, y\}$. Consider an \interpretation $\inst$ where 
$\inst(M_1) = \{(1, \ldots, 1; 1, \ldots, 1)\}$. Let $\nu_1$ be the valuation that
is $1$ everywhere. Clearly, $(\nu_1, \nu_1) \in \sem{\alpha}$ since 
$(\nu_1, \nu_1) \in \sem{\alpha_1}$ and $\nu_1(x) = \nu_1(y)$. Take $\nu_1'  := \nu_1[v:2]$.
We establish that $v \in \semI(\alpha)$ by arguing that there is no $\nu_2'$ for which
$(\nu_1', \nu_2') \in \sem{\alpha}$. In particular, when $v \in \inn1$, it is clear that
there is no $\nu_2'$ such that $(\nu_1', \nu_2') \in \sem{\alpha_1}$. Now we need to verify
the same when $v$ is $x$ or $y$ and $v \not \in \inn1$. Thereto, suppose 
$(\nu_1', \nu_2') \in \sem{\alpha}$. In the case of $v$ is $x$ and $x \not \in \inn1$, this
is only possible when $x \not \in \outt1$. Therefore, $\nu_2'(x) = \nu_1'(x) = 2$ but
$\nu_2'(y) = 1$ whether $y \in \outt1$ or not. Hence, 
$(\nu_1', \nu_2') \not \in \sem{\alpha}$ since $x \neqv y$ and $\nu_2'(x) \neq \nu_2'(y)$.
The case when $v$ is $y$ and $y \not \in \inn1$ is symmetric. Consequently,
$v \in \semI(\alpha)$.

\subsection{Left-to-Right Selection}
Let $\alpha$ be of the form $\sellr{x = y}({\alpha_1})$, where
$\alpha_1 := M(\bar{u}; \bar{w})$, $\bar{u} = u_1, \ldots, u_n$, and 
$\bar{w} = w_1, \ldots, w_m$. We distinguish different cases based on whether
$x \eqv y$ and $y \in \outt1$.

\paragraph{$x \eqv y$ and $y \in \outt1$} Recall the definitions in this case:
\begin{itemize}
	\item $\synO(\alpha) = \outt1 - \{x\}$.
	\item $\synI(\alpha) = \inn1 \cup \{x\}$.
\end{itemize}
In what follows, since $x \eqv y$ we will refer to both of them with $x$. We first 
proceed to verify $\synO(\alpha) \subseteq \semO(\alpha)$. Let $v \in \outt1 - \{x\}$. 
Consider an \interpretation $\inst$ such that 
$\inst(M) = \{(1, \ldots, 1; o_1, \ldots, o_m)\}$ where $o_j = 1$ if $w_j = y$, otherwise
$o_j = 2$ . Let $\nu_1$ be the valuation that is $1$ everywhere. Also let $\nu_2$ be
the valuation that is $2$ on $\outt1 - \{x\}$ and $1$ everywhere else. Clearly, 
$(\nu_1, \nu_2) \in \sem{\alpha}$ since $(\nu_1, \nu_2) \in \sem{\alpha_1}$ and 
$\nu_1(x) = \nu_2(x)$. Therefore, $v \in \semO(\alpha)$ since $\nu_1(v) \neq \nu_2(v)$.

Now we proceed to verify $\synI(\alpha) \subseteq \semI(\alpha)$. Let 
$v \in \inn1 \cup \{x\}$. Consider an \interpretation $\inst$ where 
$\inst(M) = \{(1, \ldots, 1; 1, \ldots, 1)\}$. Let $\nu_1$ be the valuation that is
$1$ everywhere. Clearly, $(\nu_1, \nu_1) \in \sem{\alpha}$ since
$(\nu_1, \nu_1) \in \sem{\alpha_1}$. Take $\nu_1'  := \nu_1[v:2]$. We establish that 
$v \in \semI(\alpha)$ by arguing that there is no $\nu_2'$ for which 
$(\nu_1', \nu_2') \in \sem{\alpha}$. Thereto, suppose that 
$(\nu_1', \nu_2') \in \sem{\alpha}$. In particular, when $v \in \inn1$ it is clear 
that $(\nu_1', \nu_2') \not \in \sem{\alpha_1}$. On the other hand, when $v = x$ and
$x \in \outt1 - \inn1$, clearly $\nu_1'(x) = 2 \neq 1 = \nu_2'(x)$. Consequently,
$v \in \semI(\alpha)$.

\paragraph{$x \eqv y$ and $y \not \in \outt1$} Recall the definitions in this case:
\begin{itemize}
	\item $\synO(\alpha) = \outt1$.
	\item $\synI(\alpha) = \inn1$.
\end{itemize}
In what follows, since $x \eqv y$ we will refer to both of them with $x$. We proceed to 
verify $\synO(\alpha) \subseteq \semO(\alpha)$ and $\synI(\alpha) \subseteq \semI(\alpha)$.
Indeed, this is true since $\sem{\alpha} = \sem{\alpha_1}$ for any \interpretation 
$\inst$ because of $x \eqv y$ and $x \not \in \outt1$.

\paragraph{$x \neqv y$ and $y \in \outt1$} Recall the definitions in this case:
\begin{itemize}
	\item $\synO(\alpha) = \outt1$.
	\item $\synI(\alpha) = \inn1 \cup \{x\}$.
\end{itemize}
We first proceed to verify $\synO(\alpha) \subseteq \semO(\alpha)$. Let $v \in \outt1$.
Consider an \interpretation $\inst$ such that 
$\inst(M) = \{(i_1, \ldots, i_n; o_1, \ldots, o_m)\}$ where $i_j = 2$ if $u_j = x$, 
otherwise $i_j = 1$. Also, $o_j = 3$ if $w_j = x$, otherwise $o_j = 2$ . Let $\nu_1$ be 
the valuation that is $2$ on $x$ and $1$ everywhere else. Also let $\nu_2$ be the 
valuation that is $2$ on $\outt1 - \{x\}$, $3$ on $x$ if $x \in \outt1$ and agrees with 
$\nu_1$ everywhere else. Clearly, $(\nu_1, \nu_2) \in \sem{\alpha}$ since 
$(\nu_1, \nu_2) \in \sem{\alpha_1}$ and $\nu_1(x) = \nu_2(y)$. Therefore,
$v \in \semO(\alpha)$. Indeed, in both cases whether $x \in \outt1$ or not, 
$\nu_1(v) \neq \nu_2(v)$.

Now we proceed to verify $\synI(\alpha) \subseteq \semI(\alpha)$. Let 
$v \in \inn1 \cup \{x\}$. Consider an \interpretation $\inst$ where 
$\inst(M) = \{(1, \ldots, 1; 1, \ldots, 1)\}$. Let $\nu_1$ be the valuation that is 
$1$ everywhere. Clearly, $(\nu_1, \nu_1) \in \sem{\alpha}$ since 
$(\nu_1, \nu_1) \in \sem{\alpha_1}$. Take $\nu_1'  := \nu_1[v:2]$. We establish that
$v \in \semI(\alpha)$ by arguing that there is no $\nu_2'$ for which 
$(\nu_1', \nu_2') \in \sem{\alpha}$. Thereto, suppose that 
$(\nu_1', \nu_2') \in \sem{\alpha}$. In particular, when $v \in \inn1$ it is clear
that $(\nu_1', \nu_2') \not \in \sem{\alpha_1}$. On the other hand, when $v = x$ and
$y \in \outt1$, clearly $\nu_1'(x) = 2 \neq 1 = \nu_2'(y)$. Consequently, 
$v \in \semI(\alpha)$.

\paragraph{$x \neqv y$ and $y \not \in \outt1$} Recall the definitions in this case:
\begin{itemize}
	\item $\synO(\alpha) = \outt1$.
	\item $\synI(\alpha) = \inn1 \cup \{x, y\}$.
\end{itemize}
We first proceed to verify $\synO(\alpha) \subseteq \semO(\alpha)$. Let $v \in \outt1$.
Consider an \interpretation $\inst$ such that $\inst(M) = \{(1, \ldots, 1; 2, \ldots, 2)\}$.
Let $\nu_1$ be the valuation that is $1$ everywhere. Also let $\nu_2$ be the valuation that 
is $2$ on $\outt1$ and $1$ everywhere else. Clearly, $(\nu_1, \nu_2) \in \sem{\alpha}$ since
$(\nu_1, \nu_2) \in \sem{\alpha_1}$ and $\nu_1(x) = \nu_2(y)$. Indeed, this is true since 
$y \not \in \outt1$, then $\nu_1(y) = \nu_2(y)$. Therefore, $v \in \semO(\alpha)$ since
$\nu_1(v) \neq \nu_2(v)$.

Now we proceed to verify $\synI(\alpha) \subseteq \semI(\alpha)$. Let
$v \in \inn1 \cup \{x, y\}$. Consider an \interpretation $\inst$ where
$\inst(M) = \{(1, \ldots, 1; 1, \ldots, 1)\}$. Let $\nu_1$ be the valuation that is
$1$ everywhere. Clearly, $(\nu_1, \nu_1) \in \sem{\alpha}$ since 
$(\nu_1, \nu_1) \in \sem{\alpha_1}$. Take $\nu_1'  := \nu_1[v:2]$. We establish that
$v \in \semI(\alpha)$ by arguing that there is no $\nu_2'$ for which 
$(\nu_1', \nu_2') \in \sem{\alpha}$. Thereto, suppose that 
$(\nu_1', \nu_2') \in \sem{\alpha}$. In particular, when $v \in \inn1$ it is clear that
$(\nu_1', \nu_2') \not \in \sem{\alpha_1}$. On the other hand, when $v = x$ or $v = y$,
clearly $\nu_1'(x) \neq (\nu_1'(y) = \nu_2'(y))$ since $y \not \in \outt1$ and 
$x \neqv y$. Consequently, $v \in \semI(\alpha)$.
}

\section{Precision Theorem Proof}\label{app:precision}
In this section, we prove Theorem~\ref{thm:precision}.  
By soundness and Proposition~\ref{prop:inputApp}, it suffices to prove 
$\synO(\alpha) \subseteq \semO(\alpha)$ and 
$\synI(\alpha) \subseteq \semI(\alpha)$ for every LIF expression $\alpha$.
For the latter inequality, it will be convenient to use the equivalent definition 
of semantic input variables introduced in Proposition~\ref{prop:inputEqu}.
Moreover, in the proof of the Precision Theorem, we will often make use of 
the following two technical lemmas.
\begin{lemma}\label{lem:null_id}
Let $M$ be a nullary relation name and let $\inst$ be an \interpretation where 
$\inst(M)$ is nonempty. Then $\sem{M()}$ consists of all identical pairs of 
valuations.
\end{lemma}
\begin{proof}
The proof follows directly from the semantics of atomic modules.
\end{proof}
\begin{lemma}\label{lem:reduction}
Suppose $\alpha_1 = M_1(\bar{x_1}; \bar{y_1})$ and 
$\alpha_2 = M_2(\bar{x_2}; \bar{y_2})$ where $M_1 \neq M_2$.  Let $\alpha$ be 
either $\alpha_1 \cup \alpha_2$ or $\alpha_1 - \alpha_2$.  Assume that 
$\synO(\alpha_i) \subseteq \semO(\alpha_i)$ and 
$\synI(\alpha_i) \subseteq \semI(\alpha_i)$ for $i = 1, 2$. 
Let $j \neq k \in \{1, 2\}$.  If $\semm{\alpha}{\inst} = \semm{\alpha_k}{\inst}$ 
for any \interpretation $\inst$ where $\inst(M_j)$ is empty, then 
$\synO(\alpha_k) \subseteq \semO(\alpha)$ and 
$\synI(\alpha_k) \subseteq \semI(\alpha)$.
\end{lemma}
\begin{proof}
First, we verify that $\synO(\alpha_k) \subseteq \semO(\alpha)$.  Let 
$v \in \synO(\alpha_k)$.  Since $\synO(\alpha_k) \subseteq \semO(\alpha_k)$, 
then $v \in \semO(\alpha_k)$.  By definition, we know that there is an 
\interpretation $\inst'$ and $(\nu_1, \nu_2) \in \semm{\alpha_k}{\inst'}$ such 
that $\nu_1(v) \neq \nu_2(v)$.  Take $\inst''$ to be the interpretation where 
$\inst''(M) = \inst'(M)$ for any $M \neq M_j$ while $\inst''(M_j)$ is empty.  
Clearly, $(\nu_1, \nu_2) \in \semm{\alpha}{\inst''}$, whence, $M_j \neq M_k$ 
and $\semm{\alpha}{\inst''} = \semm{\alpha_k}{\inst'}$.  It follows then that 
$v \in \semO(\alpha)$.

Similarly, we proceed to verify $\synI(\alpha_k) \subseteq \semI(\alpha)$.  Let 
$v \in \synI(\alpha_k)$. Since $\synI(\alpha_k) \subseteq \semI(\alpha_k)$, then 
$v \in \semI(\alpha_k)$. By definition, we know that there is an \interpretation 
$\inst'$, $(\nu_1, \nu_2) \in \semm{\alpha_k}{\inst'}$, and 
$\nu_1'(v) \neq \nu_1(v)$ such that 
$(\nu_1', \nu_2') \not \in \semm{\alpha_k}{\inst'}$ for every valuation $\nu_2'$ 
that agrees with $\nu_2$ on $\semO(\alpha_k)$.

Take $\inst''$ to be the interpretation where $\inst''(M) = \inst'(M)$ for any 
$M \neq M_j$ while $\inst''(M_j)$ is empty. Clearly, 
$\semm{\alpha}{\inst''} = \semm{\alpha_k}{\inst'}$, whence, $M_j \neq M_k$. 
Therefore, $\semO(\alpha_k) \subseteq \semO(\alpha)$.  Hence, $v \in \semI(\alpha)$. 
Indeed, $(\nu_1, \nu_2) \in \semm{\alpha}{\inst''}$ and for any valuation $\nu_2'$ 
if $\nu_2'$ agrees with $\nu_2$ on $\semO(\alpha)$, then $\nu_2'$ agrees with $\nu_2$ 
on $\semO(\alpha_k)$.
\end{proof}
The proof of Theorem~\ref{thm:precision} is done by extensive case analysis. 
Intuitively, for each of the different operations, and every variable 
$z\in \synO(\alpha)$, we construct an \interpretation $\inst$ such that $z$ 
is not inertial in $\sem{\alpha}$ and thus $z \in \semO(\alpha)$.  Similarly, 
for every variable $z \in \synI(\alpha)$, we construct an \interpretation 
$\inst$ as a witness of the fact that $\V \setminus \{z\}$ does not determine 
$\semNoI{\alpha}$ on $\semO(\alpha)$ and thus that $z\in \semI(\alpha)$.  
In the proof, we often remove the superscript from $\synI$ and $\synO$ and refer 
to them by $\In$ and $\Out$, respectively.

\subsection{Atomic Modules}
Let $\alpha$ be of the form $\alpha_1$, where $\alpha_1$ is $M(\bar{x}; \bar{y})$. 
Recall the definition:
\begin{itemize}
	\item $\synO(\alpha) = Y$, where $Y$ are the variables in $\bar{y}$.
	\item $\synI(\alpha) = X$, where $Y$ are the variables in $\bar{x}$.
\end{itemize}
We first proceed to verify $\synO(\alpha) \subseteq \semO(\alpha)$.  Let $v \in Y$. 
Consider an \interpretation $\inst$ where 
\[\inst(M) = \{(1, \ldots, 1; 2, \ldots, 2)\}.\]
Let $\nu_1$ be the valuation that is $1$ everywhere. Also let $\nu_2$ be the valuation 
that is $2$ on $Y$ and $1$ everywhere else. Clearly, $(\nu_1, \nu_2) \in \sem{\alpha}$ 
since $\nu_1(\bar{x}) \cdot \nu_2(\bar{y}) \in \inst(M)$ and $\nu_1$ agrees with $\nu_2$ 
outside $Y$. Hence, $v \in \semO(\alpha)$ since $\nu_1(v) \neq \nu_2(v)$.

Now we proceed to verify $\synI(\alpha) \subseteq \semI(\alpha)$.  Let $v \in X$.  
Consider the same \interpretation $\inst$ and the same valuations $\nu_1$ and $\nu_2$ 
as discussed above. We already established that $(\nu_1, \nu_2) \in \sem{\alpha}$. 
Take $\nu_1' := \nu_1[v:2]$. We establish that $v \in \semI(\alpha)$ by arguing that 
there is no $\nu_2'$ for which $(\nu_1', \nu_2') \in \sem{\alpha}$.  Indeed, this is 
true since $v \in X$. Consequently, $v \in \semI(\alpha)$.

\subsection{Identity}
Let $\alpha$ be of the form $\id$. We recall that $\synI(\alpha)$ and $\synO(\alpha)$ 
are both empty. Hence, $\synO(\alpha) \subseteq \semO(\alpha)$ and 
$\synI(\alpha) \subseteq \semI(\alpha)$ is trivial.

\subsection{Union}
Let $\alpha$ be of the form $\alpha_1 \cup \alpha_2$, where $\alpha_1$ is 
$M_1(\bar{x_1}; \bar{y_1})$ and $\alpha_2$ is $M_2(\bar{x_2}; \bar{y_2})$.  We 
distinguish different cases based on whether $M_1$ or $M_2$ is nullary.  If $M_1$ 
and $M_2$ are both nullary there is nothing to prove.

\subsubsection{$M_1$ is nullary, $M_2$ is not} Clearly, $\sem{\alpha} = \sem{\alpha_2}$ 
for any \interpretation $\inst$ where $\inst(M_1)$ is empty. By induction and 
Lemma~\ref{lem:reduction}, we establish that $\synO(\alpha_2) \subseteq \semO(\alpha)$ 
and $\synI(\alpha_2) \subseteq \semI(\alpha)$. Since $\inn1$ and $\outt1$ are both empty, 
then we observe that:
\begin{itemize}
	\item $\synO(\alpha) = \outt2$.
	\item $\synI(\alpha) = \inn2 \cup \outt2$.
\end{itemize}
Thus, $\outt2 \subseteq \semO(\alpha)$ and $\inn2 \subseteq \semI(\alpha)$ is trivial.

We proceed to verify $\outt2 - \inn2 \subseteq \semI(\alpha)$. Let $v \in \outt2 - \inn2$. 
Consider the \interpretation $\inst$ where $\inst(M_1)$ is not empty and 
\[
\inst(M_2) = \{(1, \ldots, 1; 2, \ldots, 2) \}
\]
Let $\nu_1$ be the valuation that is $1$ everywhere.  Clearly, 
$(\nu_1, \nu_1) \in \sem{\alpha}$ since $(\nu_1, \nu_1) \in \sem{\alpha_1}$ by 
Lemma~\ref{lem:null_id}. Take $\nu_1' := \nu_1[v:2]$.  We establish that 
$v \in \semI(\alpha)$ by arguing that there for every valuation $\nu_2'$ for 
which $(\nu_1', \nu_2') \in \sem{\alpha}$, we show that $\nu_2'$ and $\nu_1$ 
disagrees on $\semO(\alpha)$. Thereto, suppose $(\nu_1', \nu_2') \in \sem{\alpha}$. 
In particular, $(\nu_1', \nu_2') \in \sem{\alpha_1}$, whence $\nu_2' = \nu_1'$ by 
Lemma~\ref{lem:null_id}.  Indeed, $v \in \outt2$, $\outt2 \subseteq \semO(\alpha)$, 
and $\nu_2'(v) = 2$ but $\nu_1(v) = 1$. Otherwise, $(\nu_1', \nu_2') \in \sem{\alpha_2}$. 
However, since $v \in \outt2$, then $\nu_2'(v) = 2$ as well.  Therefore, there is no 
$\nu_2'$ that agrees with $\nu_1$ on $\semO(\alpha)$ and $(\nu_1', \nu_2') \in \sem{\alpha}$ 
at the same time. We conclude that $v \in \semI(\alpha)$.

\subsubsection{$M_2$ is nullary, $M_1$ is not} This case is symmetric to the previous one.

\subsubsection{Neither $M_1$ nor $M_2$ is nullary}
Recall the definitions:
\begin{itemize}
	\item $\synO(\alpha) = \outt 1 \cup \outt 2$.
	\item $\synI(\alpha) = \inn 1 \cup \inn 2 \cup (\outt 1 \symdif \outt 2)$.
\end{itemize}
We first proceed to verify that $\synO(\alpha) \subseteq \semO(\alpha)$.  By induction, 
$\synO(\alpha_i) = \semO(\alpha_i)$ for $i = 1$ or $2$.  Consequently, if $v \in \outt i$, 
then there is an \interpretation $\inst_i$, $(\nu_1, \nu_2) \in \sem{\alpha_i}$ such that 
$\nu_1(v) \neq \nu_2(v)$.  Indeed, $v \in \semO(\alpha)$, whence, 
$(\sem{\alpha_1} \cup \sem{\alpha_2}) \subseteq \sem{\alpha}$ for any \interpretation $\inst$.

We then proceed to verify that $\synI(\alpha) \subseteq \semO(\alpha)$.  The proof has 
four possibilities. Each case is discussed separately below.

\paragraph{When $v \in \inn1$}
If $v \in \inn1$ and $M_1 \neq M_2$, it is clear that $\sem{\alpha} = \sem{\alpha_1}$
for any \interpretation $\inst$ where $\inst(M_2)$ is empty.
By Lemma~\ref{lem:reduction} and by induction, we easily establish that 
$v \in \semI(\alpha)$.

\paragraph{When $v \in \inn2$} This case is symmetric to the previous one.

\paragraph{When $v \in \outt 1 - (\outt 2 \cup \inn 1 \cup \inn 2)$}
If $v \in \outt 1 - (\outt 2 \cup \inn 1 \cup \inn 2)$, then consider the 
\interpretation $\inst$ such that $\inst(M_1)  = \{(1, \ldots, 1; 1, \ldots, 1)\}$ 
and $\inst(M_2)  = \{(1, \ldots, 1; 1, \ldots, 1)\}$. Let $\nu_1$ be the valuation that 
is $2$ on $v$ and $1$ elsewhere. Clearly, $(\nu_1, \nu_1) \in \sem{\alpha}$, whence 
$(\nu_1, \nu_1) \in \sem{\alpha_2}$.	Now take $\nu_1' := \nu_1[v:1]$. If we can show 
that $\nu_2'$ does not agree with $\nu_1$ on $\semO(\alpha)$ for any valuation $\nu_2'$ 
such that $(\nu_1', \nu_2') \in \sem{\alpha}$, we are done. Thereto, suppose that there 
exists a valuation $\nu_2'$ such that $(\nu_1', \nu_2') \in \sem{\alpha}$.
\begin{itemize}
	\item In particular, if $(\nu_1', \nu_2') \in \sem{\alpha_1}$, then $\nu_2'(v) = 1$ 
	since $v \in \outt1$.
	\item Otherwise, if $(\nu_1', \nu_2') \in \sem{\alpha_2}$, then 
	$\nu_2'(v) = \nu_1'(v) = 1$ since $v \not \in (\inn2 \cup \outt2)$.
\end{itemize}
In both cases, $\nu_2'$ have to be $1$ on $v$ which disagrees with $\nu_1$ on $v$.  Since 
$v \in \outt1$ and $\outt1 \subseteq \semO(\alpha)$, then $\nu_2'$ does not agree with 
$\nu_1$ on $\semO(\alpha)$ as desired. We conclude that $v \in \semI(\alpha)$.

\paragraph{When $v \in \outt 2 - (\outt 1 \cup \inn 1 \cup \inn 2)$} This case is symmetric 
to the previous one.

\subsection{Intersection}
Let $\alpha$ be of the form $\alpha_1 \cap \alpha_2$, where $\alpha_1$ is 
$M_1(\bar{x_1}; \bar{y_1})$ and $\alpha_2$ is $M_2(\bar{x_2}; \bar{y_2})$.  We distinguish 
different cases based on whether $M_1$ or $M_2$ is nullary. If $M_1$ and $M_2$ are both 
nullary there is nothing to prove.

\subsubsection{$M_1$ is nullary, $M_2$ is not}
In this case, $\inn1$ and $\outt1$ are both empty, then we observe that:
\begin{itemize}
	\item $\synO(\alpha) = \emptyset$.
	\item $\synI(\alpha) = \inn2 \cup \outt2$.
\end{itemize}
It is trivial to verify $\synO(\alpha) \subseteq \semO(\alpha)$ since $\synO(\alpha)$ 
is empty.

We proceed to verify $\synI(\alpha) \subseteq \semI(\alpha)$. Let $v \in \inn2 \cup \outt2$. 
Consider an \interpretation $\inst$ where $\inst(M_1)$ is not empty and 
$\inst(M_2) = \{(1, \ldots, 1; 1, \ldots, 1)\}$.  Let $\nu_1$ be the valuation that is 
$1$ everywhere.  Clearly, $(\nu_1, \nu_1) \in \sem{\alpha}$ since 
$(\nu_1, \nu_1) \in \sem{\alpha_1}$ and $(\nu_1, \nu_1) \in \sem{\alpha_2}$.  Take 
$\nu_1' := \nu_1[v:2]$. We establish that $v \in \semI(\alpha)$ by arguing that there 
is no valuation $\nu_2'$ for which $(\nu_1', \nu_2') \in \sem{\alpha}$.  Thereto, suppose 
$(\nu_1', \nu_2') \in \sem{\alpha}$.  In particular, when $v \in \inn2$, it is clear that 
$(\nu_1', \nu_2') \not \in \sem{\alpha_1}$.  In the other case when $v \in \outt2 - \inn2$,
there is no $\nu_2'$ such that $(\nu_1', \nu_2')$ belongs to both $\sem{\alpha_1}$ and 
$\sem{\alpha_2}$. Indeed, the value for $\nu_2'(v)$ will never be agreed upon by $\alpha_1$ 
and $\alpha_2$.  Hence, $(\nu_1', \nu_2') \not \in \sem{\alpha}$ as desired.  We conclude 
that $v \in \semI(\alpha)$.

\subsubsection{$M_2$ is nullary, $M_1$ is not} This case is symmetric to the previous one.

\subsubsection{Neither $M_1$ nor $M_2$ is nullary}
Recall the definitions:
\begin{itemize}
	\item $\synO(\alpha) = \outt 1 \cap \outt 2$.
	\item $\synI(\alpha) = \inn 1 \cup \inn 2 \cup (\outt 1 \symdif \outt 2)$.
\end{itemize}

We first proceed to verify $\synO(\alpha) \subseteq \semO(\alpha)$.  Let 
$v \in \outt1 \cap \outt2$.  Consider an \interpretation $\inst$ such that 
\[
\inst(M_1) = \{(1, \ldots, 1; o_1, \ldots, o_m)\}
\]
, where $o_1, \ldots, o_m$ are all the combinations of $\{1, 2\}$.  Similarly, 
$\inst(M_2) = \{(1, \ldots, 1; o_1, \ldots, o_n)\}$, where $o_1, \ldots, o_n$ are 
all the combinations of $\{1, 2\}$.

Let $\nu_1$ be the valuation that is $1$ everywhere. Also let $\nu_2$ be the valuation 
that is $2$ on $\outt1 \cap \outt2$ and $1$ elsewhere.  Clearly, 
$(\nu_1, \nu_2) \in \sem{\alpha}$, whence $(\nu_1, \nu_2) \in \sem{\alpha_1}$ and 
$(\nu_1, \nu_2) \in \sem{\alpha_2}$.  Hence, $v \in \semO(\alpha)$.  Indeed, 
$\nu_2(v) \neq \nu_1(v)$ for $v \in \synO(\alpha)$.

We then proceed to verify $\synI(\alpha) \subseteq \semO(\alpha)$.  Let 
$v \in \inn1 \cup \inn2 \cup (\outt1 \symdif \outt2)$.  Consider an \interpretation 
$\inst$ where 
\[
\inst(M_1) = \{(1, \ldots, 1; 1, \ldots, 1)\}
\] 
Similarly, $\inst(M_2) = \{(1, \ldots, 1; 1, \ldots, 1)\}$.  Let $\nu_1$ be the valuation 
that is $1$ everywhere. Clearly, $(\nu_1, \nu_1) \in \sem{\alpha}$, whence 
$(\nu_1, \nu_1) \in \sem{\alpha_1}$ and $(\nu_1, \nu_1) \in \sem{\alpha_2}$.  Take 
$\nu_1' := \nu_1[v:2]$. We establish that $v \in \semI(\alpha)$ by arguing that there 
is no valuation $\nu_2'$ such that $(\nu_1', \nu_2') \in \sem{\alpha}$.  Indeed, this is 
clear when $v \in \inn1$ or $v \in \inn2$.  On the other hand, when 
$v \in (\outt1 \symdif \outt2) - (\inn1 \cup \inn2)$, we have $\nu_2'$ for which 
$(\nu_1', \nu_2') \in \sem{\alpha}$ whence $(\nu_1', \nu_2') \in \sem{\alpha_1}$ and 
$(\nu_1', \nu_2') \in \sem{\alpha_2}$. This is not possible since $v$ belongs to either 
$\outt1$ or $\outt2$, but not both. Hence, the value for $\nu_2'(v)$ will never be 
agreed upon by $\alpha_1$ and $\alpha_2$. We conclude that $v \in \semI(\alpha)$.

\subsection{Difference}
Let $\alpha$ be of the form $\alpha_1 \setminus \alpha_2$, where $\alpha_1$ is 
$M_1(\bar{x_1}; \bar{y_1})$ and $\alpha_2$ is $M_2(\bar{x_2}; \bar{y_2})$.  We 
distinguish different cases based on whether $M_1$ or $M_2$ is nullary.  If $M_1$ 
and $M_2$ are both nullary there is nothing to prove.

\subsubsection{$M_1$ is nullary, $M_2$ is not} In this case, $\inn1$ and $\outt 1$ are 
empty. In particular, $\synO(\alpha)$ is empty, so $\synO(\alpha) \subseteq \semO(\alpha)$ 
is trivial.

We proceed to verify $\synI(\alpha) \subseteq \semI(\alpha)$.  Observe that
\[
\synI(\alpha) = \inn2 \cup \outt2.
\]
Let $v \in \synI(\alpha)$. Consider the \interpretation $\inst$ where $\inst(M_1)$ is 
not empty and $\inst(M_2) = \{(1, \ldots, 1; 1, \ldots, 1) \}$.
Let $\nu_1$ be the valuation that is $2$ on $v$ and $1$ elsewhere.  Clearly, 
$(\nu_1, \nu_1) \in \sem{\alpha}$. Take $\nu_1' := \nu_1[v:1]$.  We establish that 
$v \in \semI(\alpha)$ by arguing that there is no $\nu_2'$ for which 
$(\nu_1', \nu_2') \in \sem{\alpha}$.  Thereto, suppose $(\nu_1', \nu_2') \in \sem{\alpha}$. 
In particular, $(\nu_1', \nu_2') \in \sem{\alpha_1}$, whence $\nu_2' = \nu_1'$ by Lemma 
\ref{lem:null_id}. However, $(\nu_1', \nu_1') \in \sem{\alpha_2}$, so 
$(\nu_1', \nu_2') \not \in \sem{\alpha}$ as desired.

\subsubsection{$M_2$ is nullary, $M_1$ is not} Clearly, $\sem{\alpha} = \sem{\alpha_1}$ for any
\interpretation $\inst$ where $\inst(M_2)$ is empty. 
By induction and Lemma \ref{lem:reduction}, we establish that 
$\synO(\alpha_1) \subseteq \semO(\alpha)$ and $\synI(\alpha_1) \subseteq \semI(\alpha)$. 
Since $\inn2$ and $\outt2$ are both empty, then we observe that:
\begin{itemize}
	\item $\synO(\alpha) = \outt1$.
	\item $\synI(\alpha) = \inn1 \cup \outt1$.
\end{itemize}
Thus, $\outt1 \subseteq \semO(\alpha)$ and $\inn1 \subseteq \semI(\alpha)$ is trivial.

We proceed to verify $\outt1 - \inn1 \subseteq \semI(\alpha)$.  Let
$v \in \outt1 - \inn1$.  Consider the \interpretation $\inst$ where $\inst(M_2)$ is 
not empty and 
\[
\inst(M_1) = \{(1, \ldots, 1; 1, \ldots, 1) \}
\]
Let $\nu_1$ be the valuation that is $2$ on $v$ and $1$ elsewhere and let $\nu_2$ be 
the valuation that is $1$ everywhere. Clearly, $(\nu_1, \nu_2) \in \sem{\alpha}$.  Take 
$\nu_1' := \nu_1[v:1]$.  We establish that $v \in \semI(\alpha)$ by arguing that there 
is no $\nu_2'$ for which $(\nu_1', \nu_2') \in \sem{\alpha}$.  Thereto, suppose 
$(\nu_1', \nu_2') \in \sem{\alpha}$.  In particular, 
$(\nu_1', \nu_2') \in \sem{\alpha_1}$, whence $\nu_2' = \nu_1'$ from the structure of 
$\inst$. However, $(\nu_1', \nu_1') \in \sem{\alpha_2}$ by Lemma \ref{lem:null_id}, 
so $(\nu_1', \nu_2') \not \in \sem{\alpha}$ as desired.

\subsubsection{Neither $M_1$ nor $M_2$ is nullary} Recall the definitions:
\begin{itemize}
	\item $\synO(\alpha) = \outt 1$.
	\item $\synI(\alpha) = \inn 1 \cup \inn 2 \cup (\outt 1 \symdif \outt 2)$.
\end{itemize}
The proof of $\synO(\alpha) \subseteq \semO(\alpha)$ is done together with the proof 
that $v \in \semI(\alpha)$ for every $v \in \inn1$.  Discussions for the other cases 
for $v \in \synI(\alpha)$ follow afterwards. Since $M_1 \neq M_2$, it is clear that 
$\sem{\alpha} = \sem{\alpha_1}$ for any \interpretation $\inst$ where $\inst(M_2)$ is 
empty.  By induction and Lemma \ref{lem:reduction}, we establish that 
$\synO(\alpha_1) \subseteq \semO(\alpha)$ and $\synI(\alpha_1) \subseteq \semI(\alpha)$. 
Thus, $\outt1 \subseteq \semO(\alpha)$ and $\inn1 \subseteq \semI(\alpha)$ is trivial.

\paragraph{When $v \in \inn2 - \inn1$}
Let $v \in \inn2 - \inn1$. Consider an \interpretation $\inst$ where 
\[\inst(M_1) = \{(1, \ldots, 1; 1, \ldots, 1)\}\] 
and similarly $\inst(M_2)$ $= \{(1, \ldots, 1; 1, \ldots, 1)\}$. Let $\nu_1$ be the 
valuation that $2$ on $v$ and $1$ elsewhere. Also, let $\nu_2$ be the valuation that is 
$1$ on $\outt1$ and agrees with $\nu_1$ everywhere else. Clearly,
$(\nu_1, \nu_2) \in \sem{\alpha_1}$. Further, $(\nu_1, \nu_2) \not \in \sem{\alpha_2}$.
Indeed, since $v \in \inn2$ then $\nu_1$ should have the value of $1$ on $v$ for
$(\nu_1, \nu_2)$ to be in $\sem{\alpha_2}$. Take $\nu_1' := \nu_1[v:1]$. We establish 
that $v \in \semI(\alpha)$ by arguing that there is no $\nu_2'$ for which
$(\nu_1', \nu_2') \in \sem{\alpha}$. Thereto, suppose that 
$(\nu_1', \nu_2') \in \sem{\alpha}$. Hence, $(\nu_1', \nu_2') \in \sem{\alpha_1}$ and 
$(\nu_1', \nu_2') \in \sem{\alpha_2}$. Indeed, $(\nu_1', \nu_2') \in \sem{\alpha_1}$ 
whence $\nu_1' = \nu_2'$. Clearly, $(\nu_1', \nu_1') \in \sem{\alpha_2}$ showing that
$(\nu_1', \nu_1') \not \in \sem{\alpha}$ as desired. Therefore, $v \in \semI(\alpha)$.

\paragraph{When $v \in (\outt1 \symdif \outt2) - (\inn1 \cup \inn2)$}
Let $v \in (\outt1 \symdif \outt2) - (\inn1 \cup \inn2)$. Consider an \interpretation 
$\inst$ where $\inst(M_1) = \{(1, \ldots, 1; 1, \ldots, 1)\}$ and 
$\inst(M_2) = \{(1, \ldots, 1; 1, \ldots, 1)\}$. Let $\nu_1$ be the valuation that is $2$ 
on $v$ and $1$ elsewhere. Also let $\nu_2$ be the valuation that is $1$ on $\outt1$ and
agrees with $\nu_1$ everywhere else. Clearly, $(\nu_1, \nu_2) \in \sem{\alpha_1}$.
Furthermore, $(\nu_1, \nu_2) \not \in \sem{\alpha_2}$. In particular, when 
$v \in \outt1 - (\inn1 \cup \inn2 \cup \outt2)$, we know that $\nu_1(v) = 2$ and 
$\nu_2(v) = 1$. Since $v \not \in \outt2$, then $(\nu_1, \nu_2) \not \in \sem{\alpha_2}$.
In the other case, when $v \in \outt2 - (\inn1 \cup \inn2 \cup \outt1)$, we know that
$\nu_1(v) = \nu_2(v) = 2$ since $(\nu_1, \nu_2) \in \sem{\alpha_1}$. Consequently, 
$(\nu_1, \nu_2) \not \in \sem{\alpha_2}$ since $v \in \outt2$ but $\nu_2(v) = 2$. We 
verify that $(\nu_1, \nu_2) \in \sem{\alpha}$. Take $\nu_1' := \nu_1[v:1]$. We establish 
that $v \in \semI(\alpha)$ by arguing that there is no $\nu_2'$ for which 
$(\nu_1', \nu_2') \in \sem{\alpha}$. Thereto, suppose that 
$(\nu_1', \nu_2') \in \sem{\alpha}$. Hence, $(\nu_1', \nu_2') \in \sem{\alpha_1}$ and
$(\nu_1', \nu_2') \in \sem{\alpha_2}$. Indeed, $(\nu_1', \nu_2') \in \sem{\alpha_1}$
whence $\nu_1' = \nu_2'$. Clearly, $(\nu_1', \nu_1') \in \sem{\alpha_2}$ showing that
$(\nu_1', \nu_1') \not \in \sem{\alpha}$ as desired. Therefore, $v \in \semI(\alpha)$.

\subsection{Composition}
Let $\alpha$ be of the form $\alpha_1 \comp \alpha_2$, where $\alpha_1$ is 
$M_1(\bar{x_1}; \bar{y_1})$ and $\alpha_2$ is $M_2(\bar{x_2}; \bar{y_2})$. We
distinguish different cases based on whether $M_1$ or $M_2$ is nullary. If $M_1$ and 
$M_2$ are both nullary there is nothing to prove.

\subsubsection{$M_1$ is nullary, $M_2$ is not} Clearly, $\sem{\alpha} = \sem{\alpha_2}$ for 
any \interpretation $\inst$ where $\inst(M_1)$ is not empty. In this case, $\inn1$ and 
$\outt1$ are both empty, then we observe that:
\begin{itemize}
	\item $\synO(\alpha) = \outt2$.
	\item $\synI(\alpha) = \inn2$.
\end{itemize}
First, we verify $\synO(\alpha) \subseteq \semO(\alpha)$. Let $v \in \outt2$. We know 
that $\synO(\alpha_2) \subseteq \semO(\alpha_2)$ by induction, then $v \in \semO(\alpha_2)$.
By definition, we know that there is an \interpretation $\inst'$ and
$(\nu_1, \nu_2) \in \semm{\alpha_2}{\inst'}$ such that $\nu_1(v) \neq \nu_2(v)$. Take
$\inst''$ to be the interpretation where $\inst''(M) = \inst'(M)$ for any $M \neq M_1$
while $\inst''(M_1)$ is not empty. Clearly, $(\nu_1, \nu_2) \in \semm{\alpha}{\inst''}$, 
whence, $M_1 \neq M_2$, $(\nu_1, \nu_1) \in \semm{\alpha_1}{\inst''}$ by 
Lemma \ref{lem:null_id}, and $\semm{\alpha}{\inst''} = \semm{\alpha_2}{\inst'}$. It 
follows then that $v \in \semO(\alpha)$.

Similarly, we proceed to verify $\synI(\alpha) \subseteq \semI(\alpha)$. Let $v \in \inn2$.
We know that $\synI(\alpha_2) \subseteq \semI(\alpha_2)$ by induction, then 
$v \in \semI(\alpha_2)$. By definition, we know that there is an \interpretation
$\inst'$, $(\nu_1, \nu_2) \in \semm{\alpha_2}{\inst'}$, and $\nu_1'(v) \neq \nu_1(v)$
such that $(\nu_1', \nu_2') \not \in \semm{\alpha_2}{\inst'}$ for every valuation 
$\nu_2'$ that agrees with $\nu_2$ on $\semO(\alpha_2)$.

Take $\inst''$ to be the interpretation where $\inst''(M) = \inst'(M)$ for any 
$M \neq M_1$ while $\inst''(M_1)$ is not empty. Clearly, 
$\semm{\alpha}{\inst''} = \semm{\alpha_2}{\inst'}$, whence, $M_1 \neq M_2$. Therefore, 
$\semO(\alpha_2) \subseteq \semO(\alpha)$. Hence, $v \in \semI(\alpha)$. Indeed, 
$(\nu_1, \nu_2) \in \semm{\alpha}{\inst''}$ and for any valuation $\nu_2'$ if $\nu_2'$ 
agrees with $\nu_2$ on $\semO(\alpha)$, then $\nu_2'$ agrees with $\nu_2$ on 
$\semO(\alpha_2)$.

\subsubsection{$M_2$ is nullary, $M_1$ is not} Clearly, $\sem{\alpha} = \sem{\alpha_1}$ for 
any \interpretation $\inst$ where $\inst(M_2)$ is not empty. In this case, $\inn2$ and
$\outt2$ are both empty, then we observe that:
\begin{itemize}
	\item $\synO(\alpha) = \outt1$.
	\item $\synI(\alpha) = \inn1$.
\end{itemize}
First, we verify $\synO(\alpha) \subseteq \semO(\alpha)$. Let $v \in \outt1$. We know that
$\synO(\alpha_1) \subseteq \semO(\alpha_1)$ by induction, then $v \in \semO(\alpha_1)$. 
By definition, we know that there is an \interpretation $\inst'$ and 
$(\nu_1, \nu_2) \in \semm{\alpha_1}{\inst'}$ such that $\nu_1(v) \neq \nu_2(v)$. Take 
$\inst''$ to be the interpretation where $\inst''(M) = \inst'(M)$ for any $M \neq M_2$
while $\inst''(M_2)$ is not empty. Clearly, $(\nu_1, \nu_2) \in \semm{\alpha}{\inst''}$,
whence, $M_1 \neq M_2$, $(\nu_2, \nu_2) \in \semm{\alpha_2}{\inst''}$ by 
Lemma \ref{lem:null_id}, and $\semm{\alpha}{\inst''} = \semm{\alpha_1}{\inst'}$. It
follows then that $v \in \semO(\alpha)$.

Similarly, we proceed to verify $\synI(\alpha) \subseteq \semI(\alpha)$. Let $v \in \inn1$.
We know that $\synI(\alpha_1) \subseteq \semI(\alpha_1)$ by induction, then 
$v \in \semI(\alpha_1)$. By definition, we know that there is an \interpretation
$\inst'$, $(\nu_1, \nu_2) \in \semm{\alpha_1}{\inst'}$, and $\nu_1'(v) \neq \nu_1(v)$ 
such that $(\nu_1', \nu_2') \not \in \semm{\alpha_1}{\inst'}$ for every valuation
$\nu_2'$ that agrees with $\nu_2$ on $\semO(\alpha_1)$.

Take $\inst''$ to be the interpretation where $\inst''(M) = \inst'(M)$ for any 
$M \neq M_2$ while $\inst''(M_2)$ is not empty. Clearly,
$\semm{\alpha}{\inst''} = \semm{\alpha_1}{\inst'}$, whence, $M_1 \neq M_2$. Therefore,
$\semO(\alpha_1) \subseteq \semO(\alpha)$. Hence, $v \in \semI(\alpha)$. Indeed, 
$(\nu_1, \nu_2) \in \semm{\alpha}{\inst''}$ and for any valuation $\nu_2'$ if $\nu_2'$ 
agrees with $\nu_2$ on $\semO(\alpha)$, then $\nu_2'$ agrees with $\nu_2$ on 
$\semO(\alpha_1)$.

\subsubsection{Neither $M_1$ nor $M_2$ is nullary} Recall the definitions:
\begin{itemize}
	\item $\synO(\alpha) = \outt1 \cup \outt2$.
	\item $\synI(\alpha) = \inn1 \cup (\inn2 - \outt1)$.
\end{itemize}
We first proceed to verify $\synO(\alpha) \subseteq \semO(\alpha)$.  Let 
$v \in \outt1 \cup \outt2$. Consider an \interpretation $\inst$ such that 
\[
\inst(M_1) = \{(1, \ldots, 1; 2, \ldots, 2), (i_1, \ldots, i_m; 3, \ldots, 3)\}
\]
, where $i_1, \ldots, i_m$ are all the combinations of $\{1, 2\}$. 
Similarly, 
\[
\inst(M_2) = \{(1, \ldots, 1; 2, \ldots, 2), (i_1, \ldots, i_n; 3, \ldots, 3)\}
\]
, where $i_1, \ldots, i_n$ are all the combinations of $\{1, 2\}$.

Let $\nu_1$ be the valuation that is $1$ everywhere. Also, let $\nu$ be the valuation 
that is $2$ on $\outt1$ and $1$ elsewhere. Clearly, $(\nu_1, \nu) \in \sem{\alpha_1}$. 
Let $\nu_2$ be the valuation that is $3$ on $\outt2$, $2$ on $\outt1 - \outt2$, and 
$1$ elsewhere. Clearly, $(\nu_1, \nu_2) \in \sem{\alpha}$, whence 
$(\nu, \nu_2) \in \sem{\alpha_2}$. Hence, $v \in \semO(\alpha)$. Indeed, 
$\nu_2(v) \neq \nu_1(v)$ for $v \in \synO(\alpha)$.

Now we proceed to verify $\synI(\alpha) \subseteq \semI(\alpha)$. Let 
$v \in \inn1 \cup (\inn2 - \outt1)$. Consider an \interpretation $\inst$ 
where $\inst(M_1) = \{(1, \ldots, 1; 1, \ldots, 1)\}$ and similarly 
$\inst(M_2) = \{(1, \ldots, 1; 1, \ldots, 1)\}$. Let $\nu_1$ be the valuation that 
is $1$ everywhere. Clearly, $(\nu_1, \nu_1) \in \sem{\alpha}$, whence 
$(\nu_1, \nu_1) \in \sem{\alpha_1}$ and $(\nu_1, \nu_1) \in \sem{\alpha_2}$.

Take $\nu_1' := \nu_1[v:2]$. We establish that $v \in \semI(\alpha)$ by arguing that 
there is no valuation $\nu_2'$ for which $(\nu_1', \nu_2') \in \sem{\alpha}$.
In particular, when $v \in \inn1$. Clearly, there is no $\nu_2'$ such that 
$(\nu_1', \nu_2') \in \sem{\alpha_1}$. On the other hand, when $v \in \inn2 - \outt1$. 
Clearly, $(\nu_1', \nu) \in \sem{\alpha_1}$, whence $\nu = \nu_1'$. However, there is 
no $\nu_2'$ such that $(\nu_1', \nu_2') \in \sem{\alpha_2}$. Thus, there is no $\nu_2'$ 
such that $(\nu_1', \nu_2') \not \in \sem{\alpha}$ as desired. We conclude that 
$v \in \semI(\alpha)$.

\subsection{Converse}
Let $\alpha$ be of the form $\conv{\alpha_1}$, where $\alpha_1 := M(\bar{x}; \bar{y})$. 
Recall the definitions:
\begin{itemize}
	\item $\synO(\alpha) = \outt 1$.
	\item $\synI(\alpha) = \inn 1 \cup \outt 1$.
\end{itemize}
We first proceed to verify $\synO(\alpha) \subseteq \semO(\alpha)$.  Let $v \in \outt1$.
Consider an \interpretation $\inst$ where 
\[
\inst(M) = \{(1, \ldots, 1; 2, \ldots, 2)\}.
\] 
Let $\nu_1$ be the valuation that is $2$ on $\outt1$ and $1$ elsewhere.  Also 
let $\nu_2$ be the valuation that is $1$ everywhere. Clearly, 
$(\nu_1, \nu_2) \in \sem{\alpha}$ since $(\nu_2, \nu_1) \in \sem{\alpha_1}$. 
Therefore, $v \in \semO(\alpha)$ since $\nu_1(v) \neq \nu_2(v)$.

Now we proceed to verify $\synI(\alpha) \subseteq \semI(\alpha)$. Let 
$v \in \inn1 \cup \outt1$. Consider the same \interpretation $\inst$ and the same 
valuations $\nu_1$ and $\nu_2$. We established that $(\nu_1, \nu_2) \in \sem{\alpha}$.
Take $\nu_1'  := \nu_1[v:3]$. We establish that $v \in \semI(\alpha)$ by arguing that 
there is no $\nu_2'$ for which $(\nu_1', \nu_2') \in \sem{\alpha}$. Indeed, when 
$v \in \outt1$, then $\nu_1$ have to be $2$ on $v$. In the other case, when 
$v \in \inn1 - \outt1$, then $\nu_1$ have to be $1$ on $v$. Thus, there is no 
$\nu_2'$ for which $(\nu_1', \nu_2') \in \sem{\alpha}$ as desired. Consequently, 
$v \in \semI(\alpha)$.

\subsection{Left Cylindrification}
Let $\alpha$ be of the form $\cyll{x}({\alpha_1})$, where $\alpha_1 := M(\bar{x}; \bar{y})$. 
Recall the definitions:
\begin{itemize}
	\item $\synO(\alpha) = \outt 1 \cup \{x\}$.
	\item $\synI(\alpha) = \inn 1 - \{x\}$.
\end{itemize}
We first proceed to verify $\synO(\alpha) \subseteq \semO(\alpha)$. Let 
$v \in \outt1 \cup \{x\}$. Consider an \interpretation $\inst$ where 
$\inst(M) = \{(1, \ldots, 1; 2, \ldots, 2)\}$. Let $\nu_1$ be the valuation that is 
$3$ on $x$ and $1$ elsewhere. Also let $\nu_2$ be the valuation that is $2$ on $\outt1$
and $1$ everywhere else. Clearly, $(\nu_1, \nu_2) \in \sem{\alpha}$ since 
$(\nu_1[x:1], \nu_2) \in \sem{\alpha_1}$. Therefore, $v \in \semO(\alpha)$ since 
$\nu_1(v) \neq \nu_2(v)$.

Now we proceed to verify $\synI(\alpha) \subseteq \semI(\alpha)$. Let $v \in \inn1 - \{x\}$. 
Consider the same \interpretation $\inst$ and the same valuations $\nu_1$ and $\nu_2$. We 
established that $(\nu_1, \nu_2) \in \sem{\alpha}$.
Take $\nu_1'  := \nu_1[v:2]$. We establish that $v \in \semI(\alpha)$ by arguing that there is 
no $\nu_2'$ for which $(\nu_1', \nu_2') \in \sem{\alpha}$. Indeed, this is true since
$v \in \inn1 - \{x\}$. Consequently, $v \in \semI(\alpha)$.

\subsection{Right Cylindrification}
Let $\alpha$ be of the form $\cylr{x}({\alpha_1})$, where $\alpha_1 := M(\bar{x}; \bar{y})$. 
Recall the definitions:
\begin{itemize}
	\item $\synO(\alpha) = \outt 1 \cup \{x\}$.
	\item $\synI(\alpha) = \inn 1$.
\end{itemize}
We first proceed to verify $\synO(\alpha) \subseteq \semO(\alpha)$. Let 
$v \in \outt1 \cup \{x\}$. Consider an \interpretation $\inst$ where
$\inst(M) = \{(1, \ldots, 1; 2, \ldots, 2)\}$. Let $\nu_1$ be the valuation that 
is $1$ everywhere. Also let $\nu_2$ be the valuation that is $2$ on $\outt1$ and on $x$ 
and $1$ everywhere else. Clearly, $(\nu_1, \nu_2) \in \sem{\alpha}$ since either 
$(\nu_1, \nu_2[x:1]) \in \sem{\alpha_1}$ or $(\nu_1, \nu_2) \in \sem{\alpha_1}$. 
Therefore, $v \in \semO(\alpha)$ since $\nu_1(v) \neq \nu_2(v)$.

Now we proceed to verify $\synI(\alpha) \subseteq \semI(\alpha)$. Let $v \in \inn1$. 
Consider the same \interpretation $\inst$ and the same valuations $\nu_1$ and $\nu_2$.
We established that $(\nu_1, \nu_2) \in \sem{\alpha}$.
Take $\nu_1'  := \nu_1[v:2]$. We establish that $v \in \semI(\alpha)$ by arguing that 
there is no $\nu_2'$ for which $(\nu_1', \nu_2') \in \sem{\alpha}$. Indeed, this is true
since $v \in \inn1$. Consequently, $v \in \semI(\alpha)$.

\subsection{Left Selection}
Let $\alpha$ be of the form $\sell{x = y}({\alpha_1})$, where 
$\alpha_1 := M(\bar{u}; \bar{w})$, $\bar{u} = u_1, \ldots, u_n$, and 
$\bar{w} = w_1, \ldots, w_m$. We distinguish different cases based on whether
$x \eqv y$.

\paragraph{When $x$ and $y$ are the same variable ($x \eqv y$)} Recall the definitions in this
case:
\begin{itemize}
	\item $\synO(\alpha) = \outt1$.
	\item $\synI(\alpha) = \inn1$.
\end{itemize}
We proceed to verify that $\synO(\alpha) \subseteq \semO(\alpha)$ and 
$\synI(\alpha) \subseteq \semI(\alpha)$. Indeed, this is true since 
$\sem{\alpha} = \sem{\alpha_1}$ for any \interpretation $\inst$ because of 
$x \eqv y$.

\paragraph{When $x$ and $y$ are different variables ($x \neqv y$)} Recall the definitions 
in this case:
\begin{itemize}
	\item $\synO(\alpha) := \outt1$.
	\item $\synI(\alpha) := \inn1 \cup \{x,y\}$.
\end{itemize}
We first proceed to verify $\synO(\alpha) \subseteq \semO(\alpha)$. Let $v \in \outt1$.
Consider an \interpretation $\inst$ where
\[\inst(M) = \{(1, \ldots, 1; 2, \ldots, 2)\}.\] 
Let $\nu_1$ be the valuation that is $1$ everywhere. Also let $\nu_2$ be the valuation that 
is $2$ on $\outt1$ and $1$ everywhere else. Clearly, $(\nu_1, \nu_2) \in \sem{\alpha}$ 
since $(\nu_1, \nu_2) \in \sem{\alpha_1}$ and $\nu_1(x) = \nu_1(y)$. Therefore, 
$v \in \semO(\alpha)$ since $\nu_1(v) \neq \nu_2(v)$.

Now we proceed to verify $\synI(\alpha) \subseteq \semI(\alpha)$. Let 
$v \in \inn1 \cup \{x, y\}$. Consider an \interpretation $\inst$ where 
$\inst(M_1) = \{(1, \ldots, 1; 1, \ldots, 1)\}$. Let $\nu_1$ be the valuation that
is $1$ everywhere. Clearly, $(\nu_1, \nu_1) \in \sem{\alpha}$ since 
$(\nu_1, \nu_1) \in \sem{\alpha_1}$ and $\nu_1(x) = \nu_1(y)$. Take 
$\nu_1' := \nu_1[v:2]$. We establish that $v \in \semI(\alpha)$ by arguing that there 
is no $\nu_2'$ for which $(\nu_1', \nu_2') \in \sem{\alpha}$. In particular, when 
$v \in \inn1$, it is clear that there is no $\nu_2'$ such that
$(\nu_1', \nu_2') \in \sem{\alpha_1}$. In the other case, when $v$ is either $x$ or $y$,
there is no $\nu_2'$ such that $(\nu_1', \nu_2') \in \sem{\alpha}$. Indeed, this is true
since $x \neqv y$ and $\nu_1'(x) \neq \nu_1'(y)$. Consequently, $v \in \semI(\alpha)$.

\subsection{Right Selection}
Let $\alpha$ be of the form $\selr{x = y}({\alpha_1})$, where
$\alpha_1 := M(\bar{u}; \bar{w})$, $\bar{u} = u_1, \ldots, u_n$, and 
$\bar{w} = w_1, \ldots, w_m$. We distinguish different cases based on whether
$x \eqv y$.

\paragraph{When $x$ and $y$ are the same variable ($x \eqv y$)} Recall the definitions in 
this case:
\begin{itemize}
	\item $\synO(\alpha) = \outt1$.
	\item $\synI(\alpha) = \inn1$.
\end{itemize}
We proceed to verify that $\synO(\alpha) \subseteq \semO(\alpha)$ and 
$\synI(\alpha) \subseteq \semI(\alpha)$. Indeed, this is true since 
$\sem{\alpha} = \sem{\alpha_1}$ for any \interpretation $\inst$ because of
$x \eqv y$.

\paragraph{When $x$ and $y$ are different variables ($x \neqv y$)} Recall the definitions 
in this case:
\begin{itemize}
	\item $\synO(\alpha) := \outt1$.
	\item $\synI(\alpha) := \inn1 \cup (\{x,y\} - \outt1)$.
\end{itemize}
We first proceed to verify $\synO(\alpha) \subseteq \semO(\alpha)$. Let $v \in \outt1$.
Consider an \interpretation $\inst$ where 
\[\inst(M) = \{(i_1, \ldots, i_n; 2, \ldots, 2)\}\]
such that $i_j = 2$ if $u_j$ is either $x$ or $y$ and $u_j \not \in \outt1$, otherwise, 
$u_j = 1$. Let $\nu_1$ be the valuation that is $2$ on $x$ if $x \not \in \outt1$, $2$ on 
$y$ if $y \not \in \outt1$, and $1$ everywhere. Also let $\nu_2$ be the valuation that is 
$2$ on $\outt1$ and agrees with $\nu_1$ everywhere else. Clearly, 
$(\nu_1, \nu_2) \in \sem{\alpha}$ since $(\nu_1, \nu_2) \in \sem{\alpha_1}$ and
$\nu_2(x) = \nu_2(y)$. Therefore, $v \in \semO(\alpha)$ since $\nu_1(v) \neq \nu_2(v)$.

Now we proceed to verify $\synI(\alpha) \subseteq \semI(\alpha)$. Let
$v \in \inn1 \cup \{x, y\}$. Consider an \interpretation $\inst$ where 
$\inst(M_1) = \{(1, \ldots, 1; 1, \ldots, 1)\}$. Let $\nu_1$ be the valuation that
is $1$ everywhere. Clearly, $(\nu_1, \nu_1) \in \sem{\alpha}$ since 
$(\nu_1, \nu_1) \in \sem{\alpha_1}$ and $\nu_1(x) = \nu_1(y)$. Take $\nu_1'  := \nu_1[v:2]$.
We establish that $v \in \semI(\alpha)$ by arguing that there is no $\nu_2'$ for which
$(\nu_1', \nu_2') \in \sem{\alpha}$. In particular, when $v \in \inn1$, it is clear that
there is no $\nu_2'$ such that $(\nu_1', \nu_2') \in \sem{\alpha_1}$. Now we need to verify
the same when $v$ is $x$ or $y$ and $v \not \in \inn1$. Thereto, suppose 
$(\nu_1', \nu_2') \in \sem{\alpha}$. In the case of $v$ is $x$ and $x \not \in \inn1$, this
is only possible when $x \not \in \outt1$. Therefore, $\nu_2'(x) = \nu_1'(x) = 2$ but
$\nu_2'(y) = 1$ whether $y \in \outt1$ or not. Hence, 
$(\nu_1', \nu_2') \not \in \sem{\alpha}$ since $x \neqv y$ and $\nu_2'(x) \neq \nu_2'(y)$.
The case when $v$ is $y$ and $y \not \in \inn1$ is symmetric. Consequently,
$v \in \semI(\alpha)$.

\subsection{Left-to-Right Selection}
Let $\alpha$ be of the form $\sellr{x = y}({\alpha_1})$, where
$\alpha_1 := M(\bar{u}; \bar{w})$, $\bar{u} = u_1, \ldots, u_n$, and 
$\bar{w} = w_1, \ldots, w_m$. We distinguish different cases based on whether
$x \eqv y$ and $y \in \outt1$.

\paragraph{When $x \eqv y$ and $y \in \outt1$} Recall the definitions in this case:
\begin{itemize}
	\item $\synO(\alpha) = \outt1 - \{x\}$.
	\item $\synI(\alpha) = \inn1 \cup \{x\}$.
\end{itemize}
In what follows, since $x \eqv y$ we will refer to both of them with $x$. We first 
proceed to verify $\synO(\alpha) \subseteq \semO(\alpha)$. Let $v \in \outt1 - \{x\}$. 
Consider an \interpretation $\inst$ such that 
$\inst(M) = \{(1, \ldots, 1; o_1, \ldots, o_m)\}$ where $o_j = 1$ if $w_j = y$, otherwise
$o_j = 2$ . Let $\nu_1$ be the valuation that is $1$ everywhere. Also let $\nu_2$ be
the valuation that is $2$ on $\outt1 - \{x\}$ and $1$ everywhere else. Clearly, 
$(\nu_1, \nu_2) \in \sem{\alpha}$ since $(\nu_1, \nu_2) \in \sem{\alpha_1}$ and 
$\nu_1(x) = \nu_2(x)$. Therefore, $v \in \semO(\alpha)$ since $\nu_1(v) \neq \nu_2(v)$.

Now we proceed to verify $\synI(\alpha) \subseteq \semI(\alpha)$. Let 
$v \in \inn1 \cup \{x\}$. Consider an \interpretation $\inst$ where 
$\inst(M) = \{(1, \ldots, 1; 1, \ldots, 1)\}$. Let $\nu_1$ be the valuation that is
$1$ everywhere. Clearly, $(\nu_1, \nu_1) \in \sem{\alpha}$ since
$(\nu_1, \nu_1) \in \sem{\alpha_1}$. Take $\nu_1'  := \nu_1[v:2]$. We establish that 
$v \in \semI(\alpha)$ by arguing that there is no $\nu_2'$ for which 
$(\nu_1', \nu_2') \in \sem{\alpha}$. Thereto, suppose that 
$(\nu_1', \nu_2') \in \sem{\alpha}$. In particular, when $v \in \inn1$ it is clear 
that $(\nu_1', \nu_2') \not \in \sem{\alpha_1}$. On the other hand, when $v = x$ and
$x \in \outt1 - \inn1$, clearly $\nu_1'(x) = 2 \neq 1 = \nu_2'(x)$. Consequently,
$v \in \semI(\alpha)$.

\paragraph{When $x \eqv y$ and $y \not \in \outt1$} Recall the definitions in this case:
\begin{itemize}
	\item $\synO(\alpha) = \outt1$.
	\item $\synI(\alpha) = \inn1$.
\end{itemize}
In what follows, since $x \eqv y$ we will refer to both of them with $x$. We proceed to 
verify $\synO(\alpha) \subseteq \semO(\alpha)$ and $\synI(\alpha) \subseteq \semI(\alpha)$.
Indeed, this is true since $\sem{\alpha} = \sem{\alpha_1}$ for any \interpretation 
$\inst$ because of $x \eqv y$ and $x \not \in \outt1$.

\paragraph{When $x \neqv y$ and $y \in \outt1$} Recall the definitions in this case:
\begin{itemize}
	\item $\synO(\alpha) = \outt1$.
	\item $\synI(\alpha) = \inn1 \cup \{x\}$.
\end{itemize}
We first proceed to verify $\synO(\alpha) \subseteq \semO(\alpha)$. Let $v \in \outt1$.
Consider an \interpretation $\inst$ such that 
$\inst(M) = \{(i_1, \ldots, i_n; o_1, \ldots, o_m)\}$ where $i_j = 2$ if $u_j = x$, 
otherwise $i_j = 1$. Also, $o_j = 3$ if $w_j = x$, otherwise $o_j = 2$ . Let $\nu_1$ be 
the valuation that is $2$ on $x$ and $1$ everywhere else. Also let $\nu_2$ be the 
valuation that is $2$ on $\outt1 - \{x\}$, $3$ on $x$ if $x \in \outt1$ and agrees with 
$\nu_1$ everywhere else. Clearly, $(\nu_1, \nu_2) \in \sem{\alpha}$ since 
$(\nu_1, \nu_2) \in \sem{\alpha_1}$ and $\nu_1(x) = \nu_2(y)$. Therefore,
$v \in \semO(\alpha)$. Indeed, in both cases whether $x \in \outt1$ or not, 
$\nu_1(v) \neq \nu_2(v)$.

Now we proceed to verify $\synI(\alpha) \subseteq \semI(\alpha)$. Let 
$v \in \inn1 \cup \{x\}$. Consider an \interpretation $\inst$ where 
$\inst(M) = \{(1, \ldots, 1; 1, \ldots, 1)\}$. Let $\nu_1$ be the valuation that is 
$1$ everywhere. Clearly, $(\nu_1, \nu_1) \in \sem{\alpha}$ since 
$(\nu_1, \nu_1) \in \sem{\alpha_1}$. Take $\nu_1'  := \nu_1[v:2]$. We establish that
$v \in \semI(\alpha)$ by arguing that there is no $\nu_2'$ for which 
$(\nu_1', \nu_2') \in \sem{\alpha}$. Thereto, suppose that 
$(\nu_1', \nu_2') \in \sem{\alpha}$. In particular, when $v \in \inn1$ it is clear
that $(\nu_1', \nu_2') \not \in \sem{\alpha_1}$. On the other hand, when $v = x$ and
$y \in \outt1$, clearly $\nu_1'(x) = 2 \neq 1 = \nu_2'(y)$. Consequently, 
$v \in \semI(\alpha)$.

\paragraph{When $x \neqv y$ and $y \not \in \outt1$} Recall the definitions in this case:
\begin{itemize}
	\item $\synO(\alpha) = \outt1$.
	\item $\synI(\alpha) = \inn1 \cup \{x, y\}$.
\end{itemize}
We first proceed to verify $\synO(\alpha) \subseteq \semO(\alpha)$. Let $v \in \outt1$.
Consider an \interpretation $\inst$ such that $\inst(M) = \{(1, \ldots, 1; 2, \ldots, 2)\}$.
Let $\nu_1$ be the valuation that is $1$ everywhere. Also let $\nu_2$ be the valuation that 
is $2$ on $\outt1$ and $1$ everywhere else. Clearly, $(\nu_1, \nu_2) \in \sem{\alpha}$ since
$(\nu_1, \nu_2) \in \sem{\alpha_1}$ and $\nu_1(x) = \nu_2(y)$. Indeed, this is true since 
$y \not \in \outt1$, then $\nu_1(y) = \nu_2(y)$. Therefore, $v \in \semO(\alpha)$ since
$\nu_1(v) \neq \nu_2(v)$.

Now we proceed to verify $\synI(\alpha) \subseteq \semI(\alpha)$. Let
$v \in \inn1 \cup \{x, y\}$. Consider an \interpretation $\inst$ where
$\inst(M) = \{(1, \ldots, 1; 1, \ldots, 1)\}$. Let $\nu_1$ be the valuation that is
$1$ everywhere. Clearly, $(\nu_1, \nu_1) \in \sem{\alpha}$ since 
$(\nu_1, \nu_1) \in \sem{\alpha_1}$. Take $\nu_1'  := \nu_1[v:2]$. We establish that
$v \in \semI(\alpha)$ by arguing that there is no $\nu_2'$ for which 
$(\nu_1', \nu_2') \in \sem{\alpha}$. Thereto, suppose that 
$(\nu_1', \nu_2') \in \sem{\alpha}$. In particular, when $v \in \inn1$ it is clear that
$(\nu_1', \nu_2') \not \in \sem{\alpha_1}$. On the other hand, when $v = x$ or $v = y$,
clearly $\nu_1'(x) \neq (\nu_1'(y) = \nu_2'(y))$ since $y \not \in \outt1$ and 
$x \neqv y$. Consequently, $v \in \semI(\alpha)$.

\section{Optimality Theorem Proof}\label{app:optimality}
In this section, we prove Theorem~\ref{thm:optimality}.  Thus, we would like to show 
that 
\[ \synI(\alpha) \subseteq I(\alpha)\text{ and }\synO(\alpha) \subseteq O(\alpha).\]
for any LIF expression $\alpha$, assuming that $(I,O)$ is a sound and compositional 
input--output definition.  The proof is by induction on the structure of $\alpha$.

\paragraph{Atomic Modules}
For atomic module expressions $\alpha$, this follows directly from 
Theorem~\ref{thm:precision}. 

\paragraph{Identity}
For $\alpha=\id$, this is immediate since $\synI(\id)=\synO(\id)=\emptyset$.

\paragraph{Binary Operators}
For $\alpha = \alpha_1 \boxdot \alpha_2$, where $\boxdot$ is a binary operator,
we define two atomic module expressions $\alpha_1' = M_1(\bar x; \bar y)$ and 
$\alpha_2'=M_2(\bar u, \bar v)$ where $\bar x = I(\alpha_1)$, 
$\bar y = O(\alpha_1)$, $\bar u = I(\alpha_2)$, and $\bar v = O(\alpha_2)$ with 
$M_i$ distinct module names of the right arity.

Since $(I,O)$ is sound, we know that the following holds for $i \in \{1,2\}$:
\begin{equation}\label{eq:optimal:one} 
I(\alpha_i) = I(\alpha'_i) = \synI(\alpha'_i) \text{ and } 
O(\alpha_i) = O(\alpha'_i) = \synO(\alpha'_i). 
\end{equation}
Moreover by soundness and Proposition~\ref{prop:inputApp}, we know that:
\begin{equation}\label{eq:optimal:two}
\semI(\alpha'_1 \boxdot \alpha'_2) \subseteq I(\alpha'_1 \boxdot \alpha'_2) 
\text{ and }
\semO(\alpha'_1 \boxdot \alpha'_2) \subseteq O(\alpha'_1 \boxdot \alpha'_2). 
\end{equation}
From the Precision Theorem, we know that:
\begin{equation}\label{eq:optimal:three}
\synI(\alpha'_1 \boxdot \alpha'_2) = \semI(\alpha'_1 \boxdot \alpha'_2)
\text{ and }
\synO(\alpha'_1 \boxdot \alpha'_2) = \semO(\alpha'_1 \boxdot \alpha'_2). 
\end{equation}
From the compositionality of $(I,O)$, we know that
\begin{equation}\label{eq:optimal:four} 
I(\alpha_1' \boxdot \alpha_2') = I(\alpha_1 \boxdot \alpha_2) \text{ and }
O(\alpha_1' \boxdot \alpha_2') = O(\alpha_1 \boxdot \alpha_2).
\end{equation}
By combining Equations (\ref{eq:optimal:two}--\ref{eq:optimal:four}), we find that:
\begin{equation}\label{eq:optimal:five} 
\synI(\alpha'_1 \boxdot \alpha'_2) \subseteq I(\alpha_1 \boxdot \alpha_2) \text{ and }
\synO(\alpha'_1 \boxdot \alpha'_2) \subseteq O(\alpha_1 \boxdot \alpha_2).
\end{equation}
We now \textbf{claim} the following
\begin{equation} \label{eq:optimal:six} 
\synI(\alpha_1 \boxdot \alpha_2) \subseteq \synI(\alpha'_1 \boxdot \alpha'_2)
\text{ and }
\synO(\alpha_1 \boxdot \alpha_2) \subseteq \synO(\alpha'_1 \boxdot \alpha'_2). 
\end{equation}
If we prove our \textbf{claim}, then combining Equations 
(\ref{eq:optimal:five}--\ref{eq:optimal:six}) establishes our theorem
for binary operators.

First, we prove our claim for the inductive cases for outputs of the 
different binary operators.
From the inductive hypothesis and Equation~(\ref{eq:optimal:one}), we know that 
for $i\in\{1,2\}$:
\[\synO(\alpha_i) \subseteq O(\alpha_i) = \synO(\alpha_i')\]
Hence, it is clear that:
\begin{itemize}
    \item $\synO(\alpha_1) \cup \synO(\alpha_2)$ is a subset of
    $\synO(\alpha'_1) \cup \synO(\alpha'_2)$, which settles the cases 
    when $\boxdot \in \{\cup, \comp\}$ since 
    $\synO(\beta \boxdot \gamma) = \synO(\beta) \cup \synO(\gamma)$ for any
    LIF expressions $\beta$ and $\gamma$.
    \item $\synO(\alpha_1) \cap \synO(\alpha_2)$ is a subset of 
    $\synO(\alpha'_1) \cap \synO(\alpha'_2)$, which settles the case 
    when $\boxdot$ is $\cap$ since 
    $\synO(\beta \boxdot \gamma) = \synO(\beta) \cap \synO(\gamma)$ for any
    LIF expressions $\beta$ and $\gamma$.
    \item $\synO(\alpha_1)$ is a subset of $\synO(\alpha'_1)$, which
    settles the case when $\boxdot$ is $\setminus$ since 
    $\synO(\beta \boxdot \gamma) = \synO(\beta)$ for any
    LIF expressions $\beta$ and $\gamma$.
\end{itemize}

Now, we consider the inductive cases for the inputs of the different binary
operators.  Similar to the outputs, we know that for $i\in\{1,2\}$:
\[\synI(\alpha_i) \subseteq I(\alpha_i) = \synI(\alpha_i')\] 
Consequently,
\begin{itemize}
    \item when $x \in \synI(\alpha_1) \cup (\synI(\alpha_2) - \synO(\alpha_1))$,
    we consider the following cases:
    \begin{itemize}
        \item if $x \in \synI(\alpha_1)$, then it is clear that 
        $x \in \synI(\alpha'_1)$;
        \item if $x \in (\synI(\alpha_2) - \synO(\alpha_1))$, then we know that
        $x \in \synI(\alpha'_2)$.  Moreover, since $x \not \in \synO(\alpha_1)$, 
        we know by soundness of $(\synI, \synO)$ that $x \not \in \semO(\alpha_1)$.  
        Now, we differentiate between two cases
        \begin{itemize}
            \item when $x \not \in \synO(\alpha'_1)$, it is clear that
            $x \in (\synI(\alpha'_2) - \synO(\alpha'_1))$.
            \item when $x \in \synO(\alpha'_1)$, we know from 
            Equation~(\ref{eq:optimal:one}) that $x \in O(\alpha_1)$.
            From Lemma~\ref{lem:sound-input} and Equation~(\ref{eq:optimal:one}), 
            it follows that $x \in I(\alpha_1)$ and $x \in \synI(\alpha'_1)$.
        \end{itemize}
    \end{itemize}
    In all cases, we verify that 
    $x \in \synI(\alpha'_1) \cup (\synI(\alpha'_2) - \synO(\alpha'_1))$.
    This settles the case when $\boxdot$ is $\comp$ since 
    $\synI(\beta \boxdot \gamma) = \synI(\beta) \cup (\synI(\gamma) - \synO(\beta))$ 
    for any LIF expressions $\beta$ and $\gamma$.
    
    \item when $x \in \synI(\alpha_1) \cup \synI(\alpha_2) \cup 
    (\synO(\alpha_1) \symdif \synO(\alpha_2))$, we consider the following cases:
    \begin{itemize}
        \item if $x \in \synI(\alpha_i)$ for some $i$, then it is clear that
        $x \in \synI(\alpha'_i)$.
        \item if $x \in \synO(\alpha_i) - \synO(\alpha_j)$ for $i \neq j$, we
        know that $x \in \synO(\alpha'_i)$.  Since $x \not \in \synO(\alpha_j)$, 
        we know by soundness that $x \not \in \semO(\alpha_j)$.
        Now, we differentiate between two cases
        \begin{itemize}
            \item when $x \not \in \synO(\alpha'_j)$, it is clear that
            $x \in (\synO(\alpha'_i) \symdif \synO(\alpha'_j))$.
            \item when $x \in \synO(\alpha'_j)$, we know from 
            Equation~(\ref{eq:optimal:one}) that $x \in O(\alpha_j)$.
            From Lemma~\ref{lem:sound-input} and Equation~(\ref{eq:optimal:one}), 
            it follows that $x \in I(\alpha_j)$ and $x \in \synI(\alpha'_j)$.
        \end{itemize}
    \end{itemize}
    In all cases, we verify that $x \in \synI(\alpha'_1) \cup \synI(\alpha'_2) \cup 
    (\synO(\alpha'_1) \symdif \synO(\alpha'_2))$.  This settles the cases when 
    $\boxdot \in \{\cup, \cap, \setminus\}$ since $\synI(\beta \boxdot \gamma) = 
    \synI(\beta) \cup \synI(\gamma) \cup (\synO(\beta) \symdif \synO(\gamma))$ 
    for any LIF expressions $\beta$ and $\gamma$.
\end{itemize}

\paragraph{Unary Operators} We follow a similar approach for unary operators.
For $\alpha = \square \alpha_1$, where $\square$ is a unary operator,
we define one atomic module expression $\alpha_1' = M_1(\bar x; \bar y)$ where 
$\bar x = I(\alpha_1)$, and $\bar y = O(\alpha_1)$.

Since $(I,O)$ is sound, we know that the following holds:
\begin{equation}\label{eq:optimal:one1} 
I(\alpha_1) = I(\alpha'_1) = \synI(\alpha'_1) \text{ and } 
O(\alpha_1) = O(\alpha'_1) = \synO(\alpha'_1). 
\end{equation}
Moreover, we know that:
\begin{equation}\label{eq:optimal:two1}
\semI(\square \alpha'_1) \subseteq I(\square \alpha'_1) \text{ and }
\semO(\square \alpha'_1) \subseteq O(\square \alpha'_1). 
\end{equation}
From the precision theorem, we know that:
\begin{equation}\label{eq:optimal:three1}
\synI(\square \alpha'_1) = \semI(\square \alpha'_1) \text{ and }
\synO(\square \alpha'_1) = \semO(\square \alpha'_1). 
\end{equation}
From the compositionality of $(I,O)$, we know that
\begin{equation}\label{eq:optimal:four1} 
I(\square \alpha_1') = I(\square \alpha_1) \text{ and }
O(\square \alpha_1') = O(\square \alpha_1).
\end{equation}
By combining Equations (\ref{eq:optimal:two1}--\ref{eq:optimal:four1}), 
we find that:
\begin{equation}\label{eq:optimal:five1} 
\synI(\square \alpha'_1) \subseteq I(\square \alpha_1) \text{ and }
\synO(\square \alpha'_1) \subseteq O(\square \alpha_1).
\end{equation}
We now \textbf{claim} the following
\begin{equation} \label{eq:optimal:six1} 
\synI(\square \alpha_1) \subseteq \synI(\square \alpha'_1) \text{ and }
\synO(\square \alpha_1) \subseteq \synO(\square \alpha'_1). 
\end{equation}
If we prove our \textbf{claim}, then combining Equations 
(\ref{eq:optimal:five1}--\ref{eq:optimal:six1}) establishes our theorem
for unary operators.

Proving our claim for the inductive cases for outputs of the different 
unary operators follows directly from the inductive hypothesis and 
Equation~(\ref{eq:optimal:one1}), which states that:
\[\synO(\alpha_1) \subseteq O(\alpha_1) = \synO(\alpha_1').\]
Indeed, $\synO(\square \alpha_1)$ and $\synO(\square \alpha_1')$, respectively,
simply equal $\synO(\alpha_1)$ and $\synO(\alpha_1')$, except for the possible
addition or removal of some fixed variable that depends only on $\square$.

Now, we consider the inductive cases for inputs.  Similar to the outputs, 
we know that: 
\[\synI(\alpha_1) \subseteq I(\alpha_1) = \synI(\alpha_1').\]
Here, we only discuss the cases for $\sellr{x=y}$ and $\selr{x=y}$
as all the other cases again follow directly from the above inclusion
and the definition of $\synI$.

We begin by the cases for $\sellr{x=y}$. The cases are:
\begin{itemize}
    \item when $y \in \synO(\alpha_1)$, we have
    \[\synI(\sellr{x=y}(\alpha_1)) = \synI(\alpha_1) \cup \{x\} \subseteq 
    \synI(\alpha'_1) \cup \{x\} = \synI(\sellr{x=y}(\alpha'_1)).\]
    
    \item when $y \not \in \synO(\alpha_1)$ and $x \eqv y$, we have
    \[\synI(\sellr{x=y}(\alpha_1)) = \synI(\alpha_1) \subseteq \synI(\alpha'_1) 
    \subseteq \synI(\sellr{x=y}(\alpha'_1)).\]
    
    \item when $y \not \in \synO(\alpha_1)$ and $x \neqv y$, by definition
    \[\synI(\sellr{x=y}(\alpha_1)) = \synI(\alpha_1) \cup \{x, y\}.\]
    In case $y \not \in \synO(\alpha'_1)$, we are done since 
    $\synI(\alpha_1) \cup \{x, y\} \subseteq 
    \synI(\alpha_1') \cup \{x, y\} = \synI(\sellr{x=y}(\alpha_1'))$.  
    Otherwise, $y \in \synO(\alpha'_1)$ in which case 
    $\synI(\sellr{x=y}(\alpha_1')) = \synI(\alpha_1') \cup \{x\}$.  
    What remains to show is that $y \in \synI(\alpha'_1)$.  
    By Equation~\ref{eq:optimal:one1}, we have $y \in O(\alpha_1)$.
    Moreover, $y \not \in \semO(\alpha_1)$ since $y \not \in \synO(\alpha_1)$.
    By Lemma~\ref{lem:sound-input} and Equation~\ref{eq:optimal:one1}, we 
    obtain $y \in I(\alpha_1) = \synI(\alpha'_1)$ as desired.  
\end{itemize}
Finally, we consider the case for $\selr{x=y}$ when $x \neqv y$. The case 
when $x \eqv y$ follows directly.  By definition,
\[\synI(\selr{x=y}(\alpha_1)) = \synI(\alpha_1) \cup 
(\{x, y\} - \synO(\alpha_1)).\]
We can focus on $z \in \{x, y\}$.  If $z \in \synO(\alpha_1)$ or 
$z \not \in \synO(\alpha'_1)$, we are done.  Now, consider the case when 
$z \not \in \synO(\alpha_1)$, but $z \in \synO(\alpha'_1)$.  Similar to our 
reasoning for the last case in $\sellr{x=y}$, we can show that 
$z \in \synI(\alpha'_1)$, whence, $z \in \synI(\sellr{x=y}(\alpha'_1))$
by definition.

\section{Primitivity of Composition}\label{sec:comp}
We now turn our attention to the study of composition in LIF\@.  Indeed, LIF has a rich
set of logical operators already, plus an explicit operator ($\comp$) for sequential
composition.  This begs the question whether $\comp$ is not already definable in terms
of the other operators.

We begin by showing that for ``well-behaved'' expressions (all subexpressions have 
disjoint inputs and outputs) composition is redundant in LIF: every well-behaved LIF 
expression is equivalent to a LIF expression that does not use composition.  As a 
corollary, we will obtain that composition is generally redundant if there is an 
infinite supply of variables.  In contrast, in the bounded variable case, we will 
show that composition is primitive in LIF\@.  Here, we use LIFnc to denote the fragment 
of LIF without composition.

\subsection{When Inputs and Outputs are Disjoint, Composition is Non-Primitive}
Our first non-primitivity result is based on inputs and outputs.
We say that a LIF expression $\beta$ is \emph{io-disjoint} if
$\semO(\beta)\cap\semI(\beta)=\emptyset$.
The following theorem implies that if
$\alpha$, $\beta$, and all their subexpressions are io-disjoint, we can rewrite 
$\alpha\comp\beta$ into a LIFnc expression.
Of course, this property also holds in case $\synO(\beta)\cap\synI(\beta)=\emptyset$ 
since the syntactic inputs and outputs overapproximate the semantic ones.
\begin{theorem}\label{theo:IOdisjointcompnotprim}
Let $\alpha$ and $\beta$ be LIF expressions such that $\beta$ is io-disjoint.
Then, $\alpha \comp{} \beta$ is equivalent to
\[
\gamma := \compEqSem{\alpha}{\beta}.
\]
\end{theorem}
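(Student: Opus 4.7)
The plan is to prove both inclusions $\sem{\alpha\comp\beta}\subseteq\sem{\gamma}$ and $\sem{\gamma}\subseteq\sem{\alpha\comp\beta}$; only the second direction should need io-disjointness of $\beta$.

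For the forward inclusion, I would take $(\nu_1,\nu_2)\in\sem{\alpha\comp\beta}$ with intermediate witness $\nu_3$, so $(\nu_1,\nu_3)\in\sem{\alpha}$ and $(\nu_3,\nu_2)\in\sem{\beta}$. By the very definition of semantic outputs (i.e., inertia), $\nu_3=\nu_1$ outside $\semO(\alpha)$ and $\nu_3=\nu_2$ outside $\semO(\beta)$. So $\nu_3$ serves simultaneously as the right-cylindrification witness establishing $(\nu_1,\nu_2)\in\cylr{\semO(\beta)}(\sem{\alpha})$ and as the left-cylindrification witness establishing $(\nu_1,\nu_2)\in\cyll{\semO(\alpha)}(\sem{\beta})$.

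For the reverse inclusion, given $(\nu_1,\nu_2)\in\sem{\gamma}$, I would unpack the two cylindrifications to obtain a valuation $\nu_2'$ with $(\nu_1,\nu_2')\in\sem{\alpha}$ and $\nu_2'=\nu_2$ outside $\semO(\beta)$, together with a valuation $\nu_1'$ with $(\nu_1',\nu_2)\in\sem{\beta}$ and $\nu_1'=\nu_1$ outside $\semO(\alpha)$. Setting $\nu_3:=\nu_2'$ makes $(\nu_1,\nu_3)\in\sem{\alpha}$ immediate, so the only remaining task is $(\nu_2',\nu_2)\in\sem{\beta}$. By inertia applied to $(\nu_1',\nu_2)\in\sem{\beta}$, one gets $\nu_1'=\nu_2$ outside $\semO(\beta)$, which combined with $\nu_2'=\nu_2$ outside $\semO(\beta)$ yields $\nu_2'=\nu_1'$ outside $\semO(\beta)$. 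Io-disjointness of $\beta$ then forces $\semI(\beta)\subseteq\V\setminus\semO(\beta)$, so the same agreement also holds on $\semI(\beta)$, and Proposition~\ref{prop:semDetToAD} applied to $\sem{\beta}$ delivers $(\nu_2',\nu_2)\in\sem{\beta}$ as required.

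The main obstacle is the reverse inclusion: the two cylindrification witnesses $\nu_1'$ and $\nu_2'$ are a priori independent, and io-disjointness of $\beta$ is precisely the ingredient that lets us collapse them into a single usable intermediate valuation via the stronger form of determinacy. A secondary point to verify is that Proposition~\ref{prop:semDetToAD} is applicable at all, i.e., that $\semI(\beta)$ determines $\sem{\beta}$ on $\semO(\beta)$; this follows by combining the Soundness Theorem (which supplies the finite set $\synI(\beta)$ determining $\sem{\beta}$ on $\synO(\beta)\supseteq\semO(\beta)$, hence also on $\semO(\beta)$) with Proposition~\ref{prop:determined2}.
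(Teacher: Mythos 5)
Your proof is correct and follows essentially the same route as the paper's: the forward inclusion via inertia of $\beta$ (resp.\ $\alpha$) outside its semantic outputs, and the reverse inclusion by showing the two cylindrification witnesses $\nu_1'$ and $\nu_2'$ agree outside $\semO(\beta)$, hence on $\semI(\beta)$ by io-disjointness, and then invoking determinacy of $\semNoI{\beta}$ (justified exactly as you do, via soundness of $(\synI,\synO)$ and Proposition~\ref{prop:determined2}). The only cosmetic difference is that you conclude in one step with Proposition~\ref{prop:semDetToAD}, whereas the paper applies plain determinacy to obtain some $\nu_2''$ and then argues $\nu_2''=\nu_2$ by inertia --- which is precisely the content of that proposition, so the two finishes coincide.
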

Intuitively, the reason why this expression works is as follows: we cylindrify $\alpha$
on the right. In general, this might result in a loss of information, but since we are 
only cylindrifying outputs of $\beta$, this means we only forget the information that
would be overwritten by $\beta$ anyway.
Since the inputs and outputs of $\beta$ are disjoint, $\beta$ does not need to know what 
$\alpha$ did to those variables in order to determine its own outputs.
We also cylindrify $\beta$ on the left on the outputs of $\alpha$, since these values will
be set by $\alpha$.  One then still needs to be careful in showing that the intersection 
indeed removes all artificial pairs, by exploiting the fact that expressions are 
inertial outside their output.
\begin{proof}
Let $\inst{}$ be an \interpretation.
First, we show that $\sem{\alpha \comp{} \beta}\subseteq \sem{\gamma}$.
If $(\nu_1,\nu_2) \in \sem{\alpha \comp{} \beta}$, then there is a $\nu_3$ such 
that $(\nu_1,\nu_3)\in \sem\alpha$ and $(\nu_3,\nu_2)\in \sem \beta$.
By definition of the outputs of $\beta$, $\nu_3$ and $\nu_2$ agree outside 
$\semO(\beta)$. Hence, $(\nu_1,\nu_2) \in \sem{\cylr{\semO(\beta)}(\alpha)}$. 
Similarly, we can show that $(\nu_1,\nu_2)\in \sem{\cyll{\semO(\alpha)}(\beta)}$.

For the other inclusion, assume that $(\nu_1,\nu_2)\in \sem{\gamma}$.  Using the 
definition of the semantics of cylindrification, we find $\nu_2'$ such that 
$(\nu_1,\nu_2')\in \sem{\alpha}$ and $\nu_2$ agrees with $\nu_2'$ outside 
$\semO{(\beta)}$ and we find a $\nu_1'$ such that  $\nu_1'$ agrees with 
$\nu_1$ outside $\semO(\alpha)$ and $(\nu_1',\nu_2)\in\sem{\beta}$.
Using the definition of output of $\beta$, we know that also $\nu_1'$ agrees with 
$\nu_2$ outside the outputs of $\beta$, thus $\nu_1'$ and $\nu_2'$ agree outside 
the outputs of $\beta$, and hence definitely on the inputs of $\beta$.
We can apply Proposition~\ref{prop:determined2} thanks to the $(\synI, \synO)$
soundness, $\synI(\alpha)$ is finite and determines $\synO(\alpha)$, which
contains $\semO(\alpha)$.  So we guarantee that $\beta$ is determined by its 
inputs, whence, there exists a $\nu_2''$ such that 
$(\nu_2',\nu_2'')\in\sem{\beta}$ where $\nu_2''=\nu_2$ on the outputs of 
$\beta$ and, since $\beta$ is inertial outside its outputs, $\nu_2''=\nu_2'$ 
outside the outputs of $\beta$. But we previously established that $\nu_2'$ 
agrees with $\nu_2$ outside the outputs of $\beta$, whence $\nu_2''=\nu_2$.  
Summarized we now found that $(\nu_1,\nu_2')\in \sem{\alpha}$ and 
$(\nu_2',\nu_2)\in\sem\beta$, whence, $(\nu_1,\nu_2)\in \sem{\alpha\comp\beta}$ 
as desired.
\end{proof}
\ignore{
The intuition underlying the expression Theorem~\ref{theo:IOdisjointcompnotprim} 
is that for each interpretation \inst, $\sem{\alpha\comp\beta}$ is
\begin{proof}
  Let $\inst{}$ be an \interpretation. First, we show that 
  $\sem{\alpha \comp{} \beta}\subseteq \sem{\gamma}$. To this end, assume that 
  $(\nu_1,\nu_2) \in \sem{\alpha\comp{} \beta}$, whence there exists $\nu_3$ such 
  that $(\nu_1,\nu_3)\in \sem{\alpha}$ and $(\nu_3,\nu_2)\in \sem{\beta}$. By the 
  Inertia Property, $\nu_2$ and $\nu_3$ agree outside $\synO(\beta)$. Therefore, 
  $(\nu_1,\nu_2)\in \sem{\cylr{\synO(\beta)}(\alpha)}$. Analogously, we can establish 
  that $(\nu_1,\nu_2)\in \sem{\cyll{\synO(\alpha)}(\beta)}$, whence 
  $(\nu_1,\nu_2)\in \sem{\gamma}$.

For the other inclusion, suppose that $(\nu_1,\nu_2) \in \sem{\gamma}$. We thus have
\begin{enumerate}
  \item Since $(\nu_1,\nu_2) \in \sem{\cylr{\synO(\beta)}(\alpha)}$ there exists $\nu_2'$ 
  such that
  \begin{enumerate}
    \item $(\nu_1,\nu_2') \in \sem{\alpha}$;
    \item $\nu_2$ and $\nu_2'$ agree everywhere outside $\synO(\beta)$.
  \end{enumerate}
  \item Since $(\nu_1,\nu_2) \in \sem{\cyll{\synO(\alpha)}(\beta)}$ there exists $\nu_1'$ 
  such that:
    \begin{enumerate}
      \item $(\nu_1',\nu_2)\in \sem{\beta}$;
      \item $\nu_1'$ and $\nu_1$ agree everywhere outside $\synO(\alpha)$.
      \item By applying the Inertia Property to (2a) we also know that $\nu_1'$ and $\nu_2$ 
      agree outside $\synO(\beta)$.
    \end{enumerate}
\end{enumerate}
We now show that $(\nu_2',\nu_2) \in \sem{\beta}$. By (1b) and (2c) we know that 
$\nu_1'=\nu_2'$ outside $\synO(\beta)$, since $\synI(\beta)\cap \synO(\beta)=\emptyset$. 
Applying the Input-Output Determinacy property to $(\nu_1',\nu_2)\in \sem{\beta}$ we get 
that $(\nu_2',\nu_2)\in \sem{\beta}$. Combining this with (1a) we get that 
$(\nu_1,\nu_2)\in\sem{\alpha\comp{} \beta}$.
\end{proof}
}

Given the undecidability results of Section~\ref{sec:inout}, 
Theorem~\ref{theo:IOdisjointcompnotprim} is not effective.  We can however give the 
following syntactic variant.
\begin{theorem}\label{theo:IOdisjointcompnotprimSYN}
Let $\alpha$ and $\beta$ be LIF expressions such that 
$\synO(\beta)\cap\synI(\beta)=\emptyset$.
Then, $\alpha \comp{} \beta$ is equivalent to
\[
\compEqSyn{\alpha}{\beta}.
\]
\end{theorem}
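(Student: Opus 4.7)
The plan is to mirror the structure of the proof of Theorem~\ref{theo:IOdisjointcompnotprimSYN}'s semantic counterpart (Theorem~\ref{theo:IOdisjointcompnotprim}), but systematically replacing the semantic notions $\semI$ and $\semO$ by their syntactic overapproximations, and substituting the corresponding syntactic lemmas. Fix an \interpretation $\inst$ and set $\gamma := \compEqSyn{\alpha}{\beta}$.

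For the inclusion $\sem{\alpha \comp \beta} \subseteq \sem{\gamma}$, I would take $(\nu_1,\nu_2) \in \sem{\alpha \comp \beta}$ with witness $\nu_3$ such that $(\nu_1,\nu_3) \in \sem{\alpha}$ and $(\nu_3,\nu_2) \in \sem{\beta}$. The syntactic Inertia Property (Proposition~\ref{prop:inertiaProp}) applied to $\beta$ yields that $\nu_3$ agrees with $\nu_2$ outside $\synO(\beta)$, so $\nu_3$ itself witnesses $(\nu_1,\nu_2) \in \sem{\cylr{\synO(\beta)}(\alpha)}$. Symmetrically, applying the Inertia Property to $\alpha$ shows $\nu_3$ agrees with $\nu_1$ outside $\synO(\alpha)$, giving $(\nu_1,\nu_2) \in \sem{\cyll{\synO(\alpha)}(\beta)}$.

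The reverse inclusion $\sem{\gamma} \subseteq \sem{\alpha \comp \beta}$ is the substantive step. Given $(\nu_1,\nu_2) \in \sem{\gamma}$, unpacking the two cylindrifications produces a $\nu_2'$ with $(\nu_1,\nu_2') \in \sem{\alpha}$ and $\nu_2' = \nu_2$ outside $\synO(\beta)$, and a $\nu_1'$ with $(\nu_1',\nu_2) \in \sem{\beta}$ and $\nu_1' = \nu_1$ outside $\synO(\alpha)$. My goal is to show that $\nu_2'$ serves as an intermediate valuation, i.e.\ that $(\nu_2',\nu_2) \in \sem{\beta}$. Applying the Inertia Property to $(\nu_1',\nu_2) \in \sem{\beta}$ gives $\nu_1' = \nu_2$ outside $\synO(\beta)$; chaining with $\nu_2 = \nu_2'$ outside $\synO(\beta)$ yields $\nu_1' = \nu_2'$ outside $\synO(\beta)$. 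By the syntactic io-disjointness hypothesis $\synI(\beta) \cap \synO(\beta) = \emptyset$, this in particular implies $\nu_1' = \nu_2'$ on $\synI(\beta)$.

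At this point I would invoke the alternative form of syntactic input--output determinacy for $\beta$ (Definition~\ref{def:altInDet}, equivalent to Lemma~\ref{lem:inputDet} via Lemma~\ref{lem:eqvInDetDef}): since $\nu_2'$ agrees with $\nu_1'$ on $\synI(\beta)$ and outside $\synO(\beta)$, from $(\nu_1',\nu_2) \in \sem{\beta}$ we directly conclude $(\nu_2',\nu_2) \in \sem{\beta}$. Combined with $(\nu_1,\nu_2') \in \sem{\alpha}$, this establishes $(\nu_1,\nu_2) \in \sem{\alpha \comp \beta}$. The main (and only really delicate) obstacle is bookkeeping: ensuring that io-disjointness is applied on the right side and that the alternative formulation of input--output determinacy is invoked rather than the basic one, since the latter only produces \emph{some} valuation that agrees with $\nu_2$ on $\synO(\beta)$ and would require an extra inertia argument to pin down $\nu_2$ exactly.
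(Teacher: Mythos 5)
Your proof is correct, but it takes a genuinely different route from the paper's. The paper proves this theorem by \emph{reduction to the semantic version} (Theorem~\ref{theo:IOdisjointcompnotprim}): it first uses Lemma~\ref{lem:sound-input} to deduce $\semO(\beta)=\synO(\beta)$ from $\disSyn\beta$, and then shows, via a somewhat delicate valuation-surgery argument invoking Proposition~\ref{prop:semDetToAD}, that replacing $\synO(\alpha)$ by $\semO(\alpha)$ in the left cylindrification does not change the semantics, so that the semantic theorem applies. You instead give a direct, self-contained argument entirely at the syntactic level: the forward inclusion from the Inertia Property (Proposition~\ref{prop:inertiaProp}), and the reverse inclusion by chaining inertia with the hypothesis $\synI(\beta)\cap\synO(\beta)=\emptyset$ to get $\nu_1'=\nu_2'$ outside $\synO(\beta)$ (hence on $\synI(\beta)$), and then applying alternative input--output determinacy (Definition~\ref{def:altInDet}, via Lemmas~\ref{lem:inputDet} and~\ref{lem:eqvInDetDef}) to conclude $(\nu_2',\nu_2)\in\sem{\beta}$ exactly, with $\nu_2'$ as the composition witness. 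Your approach buys simplicity and avoids a subtlety the paper's route must confront: the semantic theorem rests on semantic determinacy, which (Proposition~\ref{prop:determined2}) only holds under a finiteness assumption that must be discharged through soundness, whereas syntactic input--output determinacy is proved unconditionally for all LIF expressions. Your closing observation is also on point --- the basic form of determinacy would only yield some $\nu_2''$ agreeing with $\nu_2$ on $\synO(\beta)$ and would need an extra inertia step to force $\nu_2''=\nu_2$ (which is precisely what the paper's proof of Theorem~\ref{theo:IOdisjointcompnotprim} does), so invoking the alternative formulation is the right move.
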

\begin{proof}
Since $\disSyn \beta$, we obtain by Lemma~\ref{lem:sound-input} that 
$\semO(\beta) = \synO(\beta)$.  Thus, we alternatively 
show that $\alpha \comp \beta$ is equivalent to the expression
\[\compEqGen{\alpha}{\beta}{\semO(\beta)}{\synO(\alpha)}.\]

We can also see that $\beta$ is io-disjoint, since $\disSyn \beta$ 
and $(\synI, \synO)$ is sound.  
Thus, if we show that 
\[
\sem{\compEqGen{\alpha}{\beta}{\semO(\beta)}{\synO(\alpha)}} = 
\sem{\compEqSem{\alpha}{\beta}}
\] 
for any \interpretation $\inst$, we can apply 
Theorem~\ref{theo:IOdisjointcompnotprim} and we are done.

Thereto, let $\inst$ be an \interpretation.  By soundness, it is clear that 
\[\sem{\cyll{\semO(\alpha)}(\beta)} \subseteq \sem{\cyll{\synO(\alpha)}(\beta)}
\text{, so } \sem{\compEqSem{\alpha}{\beta}} \subseteq 
\sem{\compEqGen{\alpha}{\beta}{\semO(\beta)}{\synO(\alpha)}}.\] 

What remains to show is that the other inclusion also holds.  Thereto, let 
$(\nu_1, \nu_2) \in \sem{\compEqGen{\alpha}{\beta}{\semO(\beta)}{\synO(\alpha)}}$.
Clearly, $(\nu_1, \nu_2) \in \sem{\cylr{\semO(\beta)}(\alpha)}$ and 
$(\nu_1, \nu_2) \in \sem{\cyll{\synO(\alpha)}(\beta)}$.
From $(\nu_1, \nu_2) \in \sem{\cylr{\semO(\beta)}(\alpha)}$, we can see that
$\nu_1 = \nu_2$ outside $\semO(\alpha) \cup \semO(\beta)$.
From $(\nu_1, \nu_2) \in \sem{\cyll{\synO(\alpha)}(\beta)}$, we can see that there is 
a valuation $\nu_1'$ such that $(\nu_1', \nu_2) \in \sem{\beta}$ and $\nu_1' = \nu_1$ 
outside $\synO(\alpha)$.  
Define $\nu_1''$ to be the valuation $\nu_1'[\restr{\nu_1}{\semO(\beta)}]$.
By construction and io-disjointness of $\beta$, we see that $\nu_1'' = \nu_1'$
on $\semI(\beta)$ and outside $\semO(\beta)$.  By Proposition~\ref{prop:semDetToAD},
we obtain that $(\nu_1'', \nu_2) \in \sem{\beta}$.
Define $\nu$ to be the valuation $\nu_1''[\restr{\nu_1}{\semO(\alpha)}]$.
By the semantics of cylindrification, we see that 
$(\nu, \nu_2) \in \sem{\cyll{\semO(\alpha)}(\beta)}$.  
Consequently, $\nu = \nu_2$ outside $\semO(\alpha) \cup \semO(\beta)$.
Before, we established that $\nu_1$ and $\nu_2$ agree outside the same
set of variables.  So we obtain that $\nu = \nu_2 = \nu_1$ outside 
$\semO(\alpha) \cup \semO(\beta)$.  Moreover, we know by construction that 
$\nu = \nu_1'' = \nu_1$ on $\semO(\beta) \cup \semO(\alpha)$.  Then, $\nu$ is 
the same valuation as $\nu_1$.  So we obtain that 
$(\nu_1, \nu_2) \in \sem{\compEqSem{\alpha}{\beta}}$ as desired. 
\end{proof}
\begin{example}[Example~\ref{ex:inc:basic} continued]
Consider the expression
\[
\alpha = P_1(x;x) \comp P_1(x;y).
\]
with the \interpretation $\inst$ in Example~\ref{ex:inc:basic}.  In that case, 
$\alpha $ first increments $x$ by one and subsequently sets the value of $y$ to 
one higher than $x$.  Stated differently,
\[
\sem{\alpha} = \bigl\{ (\nu_1,\nu_2) \mid \nu_2(x) = \nu_1(x)+1 \wedge \nu_2(y)=\nu_2(x)+1 
\text{ and } \nu_1(z)=\nu_2(z) \text{ for }z\not\in\{x,y\} \bigr\}
\]
Theorem~\ref{theo:IOdisjointcompnotprim} tells us that $\alpha$ is equivalent to
\[ 
\cylr{y}(P_1(x;x))\cap \cyll{x}(P_1(x;y)).
\]
We see that
\begin{align*}
\sem{\cylr{y}(P_1(x;x))} &= \bigl\{(\nu_1,\nu_2)\mid 
\nu_2(x) = \nu_1(x)+1 \text{and } \nu_1(z)=\nu_2(z) \text{ for } z \not\in\{x,y\}\bigl\}, \\
\sem{\cyll{x}(P_1(x;y))} &= \bigl\{(\nu_1,\nu_2)\mid  
\nu_2(y)=\nu_2(x)+1 \text{and } \nu_1(z)=\nu_2(z) \text{ for }z\not\in\{x,y\}\bigl\}.
\end{align*}
The intersection of these indeed equals $\sem{\alpha}$.
\end{example}
Theorem~\ref{theo:IOdisjointcompnotprim} no longer holds in general if 
$\beta$ can have overlapping inputs and outputs, as the following example 
illustrates.
\begin{example}
Consider the expression \[\alpha := P_1(x;x) \comp P_1(x;x).\]
with the \interpretation $\inst$ as in the example above.
In this case, $\alpha$ increments the value of $x$ by two.
However, $\sem{\cylr{x}(P_1(x;x))}$ and $\sem{\cyll{x}(P_1(x;x))}$ are both equal to
\begin{align*}
 \{(\nu_1,\nu_2)\mid \nu_1(z) = \nu_2(z)\text{ for all } z\neq x\}.
\end{align*}
Hence, indeed, in this case $\alpha$ is not equivalent to
\[ 
\cylr{x}(P_1(x;x))\cap \cyll{x}(P_1(x;x)).
\]
\end{example}

\subsection{If \texorpdfstring{$\V$}{V} is Infinite, Composition is Non-Primitive}
We know from Theorem~\ref{theo:IOdisjointcompnotprim} that if $\beta$ is io-disjoint, 
$\alpha$ and $\beta$ can be composed without using the composition operator.
If $\V$ is sufficiently large, we can force any expression $\beta$ to be io-disjoint by 
having $\beta$ write its outputs onto unused variables instead of its actual outputs. 
The composition can then be eliminated following Theorem~\ref{theo:IOdisjointcompnotprim}, 
after which we move the variables back so that the ``correct'' outputs are used.
What we need to show is that ``moving the variables around'', as described above, 
is expressible without composition.
As before, we define the operators on $\bbar$s but their definition is  lifted to LIF 
expressions in a straightforward way.
\begin{definition}\label{def:mvr}
Let $B$ be a $\bbar$ and let $\bar{x}$ and $\bar{y}$ be disjoint tuples of distinct 
variables of the same length. The \emph{right move} is defined as follows:
\[
\mvr{\bar{x}\to\bar{y}}(B): = 
\{(\nu_1,\nu_2')\mid \nu_2'(\bar{x}) = \nu_1(\bar{x})\text{ and }
\exists \nu_2: (\nu_1,\nu_2)\in B \text{ and } \nu_2'(\bar{y}) = \nu_2(\bar{x})
\text{ and }\nu_2 =\nu_2'\text{ outside }\bar{x}\cup\bar{y}\}.
\]
\end{definition}
This operation can be expressed without composition, which we show in the 
following lemma:
\ignore{
\begin{proposition}\label{prop:moveEq}
Let $B$ be a $\bbar$ and $x$ and $y$ be distinct variables and let 
$(x)\cdot\bar{x}$ and $(y)\cdot\bar{y}$ be disjoint tuples of distinct 
variables of the same length.  Then, the following two expressions are 
equivalent:
\begin{itemize}
    \item $\mvr{(x)\cdot\bar{x}\to(y)\cdot\bar{y}}(B)$; and
    \item $\mvr{x\to y}(\mvr{\bar{x}\to\bar{y}}(B))$.
\end{itemize}
\end{proposition}
\begin{proof}
First we show that $\mvr{x\to y}(\mvr{\bar{x}\to\bar{y}}(B)) \subseteq 
\mvr{(x)\cdot\bar{x}\to(y)\cdot\bar{y}}(B)$. Let $(\nu_1, \nu_2) \in 
\mvr{x\to y}(\mvr{\bar{x}\to\bar{y}}(B))$.  We know that:
\begin{enumerate}
    \item $\nu_2(x)=\nu_1(x)$;
    \item $(\nu_1,\nu_2')\in \mvr{\bar{x}\to\bar{y}}(B)$;
    \item $\nu_2'(x) = \nu_2(y)$; and
    \item $\nu_2 =\nu_2'$ outside $\{x,y\}$.
\end{enumerate}
for some valuation $\nu_2'$.  From (2), we know that:
\begin{enumerate} \setcounter{enumi}{4}
    \item $\nu_2'(\bar{x}) = \nu_1(\bar{x})$;
    \item $(\nu_1,\nu_2'')\in B$;
    \item $\nu_2''(\bar x) = \nu_2'(\bar y)$; and
    \item $\nu_2' =\nu_2'' \text{ outside } \bar x \cup \bar y$.
\end{enumerate}
for some valuation $\nu_2''$.  From (4) and (5), we can see that
$\nu_2(\bar x) = \nu_1(\bar x)$.  Along with (1), we get that
\[\nu_2((x)\cdot\bar x) = \nu_1((x)\cdot\bar x) \tag{9}\]
From (7) and (4), we can see that $\nu_2''(\bar x) = \nu_2(\bar y)$.
Along with (3) and (8), we get that
\[\nu_2''((x)\cdot\bar{x}) = \nu_2((y)\cdot\bar{y}) \tag{10}\]
From (4) and (8), we can also see that
\[\nu_2 = \nu_2'' \text{ outside } \{x,y\}\cup\bar{x}\cup\bar{y} \tag{11}\]
Hence, we can see that 
$(\nu_1, \nu_2) \in \mvr{(x)\cdot\bar{x}\to(y)\cdot\bar{y}}(B)$
from (6) and (9-11).  This establishes that 
$\mvr{x\to y}(\mvr{\bar{x}\to\bar{y}}(B)) \subseteq 
\mvr{(x)\cdot\bar{x}\to(y)\cdot\bar{y}}(B)$.

Now, we prove the other inclusion.  Let $(\nu_1, \nu_2) \in 
\mvr{(x)\cdot\bar{x}\to(y)\cdot\bar{y}}(B)$.  Then,
\begin{enumerate}
    \item $\nu_2((x)\cdot\bar{x})=\nu_1((x)\cdot\bar{x})$;
    \item $(\nu_1,\nu_2')\in B$;
    \item $\nu_2'((x)\cdot\bar{x}) = \nu_2((y)\cdot\bar{y})$; and
    \item $\nu_2 =\nu_2'$ outside $\{x,y\} \cup \bar{x} \cup \bar{y}$.
\end{enumerate}
for some valuation $\nu_2'$.  Take $\nu_2''$ to be the valuation constructed 
as follows:
\begin{enumerate} \setcounter{enumi}{4}
    \item $\nu_2'' = \nu_2'$ outside $\bar{x} \cup \bar{y}$;
    \item $\nu_2''(\bar{x}) = \nu_1(\bar{x})$; and
    \item $\nu_2''(\bar{y}) = \nu_2'(\bar{x})$.
\end{enumerate}
From (2) and (5-7), we get that
\[(\nu_1, \nu_2'') \in \mvr{\bar{x}\to\bar{y}}(B) \tag{8}\]
Now, take $\nu_2''$ to be the valuation constructed as follows:
\begin{enumerate} \setcounter{enumi}{8}
    \item $\nu_2''' = \nu_2''$ outside $\{x,y\}$;
    \item $\nu_2'''(x) = \nu_1(x)$; and
    \item $\nu_2'''(y) = \nu_2''(x)$.
\end{enumerate}
From (8) and (9-11), we get that $(\nu_1,\nu_2''') \in 
\mvr{x\to y}(\mvr{\bar{x}\to\bar{y}}(B))$.  What remains to show is that
$\nu_2''' = \nu_2$.  Indeed,
\begin{itemize}
    \item $\nu_2''' = \nu_2$ outside $\{x,y\} \cup \bar{x} \cup \bar{y}$ 
    which we get from (4-5) and (9);
    \item $\nu_2'''(x) = \nu_1(x) = \nu_2(x)$ which we get from (10) and (1);
    \item $\nu_2'''(y) = \nu_2''(x) = \nu_2'(x) = \nu_2(y)$ which we get 
    from (11), (5), and (3);
    \item $\nu_2'''(\bar x) = \nu_2''(\bar x) = \nu_1(\bar x) = \nu_2(\bar x)$ 
    which we get from (9), (6), and (1); and
    \item $\nu_2'''(\bar y) = \nu_2''(\bar y) = \nu_2'(\bar x) = \nu_2(\bar y)$ 
    which we get from (9), (7), and (3).
\end{itemize}
\end{proof}
}
\begin{lemma}\label{lem:derived}
Let $\bar{x}$ and $\bar{y}$ be disjoint tuples of distinct variables of the 
same length.  Then, for any $\bbar$ $B$, we have
\[
\mvr{\bar{x}\to\bar{y}}(B) =  
\sellr{\bar{x}=\bar{x}} \cylr{\bar{x}} \selr{\bar{x}=\bar{y}} \cylr{\bar{y}}(B).
\]
\end{lemma}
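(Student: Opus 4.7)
My plan is to prove the equality $\mvr{\bar{x}\to\bar{y}}(B) = \sellr{\bar{x}=\bar{x}} \cylr{\bar{x}} \selr{\bar{x}=\bar{y}} \cylr{\bar{y}}(B)$ by a direct semantic unpacking of the right-hand side in both directions, relying crucially on the disjointness of $\bar{x}$ and $\bar{y}$ so that the intermediate cylindrifications do not interfere. I would not try to induct on the length of the tuples; the cleanest route is to expand the shorthands $\cylr{\bar x}$, $\selr{\bar x=\bar y}$ and $\sellr{\bar x=\bar x}$ to their coordinate-wise meanings and then manipulate the resulting constraints on four intermediate valuations.

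For the inclusion $\supseteq$, I would start from $(\nu_1,\nu_3)$ in the right-hand side and peel off the operators in the order $\sellr{\bar x=\bar x}$, $\cylr{\bar x}$, $\selr{\bar x=\bar y}$, $\cylr{\bar y}$. Peeling gives witnesses $\nu_3'$ and $\nu_2$ with (i) $\nu_3(\bar x) = \nu_1(\bar x)$, (ii) $\nu_3' = \nu_3$ outside $\bar x$, (iii) $\nu_3'(\bar x) = \nu_3'(\bar y)$, (iv) $\nu_2 = \nu_3'$ outside $\bar y$, and (v) $(\nu_1,\nu_2) \in B$. Since $\bar x$ and $\bar y$ are disjoint, (ii) and (iv) combine to give $\nu_2 = \nu_3$ outside $\bar x \cup \bar y$, and similarly $\nu_3(\bar y) = \nu_3'(\bar y) = \nu_3'(\bar x) = \nu_2(\bar x)$. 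Together with (i) and (v), this is exactly the defining condition for $(\nu_1,\nu_3) \in \mvr{\bar x \to \bar y}(B)$.

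For the inclusion $\subseteq$, I would start from $(\nu_1,\nu_2') \in \mvr{\bar{x}\to\bar{y}}(B)$ with witness $\nu_2$ satisfying $(\nu_1,\nu_2) \in B$, $\nu_2'(\bar x) = \nu_1(\bar x)$, $\nu_2'(\bar y) = \nu_2(\bar x)$, and $\nu_2 = \nu_2'$ outside $\bar x \cup \bar y$. I would then explicitly build the two intermediate valuations: take $\nu_3' := \nu_2[\bar y \mapsto \nu_2(\bar x)]$ (which lies in $\cylr{\bar y}(B)$ relative to $\nu_1$ and satisfies the selection $\selr{\bar x = \bar y}$ thanks to $\bar x \cap \bar y = \emptyset$), and then take $\nu_3 := \nu_3'[\bar x \mapsto \nu_1(\bar x)]$ (which lies in $\cylr{\bar x}(\selr{\bar x = \bar y}(\cylr{\bar y}(B)))$ and satisfies $\sellr{\bar x = \bar x}$). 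A component-wise check on $\bar x$, on $\bar y$, and outside $\bar x \cup \bar y$ then shows $\nu_3 = \nu_2'$.

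The main obstacle is purely bookkeeping: making sure the coordinate-wise expansions of $\sellr{\bar x = \bar x}$, $\selr{\bar x = \bar y}$, and the set-cylindrifications $\cylr{\bar x}$, $\cylr{\bar y}$ are handled consistently, and invoking $\bar x \cap \bar y = \emptyset$ at the two places where it matters, namely to pass the $\bar x$-values unscathed through $\cylr{\bar y}$ and to conclude agreement of valuations outside $\bar x \cup \bar y$ from the two separate cylindrification witnesses. No further LIF machinery is needed beyond the raw definitions of the selection and cylindrification operators given in the preliminaries.
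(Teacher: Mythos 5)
Your proof is correct and follows essentially the same route as the paper, which gives a "proof by picture" tracking a generic pair through the operator chain $\cylr{\bar y}$, $\selr{\bar x=\bar y}$, $\cylr{\bar x}$, $\sellr{\bar x=\bar x}$; your version is simply a more formal two-inclusion rendering of that same unpacking, with the disjointness of $\bar x$ and $\bar y$ invoked at the same two places.
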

\begin{proof}
We give a ``proof by picture''.  Consider an arbitrary $(\nu_1, \nu_2) \in B$: 
\[\begin{array}{ccccccc}
\toprule
\bar x & \bar y & rest & & 
\bar x & \bar y & rest  \\
\toprule
\bar a & \bar b & \bar c & &
\bar d & \bar e & \bar f \\
\bottomrule
\end{array}\]


We will verify that when we apply the LHS and the RHS on this pair of valuations, 
we obtain identical results.

For the LHS,  we see that $\mvr{\bar{x}\to\bar{y}}(B)$ yields 
the following pair of valuations when applied on $(\nu_1, \nu_2)$:
\[\begin{array}{ccccccc}
\toprule
\bar x & \bar y & rest & &
\bar x & \bar y & rest  \\
\toprule
\bar a & \bar b & \bar c & &
\bar a & \bar d & \bar f \\
\bottomrule
\end{array}\]

Now, we check the RHS.  We see that the following set of pairs of valuations is
the result of $\cylr{\bar{y}}(B)$ when applied on $(\nu_1, \nu_2)$:
\[\begin{array}{ccccccc}
\toprule
\bar x & \bar y & rest & &
\bar x & \bar y & rest  \\
\toprule
\bar a & \bar b & \bar c & &
\bar d & * & \bar f \\
\bottomrule
\end{array}\]
Here the asterisk denotes a ``wildcard'', i.e., any valuation on $\bar y$ is
allowed.

Then, we see that $\selr{\bar{x}=\bar{y}} \cylr{\bar{y}}(B)$ yields:
\[\begin{array}{ccccccc}
\toprule
\bar x & \bar y & rest & &
\bar x & \bar y & rest  \\
\toprule
\bar a & \bar b & \bar c & &
\bar d & \bar d & \bar f \\
\bottomrule
\end{array}\]
Next, we see that $\cylr{\bar{x}} \selr{\bar{x}=\bar{y}} \cylr{\bar{y}}(B)$ 
yields:
\[\begin{array}{ccccccc}
\toprule
\bar x & \bar y & rest & &
\bar x & \bar y & rest  \\
\toprule
\bar a & \bar b & \bar c & &
* & \bar d & \bar f \\
\bottomrule
\end{array}\]

Finally, we see that 
$\sellr{\bar{x}=\bar{x}} \cylr{\bar{x}} \selr{\bar{x}=\bar{y}} \cylr{\bar{y}}(B)$ 
yields the following pair of valuations which is the same as the result of the LHS.
\[\begin{array}{ccccccc}
\toprule
\bar x & \bar y & rest & &
\bar x & \bar y & rest  \\
\toprule
\bar a & \bar b & \bar c & &
\bar a & \bar d & \bar f \\
\bottomrule
\end{array}\]
\end{proof}
\ignore{\begin{proof}
The intuition for this expression is as follows.  We first cylindrify the 
$\bar y$ variables on the right so that the subsequent selection effectively 
copies $\bar x$ to $\bar y$.  The final two operations make sure that the 
$\bar x$ variables are inertial.

To formally prove our intuition, we can see that when $\bar{x}$ and $\bar{y}$ are 
disjoint tuples of distinct variables of the same length, then the following 
expression
\[\mvr{x_1\to y_1}(\mvr{x_2\to y_2}(\ldots (\mvr{x_n\to y_n}(B))))\]
is equivalent to $\mvr{\bar{x}\to\bar{y}}(B)$, where $\bar x = x_1, \ldots, x_n$
and $\bar y = y_1, \ldots, y_n$.  This equivalence can be shown by Proposition
\ref{prop:moveEq}.  In that case, it suffices to show that the following holds 
for $i = 1\ldots n$, and valuations $\nu_1$ and $\nu_2$:
\[(\nu_1, \nu_2) \in \mvr{x_i \to y_i}(B) \text{ iff } 
(\nu_1, \nu_2) \in \sellr{x_i=x_i} \cylr{x_i} \selr{x_i=y_i} \cylr{y_i}(B)\]

In what follows, we refer to $x_i$ and $y_i$ simply as $x$ and $y$; respectively. 
We begin with the ``if''-direction.  Let 
$(\nu_1, \nu_2) \in \sellr{x=x} \cylr{x} \selr{x=y} \cylr{y}(B)$.  We can see 
that:
\begin{enumerate}
    \item $(\nu_1, \nu_2) \in \cylr{x} \selr{x=y} \cylr{y}(B)$; and 
    \item $\nu_1(x) = \nu_2(x)$.
\end{enumerate}
From (1), we can see that there exists $\nu'_2$ such that:
\begin{enumerate} \setcounter{enumi}{2}
    \item $(\nu_1, \nu'_2) \in \selr{x=y} \cylr{y}(B)$; and
    \item $\nu'_2 = \nu_2$ outside $\{x\}$.
\end{enumerate}
From (3), we can see that:
\begin{enumerate} \setcounter{enumi}{4}
    \item $(\nu_1, \nu'_2) \in \cylr{y}(B)$; and
    \item $\nu'_2(x) = \nu'_2(y)$.
\end{enumerate}
From (5), we can then see that there exists $\nu''_2$ such that:
\begin{enumerate} \setcounter{enumi}{6}
    \item $(\nu_1, \nu''_2) \in B$; and
    \item $\nu''_2 = \nu'_2$ outside $\{y\}$.
\end{enumerate}
Moreover, from (8), (6), and (4), we know that 
$\nu''_2(x) = \nu'_2(x) = \nu'_2(y) = \nu_2(y)$. 
What remains to show is that $\nu_2 = \nu''_2$ outside $\{x,y\}$.  Indeed, this 
follows from (4) and (8).

We now prove the ``only if''-direction.  
Let $(\nu_1, \nu_2) \in \mvr{x \to y}(B)$.  We know that:
\begin{enumerate}
    \item $\nu_2(x) = \nu_1(x)$;
    \item there exists $\nu'_2$ such that $(\nu_1, \nu'_2) \in B$;
    \item $\nu'_2(x) = \nu_2(y)$; and
    \item $\nu_2 = \nu'_2$ outside $\{x, y\}$.
\end{enumerate}
Take $\nu''_2$ to be the valuation $\nu'_2[y \mapsto \nu'_2(x)]$.  By 
construction, we know that $\nu''_2(x) = \nu''_2(y)$.  Along with (2), we can 
see that $(\nu_1, \nu''_2) \in \cylr{y}(B)$ and 
$(\nu_1, \nu''_2) \in \selr{x=y} \cylr{y}(B)$.  Now, take $\nu'''_2$ to be the 
valuation $\nu''_2[x \mapsto \nu_1(x)]$.  By construction, we know that
$\nu_1(x) = \nu'''_2(x)$.  Hence, 
$(\nu_1, \nu'''_2) \in \cylr{x} \selr{x=y} \cylr{y}(B)$ and
$(\nu_1, \nu'''_2) \in \sellr{x=x} \cylr{x} \selr{x=y} \cylr{y}(B)$.  
What remains to show is that $\nu_2 = \nu'''_2$.  Indeed, by construction and 
(4), we can see that $\nu_2 = \nu'''_2$ outside $\{x,y\}$.  Also, by construction 
and (1), we can see that $\nu_2(x) = \nu_1(x) = \nu'''_2(x)$.  Finally, by 
construction and (3), we can see that 
$\nu_2(y) = \nu'_2(x) = \nu''_2(y) = \nu'''_2(y)$.
\end{proof}
}

\begin{lemma}\label{lem:mvr-comp}
Let $A$ and $B$ be $\bbar$s and let $\bar{x}$ and $\bar{y}$ be disjoint tuples 
of distinct variables of the same length such that all variables in $\bar{y}$ 
are inertially cylindrified in $A$ and $B$. In that case:
\[A\comp B = \mvr{\bar{y}\to\bar{x}} ( A\comp \mvr{\bar{x}\to\bar{y}}(B) )\]
\end{lemma}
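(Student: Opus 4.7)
The plan is to prove both set inclusions directly by unfolding the definitions of composition and of the right-move operator. The underlying intuition is that $\mvr{\bar{x}\to\bar{y}}(B)$ redirects whatever $B$ would write into $\bar{x}$ to the auxiliary tuple $\bar{y}$ while restoring the original $\bar{x}$ on the right, and the outer $\mvr{\bar{y}\to\bar{x}}$ transports that redirected value back into $\bar{x}$. The hypothesis that $\bar{y}$ is inertially cylindrified in both $A$ and $B$ is what glues the two moves together: it guarantees that $\bar{y}$ carries no information across $A$ or $B$, so it can safely be used as a scratch tuple.

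For the inclusion $A\comp B \subseteq \mvr{\bar{y}\to\bar{x}}(A\comp \mvr{\bar{x}\to\bar{y}}(B))$, I would take $(\nu_1,\nu_2)\in A\comp B$ with intermediate valuation $\nu_3$, and define $\nu_2^\ast$ by $\nu_2^\ast(\bar{x}) = \nu_3(\bar{x})$, $\nu_2^\ast(\bar{y}) = \nu_2(\bar{x})$, and $\nu_2^\ast = \nu_2$ outside $\bar{x}\cup\bar{y}$. Using the pair $(\nu_3,\nu_2)\in B$ as the witness of the inner right-move, it is straightforward to check that $(\nu_3,\nu_2^\ast)\in \mvr{\bar{x}\to\bar{y}}(B)$ and hence $(\nu_1,\nu_2^\ast)\in A\comp \mvr{\bar{x}\to\bar{y}}(B)$. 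Then the three conditions of the outer move hold: $\nu_2(\bar{x}) = \nu_2^\ast(\bar{y})$ and $\nu_2 = \nu_2^\ast$ outside $\bar{x}\cup\bar{y}$ by construction, and $\nu_2(\bar{y}) = \nu_1(\bar{y})$ because inertial cylindrification of $\bar{y}$ in $A$ gives $\nu_1(\bar{y})=\nu_3(\bar{y})$ and inertial cylindrification of $\bar{y}$ in $B$ gives $\nu_3(\bar{y})=\nu_2(\bar{y})$.

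For the reverse inclusion, I would unpack a pair $(\nu_1,\nu_2)$ from the right-hand side to obtain a witness $\nu_2^\ast$ of the outer move, an intermediate $\nu_3$ with $(\nu_1,\nu_3)\in A$, and a witness $\nu_2'$ of the inner move with $(\nu_3,\nu_2')\in B$. The goal is to show $\nu_2 = \nu_2'$, which would yield $(\nu_3,\nu_2)\in B$ and hence $(\nu_1,\nu_2)\in A\comp B$. I would verify agreement tuple-by-tuple: on $\bar{x}$ via $\nu_2(\bar{x})=\nu_2^\ast(\bar{y})=\nu_2'(\bar{x})$; outside $\bar{x}\cup\bar{y}$ via $\nu_2=\nu_2^\ast=\nu_2'$; and on $\bar{y}$ by chaining inertial cylindrification twice to get $\nu_2'(\bar{y})=\nu_3(\bar{y})=\nu_1(\bar{y})=\nu_2(\bar{y})$.

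The calculations themselves are routine once the right intermediate valuations are named, so the main obstacle is purely notational: keeping straight which of $\bar{x}$ and $\bar{y}$ plays the role of ``active output'' and which plays ``scratch tuple'' at each stage, and being explicit about the fact that inertial cylindrification of $\bar{y}$ is invoked \emph{twice} (once in $A$ and once in $B$) to establish the $\bar{y}$-agreement in both directions. A ``proof by picture'' in the style of Lemma~\ref{lem:derived}, tabulating the $\bar{x}$, $\bar{y}$, and \emph{rest} coordinates of the relevant valuations, would make the bookkeeping transparent.
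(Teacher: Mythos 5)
Your proof is correct and takes essentially the same route as the paper: the paper's ``proof by picture'' is just a tabular rendering of the same element chase, with the two invocations of inertia on $\bar{y}$ (once across $A$, once across $B$) encoded by the ``$-$'' entries, and your intermediate valuation $\nu_2^\ast$ corresponding exactly to the row produced by the inner move.
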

What this lemma shows is that we can temporarily move certain variables (the $\bar{x}$) 
to unused variables (the $\bar{y}$) and then move them back.  The proof of this lemma is:
\begin{proof}[Proof of Lemma~\ref{lem:mvr-comp}]
Again we give a proof by picture.  Let the left be a generic pair of valuations that 
belongs to $A$, while the one on the right be a generic one that belongs to $B$.  
The ``$-$'' here represents inertial cylindrification.
\begin{center}
$\begin{array}{ccccccc}
\toprule
\bar x & \bar y & rest & &
\bar x & \bar y & rest  \\
\toprule
\bar a & - & \bar b & &
\bar c & - & \bar d \\ 
\bottomrule
\end{array}$
\qquad
$\begin{array}{ccccccc}
\toprule
\bar x & \bar y & rest & &
\bar x & \bar y & rest  \\
\toprule
\bar e & - & \bar f & &
\bar g & - & \bar h \\
\bottomrule
\end{array}$
\end{center}

For the LHS, we see that composition can only be applied if $\bar c = \bar e$ 
and $\bar d = \bar f$.  Under this assumption, we get that $A \comp B$ yields the 
following:
\[\begin{array}{ccccccc}
\toprule
\bar x & \bar y & rest & &
\bar x & \bar y & rest  \\
\toprule
\bar a & - & \bar b & &
\bar g & - & \bar h \\
\bottomrule
\end{array}\]

Now, we check the RHS.  We see that $\mvr{\bar x \to \bar y}(B)$ yields the
following when applied on the generic pair belonging to $B$:
\[\begin{array}{ccccccc}
\toprule
\bar x & \bar y & rest & &
\bar x & \bar y & rest  \\
\toprule
\bar e & * & \bar f & &
\bar e & \bar g & \bar h \\
\bottomrule
\end{array}\]

To apply the composition in the RHS, we must have $\bar c = \bar e$ and 
$\bar d = \bar f$, which are the same restrictions we had in applying the composition 
in the LHS, so the expression $A\comp \mvr{\bar{x}\to\bar{y}}(B)$ yields:
\[\begin{array}{ccccccc}
\toprule
\bar x & \bar y & rest & &
\bar x & \bar y & rest  \\
\toprule
\bar a & * & \bar b & &
\bar e & \bar g & \bar h \\
\bottomrule
\end{array}\]

Finally, applying the last move operation,
$\mvr{\bar{y}\to\bar{x}}(A\comp \mvr{\bar{x}\to\bar{y}}(B))$ yields:
\[\begin{array}{ccccccc}
\toprule
\bar x & \bar y & rest & &
\bar x & \bar y & rest  \\
\toprule
\bar a & - & \bar b & &
\bar g & - & \bar h \\
\bottomrule
\end{array}\]
which is clearly identical to what we had from the LHS.
\end{proof}
\ignore{
\begin{proof}
We first show that 
$A\comp B \subseteq \mvr{\bar{y}\to\bar{x}}(A\comp \mvr{\bar{x}\to\bar{y}}(B))$.
Let $(\nu_1, \nu_3) \in A \comp B$.  We know then that there is a valuation 
$\nu_2$ such that:
\begin{enumerate}
    \item $(\nu_1, \nu_2) \in A$; and
    \item $(\nu_2, \nu_3) \in B$.
\end{enumerate}
We also know that:
\begin{enumerate} \setcounter{enumi}{2}
    \item $\nu_1 = \nu_2$ on $\bar y$; and
    \item $\nu_2 = \nu_3$ on $\bar y$.
\end{enumerate}
since all variables in $\bar{y}$ are inertially cylindrified in $A$ and $B$.
Take $\nu_3'$ be the valuation:
\begin{enumerate} \setcounter{enumi}{4}
    \item $\nu_3'(\bar x) = \nu_2(\bar x)$;
    \item $\nu_3'(\bar y) = \nu_3(\bar x)$; and
    \item $\nu_3' = \nu_3$ outside $\bar x \cup \bar y$.
\end{enumerate}
From the structure of $\nu_3'$, we can see that 
$(\nu_2, \nu_3') \in \mvr{\bar{x}\to\bar{y}}(B)$.  Along with (1), we can see 
that $(\nu_1, \nu_3') \in A\comp\mvr{\bar{x}\to\bar{y}}(B)$.  Now, take 
$\nu_3''$ to be the valuation:
\begin{enumerate} \setcounter{enumi}{7}
    \item $\nu_3''(\bar y) = \nu_1(\bar y)$;
    \item $\nu_3''(\bar x) = \nu_3'(\bar y)$; and
    \item $\nu_3'' = \nu_3'$ outside $\bar x \cup \bar y$.
\end{enumerate}
From the structure of $\nu_3''$, we can see that $(\nu_1, \nu_3'') \in 
\mvr{\bar{y}\to\bar{x}}(A\comp \mvr{\bar{x}\to\bar{y}}(B))$.
Moreover, we can see that:
\begin{itemize}
    \item $\nu_3''(\bar y) = \nu_1(\bar y) = \nu_2(\bar y) = \nu_3(\bar y)$ 
    from (8), and (3-4);
    \item $\nu_3''(\bar x) = \nu_3'(\bar y) = \nu_3(\bar x)$ 
    from (9) and (6); and
    \item $\nu_3'' = \nu_3' = \nu_3$ outside $\bar x \cup \bar y$ 
    from (10) and (7).
\end{itemize}
Thus, $\nu_3'' = \nu_3$, whence, $(\nu_1, \nu_3) \in 
\mvr{\bar{y}\to\bar{x}}(A\comp \mvr{\bar{x}\to\bar{y}}(B))$.

Now, we show the other inclusion.  Let $(\nu_1, \nu_5) \in
\mvr{\bar{y}\to\bar{x}}(A\comp \mvr{\bar{x}\to\bar{y}}(B))$.
Thus,
\begin{enumerate}
    \item $\nu_5(\bar y) = \nu_1(\bar y)$;
    \item $(\nu_1, \nu_4) \in A\comp \mvr{\bar{x}\to\bar{y}}(B)$;
    \item $\nu_5(\bar x) = \nu_4(\bar y)$; and
    \item $\nu_5 = \nu_4$ outside $\bar{x} \cup \bar{y}$.
\end{enumerate}
for some valuation $\nu_4$.  From (2), we know that there is a valuation
$\nu_2$ such that:
\begin{enumerate} \setcounter{enumi}{4}
    \item $(\nu_1, \nu_2) \in A$; and
    \item $(\nu_2, \nu_4) \in \mvr{\bar{x}\to\bar{y}}(B)$.
\end{enumerate}
From (6), we know that:
\begin{enumerate} \setcounter{enumi}{6}
    \item $\nu_4(\bar x) = \nu_2(\bar x)$;
    \item $(\nu_2, \nu_3) \in B$;
    \item $\nu_4(\bar y) = \nu_3(\bar x)$; and
    \item $\nu_4 = \nu_3$ outside $\bar{x} \cup \bar{y}$.
\end{enumerate}
for some valuation $\nu_3$.  From (5) and (8), we know that 
$(\nu_1, \nu_3) \in A\comp B$.  Moreover, we know that:
\begin{enumerate} \setcounter{enumi}{10}
    \item $\nu_1 = \nu_2$ outside $\bar{x} \cup \bar{y}$; and
    \item $\nu_2 = \nu_3$ outside $\bar{x} \cup \bar{y}$.
\end{enumerate}
since all variables in $\bar{y}$ are inertially cylindrified in $A$ and $B$.
We can establish that $(\nu_1, \nu_5) \in A \comp B$ if we show that 
$\nu_3 = \nu_5$.  Indeed,
\begin{itemize}
    \item $\nu_3(\bar{x}) = \nu_4(\bar{y}) = \nu_5(\bar{x})$ from (9) and (3);
    \item $\nu_3(\bar{y}) = \nu_2(\bar{y}) = \nu_1(\bar{y}) = \nu_5(\bar{y})$
    from (12), (11), and (1); and
    \item $\nu_3 = \nu_4 = \nu_5$ outside $\bar{x} \cup \bar{y}$
    from (10) and (4).\qedhere
\end{itemize}
\end{proof}
}

This finally brings us to the main result of the current subsection.
\begin{theorem}\label{thm:infiniteNonPrim}
If\/ $\V$ is infinite, then every LIF expression is equivalent to a LIFnc 
expression.
\end{theorem}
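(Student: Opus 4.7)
The plan is to proceed by structural induction on $\alpha$. The base cases, where $\alpha$ is $\id$ or an atomic module expression, are immediate since such expressions already belong to LIFnc. For a unary operator applied to $\alpha_1$, or a non-composition binary operator $\boxdot \in \{\cup,\cap,\setminus\}$ applied to $\alpha_1,\alpha_2$, one invokes the induction hypothesis on the subexpressions to obtain LIFnc equivalents and reassembles with the same top-level operator; the result stays in LIFnc.

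The interesting case is $\alpha = \alpha_1 \comp \alpha_2$. By the induction hypothesis, pick LIFnc expressions $\alpha_1' \equiv \alpha_1$ and $\alpha_2' \equiv \alpha_2$. Let $\bar x$ enumerate $\synO(\alpha_2')$. Since $\V$ is infinite and $\alpha_1',\alpha_2'$ mention only finitely many variables, I would choose a tuple $\bar y$ of the same length as $\bar x$ whose entries are pairwise distinct, disjoint from $\bar x$, and do not occur in $\alpha_1'$ or $\alpha_2'$. Define
\[
\alpha_2'' \;:=\; \sellr{\bar x = \bar x}\,\cylr{\bar x}\,\selr{\bar x = \bar y}\,\cylr{\bar y}(\alpha_2'),
\]
which, by Lemma~\ref{lem:derived}, realises $\mvr{\bar x \to \bar y}(\alpha_2')$ and, being built from cylindrifications and selections only, is in LIFnc.

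Next, I would trace the $(\synI,\synO)$ computation through Table~\ref{table:IO} layer by layer to show $\synO(\alpha_2'') = \bar y$ and $\synI(\alpha_2'') \subseteq \synI(\alpha_2') \cup \bar x$. Because $\bar y$ is fresh, this yields $\synI(\alpha_2'') \cap \synO(\alpha_2'') = \emptyset$, so Theorem~\ref{theo:IOdisjointcompnotprimSYN} applies and produces a LIFnc expression $\gamma$ equivalent to $\alpha_1' \comp \alpha_2''$. Since $\bar y$ does not appear in $\alpha_1'$ or $\alpha_2'$, none of its variables lie in $\synI \cup \synO$ of either expression, so by Lemma~\ref{lem:synFreeVarProp} the tuple $\bar y$ is inertially cylindrified in $\sem{\alpha_1'}$ and $\sem{\alpha_2'}$. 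Lemma~\ref{lem:mvr-comp} then gives
\[
\alpha_1 \comp \alpha_2 \;\equiv\; \alpha_1' \comp \alpha_2' \;\equiv\; \mvr{\bar y \to \bar x}(\alpha_1' \comp \alpha_2'') \;\equiv\; \mvr{\bar y \to \bar x}(\gamma),
\]
and expanding $\mvr{\bar y \to \bar x}$ once more via Lemma~\ref{lem:derived} produces the required LIFnc expression equivalent to $\alpha$.

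The main obstacle is the syntactic bookkeeping for io-disjointness. The macro for $\mvr{\bar x \to \bar y}$ has four layers, each updating $\synI$ and $\synO$ according to rules that depend on which variables are currently in the output set. One must verify, variable by variable, that after $\sellr{x_i = x_i}$ each $x_i$ is removed from the output set while no $y_j$ leaks into the input set, so that the resulting expression is genuinely syntactically io-disjoint and Theorem~\ref{theo:IOdisjointcompnotprimSYN} fires cleanly.
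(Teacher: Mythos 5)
Your proposal is correct and follows essentially the same route as the paper's proof: rename the outputs of the second operand to fresh variables via $\mvr{\bar x\to\bar y}$ (Lemma~\ref{lem:derived}), eliminate the now io-disjoint composition with Theorem~\ref{theo:IOdisjointcompnotprimSYN}, and move the variables back using Lemma~\ref{lem:mvr-comp}; the only differences are cosmetic (structural induction rather than induction on the number of composition operators, and checking the \emph{syntactic} io-disjointness condition directly from Table~\ref{table:IO}, which is in fact exactly the hypothesis Theorem~\ref{theo:IOdisjointcompnotprimSYN} requires).
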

\ignore{
\begin{proof}
For a LIF expression $\alpha\comp\beta$ we can choose a tuple of variables 
$\bar{y}$ of the same length as $\synO{(\beta)}$, such that $\bar{y}$ neither
occurs in $\alpha$ not in $\beta$.  In that case, $\bar{y}$ is inertially 
cylindrified in $\alpha$ and in $\beta$ and hence Lemma~\ref{lem:mvr-comp} 
yields that 
$\alpha \comp \beta$ is equivalent to
\[ \mvr{\bar{y}\to \synO(\beta)} 
(\alpha \comp \mvr{\synO(\beta)\to\bar{y}}(\beta)).
\]
We will next show that $ \mvr{\synO(\beta)\to\bar{y}}(\beta)$ is  
io-disjoint.  Indeed, from the equivalence in Lemma~\ref{lem:derived} and the 
soundness of our definitions, we can see that
\begin{itemize}
    \item $\semO(\mvr{\synO(\beta)\to\bar{y}}(\beta)) =
    \semO(\sellr{\bar{x}=\bar{x}} \cylr{\bar{x}} \selr{\bar{x}=\bar{y}} \cylr{\bar{y}}(\beta)) 
    \subseteq \synO(\sellr{\bar{x}=\bar{x}} \cylr{\bar{x}} \selr{\bar{x}=\bar{y}} \cylr{\bar{y}}(\beta)) 
    = \synO(\beta) \cup \bar{y}$.  
    Moreover, we see that 
    $\semO(\mvr{\bar x = \bar y}(\gamma)) \cap \bar x = \emptyset$ for any 
    $\bar x$ and any LIF expression $\gamma$ in which $\bar y$ is inertially
    cylindrified.  Clearly, 
    $\semO(\mvr{\synO(\beta)\to\bar{y}}(\beta)) \subseteq \bar y$;
    \item $\semI(\mvr{\synO(\beta)\to\bar{y}}(\beta)) = 
    \semI(\sellr{\bar{x}=\bar{x}} \cylr{\bar{x}} \selr{\bar{x}=\bar{y}} \cylr{\bar{y}}(\beta)) 
    \subseteq \synI(\sellr{\bar{x}=\bar{x}} \cylr{\bar{x}} \selr{\bar{x}=\bar{y}} \cylr{\bar{y}}(\beta)) 
    = \synI(\beta) \cup \synO(\beta)$.
\end{itemize}
Since $\bar y$ does not occur in $\beta$, we see that 
\[\semO(\mvr{\synO(\beta)\to\bar{y}}(\beta)) 
\subseteq \bar y \cap (\synI(\beta) \cup \synO(\beta)) = \emptyset.\]
By establishing that $\mvr{\synO(\beta)\to\bar{y}}(\beta)$ is io-disjoint,
we can apply Theorem~\ref{theo:IOdisjointcompnotprimSYN} to eliminate the
composition yielding the expression
\[\mvr{\bar y \to \synO(\beta)}
(\compEqSyn{\alpha}{\mvr{\synO(\beta)\to\bar{y}}(\beta)}).\]
This construction can be applied recursively to eliminate all 
occurrences of composition.
\end{proof}
}
\begin{proof}
We prove this theorem by induction on the number of compositions operators in 
a LIF expression $\gamma$.  The base case (no composition operators), is
trivial.  For the inductive case, consider an expression $\eta$ containing
at least one composition operator.  We show how to rewrite $\eta$
equivalently with one composition operator less.  Thereto, take any
subexpression $\alpha \comp \beta$ such that $\alpha$ and
$\beta$ are LIFnc expressions.  We eliminate this composition as follows.
Choose a tuple of variables $\bar{y}$ of the same length as $\synO{(\beta)}$, 
such that $\bar{y}$ does not occur in $\gamma$.  In that case, $\bar{y}$ is 
inertially cylindrified in $\alpha$ and in $\beta$, and hence, 
Lemma~\ref{lem:mvr-comp} yields that $\alpha \comp \beta$ is equivalent to
\[ 
\mvr{\bar{y}\to \synO(\beta)} 
(\alpha \comp \mvr{\synO(\beta)\to\bar{y}}(\beta)).
\]
We will next show that $\mvr{\synO(\beta)\to\bar{y}}(\beta)$ is  
io-disjoint.  Indeed, from the equivalence in Lemma~\ref{lem:derived} and the 
soundness of our definitions, we can see that
\[
\semO(\mvr{\synO(\beta)\to\bar{y}}(\beta)) =
\semO(\sellr{\bar{x}=\bar{x}} \cylr{\bar{x}} \selr{\bar{x}=\bar{y}} \cylr{\bar{y}}(\beta)) 
\subseteq \synO(\sellr{\bar{x}=\bar{x}} \cylr{\bar{x}} \selr{\bar{x}=\bar{y}} \cylr{\bar{y}}(\beta)) 
= \synO(\beta) \cup \bar{y}. 
\]
Moreover, we generally have $\semO(\mvr{\bar x \to \bar y}(\gamma)) \cap \bar x = \emptyset$ 
for any $\bar x$ and any LIF expression $\gamma$ in which $\bar y$ is inertially
cylindrified.  As a consequence, $\semO(\mvr{\synO(\beta)\to\bar{y}}(\beta)) \subseteq \bar y$.

Also, we can see that
\[\semI(\mvr{\synO(\beta)\to\bar{y}}(\beta)) = 
\semI(\sellr{\bar{x}=\bar{x}} \cylr{\bar{x}} \selr{\bar{x}=\bar{y}} \cylr{\bar{y}}(\beta)) 
\subseteq \synI(\sellr{\bar{x}=\bar{x}} \cylr{\bar{x}} \selr{\bar{x}=\bar{y}} \cylr{\bar{y}}(\beta)) 
= \synI(\beta) \cup \synO(\beta).\]
Since $\bar y$ does not occur in $\beta$, we indeed obtain that is 
$\mvr{\synO(\beta)\to\bar{y}}(\beta)$ io-disjoint. 
\[\semO(\mvr{\synO(\beta)\to\bar{y}}(\beta)) 
\subseteq \bar y \cap (\synI(\beta) \cup \synO(\beta)) = \emptyset.\]
We can now apply Theorem~\ref{theo:IOdisjointcompnotprimSYN} to eliminate the
composition yielding the LIFnc expression
\[\mvr{\bar y \to \synO(\beta)}
(\compEqSyn{\alpha}{\mvr{\synO(\beta)\to\bar{y}}(\beta)}).\]
\qedhere
\end{proof}

\subsection{If \texorpdfstring{$\V$}{V} is Finite, Composition is Primitive}
The case that remains is when $\V$ is finite.
We will show that in this case, composition is indeed primitive by relating 
bounded-variable LIF to bounded-variable first-order logic.

Assume $\V=\{x_1,\ldots,x_n\}$. Since $\bbar$s involve pairs of $\V$-valuations, we 
introduce a copy  $\V_y = \{y_1,\ldots,y_n\}$ disjoint from $\V$.
For clarity, we also write $\V_x$ for $\V$.
As usual, by $\FO{k}$ we denote the fragment of first-order logic that uses only $k$ 
distinct variables.
%
We observe:
\begin{proposition}\label{prop:LIF-BoundedFO}
For every LIF expression $\alpha$, there exists an $\FO{3n}$ formula $\varphi_\alpha$ 
with free variables in $\V_x\cup \V_y$ such that
\[
(\nu_1,\nu_2)\in \sem{\alpha} \text{ iff } 
\inst{},(\nu_1\cup\nu_2')\models \varphi_\alpha,
\]
where $\nu_2'$ is the $\V_y$-valuation such that $\nu_2'(y_i) = \nu_2(x_i)$ 
for each $i$.  Furthermore, if $\alpha$ is a LIFnc expression, $\varphi_\alpha$ can be 
taken to be a $\FO{2n}$ formula.
%
%
%
\end{proposition}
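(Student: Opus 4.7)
I would proceed by structural induction on $\alpha$, but formulated with a stronger inductive hypothesis to accommodate composition.  Concretely, let $\V_x=\{x_1,\dots,x_n\}$, $\V_y=\{y_1,\dots,y_n\}$, $\V_z=\{z_1,\dots,z_n\}$ be three disjoint $n$-copies of $\V$, and write $\V^3:=\V_x\cup\V_y\cup\V_z$.  For any ordered pair $(\bar U,\bar W)$ of disjoint $n$-tuples drawn from $\{\V_x,\V_y,\V_z\}$, I will construct inductively an $\FO{3n}$ formula $\varphi_\alpha^{\bar U,\bar W}$, all of whose variables come from $\V^3$ and whose free variables lie in $\bar U\cup\bar W$, such that $(\nu_1,\nu_2)\in\sem{\alpha}$ iff $\inst,(\nu_1^{\bar U}\cup\nu_2^{\bar W})\models\varphi_\alpha^{\bar U,\bar W}$, where $\nu_j^{\bar V}$ denotes the transport of $\nu_j$ onto the copy $\bar V$.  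The proposition then follows by taking $(\bar U,\bar W)=(\V_x,\V_y)$.

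The base cases are essentially by inspection.  For an atomic expression $M(\bar u;\bar w)$ I would write $M(\bar u^{\bar U}\!\cdot\!\bar w^{\bar W}) \wedge \bigwedge_{i:\,x_i\notin\bar w} u_i^{\bar U}=u_i^{\bar W}$, encoding both the relational fact and the inertia on the non-output variables; $\id$ becomes $\bigwedge_i u_i^{\bar U}=u_i^{\bar W}$.  The Boolean cases $\cup,\cap,\setminus$ and the three selection operators translate directly to $\lor,\land,\land\neg$ and additional equality atoms between already-mentioned variables, so they never introduce a fresh variable.  For $\cyll{Z}(\alpha)$ and $\cylr{Z}(\alpha)$, I quantify existentially over the $\bar U$- or $\bar W$-copies of the variables in $Z$, reusing those same variable names as the bound ones (so no growth in the variable count).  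Converse just swaps $\bar U$ and $\bar W$.

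The composition case is where the $3n$ bound is actually used: for $\alpha=\alpha_1\comp\alpha_2$ with roles $(\bar U,\bar W)$, let $\bar T$ be the unique element of $\{\V_x,\V_y,\V_z\}\setminus\{\bar U,\bar W\}$ and set
\[
\varphi_{\alpha_1\comp\alpha_2}^{\bar U,\bar W}\;:=\;\exists\bar T\,\bigl(\varphi_{\alpha_1}^{\bar U,\bar T}\wedge\varphi_{\alpha_2}^{\bar T,\bar W}\bigr).
\]
Crucially, inside the recursive call for $\alpha_1$ the ``free'' third tuple is $\bar W$, and inside the call for $\alpha_2$ it is $\bar U$, so the invariant is maintained and no fourth copy is ever needed; all quantifiers inside can continue to rotate through the three available copies.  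For the LIFnc refinement one simply observes that, without the composition clause, the inductive construction never needs the third tuple $\bar T$ at all, so $\varphi_\alpha$ can be produced using only variables from $\V_x\cup\V_y$, giving the $\FO{2n}$ bound.

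The main subtlety I anticipate is bookkeeping: one must verify carefully that the existential quantifications performed in the cylindrification and composition clauses do not inadvertently capture variables that occur free in the subformulas.  This is handled by fixing, before the induction, a renaming convention ensuring that every subformula $\varphi_{\alpha'}^{\bar U,\bar W}$ has free variables contained in $\bar U\cup\bar W$ and uses the remaining copy only in the scope of its own quantifiers.  Given that convention, the quantifications in the composition clause are clean, and the equivalence $(\nu_1,\nu_2)\in\sem\alpha \Longleftrightarrow \inst,(\nu_1^{\bar U}\cup\nu_2^{\bar W})\models\varphi_\alpha^{\bar U,\bar W}$ follows by a routine unpacking of the LIF semantics clause-by-clause against the construction above.
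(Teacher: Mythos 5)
Your proposal is correct and follows essentially the same route as the paper: the paper likewise strengthens the induction hypothesis to formulas $\varphi^{uv}_\alpha$ parameterized by an ordered pair of distinct copies drawn from $\{\V_x,\V_y,\V_z\}$, uses the spare third copy only in the composition clause, and observes that LIFnc never needs it, yielding the $\FO{2n}$ bound. The only cosmetic difference is that the paper invokes Lemma~\ref{lem:redundant} to skip the redundant operators ($\cap$, $\sell{}$, $\selr{}$, $\cylr{}$) rather than translating them directly as you do.
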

\begin{proof}
 The proof is by induction on the structure of $\alpha$ (using Lemma \ref{lem:redundant}, 
 we omit redundant operators).

 We introduce a third copy $\V_z= \{z_1,\ldots,z_n\}$ of $\V$.
 For every $u,v\in \{x,y,z\}$ we define $\rho_{uv}$ as follows:
\begin{align*}
   \rho_{uv} :\V_u \rightarrow \V_v: u_i \mapsto v_i
 \end{align*}
Using these functions, we can translate a valuation $\nu$ on $\V=\V_x$ to a corresponding 
valuation on $\V_u$ with $u \in \{y,z\}$. Clearly, $\nu \circ \rho_{ux}$ does this job. 

In the proof, we actually show a stronger statement by induction, namely that for each 
$\alpha$ and for every $u\neq v \in \{x,y,z\}$ there is a formula $\varphi^{uv}_\alpha$ with 
free variables in $\V_u\cup \V_v$ in $\FO{\V_x\cup \V_y\cup \V_z}$ such that for every  $\inst{}$:
\[
(\nu_1,\nu_2)\in \sem{\alpha}
\text{ \ iff \ } \inst{},(\nu_1\circ \rho_{ux}\cup\nu_2\circ \rho_{vx})\models \varphi^{uv}_\alpha.
\]
Since the notations $x$, $y$, $z$, $u$ and $v$ are taken, 
we use notations $a$, $b$ and $c$ for variables.
\begin{itemize}
  \item $\alpha = \id$.  
  Take $\varphi^{uv}_\alpha$ to be $\bigwedge_{i=1}^n u_i=v_i$.
  \item $\alpha = M(\overline{a};\overline{b})$.
  Take $\varphi^{uv}_\alpha$ to be 
  $M(\rho_{xu}(\overline{a}),\rho_{xv}(\overline{b}))\land 
  \bigwedge_{c \not \in \overline{b}} \rho_{xu}(c)= \rho_{xv}(c)$
  \item $\alpha = \alpha_1 \cup \alpha_2$.
  Take $\varphi_\alpha^{uv}$ to be 
  $\varphi^{uv}_{\alpha_1} \lor \varphi^{uv}_{\alpha_2}$.
  \item $\alpha = \alpha_1 \setminus \alpha_2$.
  Take $\varphi_\alpha^{uv}$ to be 
  $\varphi_{\alpha_1}^{uv}\land\neg \varphi_{\alpha_2}^{uv}$.
  \item $\alpha = \alpha_1\comp{}\alpha_2$. Let $w \in \{x,y,z\}\setminus \{u,v\}$. 
  Take $\varphi_\alpha^{uv}$ to be
  $\exists w_1\ldots\exists w_n\;(\varphi_{\alpha_1}^{uw}\land \varphi_{\alpha_2}^{wv})$.
  \withconverse{\item $\alpha = \conv{\alpha_1}$.  
  By induction, $\varphi^{vu}_{\alpha_1}$ exists.  
  This formula can serve as $\varphi_\alpha^{uv}$.}
  \item $\alpha = \sellr{a=b}(\alpha_1)$.
  Take $\varphi_\alpha^{uv}$ to be 
  $\varphi_{\alpha_1}^{uv}\land \rho_{xu}(a) = \rho_{xv}(b)$.
  \item $\alpha = \cyll{a}(\alpha_1)$.
  Take $\varphi_\alpha^{uv}$ to be 
  $\exists \rho_{xu}(a)\; \varphi_{\alpha_1}^{uv}$.\qedhere
\end{itemize}
\end{proof}%
\fullproof{
\begin{proof}
 The proof is by induction on the structure of $\alpha$ (using Lemma~\ref{lem:redundant}, 
 we omit redundant operators). Below, we write $\varphi^{uv}\equiv \alpha$ to indicate the 
 equivalence stated in the Proposition. Since the notations $x$, $y$, $z$, $u$ and $v$ are 
 taken, we use notations $a$, $b$ and $c$ for variables.
\begin{itemize}
  \item $\alpha = \id$. For $\varphi^{uv}$ we can take $\bigwedge_{i=1}^n u_i=v_i$.
  \item $\alpha = M(\overline{a};\overline{b})$. For $\varphi^{uv}$ we can take 
  $M(\rho_{xu}(\overline{a}),\rho_{xv}(\overline{b}))\land 
  \bigwedge_{c \not \in \overline{b}} \rho_{xu}(c)= \rho_{xv}(c)$.
  \withconverse{\item $\alpha = \conv{\alpha_1}$. By induction, there exists 
  $\varphi_1^{vu}\equiv \alpha_1$. This formula can serve as $\varphi^{uv}$.}
  \item $\alpha = \alpha_1 \cup \alpha_2$. By induction there exists $\varphi_1^{uv}$ 
  and $\varphi_2^{uv}$ in $\FO{\V_x\cup\V_y\cup \V_z}$ such that $\varphi_1^{uv}\eqq \alpha_1$ 
  and $\varphi_2^{uv}\eqq \alpha_2$. For $\varphi^{uv}$ we can take 
  $\varphi_1^{uv}\lor \varphi_2^{uv}$.
  \item $\alpha = \cyll{a}(\alpha_1)$. By induction there exists $\varphi_1^{uv} \in \FO{\V_x\cup\V_y\cup \V_z}$ 
  such that $\varphi_1^{uv}\eqq \alpha_1$. For $\varphi^{uv}$ we can take $\exists \rho_{xu}(a)\; \varphi_1^{uv}$.
  \item $\alpha = \alpha_1 \setminus \alpha_2$. By induction there exists $\varphi_1^{uv},\varphi_2^{uv}\in \FO{\V_x\cup\V_y\cup \V_z}$ 
  such that $\varphi_1^{uv}\eqq \alpha_1$ and $\varphi_2^{uv}\eqq \alpha_2$. For $\varphi^{uv}$ 
  we can take $\varphi_1^{uv}\land\neg \varphi_2^{uv}$.
  \item $\alpha = \sellr{a=b}(\alpha_1)$. By induction there exists $\varphi_1^{uv} \in \FO{\V_x\cup\V_y\cup \V_z}$ 
  such that $\varphi_1^{uv}\equiv \alpha_1$. For $\varphi^{uv}$ we can take 
  $\varphi_1^{uv}\land \rho_{xu}(x) = \rho_{xv}(b)$.
  \item $\alpha = \alpha_1\comp{}\alpha_2$. Let $w \in \{x,y,z\}\setminus \{u,v\}$. By induction there exists 
  $\varphi_1^{uw},\varphi_1^{wv}\in \FO{\V_x\cup\V_y\cup \V_z}$ such that $\varphi_1^{uw}\eqq \alpha_1$ and 
  $\varphi_2^{wv}\eqq \alpha_2$. Define $\varphi^{uv} = \exists w_1,\ldots,w_n\;(\varphi_1^{uw}\land \varphi_2^{wv})$. 
  Clearly, $\varphi\equiv \alpha$. Moreover, since $\FV(\varphi_1^{uw})\subseteq \V_u\cup\V_w$, $\FV(\varphi_2^{wv})\subseteq \V_w\cup\V_v$ 
  and both are in $\FO{\V_x\cup\V_y\cup \V_z}$, we know that $\varphi^{uv} \in \FO{\V_x\cup\V_y\cup \V_z}$ as desired.\qedhere
\end{itemize}
\end{proof}}
Now that we have established that LIFnc can be translated into $\FO{2n}$, all that is left to do is find a 
Boolean query that can be expressed in LIF with $n$ variables, but not in $\FO{2n}$.
We find such a query in the existence of a $3n$-clique.
We will first show that we can construct a LIFnc expression $\alpha_{2n}$ such that, given an \interpretation 
$\inst$ interpreting a binary relation $R$, $\sem{\alpha_{2n}}$ consists of all $2n$-cliques of $R$.
Next, we show how $\alpha_{2n}$ can be used (with composition)
to construct an expression $\alpha_{\exists3n}$ such that $\sem{\alpha_{\exists 3n}}$ is non-empty 
if and only if $R$ has a $3n$-clique. Since this property cannot be expressed in $\FO{2n}$, we can conclude 
that composition must be primitive.

To avoid confusion, we recall that a set $L$ of $k$ data elements is a $k$-clique in a binary relation $R$, 
if any two distinct $a$ and $b$ in $L$, we have $(a,b)\in R$ (and also $(b,a)\in R$).
%
\begin{proposition}\label{lem:2ncliques}
Suppose that $|\V| = n$ with $n \geq 2$ and let $\Sch = \{R\}$ with 
$\ar{(R)}=\iar{(R)}=2$. There exists a LIF expression $\alpha_{2n}$ such that
\[\sem{\alpha_{2n}} = \{(\nu_1,\nu_2)\mid \nu_1(\V)\cup \nu_2(\V) 
\text{ is a $2n$-clique in }\inst(R)\}.\]
\end{proposition}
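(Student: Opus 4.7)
The plan is to build $\alpha_{2n}$ as an intersection of LIFnc subexpressions, each capturing one elementary component of the $2n$-clique condition on the $2n$ values $\nu_1(x_1),\ldots,\nu_1(x_n),\nu_2(x_1),\ldots,\nu_2(x_n)$. Concretely, $\nu_1(\V)\cup\nu_2(\V)$ is a $2n$-clique in $R$ iff the following all hold for $i,j\in\{1,\ldots,n\}$: (D1) $\nu_1(x_i)\neq \nu_1(x_j)$ for $i\neq j$; (D2) $\nu_2(x_i)\neq \nu_2(x_j)$ for $i\neq j$; (D3) $\nu_1(x_i)\neq \nu_2(x_j)$ for all $i,j$; (E1) $(\nu_1(x_i),\nu_1(x_j))\in R$ for $i\neq j$; (E2) the analogue on $\nu_2$; (E3) $(\nu_1(x_i),\nu_2(x_j))\in R$ for all $i,j$; and (E4) $(\nu_2(x_i),\nu_1(x_j))\in R$ for all $i,j$. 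I will realise each of these as a LIFnc expression and take $\alpha_{2n}$ to be their intersection.

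Let $T := \cyll{\V}(\id)$, whose semantics is the set of all pairs of valuations. The distinctness and same-side edge conditions are then routine: D1 for $(i,j)$ is $T\setminus \cylr{\V}(\sell{x_i=x_j}(\id))$, D2 is the symmetric $T\setminus \cyll{\V}(\selr{x_i=x_j}(\id))$, D3 is $T\setminus \sellr{x_i=x_j}(T)$, and E1, E2 are $\cylr{\V}(R(x_i,x_j;))$ and $\cyll{\V}(R(x_i,x_j;))$ respectively. For the cross-edge E3 in the case $i\neq j$, a direct calculation shows that
\[
E_3^{i,j} := \cylr{\V\setminus\{x_j\}}\bigl(\cyll{x_j}(R(x_i,x_j;))\bigr)
\]
has semantics $\{(\nu_1,\nu_2)\mid(\nu_1(x_i),\nu_2(x_j))\in R\}$: the inner $\cyll{x_j}$ rewrites the diagonal atomic check $(\nu(x_i),\nu(x_j))\in R$ as the desired cross-constraint modulo the residual equality $\nu_1=\nu_2$ off $x_j$, and the outer cylindrification then releases those residual equalities. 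The case $i\neq j$ of E4 is handled symmetrically via $\cyll{x_i}(R(x_i,x_j;))$ followed by an outer right cylindrification.

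The main obstacle is the diagonal case $i=j$ of E3 (and symmetrically E4), which demands a constraint relating $\nu_1(x_i)$ and $\nu_2(x_i)$ via the same variable $x_i$ and therefore cannot be produced from a single atomic check by cylindrification alone. This is where the hypothesis $n\geq 2$ is essential: choosing any $\ell\neq i$, I will set
\[
E_3^{i,i} := \cylr{x_\ell}\bigl(\selr{x_\ell=x_i}(E_3^{i,\ell})\bigr),
\]
reusing the previously constructed $E_3^{i,\ell}$. The inner selection forces $\nu_2(x_\ell)=\nu_2(x_i)$, which upgrades the constraint $(\nu_1(x_i),\nu_2(x_\ell))\in R$ supplied by $E_3^{i,\ell}$ to $(\nu_1(x_i),\nu_2(x_i))\in R$, and $\cylr{x_\ell}$ then frees $\nu_2(x_\ell)$ again; the semantic verification is short. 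The diagonal case of E4 is handled by the analogous $\cyll{x_\ell}\bigl(\sell{x_\ell=x_i}(E_4^{i,\ell})\bigr)$. Finally, $\alpha_{2n}$ is the intersection of all the expressions above; one direction of the claim is immediate from the construction of each piece, and the reverse follows because D1--D3 together assert that the $2n$ values are pairwise distinct, while E1--E4 together assert that all required directed edges among them are present --- exactly the definition of a $2n$-clique.
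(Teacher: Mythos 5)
Your construction is correct, and it follows the same overall strategy as the paper's proof: both express $\alpha_{2n}$ as an intersection of small LIFnc expressions, with one family enforcing that the $2n$ values are pairwise distinct (you use an intersection of per-pair inequations, the paper a single difference $\all\setminus\alpha_{=}$ where $\alpha_{=}$ is a union of selections --- equivalent by De Morgan) and one family enforcing, for each ordered pair of positions, membership of the corresponding pair of values in $R$ via an atomic check surrounded by cylindrifications. The one place where the two arguments genuinely diverge is the diagonal cross-edge $(\nu_1(x_i),\nu_2(x_i))$. The paper handles all cross-edges uniformly with $R^{lr}_{x,y}=\cyll{\V\setminus\{x\}}\cylr{\V\setminus\{y\}}(R(x,y;)\cap R(y,x;))$, intersected over all $x,y\in\V$ including $x=y$; but for $x=y$ a direct computation of this expression yields $\{(\nu_1,\nu_2)\mid(\nu_1(x),\nu_1(x))\in\inst(R)\text{ and }\nu_2(x)=\nu_1(x)\}$, which forces $\nu_1(x)=\nu_2(x)$ (and asserts a self-loop rather than the intended edge), so it clashes with the distinctness constraints. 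Your proposal correctly isolates this diagonal as the delicate step and resolves it by routing the constraint through an auxiliary variable $x_\ell\neq x_i$ via $\selr{x_\ell=x_i}$ followed by $\cylr{x_\ell}$, which is also the only point where the hypothesis $n\geq 2$ is actually used. In that one respect your argument is more careful than the published one; everything else is a routine and correct variant of the paper's construction.
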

\begin{proof}
Throughout this proof, we identify a pair $(\nu_1,\nu_2)$ of two valuations with 
the $2n$ tuple of data elements 
\[\nu_1(x_1,\dots,x_n)\cdot\nu_2(x_1,\dots,x_n).\]
Before coming to the actual expression for $\alpha_{2n}$, we introduce some 
auxiliary concepts.  First, we define
\[\all := \cyll{\V}\,\cylr{\V}(\id).\]
It is clear that
\[\sem{\all} = \{(\nu_1,\nu_2)\in \sval{}\times\sval{}\}.\]
A first condition for being a $2n$-clique is that all data elements are different.
It is clear that the expression
\[\alpha_{=} := \bigcup_{x\neq y \in \V} \bigl(\sell{x=y}(\all)\cup 
\selr{x=y}(\all) \bigr)\cup \bigcup_{x,y\in \V}\sellr{x=y}(\all)\]
has the property that $\sem{\alpha_{=}}$ consists of all $2n$-tuples where at 
least one data element is repeated.  Hence, $\sem{\alpha_{\neq}}$ consists of 
all $2n$-tuples of distinct data elements, where
\[\alpha_{\neq} := \all \setminus \alpha_{=}\]

The second condition for being a $2n$-clique is that each two distinct 
elements are connected by $R$.  In order to check this, we define the 
following expressions for each two variables $x$ and $y$:
%
\begin{align*}
R^{l}_{x,y} &:= \cyll{\V\setminus\{x,y\}}\cylr{\V} (R(x,y;)\cap R(y,x;))\\
R^{r}_{x,y} &:= \cyll{\V}\cylr{\V\setminus\{x,y\}} (R(x,y;)\cap R(y,x;))\\
R^{lr}_{x,y} &:= \cyll{\V\setminus\{x\}}\cylr{\V\setminus\{y\}} (R(x,y;) \cap R(y,x;))
\end{align*}
With these definitions, for instance $\sem{R^{lr}_{x_i,x_j}}$ consists of all 
$2n$-tuples such that the $i^\text{th}$ and the $n+j^\text{th}$ element are 
connected (in two directions) in $R$, and similar properties hold for $R^l$ 
and $R^r$.
From this, it follows that the expression
\[\alpha_{2n} = \alpha_{\neq}\cap  \bigcap_{x\neq y \in \V} 
\bigl( R^l_{x,y}\cap R^r_{x,y}\bigr) \cap \bigcap_{x, y \in \V} R^{lr}_{x,y}\]
satisfies the proposition; it intersects $\alpha_{\neq}$ with all the 
expressions stating that each two data elements must be (bidirectionally) 
connected by $R$.
\end{proof}

Notice that $\alpha_{2n}$ can be used to compute \emph{all} the $2n$-cliques of the 
input interpretation.  We now use $\alpha_{2n}$ to check for existence of $3n$-cliques.

\begin{proposition}\label{prop:3nclique}
Suppose that $|\V| = n$ with $n \geq 2$ and let $\Sch = \{R\}$ with $\ar{(R)} = \iar{(R)}=2$. Define
\[
\alpha_{\exists3n} := (\alpha_{2n}\comp{}\alpha_{2n})\cap \alpha_{2n}.
\]
Then, for every \interpretation $\inst{}$, $\sem{\alpha_{\exists3n}}$ is non-empty 
if and only if $\inst{}(R)$ has a $3n$-clique.
\end{proposition}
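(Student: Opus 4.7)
My plan is to prove the two directions of the ``if and only if'' using the explicit characterization of $\sem{\alpha_{2n}}$ from Proposition~\ref{lem:2ncliques}. The key observation is that $\sem{\alpha_{\exists 3n}}$ consists of exactly those pairs $(\nu_1,\nu_2)$ such that (i) $\nu_1(\V)\cup\nu_2(\V)$ is a $2n$-clique in $\inst(R)$, and (ii) there exists an intermediate valuation $\nu_3$ making both $\nu_1(\V)\cup\nu_3(\V)$ and $\nu_3(\V)\cup\nu_2(\V)$ into $2n$-cliques. Note that by Proposition~\ref{lem:2ncliques}, whenever $\nu_i(\V)\cup\nu_j(\V)$ is a $2n$-clique, it consists of $2n$ \emph{distinct} data elements, so $\nu_i(\V)\cap\nu_j(\V)=\emptyset$ and each individual $\nu_k(\V)$ has exactly $n$ distinct elements.

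For the ``if'' direction, suppose $\inst(R)$ contains a $3n$-clique $L=\{d_1,\ldots,d_{3n}\}$. I would define three valuations by splitting $L$ into three blocks of size $n$: set $\nu_1(x_i):=d_i$, $\nu_3(x_i):=d_{n+i}$, and $\nu_2(x_i):=d_{2n+i}$ for $i=1,\ldots,n$. Then $\nu_1(\V)\cup\nu_3(\V)$, $\nu_3(\V)\cup\nu_2(\V)$, and $\nu_1(\V)\cup\nu_2(\V)$ are each $2n$-subsets of $L$, hence $2n$-cliques in $\inst(R)$. Applying Proposition~\ref{lem:2ncliques} three times, $(\nu_1,\nu_3),(\nu_3,\nu_2)\in\sem{\alpha_{2n}}$, so $(\nu_1,\nu_2)\in\sem{\alpha_{2n}\comp\alpha_{2n}}$, and also $(\nu_1,\nu_2)\in\sem{\alpha_{2n}}$, giving $(\nu_1,\nu_2)\in\sem{\alpha_{\exists 3n}}$.

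For the ``only if'' direction, suppose $(\nu_1,\nu_2)\in\sem{\alpha_{\exists 3n}}$. Unfolding the semantics of $\comp$ and $\cap$, there exists a valuation $\nu_3$ such that all three sets $\nu_1(\V)\cup\nu_2(\V)$, $\nu_1(\V)\cup\nu_3(\V)$, and $\nu_3(\V)\cup\nu_2(\V)$ are $2n$-cliques in $\inst(R)$. Let $L:=\nu_1(\V)\cup\nu_2(\V)\cup\nu_3(\V)$. The main step is to show $|L|=3n$: by the distinctness noted above, each of $\nu_1(\V)$, $\nu_2(\V)$, $\nu_3(\V)$ has exactly $n$ elements, and each of the three pairwise unions has $2n$ elements, so the three sets are pairwise disjoint. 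Hence $|L|=3n$. Finally, I verify $L$ is a clique: any two distinct elements $a,b\in L$ lie in $\nu_i(\V)\cup\nu_j(\V)$ for some $i,j\in\{1,2,3\}$, which is a $2n$-clique, so $(a,b)\in\inst(R)$.

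The only mildly subtle step is the disjointness argument in the ``only if'' direction, but it is an immediate consequence of the distinctness built into $\alpha_{2n}$ via the subexpression $\alpha_{\neq}$ in the proof of Proposition~\ref{lem:2ncliques}; no further work is required.
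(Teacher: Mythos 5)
Your proof is correct. The paper itself states Proposition~\ref{prop:3nclique} without proof, so there is no official argument to compare against; your argument is the natural intended one: unfold $\comp$ and $\cap$ to get three pairwise $2n$-cliques $\nu_i(\V)\cup\nu_j(\V)$, use the distinctness enforced by $\alpha_{\neq}$ to conclude that the three $n$-element sets $\nu_1(\V),\nu_2(\V),\nu_3(\V)$ are pairwise disjoint, and observe that every pair of elements of the $3n$-element union is covered by one of the three $2n$-cliques. Both directions are handled cleanly, and the one subtle point (disjointness, hence cardinality exactly $3n$) is justified.
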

It is well known that existence of a $3n$-clique is not expressible in 
$\FO{2n}$ \cite{ef_fmt2}.  The above proposition thus immediately implies primitivity of composition.
\begin{theorem}
Suppose that $|\V|=n\geq 2$. Then, composition is primitive in LIF\@. 
Specifically, no LIFnc expression is equivalent to the LIF expression  $\alpha_{\exists3n}$.
\end{theorem}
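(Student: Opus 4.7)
The plan is a contradiction argument combining Propositions~\ref{prop:LIF-BoundedFO} and~\ref{prop:3nclique} with the classical inexpressibility of large cliques in bounded-variable first-order logic. Suppose, toward a contradiction, that some LIFnc expression $\beta$ is equivalent to $\alpha_{\exists 3n}$. By Proposition~\ref{prop:LIF-BoundedFO}, there is then an $\FO{2n}$ formula $\varphi_\beta$ with free variables in $\V_x \cup \V_y$ such that $(\nu_1, \nu_2) \in \sem{\beta}$ iff $\inst, (\nu_1 \cup \nu_2') \models \varphi_\beta$. The critical point is that the variable bound here is $2n$, not $3n$: this is precisely the asymmetry between LIFnc and LIF that the proof will exploit.

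Next, I would collapse the dynamic semantics to a Boolean property by existentially quantifying every free variable. Defining the $\FO{2n}$ sentence
\[
\psi := \exists x_1 \cdots \exists x_n \, \exists y_1 \cdots \exists y_n \, \varphi_\beta,
\]
we have $\inst \models \psi$ iff $\sem{\beta} \neq \emptyset$, which by the assumed equivalence and Proposition~\ref{prop:3nclique} holds iff $\inst(R)$ contains a $3n$-clique. Hence $\psi$ would be an $\FO{2n}$ sentence over the single binary relation $R$ that decides whether $\inst(R)$ contains a $3n$-clique.

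The main obstacle, and the only place where the argument relies on external machinery, is to invoke the classical result that existence of a $k$-clique is not expressible in $\FO{k-1}$ over a single binary relation, typically proved via pebble games on pairs of graphs that differ only by the presence of a single clique of the right size (cf.~\cite{ef_fmt2}). Since the hypothesis $n \geq 2$ ensures $2n \leq 3n - 1$, instantiating this fact at $k = 3n$ yields the desired contradiction. All remaining pieces are essentially bookkeeping: the LIFnc-to-$\FO{2n}$ translation is already furnished by Proposition~\ref{prop:LIF-BoundedFO}, and the clique-existence characterization of $\sem{\alpha_{\exists 3n}}$ is established by Proposition~\ref{prop:3nclique}, so no further structural analysis of LIFnc is needed.
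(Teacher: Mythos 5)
Your proposal is correct and follows essentially the same route as the paper: translate the hypothetical LIFnc expression into $\FO{2n}$ via Proposition~\ref{prop:LIF-BoundedFO}, use Proposition~\ref{prop:3nclique} to reduce nonemptiness to $3n$-clique existence, and invoke the classical fact that $3n$-clique is not $\FO{2n}$-expressible. The only addition you make is spelling out the existential closure step to pass from the formula with free variables to a sentence, which the paper leaves implicit.
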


\section{Related Work}
\label{sec:related}
\newcommand\emphrel[1]{\textbf{#1}}

LIF grew out of the \emphrel{Algebra of Modular Systems} \cite{T:GCAI:2015},
which was developed to provide foundations for programming from
available components. That paper mentions information flows, in connection with 
input--output behavior in classical logic, for the first time. The paper also 
surveys earlier work from the author's group, as well as other closely related work.


In a companion paper \cite{flifexfo}, we report on an application
of LIF to \emphrel{querying under limited access patterns}, as for instance offered 
by web services \cite{McilraithSonZeng01}.
That work also
involves inputs and outputs, but only of a syntactic nature, and
for a restricted variant of LIF (called ``forward'' LIF) only.
The property of io-disjointness turned also to be important in
that work, albeit for a quite different purpose.

Our results also relate to
\emphrel{the evaluation problem for LIF}, which takes as input a LIF expression $\alpha$, 
an interpretation $D$, and a valuation $\nu_1$, and where the task is to find all $\nu_2$ 
such that $(\nu_1,\nu_2) \in \sem\alpha$.
From our results, it follows that only the value of $\nu_1$ on the input variables is important, 
and similarly we are only interested in the values of each $\nu_2$ on the output variables. 
A subtle point, however, is that $D$ may be infinite, and moreover, even if $D$ itself is not infinite, 
the output of the evaluation problem may still be. In many cases, it is still possible to obtain a 
finite representation, for instance by using quantifier elimination techniques as done in 
Constraint Databases \cite{cdbbook}.

We have defined the semantics of LIF algebraically, in the style
of \emphrel{cylindric set algebra} \cite{hmt_cylindric,il_cylindric}.
An important difference is the dynamic nature of BRVs which are sets of
\emph{pairs} of valuations, as opposed to sets of valuations
which are the basic objects in cylindric set algebra.

Our optimality theorem was inspired by work on \emphrel{controlled FO}
\cite{fan_bounded_pods}, which had as aim to infer boundedness properties of the 
outputs of first-order queries, given boundedness properties of the input
relations.  Since this inference task is undecidable, the authors
defined syntactic inferences similar in spirit to our syntactic
definition of inputs and outputs.  They show (their Proposition
4.3) that their definitions are, in a sense, sharp.  Note that our optimality 
theorem is stronger in that it shows that no other compositional and sound 
definition can be better than ours.  Of course, the comparison between the 
two results is only superficial as the inference tasks at hand are
very different.

The Logic of Information Flows is similar to \emphrel{dynamic predicate logic} (DPL)
\cite{dplogic}, in the sense that formulas are also evaluated with respect to pairs 
of valuations.  There is, however a key difference in philosophy between the two logics.
LIF starts from the idea that well-known operators from first-order logic can be used to 
describe combinations and manipulations of dynamic systems, and as such provides a means 
for procedural knowledge in a declarative language.
The dynamics in LIF are dynamics of the described system.
Dynamic predicate logic, on the other hand starts from the observation that, in natural language, 
operators such as conjunction and existential quantification are dynamic, where the dynamics are 
in the process of parsing a sentence, often related to coreference analysis. 
\ignore{\orange{E: I would just stop here and say I/O are not invesigated in DPL} 
An example of such dynamics can be found in the sentence
 ``Every farmer who owns a donkey, beats it.''
Ideally, sentences beginning with ``Every farmer who owns a donkey'' would be translated into
\[\forall f: Farm(f) \land (\exists d: Don(d)\land Owns(f,d))\rightarrow \phi.\]
However, the standard semantics of first-order logic does not allow for referring back to 
the donkey whose existence was established outside the scope of the existential 
quantification;\marginpar{\orange{E: seems like too much spece for DPL...I am not 
sure it is related to the current investigation}} dynamic predicate logic allows 
this and views first-order formulas as expressions modifying variable assignments. 
The complete sentence would be translated to
 \[\forall f: Farm(f) \land (\exists d: Don(d)\land Owns(f,d))\rightarrow Beat(f,d)\]
in DPL.}
To the best of our knowledge, inputs and outputs of expressions have not been studied in DPL\@.

Since we developed a large part of
our work in the general setting of $\bbar$s, and thus of \emphrel{transition systems}, 
we expect several of our results to be applicable in the context of other formalisms 
where specifying inputs and outputs is important, such as API-based programming 
\cite{DBLP:conf/kr/CalvaneseGLV16} and synthesis \cite{DBLP:journals/ai/GiacomoPS13,DBLP:conf/aaai/AlechinaBGFLV19}, 
privacy and security, business process modeling \cite{DBLP:journals/debu/CalvaneseGLMP08}, and model 
combinators in Constraint Programming \cite{DBLP:conf/cp/FontaineMH13}.

\section{Conclusion and Future Work}\label{sec:concl}
%
%
Declarative modeling is of central importance in the area of Knowledge 
Representation and Reasoning.  The Logic of Information Flows provides a framework 
to investigate how, and to what degree, dynamic or imperative
features can be modeled declaratively.  In this paper we have
focused on inputs, outputs, and sequential composition, as these
three concepts are fundamental to modeling dynamic systems.
There are many directions for further research.

Inputs and outputs are not just relevant from a theoretic perspective, but can also 
have ramifications on computation. Indeed, they form a first handle to parallelize 
computation of complex LIF expressions, or to decompose problems.

In this paper, we have worked with a basic set of operations
motivated by the classical logic connectives.  In order to
provide a fine control of computational complexity, or to
increase expressiveness, it makes sense to consider other
operations. 

The semantic notions developed in this paper
(inputs, outputs, soundness) apply to global BRVs in
general, and hence are robust under varying the set of operations.
Moreover, our work delineates and demonstrates a methodology for
adapting syntactic input--output definitions to other operations.

A specific operation that is natural to consider is
\emph{converse}. The converse of a BRV $A$ is defined to be
$\{(\nu_2,\nu_1) \mid (\nu_1,\nu_2) \in A\}$.  In the context of
LIF \cite{lif_frocos} it can model constraint solving by
searching for an input to a module that produces a desired
outcome.  When we add converse to LIF with only a single
variable ($|\V|=1$), and the vocabulary has only binary
relations of input arity one, then we obtain the classical
calculus of relations \cite{tarski_relcalc}.  There, converse is known to 
be primitive \cite{rafragments_ins}.  When the number of variables
is strictly more than half of the maximum arity of relations in the 
vocabulary, converse is redundant in LIF, as can be shown using similar
techniques as used in this paper to show redundancy of
composition. Investigating the exact number of variables needed for 
non-primitivity is an interesting question for further research.

Another direction for further research is to examine fragments of LIF for which the 
semantic input or output problem may be decidable, or even for which the syntactic 
definitions coincide with the semantic definitions.

Finally, an operation that often occurs in dynamic systems is the fixed point construct 
used by \cite{lif_frocos}. It remains to be seen how our work, and the further research 
directions mentioned above, can be extended to include the fixpoint operation.

\begin{acks}
This research received funding from the Flemish Government 
under the ``Onderzoeksprogramma Artifici\"ele Intelligentie (AI) Vlaanderen'' 
programme,
from FWO Flanders project G0D9616N,
and from Natural Sciences and Engineering Research Council of Canada (NSERC).  
Jan Van den Bussche is partially supported by the National Natural Science Foundation 
of China (61972455).
Heba Aamer is supported by the Special Research Fund (BOF) (BOF19OWB16).
\end{acks}

\bibliographystyle{ACM-Reference-Format}
\bibliography{database,krrlib}

\end{document}